%% LyX 2.3.6.1 created this file.  For more info, see http://www.lyx.org/.
%% Do not edit unless you really know what you are doing.
\documentclass[11pt]{article}
\usepackage[latin9]{inputenc}
\usepackage{geometry}
\geometry{verbose,tmargin=1.25in,bmargin=1.25in,lmargin=1.25in,rmargin=1.25in}
\usepackage[american,english]{babel}
\usepackage{array}
\usepackage{float}
\usepackage{url}
\usepackage{multirow}
\usepackage{amsmath}
\usepackage{amsthm}
\usepackage{amssymb}
\usepackage{graphicx}
\usepackage{setspace}
\usepackage[authoryear]{natbib}
\onehalfspacing
\usepackage[unicode=true,pdfusetitle,
 bookmarks=true,bookmarksnumbered=false,bookmarksopen=false,
 breaklinks=false,pdfborder={0 0 0},pdfborderstyle={},backref=false,colorlinks=false]
 {hyperref}

\makeatletter

%%%%%%%%%%%%%%%%%%%%%%%%%%%%%% LyX specific LaTeX commands.
%% Because html converters don't know tabularnewline
\providecommand{\tabularnewline}{\\}
\floatstyle{ruled}
\newfloat{algorithm}{tbp}{loa}
\providecommand{\algorithmname}{Algorithm}
\floatname{algorithm}{\protect\algorithmname}

%%%%%%%%%%%%%%%%%%%%%%%%%%%%%% Textclass specific LaTeX commands.
\theoremstyle{plain}
\newtheorem{prop}{\protect\propositionname}
\theoremstyle{remark}
\newtheorem{rem}{\protect\remarkname}
\theoremstyle{plain}
\newtheorem{lem}{\protect\lemmaname}

%%%%%%%%%%%%%%%%%%%%%%%%%%%%%% User specified LaTeX commands.
\bibpunct{(}{)}{,}{a}{,}{,}
\usepackage{url}
\usepackage{csvsimple}
\theoremstyle{plain}

\hyphenpenalty=10000\relax
\exhyphenpenalty=10000\relax
\sloppy
\usepackage{lscape}

\usepackage[bottom]{footmisc}

\makeatother

\addto\captionsamerican{\renewcommand{\algorithmname}{Algorithm}}
\addto\captionsamerican{\renewcommand{\lemmaname}{Lemma}}
\addto\captionsamerican{\renewcommand{\propositionname}{Proposition}}
\addto\captionsamerican{\renewcommand{\remarkname}{Remark}}
\addto\captionsenglish{\renewcommand{\lemmaname}{Lemma}}
\addto\captionsenglish{\renewcommand{\propositionname}{Proposition}}
\addto\captionsenglish{\renewcommand{\remarkname}{Remark}}
\providecommand{\lemmaname}{Lemma}
\providecommand{\propositionname}{Proposition}
\providecommand{\remarkname}{Remark}

\begin{document}
\title{Fast and simple inner-loop algorithms of static / dynamic BLP estimations}
\author{Takeshi Fukasawa\thanks{Waseda Institute for Advanced Study, Waseda University, 1-21-1, Nishiwaseda, Shinjuku, Tokyo, Japan; fukasawa3431@gmail.com\protect \\
I thank Naoshi Doi, Hiroshi Ohashi, Suguru Otani, Katsumi Shimotsu, and participants at JEMIOW Fall 2023, APIOC 2023, and IIOC 2024 for their helpful comments. Replication code of the numerical experiments in this article is available at \protect\url{https://github.com/takeshi-fukasawa/BLP_algorithm}.\protect \\
This study is supported \foreignlanguage{american}{JSPS KAKENHI Grant Number JP24K22629.}}}
\maketitle
\begin{abstract}
This study investigates computationally efficient inner-loop algorithms for estimating static/dynamic BLP models. It provides the following ideas for reducing the number of inner-loop iterations: (1). Add a term relating to the outside option share in the BLP contraction mapping; (2). Analytically represent the mean product utilities as a function of value functions and solve for value functions (for dynamic BLP); (3). Combine an acceleration method of fixed-point iterations, especially the Anderson acceleration. They are independent and easy to implement. This study shows the good performance of these methods using numerical experiments. 

{\flushleft{{\bf Keywords:}  Static/Dynamic BLP, BLP contraction mapping, Inner-loop algorithm, Acceleration method of fixed point iterations, Anderson acceleration}}
\end{abstract}
\pagebreak{}

\section{Introduction}

Demand estimation is the basis for many economic analyses. The estimation method based on the random coefficient logit model proposed by \citet{berry1995automobile} (henceforth BLP) has been widely used in the literature. Although \citet{berry1995automobile} initially considered a static demand model, the model and estimation procedure have been extended to dynamic demand models, such as durable goods and goods with switching costs (e.g., \citealp{gowrisankaran2012dynamics}; \citealp{shcherbakov2016measuring}). These models are known as dynamic BLP. In these models, the products' mean utilities $\delta$ are numerically solved in the inner loop, and econometricians find the parameter values that minimize the GMM objective in the outer loop. 

Nevertheless, significant challenge of this method is the computational time, as fixed-point iterations in the inner loop can be burdensome, especially for complex demand model estimations or large datasets. This study explores computationally efficient inner-loop algorithms for estimating both static and dynamic BLP models. While computationally efficient inner-loop algorithms for static BLP models have been extensively studied in the literature, including \citet{kalouptsidi2012market}, \citet{reynaerts2012enhencing}, \citet{conlon2020best}, and \citet{pal2023comparing}, there remains considerable potential for improvement in static BLP models, and even more so in dynamic BLP models. 

Regarding static BLP models, I introduce a new fixed-point iteration mapping $\delta_{j}^{(n+1)}=\Phi_{j}^{\delta,\gamma=1}(\delta_{j}^{(n)})\equiv \delta_{j}^{(n)}+\left(\log\left(S_{j}^{(data)}\right)-\log\left(s_{j}(\delta^{(n)})\right)\right)-\left(\log\left(S_{0}^{(data)}\right)-\log\left(s_{0}(\delta^{(n)})\right)\right)$. This slightly modifies the traditional BLP contraction mapping $\delta_{j}^{(n+1)}=\delta_{j}^{(n)}+\left(\log\left(S_{j}^{(data)}\right)-\log\left(s_{j}(\delta^{(n)})\right)\right)$.\footnote{$S_{j}^{(data)}$ denotes product $j$'s observed market share, and $s_{j}(\delta)$ denotes the market share predicted by the structural model.} The difference with the traditional BLP contraction mapping is the term $\log\left(S_{0}^{(data)}\right)-\log\left(s_{0}(\delta^{(n)})\right)$, which is straightforward to implement in any programming language. Interestingly, when consumer heterogeneity is absent, the iteration converges after one iteration, regardless of initial values $\delta^{(0)}$. In such cases, $\delta_{j}=\log(S_{j}^{(data)})-\log(S_{0}^{(data)})$ holds, as shown by \citet{berry1994estimating}. For the mapping $\Phi^{\delta,\gamma=1}$, $\Phi_{j}^{\delta,\gamma=1}\left(\delta\right)=\log\left(S_{j}^{(data)}\right)-\log\left(S_{0}^{(data)}\right)$ holds, and the output of $\Phi^{\delta,\gamma=1}$ is equal to the true $\delta$ for any input $\delta$. When the consumer heterogeneity exists,
the iteration may not converge after one iteration. However, it still inherits the good convergence property. This study finds that the new mapping converges and reduces the number of iterations in the Monte Carlo simulation settings experimented in the previous studies (\citealp{dube2012improving}; \citealp{lee2015computationally}) and datasets from \citet{nevo2001measuring} and Berry et al. \citeyearpar{berry1995automobile,berry1999voluntary}. The speed-up is especially prominent when the consumer heterogeneity is relatively small, as theoretically shown in Appendix \ref{subsec:Convergence-properties-mappings}.\footnote{Though there is no guarantee that the mapping is a contraction, we can guarantee the global convergence of the iterations using the new mapping by adding a few lines in the programming code, by using the fact that the traditional BLP contraction mapping is a contraction. For details, see Appendix \ref{subsec:Global-convergence}. If practitioners are conservative concerning the convergence, the procedure is worth considering.}

This study also compares various fixed-point iterations acceleration methods (Anderson acceleration, Spectral, SQUAREM), and finds that Anderson acceleration, which has been understudied in the literature\footnote{\citet{duch2023evaluating} is the exception. As discussed in the online appendix of their paper, they applied the stabilized version of the Anderson acceleration method proposed by \citet{zhang2020globally} to estimate a random coefficient nested logit model with a large nest parameter.}\footnote{\citet{conlon2020best} pointed out in footnote 81 of their paper that ``Anderson acceleration'' using ``anderson'' function in Python Scipy package was too slow and unreliable to be worth considering. The current study verify it by using PyBLP. However, the ``Anderson acceleration'' method in Python Scipy package is aimed at solving general nonlinear equations, and the algorithm is not the same as the ``Anderson acceleration'' for fixed point iterations discussed in the current paper.}, performs better than the others, and further reduces both the number of iterations and computation time. The strategy remains effective even when the size of consumer heterogeneity is large. As the new mapping can be easily coded by adding a few lines to the traditional BLP contraction mapping, the current study suggests coding the new mapping first, and then considering combining the acceleration of fixed point iterations, especially Anderson acceleration, if needed. Anderson acceleration method is intuitive and relatively easy to code.

Regarding dynamic BLP models, I propose analytically representing the mean product utility $\delta$ as a function of value function $V$, solveing for $V$ by applying a mapping of $V$, and recovering $\delta$ using the analytical formula and $V$ values. Previous studies (e.g., \citealp{gowrisankaran2012dynamics}) applied an inner loop algorithm essentially consisted of solving for two types of variables $\delta$ and $V$. In the current algorithm, we only have to solve for only one type of variable $V$, thus reducing inner loop iterations. Combining fixed-point iteration acceleration methods, especially Anderson acceleration further reduces the number of iterations.

Fixed-point iteration acceleration methods are useful not only for the inner-loop algorithms of static/dynamic BLP estimations, but also for efficiently solving various fixed-point problems, including dynamic models with value functions and supply-side pricing equilibrium. Built-in packages of the acceleration methods, including Anderson acceleration, Spectral, and SQUAREM, are available in some programming languages. Even when unavailable, they are easy to code because of their simple formulas. Hence, being familializing oneself with  accustomed to using such built-in packages or developing custom coding to implement the acceleration methods to meet practitioners' needs would be useful. Note that when we apply the spectral or SQUAREM algorithm, we should be careful about the choice of step sizes $\alpha$, though not paid much attention to in the previous economics literature. When we do not use appropriate step sizes, the algorithms may cause divergence. This is briefly discussed in Section \ref{sec:Acceleration-methods}.

%The efficacy of acceleration methods largely relies on the types of fixed-point problems, as discussed in \citet{tang2023accelerating}, comparing several acceleration methods including Anderson and SQUAREM\footnote{\citet{tang2023accelerating} did not compare the performance of the spectral algorithm.} in the context of statistics applications. For instance, Anderson outperforms SQUAREM in some models, but worse in other models. Although the current study finds that the Anderson acceleration performs fairly well and outperforms the others in most of the static and dynamic BLP models\footnote{As shown in Appendix \ref{subsec:large_hetero}, the current study finds that the Anderson acceleration fails in an artificial setting where the size of consumer heterogeneity is too large. Even in the setting, the spectral and SQUAREM algorithms successfully converge.}, the spectral and SQUAREM algorithms significantly outperform the original fixed point iterations. Hence, although the current study recommends using Anderson acceleration, the spectral and SQUAREM algorithms can be good alternatives, and worth trying in case Anderson does not work well.

This study contributes to the literature by presenting key insights aimed at reducing inner-loop iterations in static/dynamic BLP estimations: (1). New mapping $\delta_{j}^{(n+1)}=\delta_{j}^{(n)}+\left(\log\left(S_{j}^{(data)}\right)-\log\left(s_{j}(\delta^{(n)})\right)\right)-\left(\log\left(S_{0}^{(data)}\right)-\log\left(s_{0}(\delta^{(n)})\right)\right)$; (2). Analytically represent mean product utilities $\delta$ as a function of value functions $V$ and solve for $V$ (for dynamic BLP models); (3). Combine the acceleration method of the fixed-point iterations, especially the Anderson acceleration. These methods are independent, and relatively easy to implement. Practitioners can selectively implement some when they face computational challenges.\footnote{\citet{fukasawa2024lightbulb} applies some of these ideas to the dynamic demand estimation of the light bulb market by specifying a durable goods model with forward-looking consumers and replacement demand. The computation time is reduced by more than 10 times with these strategies. Without them, the expected estimation time would be more than 10 hours. }

The rest of this article is organized as follows. Section \ref{sec:Literature} examines the relationships with the previous studies. Section \ref{sec:Static-BLP-model} discusses inner-loop algorithms for static BLP. Section \ref{sec:Dynamic-BLP-model} explores  inner-loop algorithms for dynamic BLP models. Readers  focusing on static BLP models can skip this section. Section \ref{sec:Acceleration-methods} discusses the acceleration methods of fixed point iterations, including Anderson acceleration, Spectral, and  SQUAREM. Note that the discussion in this section applies to any fixed-point iteration. Section \ref{sec:Numerical-Experiments} presents the results of numerical experiments on static/dynamic BLP models. Finally, Section \ref{sec:Conclusion} presents the conclusions.

Appendix \ref{sec:Additional-results} shows additional results and discussions. Appendix \ref{sec:Proof} contains all the proofs of the mathematical statements. The Supplemental Appendix shows further results and discussions. In the Supplemental Appendix, we discuss that the idea of the proposed inner-loop algorithm for estimating dynamic BLP models can be used in estimating dynamic discrete choice models with unobserved payoffs, which have been considered in \citet{kalouptsidi2020linear} and others.

\section{Literature\label{sec:Literature}}

First, this study relates to the literature on computationally efficient estimation of static/dynamic BLP models.

\citet{conlon2020best} developed the PyBLP package written in Python for implementing static BLP models' estimations and simulation. Although the package is highly convenient to use and state-of-the-art, incorporating the knowledge from existing literature, it is not suitable for BLP-type problems that PyBLP cannot address.\footnote{For instance, we cannot directly use PyBLP to estimate static limited consideration set models (e.g., \citealp{goeree2008limited}) or dynamic BLP models.} In such cases, we should write our own code, and the techniques discussed in the current study should be considered.

\citet{kalouptsidi2012market} proposed inner-loop estimation algorithms of static BLP models that achieve faster convergence in the setting with a small number of discrete consumer types. \citet{doi2022simple} proposed an estimation method for static BLP models with discrete consumer types that eliminates the need for  computationally costly fixed-point iterations, given the total sales data availability for each consumer type. The main concept of these studies is that mean utilities or unobserved product characteristics can be analytically represented as a function of value functions specific to consumer types, given parameters and market share data. 

The idea is beneficial, especially in the context of dynamic BLP models. While the direct application of \citet{kalouptsidi2012market}'s algorithm to static BLP models do not work well when the number of consumer types is large. Nevertheless, I show that it works well by simplifying and slightly modifying the original algorithm. The new algorithm corresponds to the mapping of value function $V$ discussed in Section \ref{subsec:Mappings-on-delta-static-BLP}. We discuss these issues in detail in the Supplemental Appendix. 

Although we focus on improving the inner-loop algorithms for estimating static/dynamic BLP models through Nested fixed-point (NFXP) approach, several alternative estimation procedures have been proposed so far.\footnote{Besides the studies mentioned above, \citet{bonnet2022yogurts} proposed an inner-loop estimation algorithm of static demand models, using the idea of two-sided matching. The algorithm is applicable to static demand models other than static BLP models, including the pure characteristics model considered in \citet{berry2007pure}. \citet{salanie2022fast} proposed an estimator of static BLP models that does not require solving fixed-point problems and is approximately correct. The estimator can be used as the initial values of parameters in the standard nested fixed point estimation. } \citet{dube2012improving} proposed the MPEC (Mathematical Programming with Equilibrium Constraint) approach for static and dynamic BLP models. \citet{lee2015computationally} proposed the ABLP (approximate BLP) method for static BLP models, which iterates the process inspired by Newton's method to estimate parameters. Regarding dynamic BLP, \citet{Sun2019} proposed a Bayesian-type algorithm. 

Although promising, recent studies on the MPEC have indicated that its performance is not necessarily as good as that of the NFXP approach. \citet{pal2023comparing} showed in static BLP models that the computational times using the MPEC are longer than those with the NFXP approach using the traditional BLP contraction mapping and the spectral/SQUAREM algorithm as the acceleration method. In addition, the frequency of reaching the global optimum of the optimization problem is lower for MPEC.\footnote{In general, there is no guarantee that the GMM objective function in the BLP estimation is convex, and we may reach a local minimum of the GMM objective function.} \citet{Sun2019} also showed that the MPEC performed worse than the NFXP. As mentioned in \cite{dube2012improving}, it is known that the MPEC is computationally costly when the Jacobian matrix of the constraint of the constrained optimization problem is dense, and the approach may not be always practical.

Furthermore, in general, there are two types of costs for any algorithm. The first one is ``thinking costs'', which are associated with thinking about the validity of an algorithm for practitioners' needs and coding the algorithm. The second one is ``computer costs'', which are associated with the computation time required to execute the algorithm on a computer.\footnote{\citet{jamshidian1997acceleration} discuss this point in the context of the acceleration of the EM algorithm.} The proposed methods in the current article reduce the computation costs with small additional thinking costs, as they require only minor changes to the estimation procedure based on the standard NFXP approach widely applied by practitioners. Unlike the MPEC method, we do not have to install new software or packages to solve constrained optimization problems.\footnote{In MPEC, sometimes we must specify a sparsity structure to reduce the memory requirements. Our current method, however, does not require this procedure.} Unlike the ABLP method\footnote{\citet{pal2023comparing} showed numerical results where the ABLP performs worse than the NFXP method using the traditional BLP contraction mapping and the spectral/SQUAREM algorithm as the acceleration method.}, the proposed methods do not require computing the analytical derivatives of functions\footnote{As discussed in \citet{miranda2004applied} in the context of the Newton's method for solving nonlinear equations, analysts might make coding errors in coding derivatives of functions, and such a procedure should be avoided if possible. The algorithms proposed in the current article do not require coding derivatives, and are attractive.}, and the proposed algorithms are not restricted to static BLP models. Moreover, unlike the Baysian-type methods, there is no need to introduce Baysian-type techniques, such as Markov Chain Monte Carlo (MCMC) methods.\footnote{The method proposed by \citet{Sun2019} relies on a fixed-point mapping whose convergence is not necessarily fast in the NFXP approach, as shown in the current paper. The insights presented in the current paper might be useful for improving the performance of \citet{Sun2019}'s method.} 

Finally, this study contributes to the literature discussing acceleration methods of fixed point iterations for BLP estimations. \citet{reynaerts2012enhencing}, \citet{conlon2020best}, and \citet{pal2023comparing} found that combining the spectral or SQUAREM algorithms accelerates the inner-loop convergence of the static BLP estimation. The current study finds that the Anderson acceleration method, which is not discussed in these studies, outperforms the spectral and SQUAREM algorithms. Note that the choice of the step sizes in the spectral and SQUAREM algorithms largely affects the algorithm's performance, though not paid much attention to in the literature, and the current study briefly discusses it.

\section{Static BLP model\label{sec:Static-BLP-model}}

This section focuses on the standard static BLP models without a nest structure (Random Coefficient Logit (RCL) model). We discuss dynamic BLP models without nest structure in Section \ref{sec:Dynamic-BLP-model}, and static BLP models with nest structure (Random Coefficient Nested Logit (RCNL) model) in Appendix \ref{subsec:Extension:-RCNL-model}. Let $\left\Vert \cdot\right\Vert _{\infty}$ be the sup norm, and let $\left\Vert \cdot\right\Vert _{2}$ be the L2 norm. $\left\Vert \cdot\right\Vert $ also denotes a norm. $\epsilon_{\delta},\epsilon_{V},\epsilon$ denote values of the inner-loop tolerance levels.\footnote{\citet{conlon2020best} recommended setting the values between 1E-12 and 1E-14 when using 64-bit computers and applying the standard BLP contraction mapping.} 

\subsection{Model}

This section considers a static BLP model with random coefficients. This study proposes new algorithms whose convergence properties outperform those of the traditional BLP contraction mapping, and also generalizable to dynamic BLP models. Let consumer $i$'s utility when buying product $j$ be $v_{ij}=\delta_{j}+\mu_{ij}+\epsilon_{ij}$, and utility when buying nothing be $v_{i0}=\epsilon_{i0}$. $\delta_{j}$ denotes product $j$'s mean utility, and $\mu_{ij}$ denotes consumer $i$-specific utility of product $j$. $\epsilon$ denotes the idiosyncratic utility shocks. Let $\mathcal{J}$ be the set of products. Then, under the assumption that $\epsilon$ follows Gumbel distribution and that the observed market share data match the market share predicted by the model, the following equations hold:

\begin{eqnarray}
S_{j}^{(data)} & = & \int\frac{\exp\left(\delta_{j}+\mu_{ij}\right)}{1+\sum_{k\in\mathcal{J}}\exp\left(\delta_{k}+\mu_{ik}\right)}dP(i)\approx\sum_{i\in\mathcal{I}}w_{i}\frac{\exp\left(\delta_{j}+\mu_{ij}\right)}{1+\sum_{k\in\mathcal{J}}\exp\left(\delta_{k}+\mu_{ik}\right)},\label{eq:static_BLP_Sj}\\
S_{0}^{(data)} & = & \int\frac{1}{1+\sum_{k\in\mathcal{J}}\exp\left(\delta_{k}+\mu_{ik}\right)}dP(i)\approx\sum_{i\in\mathcal{I}}w_{i}\frac{1}{1+\sum_{k\in\mathcal{J}}\exp\left(\delta_{k}+\mu_{ik}\right)}.\label{eq:static_BLP_S0}
\end{eqnarray}
Here, $dP(i)$ denotes the density of consumer $i$. We use appropriate discretizations of the consumer types to solve the model numerically, when consumer types are continuously distributed.\footnote{See \citet{conlon2020best}for a review of the efficient discretization methods.} Let $\mathcal{I}$ be the set of discretized consumer types. $w_{i}$ denotes the fraction of type $i$ consumers, and $\sum_{i\in\mathcal{I}}w_{i}=1$ hold. If we use $I$ Monte Carlo draws to approximate the integral, $w_{i}=\frac{1}{I}$ holds. Note that equation (\ref{eq:static_BLP_S0}) can be derived from (\ref{eq:static_BLP_Sj}).

In empirical applications, the distribution of $\mu_{ij}$ is parameterized by $\theta_{n}$. For instance, $\mu_{ij}=\theta_{n}X_{j}\nu_{i}$ where $\nu_{i}\sim N(0,\Sigma)$. The mean product utility $\delta$ is represented as $\delta_{j}=X_{j}\theta_{l}+\xi_{j}$, where $\xi_{j}$ denotes the unobserved product characteristics of product $j$. Econometricians seek the values of $(\theta_{l},\theta_{n})$ minimizing the GMM objective $G(\theta_{l},\theta_{n})WG(\theta_{l},\theta_{n})$ where $G(\theta_{l},\theta_{n})\equiv Z\xi(\theta_{l},\theta_{n})=Z\left(\delta(\theta_{n})-X\theta_{l}\right)$, $W$ denotes the weight matrix, and $Z$ denotes the appropriate instrumental variables. $\delta(\theta_{n})$ denotes the values of $\delta$ satisfying equations (\ref{eq:static_BLP_Sj}) and (\ref{eq:static_BLP_S0}). Algorithm \ref{alg:Full-estimation-algorithm-static-BLP} shows the formal steps.

\begin{algorithm}[H]
\begin{enumerate}
\item Inner loop: Given $\theta_{n}$,
\begin{enumerate}
\item Compute $\mu$ using $\theta_{n}$
\item Solve for $\delta$ satisfying equations (\ref{eq:static_BLP_Sj}) and (\ref{eq:static_BLP_S0}) given $\mu$. Let the solution be $\delta(\theta_{n})$.
\item Compute linear parameters $\theta_{l}^{D}$ that minimize the GMM objective given $\theta_{n}^{D}$. Let $\theta_{l}^{*}(\theta_{n})$ be the solution to the minimization problem, and let $m(\theta_{n})\equiv G^{\prime}\left(\theta_{l}^{*}(\theta_{n}),\theta_{n}\right)WG\left(\theta_{l}^{*}(\theta_{n}),\theta_{n}\right)$.
\end{enumerate}
\item Outer loop: Search for the value of $\theta_{n}$ minimizing the GMM objective $m(\theta_{n})$.
\end{enumerate}
\caption{Full estimation algorithm of the static BLP model\label{alg:Full-estimation-algorithm-static-BLP}}
\end{algorithm}

Below, we focus on Step 1(b) of Algorithm \ref{alg:Full-estimation-algorithm-static-BLP}, namely, how to efficiently solve for $\delta$ satisfying the constraints (\ref{eq:static_BLP_Sj}) and (\ref{eq:static_BLP_S0}) given $\mu$ and observed market shares $S^{(data)}$.\footnote{\citet{conlon2020best} discuss the details of overall estimation procedures of static BLP models. Note that when no consumer heterogeneity exists, we can recover the value of $\delta$ by $\delta_{j}=\log\left(S_{j}^{(data)}\right)-\log\left(S_{0}^{(data)}\right)$, and we can estimate the linear parameters $\theta_{l}$ by a linear GMM without solving the fixed-point problem, as discussed in \citet{berry1994estimating}.}

\subsection{Mappings of mean product utility $\delta$\label{subsec:Mappings-on-delta-static-BLP}}

To solve for $\delta$ given $\mu$ and observed market shares $S^{(data)}\equiv\left(S_{j}^{(data)}\right)_{j\in\mathcal{J}}$, \citet{berry1995automobile} proposed iterating the update $\delta_{j}^{(n+1)}=\delta_{j}^{(n)}+\log\left(S_{j}^{(data)}\right)-\log\left(s_{j}\left(\delta^{(n)}\right)\right)$ with appropriate initial values $\delta^{(0)}$ until convergence. Here, $s_{j}(\delta)\equiv\sum_{i}w_{i}\frac{\exp\left(\delta_{j}+\mu_{ij}\right)}{1+\sum_{k\in\mathcal{J}}\exp\left(\delta_{k}+\mu_{ik}\right)}$ denotes the market share of product $j$ predicted by the structural model. The mapping is known as the BLP contraction mapping.

Motivated by the BLP contraction mapping, we define a mapping $\Phi^{\delta,\gamma}:B_{\delta}\rightarrow B_{\delta}$ for $\gamma\in\mathbb{R}$ such that\footnote{Although the mapping $\Phi_{j}^{\delta,\gamma}$ implicitly depends on the values of $S^{(data)}$ and $\mu$, we omit expressing them as arguments of the mapping to simplify the exposition.}:
\begin{eqnarray*}
\Phi_{j}^{\delta,\gamma}\left(\delta\right) & \equiv & \delta_{j}+\left[\log\left(S_{j}^{(data)}\right)-\log\left(s_{j}(\delta)\right)\right]-\gamma\left[\log\left(S_{0}^{(data)}\right)-\log\left(s_{0}(\delta)\right)\right],
\end{eqnarray*}
where $s_{0}(\delta)=1-\sum_{k\in\mathcal{J}}s_{k}(\delta)$. $B_{\delta}$ denotes the space of $\delta\equiv\left\{ \delta_{j}\right\} _{j\in\mathcal{J}}$. Obviously, $\Phi^{\delta,\gamma=0}$ matches the traditional BLP contraction mapping.

Proposition \ref{prop:sol_delta_mapping_static_BLP} ensures that we can find the solution $\delta$ satisfying $S_{j}^{(data)}=s_{j}(\delta)\ (j\in\mathcal{J})$ by alternatively solving the fixed-point constraint $\delta=\Phi^{\delta,\gamma\geq0}\left(\delta\right)$.
\begin{prop}
\label{prop:sol_delta_mapping_static_BLP}Solution of $\delta=\Phi^{\delta,\gamma\geq0}\left(\delta\right)$ satisfies $S_{j}^{(data)}=s_{j}(\delta)\ \forall j\in\mathcal{J}$.
\end{prop}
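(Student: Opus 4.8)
The plan is to work directly from the fixed-point equation and exploit the fact that both the model-predicted shares and the observed shares sum to one over the full choice set $\{0\}\cup\mathcal{J}$. First I would write the condition $\delta=\Phi^{\delta,\gamma}(\delta)$ out coordinate by coordinate. Since $\Phi_j^{\delta,\gamma}(\delta)=\delta_j$ cancels the leading $\delta_j$, the fixed-point condition reduces, for every $j\in\mathcal{J}$, to
\[
\log\!\left(S_j^{(data)}\right)-\log\!\left(s_j(\delta)\right)=\gamma\left[\log\!\left(S_0^{(data)}\right)-\log\!\left(s_0(\delta)\right)\right].
\]
The crucial structural observation is that the right-hand side does not depend on $j$. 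Hence every ratio $S_j^{(data)}/s_j(\delta)$ equals one common constant, which I will call $\lambda\equiv\exp\big(\gamma[\log S_0^{(data)}-\log s_0(\delta)]\big)=r^\gamma$, where $r\equiv S_0^{(data)}/s_0(\delta)$. In particular $S_j^{(data)}=\lambda\,s_j(\delta)$ for all $j\in\mathcal{J}$, so it now suffices to prove $\lambda=1$.

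The second step is to bring in the adding-up constraints to pin down $\lambda$. By definition $s_0(\delta)=1-\sum_{k\in\mathcal{J}}s_k(\delta)$, and the observed shares likewise satisfy $S_0^{(data)}+\sum_{k\in\mathcal{J}}S_k^{(data)}=1$ (both are immediate from the choice-probability definitions in equations (\ref{eq:static_BLP_Sj})--(\ref{eq:static_BLP_S0})). Summing the relation $S_j^{(data)}=\lambda\,s_j(\delta)$ over $j\in\mathcal{J}$ then gives $1-S_0^{(data)}=\lambda\big(1-s_0(\delta)\big)$. Writing $s_0\equiv s_0(\delta)$ and substituting $S_0^{(data)}=r\,s_0$ and $\lambda=r^\gamma$ collapses everything to the single scalar equation
\[
1-r\,s_0=r^\gamma\,(1-s_0),
\]
with unknown $r>0$ and fixed parameters $s_0\in(0,1)$, $\gamma\ge 0$. (Strict positivity of all shares, needed for the logarithms and hence for $r,\lambda>0$, is maintained throughout.)

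The final and most delicate step is a uniqueness argument showing that $r=1$ is the only admissible solution. I would set $f(r)\equiv 1-r\,s_0-r^\gamma(1-s_0)$ and verify first that $f(1)=1-s_0-(1-s_0)=0$, so $r=1$ solves the equation and corresponds exactly to $s_0(\delta)=S_0^{(data)}$ and $\lambda=1$. To rule out any other root I would differentiate: $f'(r)=-s_0-\gamma\,r^{\gamma-1}(1-s_0)$, which is strictly negative for all $r>0$ whenever $\gamma\ge 0$ (the $\gamma=0$ case is in fact immediate, since then $\lambda=r^0=1$ automatically and the fixed-point condition already forces $\log S_j^{(data)}=\log s_j(\delta)$). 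Strict monotonicity of $f$ makes $r=1$ its unique positive zero, whence $\lambda=r^\gamma=1$ and therefore $s_j(\delta)=S_j^{(data)}$ for every $j\in\mathcal{J}$. I expect this monotonicity/uniqueness step to be the main obstacle, as it is what excludes a spurious fixed point in which all the share ratios agree at a common value $\lambda\neq 1$; the remainder is bookkeeping with the fixed-point equation and the two adding-up identities.
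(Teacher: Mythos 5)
Your proposal is correct and follows essentially the same route as the paper's own proof (which handles the more general nested-logit case and specializes to this one): derive the common ratio $S_j^{(data)}/s_j(\delta)=\left(S_0^{(data)}/s_0(\delta)\right)^{\gamma}$ from the fixed-point condition, sum over $j$ using the adding-up identities to obtain a single scalar equation, and use strict monotonicity to conclude the only solution is the trivial one. Your parameterization in the ratio $r=S_0^{(data)}/s_0(\delta)$ rather than in $s_0(\delta)$ itself is an immaterial change of variables; the key uniqueness argument is the same.
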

Clearly, the solution of $S_{j}^{(data)}=s_{j}(\delta)\ (j\in\mathcal{J})$ satisfies $\delta_{j}=\Phi_{j}^{\delta,\gamma}\left(\delta\right)=\delta_{j}+\left[\log\left(S_{j}^{(data)}\right)-\log\left(s_{j}(\delta)\right)\right]-\gamma\left[\log\left(S_{0}^{(data)}\right)-\log\left(s_{0}(\delta)\right)\right]\ j\in\mathcal{J}$. However, its converse is not trivial, even without formal proof. Proposition \ref{prop:sol_delta_mapping_static_BLP} formally shows that the converse actually holds under $\gamma\geq0$.

Algorithm \ref{alg:static_BLP_delta} shows the algorithm to solve for $\delta$ using the mapping $\Phi^{\delta,\gamma}$.

\begin{algorithm}[H]
Set initial values of $\delta^{(0)}$. Iterate the following $(n=0,1,2,\cdots)$:
\begin{enumerate}
\item Compute $\delta^{(n+1)}=\Phi^{\delta,\gamma}(\delta^{(n)})$
\item Exit the iteration if $\left\Vert \delta^{(n+1)}-\delta^{(n)}\right\Vert <\epsilon_{\delta}$
\end{enumerate}
\caption{Inner loop algorithm of static BLP using $\Phi^{\delta,\gamma}$\label{alg:static_BLP_delta}}
\end{algorithm}

Although, in principle, we can choose any real value of $\gamma$, we mainly consider $\gamma=1$ because of its good convergence properties. As discussed in detail in Appendix \ref{subsec:Convergence-properties-mappings}, the consumer heterogeneity size largely affects the convergence speed of $\Phi^{\delta,\gamma=1}$. As consumer heterogeneity decreases, the convergence speed increases. In the absence of consumer heterogeneity, the iteration immediately converges after one iteration, regardless of the choice of initial values $\delta^{(0)}$. When no consumer heterogeneity exists, $\delta_{j}=\log(S_{j}^{(data)})-\log(S_{0}^{(data)})$ holds, as shown in \citet{berry1994estimating}. Regarding the mapping $\Phi^{\delta,\gamma=1}$, $\Phi_{j}^{\delta,\gamma}\left(\delta\right)=\log\left(S_{j}^{(data)}\right)-\log\left(S_{0}^{(data)}\right)$ holds, and the output of $\Phi^{\delta,\gamma}$ is equal to the true $\delta$ for any input $\delta$. 

In general, there is no guarantee that $\Phi^{\delta,\gamma=1}$ is a contraction. It implies there is no guarantee that the iteration with $\delta_{j}^{(n+1)}=\delta_{j}^{(n)}+\left[\log\left(S_{j}^{(data)}\right)-\log\left(s_{j}(\delta^{(n)})\right)\right]-\left[\log\left(S_{0}^{(data)}\right)-\log\left(s_{0}(\delta^{(n)})\right)\right]$ converges. In fact, we can construct a simple numerical example where $\Phi_{j}^{\delta,\gamma=1}$ is not a contraction in an extreme setting where consumer heterogeneity is too large, as shown in the Supplemental Appendix. However, the iteration converges when combining the spectral/SQUAREM algorithm, which we discuss in detail in Section \ref{sec:Acceleration-methods}. In addition, as shown in Section \ref{sec:Numerical-Experiments}, the iteration always converges and sometimes leads to drastic improvement of convergence speed in the standard settings of Monte Carlo simulations experimented in the previous studies (\citealp{dube2012improving}; \citealp{lee2015computationally}) and the estimations using datasets used by \citet{nevo2001measuring} and Berry et al. \citeyearpar{berry1995automobile,berry1999voluntary}. As mapping $\Phi^{\delta,\gamma=1}$ is straightforward to implement once coding the standard BLP contraction mapping, applying the mapping $\Phi^{\delta,\gamma=1}$ is worth considering in the static BLP estimation. 

Please note that the global convergence of iterations using $\Phi^{\delta,\gamma=1}$can be ensured by adding a few lines in the programming code, by using the fact that $\Phi^{\delta,\gamma=0}$, traditional BLP contraction mapping, is a contraction. For details, see Appendix \ref{subsec:Global-convergence}. Conservative practitioners can consider this process for convergence. 

Concerning the convergence speed of the BLP contraction mapping ($\Phi^{\delta,\gamma=0}$), it is slow when the outside option share is small (\citealp{dube2012improving}). In contrast, regarding the new mapping $\Phi^{\delta,\gamma=1}$, the convergence speed is less sensitive to the outside option share, as formally discussed in Appendix \ref{subsec:Convergence-properties-mappings} and numerically shown in Section \ref{sec:Numerical-Experiments}. 
\begin{rem}
\label{rem:additional_cost_delta_map}Applying the mapping $\Phi^{\delta,\gamma=1}$ requires additionally subtracting the term $\left[\log\left(S_{0}^{(data)}\right)-\log\left(s_{0}(\delta)\right)\right]$, compared to the case with traditional BLP contraction mapping $\Phi^{\delta,\gamma=0}$. It implies the computational cost of applying a mapping per iteration increases when we apply $\Phi^{\delta,\gamma=1}$ rather than $\Phi^{\delta,\gamma=0}$. Nevertheless, the additional computational cost is generally much smaller than the one for applying $\Phi^{\delta,\gamma=0}$. Regarding $\Phi^{\delta,\gamma=0}$, computationally costly part is the computation of $s_{j}(\delta)=\sum_{i}w_{i}\frac{\exp\left(\delta_{j}+\mu_{ij}\right)}{1+\sum_{k\in\mathcal{J}}\exp\left(\delta_{k}+\mu_{ik}\right)}$, which requires the following operations: (1). Summation of $\delta_{j}$ and $\mu_{ij}$ for all $i\in\mathcal{I}$ and $j\in\mathcal{J}$; (2). Computation of the exponential of $\delta_{j}+\mu_{ij}$; (3). Summation of a $|\mathcal{J}|$-dimensional vector $\left\{ \exp\left(\delta_{j}+\mu_{ij}\right)\right\} _{j\in\mathcal{J}}$ for all $i\in\mathcal{I}$; (4). Computation of $\frac{1}{1+\sum_{k\in\mathcal{J}}\exp\left(\delta_{k}+\mu_{ik}\right)}$ for all $i\in\mathcal{I}$; (5). Computation of $\frac{\exp\left(\delta_{j}+\mu_{ij}\right)}{1+\sum_{k\in\mathcal{J}}\exp\left(\delta_{k}+\mu_{ik}\right)}=\frac{1}{1+\sum_{k\in\mathcal{J}}\exp\left(\delta_{k}+\mu_{ik}\right)}\cdot\exp\left(\delta_{j}+\mu_{ij}\right)$ for all $i\in\mathcal{I}$ and $j\in\mathcal{J}$; (6). Summation of $\left\{ w_{i}\frac{\exp\left(\delta_{j}+\mu_{ij}\right)}{1+\sum_{k\in\mathcal{J}}\exp\left(\delta_{k}+\mu_{ik}\right)}\right\} _{i\in\mathcal{I}}$ for all $j\in\mathcal{J}$.\footnote{See also \citet{brunner2017reliable} for efficient computation.} In contrast, computing a scalar $s_{0}(\delta)=1-\sum_{k\in\mathcal{J}}s_{k}(\delta)$ requires the summation of $\sum_{k\in\mathcal{J}}s_{k}(\delta)$ using already computed $s_{k}(\delta)$, and it is much less expensive than computing $s_{j}(\delta)$. Consequently, applying the mapping $\Phi^{\delta,\gamma=1}$ is less expensive to apply compared to the use of the BLP contraction mapping $\Phi^{\delta,\gamma=0}$.
\end{rem}

\subsection{Mappings of value functions $V$\label{subsec:Mappings-on-V-static-BLP}}

Although it may seem intuitive to consider mappings of $\delta$ to solve for $\delta$, we can alternatively solve for consumers' value functions $V$ using a mapping of $V$, and then recover $\delta$ using an analytical formula. Here, consumer $i$'s value function, or the inclusive value, is defined as follows:

\begin{eqnarray}
V_{i} & \equiv & \log\left(1+\sum_{k\in\mathcal{J}}\exp\left(\delta_{k}+\mu_{ik}\right)\right).\label{eq:static_BLP_V}
\end{eqnarray}

Below, we discuss the approach solving for $V$. It seems that applying the approach is not necessary, when only considering the static BLP models. The approach directly solving for $\delta$ is more intuitive and standard. However, when examining dynamic BLP models, introducing the approach solving for $V$ is attractive, because it closely relates with the value function iterations typically used in solving dynamic models, as discussed in Section \ref{sec:Dynamic-BLP-model}.

\citet{kalouptsidi2012market} first proposed solving for consumer-type specific variables in the inner loop of static BLP estimations. However, the original algorithms have convergence problems. The following algorithms overcome the issues by choosing appropriate mappings that have dualistic relations with the mappings of $\delta$. For further discussion on the \citet{kalouptsidi2012market}'s algorithms, see the Supplemental Appendix. 

First, because $S_{j}^{(data)}=\exp\left(\delta_{j}\right)\cdot\sum_{i}w_{i}\frac{\exp\left(\mu_{ij}\right)}{1+\sum_{k\in\mathcal{J}}\exp\left(\delta_{k}+\mu_{ik}\right)}=\exp\left(\delta_{j}\right)\cdot\sum_{i}w_{i}\frac{\exp\left(\mu_{ij}\right)}{\exp\left(V_{i}\right)}$ holds by (\ref{eq:static_BLP_Sj}) and (\ref{eq:static_BLP_V}), the following equation should hold:

\begin{eqnarray*}
\delta_{j} & = & \log\left(S_{j}^{(data)}\right)-\log\left(\sum_{i}w_{i}\exp\left(\mu_{ij}-V_{i}\right)\right).
\end{eqnarray*}

Based on this formula, we define a function $\Phi^{V,\gamma}:B_{V}\rightarrow B_{V}$ for $\gamma\in\mathbb{R}$ such that:

\begin{eqnarray*}
\Phi^{V,\gamma}(V) & \equiv & \iota_{\delta\rightarrow V}\left(\iota_{V\rightarrow\delta}^{\gamma}\left(V\right)\right),
\end{eqnarray*}
where $\iota_{\delta\rightarrow V}:B_{\delta}\rightarrow B_{V}$ is a mapping such that:

\begin{eqnarray*}
\iota_{\delta\rightarrow V,i}\left(\delta\right) & \equiv & \log\left(1+\sum_{j\in\mathcal{J}}\exp\left(\delta_{j}+\mu_{ij}\right)\right),
\end{eqnarray*}
and $\iota_{V\rightarrow\delta}^{\gamma}:B_{V}\rightarrow B_{\delta}$ is a mapping such that: 

\begin{eqnarray*}
\iota_{V\rightarrow\delta,j}^{\gamma}\left(V\right) & \equiv & \log\left(S_{j}^{(data)}\right)-\log\left(\sum_{i}w_{i}\exp\left(\mu_{ij}-V_{i}\right)\right)-\gamma\log\left(\frac{S_{0}^{(data)}}{\sum_{i}w_{i}\exp(-V_{i})}\right).
\end{eqnarray*}

$B_{V}$ denotes the space of $V\equiv\left\{ V_{i}\right\} _{i\in\mathcal{I}}$. Note that $\Phi_{i}^{V,\gamma}\left(V\right)=\log\left(1+\left(\sum_{j\in\mathcal{J}}S_{j}^{(data)}\frac{\exp(\mu_{ij})}{\sum_{i}w_{i}\exp(\mu_{ij})\exp(-V_{i})}\right)\cdot\left(\frac{\sum_{i}w_{i}\exp(-V_{i})}{S_{0}^{(data)}}\right)^{\gamma}\right)$ holds.

The following proposition shows the validity of using the mapping $\Phi^{V,\gamma\geq0}$:
\begin{prop}
\label{prop:sol_V_mapping_static_BLP}$\delta$ such that $V=\Phi^{V,\gamma\geq0}(V),\ \delta=\iota_{V\rightarrow\delta}^{\gamma\geq0}(V)$ satisfies $S_{j}^{(data)}=s_{j}(\delta)\ \forall j\in\mathcal{J}$. 
\end{prop}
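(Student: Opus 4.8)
The plan is to reduce Proposition \ref{prop:sol_V_mapping_static_BLP} to Proposition \ref{prop:sol_delta_mapping_static_BLP} by exploiting a duality between the two composite mappings. The key structural observation is that both mappings are built from the same two ``half-maps'' $\iota_{\delta\rightarrow V}$ and $\iota_{V\rightarrow\delta}^{\gamma}$ composed in opposite orders: by construction $\Phi^{V,\gamma}=\iota_{\delta\rightarrow V}\circ\iota_{V\rightarrow\delta}^{\gamma}$, and I expect the dual identity $\Phi^{\delta,\gamma}=\iota_{V\rightarrow\delta}^{\gamma}\circ\iota_{\delta\rightarrow V}$ to hold as well. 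Once this dual identity is in hand, a fixed point of the $V$-composite immediately produces a fixed point of the $\delta$-composite, and Proposition \ref{prop:sol_delta_mapping_static_BLP} closes the argument.

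First I would establish the dual identity by direct substitution. Setting $V=\iota_{\delta\rightarrow V}(\delta)$, so that $\exp(V_{i})=1+\sum_{k\in\mathcal{J}}\exp(\delta_{k}+\mu_{ik})$, I would record the two elementary consequences $\exp(-V_{i})=\frac{1}{1+\sum_{k}\exp(\delta_{k}+\mu_{ik})}$ and $\exp(\mu_{ij}-V_{i})=\frac{\exp(\mu_{ij})}{1+\sum_{k}\exp(\delta_{k}+\mu_{ik})}$; that is, $\exp(-V_{i})$ is the individual outside-option probability and $\exp(\mu_{ij}-V_{i})$ equals $\exp(-\delta_{j})$ times the individual inside-option probability. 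Summing against the weights $w_{i}$ then gives $\sum_{i}w_{i}\exp(-V_{i})=s_{0}(\delta)$ and $\sum_{i}w_{i}\exp(\mu_{ij}-V_{i})=\exp(-\delta_{j})\,s_{j}(\delta)$, the latter because $s_{j}(\delta)=\exp(\delta_{j})\sum_{i}w_{i}\exp(\mu_{ij}-V_{i})$. Substituting these into the definition of $\iota_{V\rightarrow\delta,j}^{\gamma}$ collapses the two logarithmic terms into $\delta_{j}+\left[\log S_{j}^{(data)}-\log s_{j}(\delta)\right]-\gamma\left[\log S_{0}^{(data)}-\log s_{0}(\delta)\right]$, which is exactly $\Phi_{j}^{\delta,\gamma}(\delta)$, yielding the dual identity.

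Second I would combine the dual identity with the hypothesis. Given a fixed point $V=\Phi^{V,\gamma}(V)$ and the associated $\delta\equiv\iota_{V\rightarrow\delta}^{\gamma}(V)$, the fixed-point equation reads $V=\iota_{\delta\rightarrow V}(\delta)$. Evaluating the $\delta$-composite at this $\delta$ then gives $\Phi^{\delta,\gamma}(\delta)=\iota_{V\rightarrow\delta}^{\gamma}\bigl(\iota_{\delta\rightarrow V}(\delta)\bigr)=\iota_{V\rightarrow\delta}^{\gamma}(V)=\delta$, so $\delta$ is a fixed point of $\Phi^{\delta,\gamma}$. Since $\gamma\geq0$, Proposition \ref{prop:sol_delta_mapping_static_BLP} applies and delivers $S_{j}^{(data)}=s_{j}(\delta)$ for all $j\in\mathcal{J}$, completing the proof.

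The main obstacle is confined to the bookkeeping in the first step, namely correctly reading $\exp(-V_{i})$ and $\exp(\mu_{ij}-V_{i})$ as the outside- and (rescaled) inside-choice probabilities so that the logarithmic terms in $\iota_{V\rightarrow\delta,j}^{\gamma}$ telescope into the market-share residuals defining $\Phi_{j}^{\delta,\gamma}$. There is no genuine analytic difficulty once the duality is seen; in particular, the nontrivial converse content—that a fixed point actually clears every inside-good market rather than merely being consistent with $\gamma$-weighted residuals—is inherited wholesale from Proposition \ref{prop:sol_delta_mapping_static_BLP} and need not be reproved here.
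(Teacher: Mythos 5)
Your proposal is correct and follows essentially the same route as the paper: both arguments use the fixed-point relation $V=\Phi^{V,\gamma}(V)$ to conclude $V=\iota_{\delta\rightarrow V}(\delta)$ for $\delta=\iota_{V\rightarrow\delta}^{\gamma}(V)$, substitute this into $\iota_{V\rightarrow\delta}^{\gamma}$ to show $\delta=\Phi^{\delta,\gamma}(\delta)$, and then invoke Proposition \ref{prop:sol_delta_mapping_static_BLP}. The only difference is presentational—you package the substitution as the duality identity $\Phi^{\delta,\gamma}=\iota_{V\rightarrow\delta}^{\gamma}\circ\iota_{\delta\rightarrow V}$ (which the paper states separately as Proposition \ref{prop:duality_mapping}), whereas the paper carries out the same computation inline with the explicit share formulas.
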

Hence, to solve for $\delta$, two algorithms can be considered. The first is to apply $\Phi^{\delta,\gamma}$ iteratively and solve for $\delta$ with appropriate initial values of $\delta$, as discussed in Section \ref{subsec:Mappings-on-delta-static-BLP}. The second one is to apply $\Phi^{V,\gamma}$ iteratively with appropriate initial values of $V$, and obtain $\delta$, which is analytically computed using $V$ in the iteration. Algorithm \ref{alg:static_BLP_V} outlines this second approach.
\begin{algorithm}[H]
Set initial values of $V^{(0)}$. Iterate the following $(n=0,1,2,\cdots)$:
\begin{enumerate}
\item Compute $\delta^{(n)}=\iota_{V\rightarrow\delta}^{\gamma}\left(V^{(n)}\right)$
\item Update $V$ by $V^{(n+1)}=\iota_{\delta\rightarrow V}\left(\delta^{(n)}\right)$
\item Exit the iteration if $\left\Vert V^{(n+1)}-V^{(n)}\right\Vert <\epsilon_{V}$
\end{enumerate}
\caption{Inner loop algorithm of static BLP using $\Phi^{V,\gamma}$\label{alg:static_BLP_V}}
\end{algorithm}

\begin{rem}
As in the case of $\Phi^{\delta,\gamma=1}$ and $\Phi^{\delta,\gamma=0}$(cf. Remark \ref{rem:additional_cost_delta_map}), the computational cost of applying $\Phi^{V,\gamma=1}$ is similar to $\Phi^{V,\gamma=0}$, because computing a scalar $s_{0}(V)=\sum_{i}w_{i}\exp(-V_{i})$ is much less costly than computing $\sum_{j\in\mathcal{J}}\exp\left(\delta_{j}+\mu_{ij}\right)$.
\end{rem}
\begin{rem}
If we define a mapping $\Psi_{V\delta\rightarrow V,i}^{\gamma}\left(V,\delta\right)=\log\left(1+\left(\sum_{j\in\mathcal{J}}\exp(\delta_{j}+\mu_{ij})\right)\cdot\left(\frac{\sum_{i}w_{i}\exp(-V_{i})}{S_{0}^{(data)}}\right)^{\gamma}\right)$, $\Phi^{V,\gamma}(V)=\Psi_{V\delta\rightarrow V}^{\gamma}\left(V,\iota_{V\rightarrow\delta}\left(V\right)\right)$ holds. Hence, we can alternatively compute $\delta^{(n)}=\iota_{V\rightarrow\delta}\left(V^{(n)}\right)$, and update $V$ by $V^{(n+1)}=\Psi_{V\delta\rightarrow V}^{\gamma}\left(V^{(n)},\delta^{(n)}\right)$. Such algorithm is used in the dynamic BLP model discussed in the next section.
\end{rem}
We can easily prove a dualistic relationship between $\Phi^{\delta,\gamma}$ and $\Phi^{V,\gamma}$. As shown in the Monte Carlo simulation (Section \ref{sec:Numerical-Experiments}), their convergence speed is generally the same, provided that they take a common value $\gamma$. The following proposition is a formal statement:
\begin{prop}
\label{prop:duality_mapping}(Duality of mappings of $\delta$ and $V$) The following equations hold for all $\gamma\in\mathbb{R}$:

\begin{eqnarray*}
\Phi^{V,\gamma} & = & \iota_{\delta\rightarrow V}\circ\iota_{V\rightarrow\delta}^{\gamma},\\
\Phi^{\delta,\gamma} & = & \iota_{V\rightarrow\delta}^{\gamma}\circ\iota_{\delta\rightarrow V}.
\end{eqnarray*}
\end{prop}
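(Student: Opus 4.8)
The first identity $\Phi^{V,\gamma}=\iota_{\delta\rightarrow V}\circ\iota_{V\rightarrow\delta}^{\gamma}$ is immediate: it is exactly the definition of $\Phi^{V,\gamma}$ stated above, so nothing is required there. All the work goes into the second identity $\Phi^{\delta,\gamma}=\iota_{V\rightarrow\delta}^{\gamma}\circ\iota_{\delta\rightarrow V}$, which I would establish by a direct computation that unwinds the two composed maps and matches the result against the closed form of $\Phi_j^{\delta,\gamma}$.

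First I would fix an arbitrary $\delta\in B_\delta$ and set $V\equiv\iota_{\delta\rightarrow V}(\delta)$, so that $V_i=\log\left(1+\sum_{k\in\mathcal{J}}\exp(\delta_k+\mu_{ik})\right)$ and hence $\exp(-V_i)=\left(1+\sum_{k\in\mathcal{J}}\exp(\delta_k+\mu_{ik})\right)^{-1}$ for every $i\in\mathcal{I}$. The key step is to re-express the two sums appearing inside $\iota_{V\rightarrow\delta,j}^{\gamma}$ in terms of the predicted shares. Multiplying the identity for $\exp(-V_i)$ by $w_i\exp(\mu_{ij})$ and summing over $i$ gives $\sum_i w_i\exp(\mu_{ij}-V_i)=\exp(-\delta_j)\,s_j(\delta)$ directly from the definition of $s_j(\delta)$; likewise summing $w_i\exp(-V_i)$ over $i$ gives $\sum_i w_i\exp(-V_i)=s_0(\delta)$, using $s_0(\delta)=\sum_i w_i\left(1+\sum_{k\in\mathcal{J}}\exp(\delta_k+\mu_{ik})\right)^{-1}$.

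Substituting these two identities into the definition of $\iota_{V\rightarrow\delta,j}^{\gamma}(V)$, the term $\log\left(\sum_i w_i\exp(\mu_{ij}-V_i)\right)$ becomes $\log(s_j(\delta))-\delta_j$, and the term $\log\left(S_0^{(data)}/\sum_i w_i\exp(-V_i)\right)$ becomes $\log\left(S_0^{(data)}\right)-\log(s_0(\delta))$. Collecting terms then yields $\iota_{V\rightarrow\delta,j}^{\gamma}\left(\iota_{\delta\rightarrow V}(\delta)\right)=\delta_j+\left[\log\left(S_j^{(data)}\right)-\log(s_j(\delta))\right]-\gamma\left[\log\left(S_0^{(data)}\right)-\log(s_0(\delta))\right]$, which is precisely $\Phi_j^{\delta,\gamma}(\delta)$. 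Since $\delta$ and $j$ are arbitrary, the second identity follows, and the argument holds verbatim for every $\gamma\in\mathbb{R}$.

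I do not expect a genuine obstacle here: once the two share representations of the previous paragraph are in hand, the result is a one-line algebraic identity. The only point demanding care is the bookkeeping of the $\exp(-\delta_j)$ factor, since it is exactly what converts $\log\left(\sum_i w_i\exp(\mu_{ij}-V_i)\right)$ into the combination $\log(s_j(\delta))-\delta_j$ and thereby reproduces the $\delta_j+\log\left(S_j^{(data)}\right)-\log(s_j(\delta))$ structure of the BLP update; losing track of it is the one way the computation could go wrong.
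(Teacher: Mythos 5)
Your proof is correct: the first identity is indeed just the definition of $\Phi^{V,\gamma}$, and your verification of the second identity via $\exp(-V_i)=\bigl(1+\sum_{k\in\mathcal{J}}\exp(\delta_k+\mu_{ik})\bigr)^{-1}$, which gives $\sum_i w_i\exp(\mu_{ij}-V_i)=\exp(-\delta_j)\,s_j(\delta)$ and $\sum_i w_i\exp(-V_i)=s_0(\delta)$, is exactly the computation the paper has in mind. The paper states this proposition without a written proof (it is flagged as easily shown, and Appendix C contains no argument for it), so your write-up simply supplies the details of that same definitional unwinding, including the one genuinely delicate point --- the $\exp(-\delta_j)$ factor that restores the $\delta_j$ term in $\Phi_j^{\delta,\gamma}$.
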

Figure \ref{fig:Duality-mapping} graphically shows the relationships between the mappings of $\delta$ and $V$. Here, $\delta$-($\gamma$) denotes the algorithm using the mapping $\Phi^{\delta,\gamma}$, and $V$-($\gamma$) denotes the algorithm using the mapping $\Phi^{V,\gamma}$.

\begin{figure}[H]
\caption{Duality between the mappings of $\delta$ and $V$ \label{fig:Duality-mapping}}

\centering{}\includegraphics[scale=0.7]{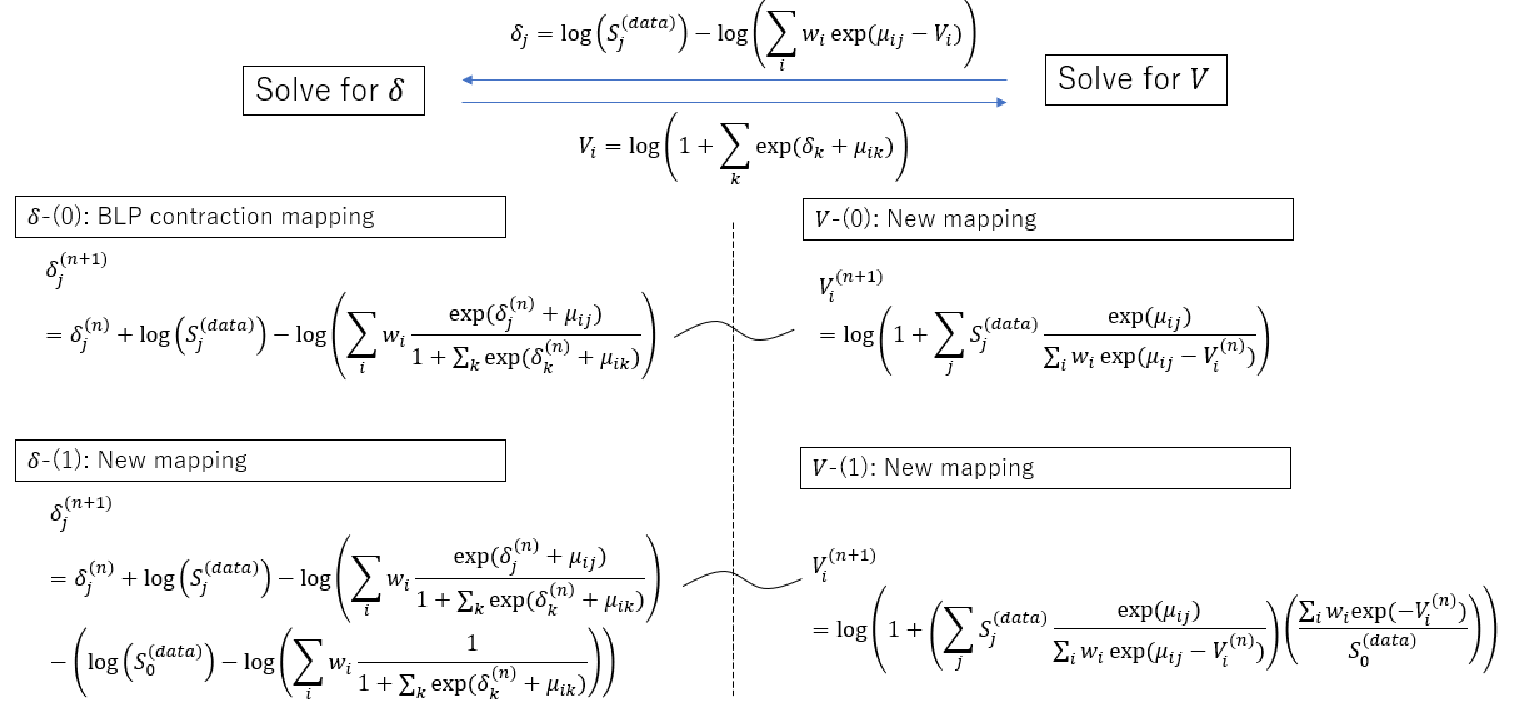}
\end{figure}

In Appendix \ref{subsec:Convergence-properties-mappings}, we discuss the convergence properties of the mapping $\Phi^{V,\gamma}$. As in the case of $\Phi^{\delta,\gamma}$, the size of the consumer heterogeneity affects the convergence speed. 

\section{Dynamic BLP model\label{sec:Dynamic-BLP-model}}

\subsection{Model}

Next, we consider a dynamic BLP model, where consumers' forward-looking decisions are integrated into the static BLP models. Similar to the static BLP models, we assume that consumer types are appropriately discretized to solve the model numerically. Let $\mathcal{I}$ be the set of consumer types, and let $w_{i}$ be the fraction of the type $i$ consumers.

The discounted sum of utility of consumer $i$ at state $(x_{it},\Omega_{t})$ when buying product $j$ at time $t$ is:

\begin{eqnarray*}
v_{ijt}(x_{it},\Omega_{t}) & = & \delta_{jt}+\mu_{ijt}(x_{it},\Omega_{t})+\beta E_{t}\left[V_{it+1}(x_{it+1},\Omega_{t+1})|x_{it},\Omega_{t},a_{it}=j\right]+\epsilon_{ijt}.
\end{eqnarray*}

The discounted sum of utility of consumer $i$ at state $(x_{it},\Omega_{t})$ when not buying any product at time $t$ is:

\begin{eqnarray*}
v_{i0t}(x_{it},\Omega_{t}) & = & \mu_{i0t}(x_{it},\Omega_{t})+\beta E_{t}\left[V_{it+1}(x_{it+1},\Omega_{t+1})|x_{it},\Omega_{t},a_{it}=0\right]+\epsilon_{i0t}.
\end{eqnarray*}
Here, $\beta$ denotes consumers' discount factor. $x_{it}$ denotes individual state variables, such as durable product holdings (for durable goods) or the product purchased in the previous period (for goods with switching costs). $\Omega_{t}$ denotes market-level state variables, including characteristics of products sold in the market. $\Omega_{t}$ may include the mean product utilities $\delta$. $\epsilon$ denotes i.i.d. idiosyncratic utility shocks, and $V$ denotes the (integrated) value function, defined by $V_{it}(x_{it},\Omega_{t})\equiv E_{\epsilon}\left[\max_{j\in\mathcal{A}_{t}(x_{it})}v_{ijt}(x_{it},\Omega_{t})\right]$, where $E_{\epsilon}$ denotes the expectation operator concerning $\epsilon$. $E_{t}$ denotes the expectation operator concerning the realizations of the future states $(x_{it+1},\Omega_{t+1})$.

Assuming that $\epsilon$ follows i.i.d. mean zero type-I extreme value distribution and that the observed market shares match the predicted market shares, $\delta$ and $V$ satisfy the following equations:

\begin{eqnarray}
S_{jt}^{(data)} & = & s_{jt}(V,\delta),\label{eq:S_j_dynamic_BLP}\\
S_{0t}^{(data)} & = & s_{0t}(V,\delta),\label{eq:S_0_dynamic_BLP}\\
V_{it}(x_{it},\Omega_{t}) & = & \log\left(\exp\left(\mu_{i0t}(x_{it},\Omega_{t})+\beta E_{t}\left[V_{it+1}(x_{it+1},\Omega_{t+1})|x_{it},\Omega_{t},a_{it}=0\right]\right)+\right.\label{eq:V_dynamic_BLP}\\
 &  & \ \ \ \ \left.\sum_{j\in\mathcal{A}_{t}(x_{it})-\{0\}}\exp\left(\delta_{jt}+\mu_{ijt}(x_{it},\Omega_{t})+\beta E_{t}\left[V_{it+1}(x_{it+1},\Omega_{t+1})|x_{it},\Omega_{t},a_{it}=j\right]\right)\right),\nonumber 
\end{eqnarray}
where 
\begin{eqnarray*}
s_{jt}(V,\delta) & \equiv & \sum_{i\in\mathcal{I}}w_{i}\sum_{x_{it}\in\chi}Pr_{it}(x_{it})\cdot\frac{\exp\left(\delta_{jt}+\mu_{ijt}(x_{it},\Omega_{t})+\beta E_{t}\left[V_{it+1}(x_{it+1},\Omega_{t+1})|x_{it},\Omega_{t},a_{it}=j\right]\right)}{\exp\left(V_{it}(x_{it},\Omega_{t})\right)},\\
s_{0t}(V,\delta) & \equiv & \sum_{i\in\mathcal{I}}w_{i}\sum_{x_{it}\in\chi}Pr_{it}(x_{it})\cdot\frac{\exp\left(\mu_{i0t}(x_{it},\Omega_{t})+\beta E_{t}\left[V_{it+1}(x_{it+1},\Omega_{t+1})|x_{it},\Omega_{t},a_{it}=0\right]\right)}{\exp\left(V_{it}(x_{it},\Omega_{t})\right)}.
\end{eqnarray*}

Here, $S_{jt}^{(data)}$ denotes product $j$'s market share at time $t$, and $S_{0t}^{(data)}$ denotes the fraction of consumers not buying anything at time $t$. $Pr_{it}(x_{it})$ denotes the fraction of type $i$ consumers at state $x_{it}$. For example, for durables, $Pr_{it}(x_{it}=\emptyset)$ denotes the fraction of type $i$ consumers not holding any durable product at time $t$, if we let $x_{it}=\emptyset$ be the state where consumers do not hold any durable products. $\mathcal{A}_{t}(x_{it})\subset\mathcal{J}_{t}\cup\{0\}$ denotes the consideration set of consumers at state $x_{it}$ at time $t$. For instance, in the durable goods example above, $\mathcal{A}_{t}(x_{it}\neq\emptyset)=\{0\}$ holds if consumers holding any durable product do not buy additional products. Here, $\mathcal{J}_{t}$ denotes the set of products sold at time $t$. The term $\frac{\exp\left(\delta_{jt}+\mu_{ijt}(x_{it},\Omega_{t})+\beta E_{t}\left[V_{it+1}(x_{it+1},\Omega_{t+1})|x_{it},\Omega_{t},a_{it}=j\right]\right)}{\exp\left(V_{it}(x_{it},\Omega_{t})\right)}\equiv s_{ijt}^{(ccp)}(x_{it},\Omega_{t})$ denotes the conditional choice probability (CCP) of type $i$ consumers at state $(x_{it},\Omega_{t})$ choosing product $j$ at time $t$. $\frac{\exp\left(\mu_{i0t}(x_{it},\Omega_{t})+\beta E_{t}\left[V_{it+1}(x_{it+1},\Omega_{t+1})|x_{it},\Omega_{t},a_{it}=0\right]\right)}{\exp\left(V_{it}(x_{it},\Omega_{t})\right)}\equiv s_{i0t}^{(ccp)}(x_{it},\Omega_{t})$ denotes the CCP of type $i$ consumers at state $(x_{it},\Omega_{t})$ choosing the outside option at time $t$.

\subsection{Algorithm}

The overall estimation algorithm is similar to the static BLP: As for dynamic BLP, we should solve for $\delta$ satisfying equations (\ref{eq:S_j_dynamic_BLP})-(\ref{eq:V_dynamic_BLP}) given $\mu$.\footnote{Unlike the static BLP models, there is no guarantee that $\delta$ satisfying (\ref{eq:S_j_dynamic_BLP})-(\ref{eq:V_dynamic_BLP}) is unique. If multiple $\delta$ satisfying these equations exist, we should adopt $\delta$ with the lowest GMM objective.} As in the static BLP, we focus on efficient algorithms for solving for $\delta$ satisfying (\ref{eq:S_j_dynamic_BLP})-(\ref{eq:V_dynamic_BLP}) given $\mu$.

Here, we examine a nonstationary environment, where consumers have perfect foresight on the state transition of $\Omega_{t}$, and $\Omega_{t}$ remains constant after the terminal period $T$.\footnote{Similar specification was considered in \citealp{gowrisankaran2012dynamics} and \citet{conlon2012dynamic}. Analogous specification was used in \citet{igami2017estimating} in the context of dynamic discrete games. Besides, regarding consumer expectations except for the terminal period, we can relax the assumption of perfect foresight to rational expectations where consumer expectations are on average correct. For details, see the discussion in \citet{fukasawa2024lightbulb}. The assumption of rational expectation has been applied in \citet{kalouptsidi2020linear} and empirical studies cited therein. } Although this specification differs from the one under inclusive value sufficiency, widely applied in earlier studies (e.g., \citealp{gowrisankaran2012dynamics}), the algorithm under the current specification is relatively simple and is the basis of algorithms under alternative specifications. Besides, we consider a setting where the values of $\left\{ Pr_{it}(x_{it})\right\} _{x_{it}\in\chi,t=1,\cdots,T}$ are known, to clarify the algorithm's essence. In Section \ref{subsec:Numerical-Dynamic-BLP-model}, we address cases where the values of $\left\{ Pr_{it}(x_{it})\right\} _{x_{it}\in\chi,t=1,\cdots,T}$ are unknown. 

Equations (\ref{eq:S_j_dynamic_BLP}), (\ref{eq:S_0_dynamic_BLP}), and (\ref{eq:V_dynamic_BLP}) imply $V$ is the solution of $V=\Phi^{V,\gamma}(V)\ \ \forall\gamma\in\mathbb{R}$, where 

\begin{eqnarray}
\Phi^{V,\gamma}(V) & \equiv & \Psi_{V\delta\rightarrow V}^{\gamma}\left(V,\iota_{V\rightarrow\delta}\left(V\right)\right).\label{eq:Phi_V_dynamic_BLP}
\end{eqnarray}

Here, we define a function $\Psi_{V\delta\rightarrow V}^{\gamma}:B_{V}\times B_{\delta}\rightarrow B_{V}$ such that:

\begin{eqnarray*}
 &  & \Psi_{V\delta\rightarrow V,it}^{\gamma}(V,\delta)\\
 & \equiv & \log\left(\exp\left(\mu_{i0t}(x_{it},\Omega_{t}^{(data)})+\beta E_{t}\left[V_{it+1}(x_{it+1},\Omega_{t+1}^{(data)})|x_{it},a_{it}=0\right]\right)+\right.\\
 &  & \left.\sum_{j\in\mathcal{A}_{t}(x_{it})-\{0\}}\exp\left(\delta_{jt}+\mu_{ijt}(x_{it},\Omega_{t}^{(data)})+\beta E_{t}\left[V_{it+1}(x_{it+1},\Omega_{t+1}^{(data)})|x_{it},a_{it}=j\right]\right)\cdot\left(\frac{s_{0t}(V,\delta)}{S_{0t}^{(data)}}\right)^{\gamma}\right).
\end{eqnarray*}

We also define a function $\iota_{V\rightarrow\delta}:B_{V}\rightarrow B_{\delta}$ such that:

{\small{}
\begin{eqnarray*}
 &  & \iota_{V\rightarrow\delta,jt}(V)\\
 & \equiv & \log\left(S_{jt}^{(data)}\right)-\log\left(\sum_{i\in\mathcal{I}}w_{i}\sum_{x_{it}\in\chi}Pr_{it}(x_{it})\cdot\frac{\exp\left(\mu_{ijt}(x_{it},\Omega_{t}^{(data)})+\beta E_{t}\left[V_{it+1}(x_{it+1},\Omega_{t+1}^{(data)})|x_{it},a_{it}=j\right]\right)}{\exp\left(V_{it}(x_{it},\Omega_{t}^{(data)})\right)}\right).
\end{eqnarray*}
}{\small\par}

$B_{V}$ is a space of $V\equiv\left\{ V_{it}(x_{it},\Omega_{t})\right\} _{x_{it}\in\mathcal{\chi},i\in\mathcal{I},t=1,\cdots,T}$, and $B_{\delta}$ is a space of $\delta\equiv\left\{ \delta_{jt}\right\} _{j\in\mathcal{J}_{t},t=1,\cdots,T}$. Note that $V_{iT+1}(x_{iT+1},\Omega_{T+1}^{(data)})=V_{iT}(x_{iT},\Omega_{T}^{(data)})$ holds by the assumption that consumers have perfect foresight on the state transition of $\Omega_{t}$, and $\Omega_{t}$ remains constant after the terminal period $T$.

Therefore, when the values of $Pr_{it}(x_{it})$ are known, we can solve for $\delta$ by iteratively applying $\Phi^{V,\gamma}$ and obtain $\delta$ in the process. Algorithm \ref{alg:dynamic_BLP_algorithm} shows the steps:

\begin{algorithm}[H]
Set initial values of $V^{(0)}$. Iterate the following $(n=0,1,2,\cdots)$:
\begin{enumerate}
\item Compute $\delta^{(n)}=\iota_{V\rightarrow\delta}\left(V^{(n)}\right)$
\item Update $V$ by $V^{(n+1)}=\Psi_{V\delta\rightarrow V}^{\gamma}\left(V^{(n)},\delta^{(n)}\right)$
\item Exit the iteration if $\left\Vert V^{(n+1)}-V^{(n)}\right\Vert <\epsilon_{V}$
\end{enumerate}
\caption{Proposed dynamic BLP inner-loop algorithm\label{alg:dynamic_BLP_algorithm}}

{\scriptsize{}Note. After the convergence, we should verify that $\left\Vert \log(S^{(data)})-\log(s(\delta,V))\right\Vert $ is sufficiently small. See also Remark \ref{rem:dynamic_BLP_verify}.}{\scriptsize\par}
\end{algorithm}

\begin{rem}
\label{rem:dynamic_BLP_verify}Although $\delta$ satisfying equations (\ref{eq:S_j_dynamic_BLP})-(\ref{eq:V_dynamic_BLP}) satisfies $V=\Phi^{V,\gamma}(V)$, it has not been shown that the solution of $V=\Phi^{V,\gamma}(V)$ actually satisfies equations (\ref{eq:S_j_dynamic_BLP})-(\ref{eq:V_dynamic_BLP}), unlike the static BLP case. If $V=\Phi^{V,\gamma}(V)$, has multiple solutions the solution may not satisfy $S_{jt}^{(data)}=s_{jt}(\delta,V)$. Hence, we should verify that $\left\Vert \log(S^{(data)})-\log(s(\delta,V))\right\Vert $ is sufficiently small after the convergence of the algorithm, using the solution $(\delta,V)$. Note that the Bellman equation (\ref{eq:V_dynamic_BLP}) holds for any $\gamma\in\mathbb{R}$ under $S_{jt}^{(data)}=s_{jt}(\delta,V)\ \forall j,t$.
\end{rem}

\subsubsection*{Algorithms used in the previous studies}

The upper part of Algorithm \ref{alg:dynamic_BLP_traditional-algorithm} was applied in earlier studies. The algorithm is simple: Researchers first solve for the value function $V$ given the values of mean product utilities $\delta$, then calculate the difference between the observed and predicted market shares based on the structural model. If the norm of the difference is larger than a tolerance level, the BLP contraction mapping $\delta_{jt}^{(n+1)}=\delta_{jt}^{(n)}+\phi\left[\log(S_{jt}^{(data)})-\log(s_{jt}(\delta))\right]$ is applied\footnote{In dynamic BLP models, there is no guarantee that the iteration $\delta_{jt}^{(n+1)}=\delta_{jt}^{(n)}+\left[\log(S_{jt}^{(data)})-\log(s_{jt}(\delta^{(n)}))\right]$ converges. Hence, introducing a dampening parameter $\phi\in(0,1]$ is sometimes necessary to stabilize the convergence. Proposition 2 of \citet{Sun2019} shows that the mapping $\delta_{jt}^{(n+1)}=\delta_{jt}^{(n)}+\phi\left[\log(S_{jt}^{(data)})-\log(s_{jt}(\delta^{(n)}))\right]$ is a contraction if $\phi$ is ``sufficiently small'' and additional conditions hold.} and repeat the process again. However, the nested loops increase the computational burden.

We can think of an alternative algorithm shown in the lower part of Algorithm \ref{alg:dynamic_BLP_traditional-algorithm}, which avoids the nested loops by jointly updating the values of $\delta$ and $V$ based on the fixed-point constraint.\footnote{The idea of jointly updating two types of variables can also be found in \citet{Pakes_McGuire1994} algorithm to solve dynamic games. In the algorithm, essentially, all the firms' value functions and strategic variables are jointly updated.} Note that we should solve for two different types of variables $\delta$ and $V$ in the algorithm. In contrast, in the proposed algorithm (Algorithm \ref{alg:dynamic_BLP_algorithm}), $\delta$ is analytically represented as a function of $V$, and we must only solve for $V$, which speeds up convergence. The superior performance of the proposed algorithm is shown in Section \ref{subsec:Numerical-Dynamic-BLP-model}. 

\begin{algorithm}[H]
\begin{itemize}
\item The case with nested loops ($\delta V$-(0) (nested)):

Set initial values of $\delta^{(0)}$. Iterate the following $(n=0,1,2,\cdots)$:
\begin{enumerate}
\item Set initial values of $V^{(0,n)}$. Given $\delta^{(n)}$, iterate the following $(m=0,1,2,\cdots)$:
\begin{enumerate}
\item Update $V$ by $V^{(m+1,n)}=\Psi_{V\delta\rightarrow V}^{\gamma=0}\left(V^{(m,n)},\delta^{(n)}\right)$
\item If $\left\Vert V^{(m+1,n)}-V^{(m,n)}\right\Vert <\epsilon$, let $V^{(n)*}$ be $V^{(m+1,n)}$ and exit
\end{enumerate}
\item Update $\delta$ by $\delta_{jt}^{(n+1)}=\Phi_{\delta V\rightarrow\delta,jt}^{\gamma=0,\phi}(\text{\ensuremath{\delta^{(n)},V^{(n)*}}})\equiv\delta_{jt}^{(n)}+\phi\left[\log\left(S_{jt}^{(data)}\right)-\log\left(s_{jt}(\delta^{(n)},V^{(n)*})\right)\right]$, where $\phi\in(0,1]$ is a dampening parameter. 
\item Exit the iteration if $\left\Vert \delta^{(n+1)}-\delta^{(n)}\right\Vert <\epsilon_{\delta}$
\end{enumerate}
\end{itemize}
\medskip{}

\begin{itemize}
\item The case with one loop ($\delta V$-(0) (joint)):

Set initial values of $\delta^{(0)},V^{(0)}$. Iterate the following $(n=0,1,2,\cdots)$:
\begin{enumerate}
\item Update $\delta$ by $\delta^{(n+1)}=\delta_{jt}^{(n)}+\phi\left[\log\left(S_{jt}^{(data)}\right)-\log\left(s_{jt}(\delta^{(n)},V^{(n)})\right)\right]$
\item Update $V$ by $V^{(n+1)}=\Psi_{V\delta\rightarrow V}^{\gamma=0}\left(V^{(n)},\delta^{(n)}\right)$
\item Exit the iteration if $\left\Vert \delta^{(n+1)}-\delta^{(n)}\right\Vert <\epsilon_{\delta}$ and $\left\Vert V^{(n+1)}-V^{(n)}\right\Vert <\epsilon_{V}$
\end{enumerate}
\end{itemize}
\caption{Traditional dynamic BLP inner-loop Algorithm\label{alg:dynamic_BLP_traditional-algorithm}}
\end{algorithm}

\begin{rem}
\citet{schiraldi2011automobile} and \citet{gowrisankaran2012dynamics} examined durable goods models with replacement demand, where $\mu_{i0t}(x_{it},\Omega_{t})$ depends on $\left\{ \delta_{kt}\right\} _{k\in\mathcal{J}_{t}}$, based on the product holding represented by $x_{it}$. Equation (\ref{eq:S_j_dynamic_BLP}) is valid even in this case, and we can derive an equation on $V$ without using $\delta$.
\end{rem}
\begin{rem}
Typically, the values of $\left\{ Pr_{it}(x_{it})\right\} _{i\in\mathcal{I},t=1,\cdots,T,x_{it}\in\chi}$ are unknown. We must therefore impose assumptions on the form of $\left\{ Pr_{it}(x_{it})\right\} _{i\in\mathcal{I},t=1,\cdots,T,x_{it}\in\chi}$ at the initial period and also solve for the variables. Here, suppose that the values of $\left\{ Pr_{it=1}(x_{it=1})\right\} _{i\in\mathcal{I},x_{it}\in\chi}$ are known.\footnote{We can alternatively assume that $t=1$ is at the stationary state, as in \citet{fukasawa2024lightbulb}. \citet{kasahara2009nonparametric} discuss the issue in the context of standard dynamic discrete choice model estimation with unobserved heterogeneity.} Generally, $\left\{ Pr_{it+1}(x_{it+1})\right\} _{i\in\mathcal{I},t=1,\cdots,T,x_{it+1}\in\chi}$ satisfies the following:

\begin{eqnarray}
Pr_{it+1}(x_{it+1}) & = & \sum_{x_{it}\in\chi}Pr{}_{it}(x_{it})\cdot\sum_{j\in\mathcal{A}_{t}(x_{it})}F(x_{it+1}|x_{it},a_{it}=j)\cdot s_{ijt}^{(ccp)}(x_{it},\Omega_{t}),\label{eq:Pr0_transition}
\end{eqnarray}
where $F(x_{it+1}|x_{it},a_{it}=j)$ denotes the state transition probability of state $x_{it}$. Then, as discussed in more specific model settings in Section \ref{subsec:Numerical-Dynamic-BLP-model}, $Pr_{it}(x_{it})$ can be represented as a function of $V$. Hence, even when the values of $Pr_{it}(x_{it})$ are unknown, we can slightly alter Algorithm \ref{alg:dynamic_BLP_algorithm} accordingly.\footnote{Intuitively, $s_{ijt}^{(ccp)}(x_{it},\Omega_{t})$ is a function of $V$ and $\delta$. Because $\delta$ can be analytically represented as a function of $V$, $Pr_{it}(x_{it})\ (t=1\cdots,T)$ can be sequentially represented as a function of $V$.} 
\end{rem}

\section{Acceleration methods of fixed point iterations\label{sec:Acceleration-methods}}

This section briefly describes three acceleration methods of fixed point iterations: Anderson acceleration, spectral algorithm, and SQUAREM algorithm.

\subsection{Anderson acceleration\label{subsec:Anderson}}

Anderson acceleration, originally proposed by \citet{anderson1965iterative}, is a method to accelerate the convergence of fixed-point iterations.

Suppose that we want to solve $x=\Phi(x)$ for $x$. The Anderson acceleration method solves the problem by the steps shown in Algorithm \ref{alg:Anderson-algorithm}.

\begin{algorithm}[H]
\begin{enumerate}
\item Set initial values of $x^{(0)}$. Choose $m\in\mathbb{N}$ and tolerance level $\epsilon$.
\item Compute $x^{(1)}\equiv\Phi\left(x^{(0)}\right)$ and $f_{0}\equiv\Phi\left(x^{(0)}\right)-x^{(0)}$.
\item Iterate the following $(n=1,2,\cdots)$:
\begin{enumerate}
\item Set $m_{n}=\min\left\{ m,n\right\} $.
\item Set $f_{n}\equiv\Phi\left(x^{(n)}\right)-x^{(n)}$. If $\left\Vert \Phi(x^{(n)})-x^{(n)}\right\Vert <\epsilon$, exit the iteration.
\item Compute 
\begin{eqnarray*}
\theta^{(n)}\equiv\arg\min_{\theta=\left(\theta_{0}^{(n)},\theta_{1}^{(n)},\cdots,\theta_{m_{n}}^{(n)}\right)^{T}} & = & \left\Vert \sum_{l=0}^{m_{n}}\theta_{l}^{(n)}f_{n-l}\right\Vert _{2}^{2}\ s.t.\ \sum_{l=0}^{m_{n}}\theta_{l}^{(n)}=1
\end{eqnarray*}
\item Compute $x^{(n+1)}=\sum_{l=0}^{m_{n}}\theta_{l}^{(n)}x^{(n-l)}$
\end{enumerate}
\end{enumerate}
\caption{Anderson acceleration method for fixed point iterations\label{alg:Anderson-algorithm}}

{\footnotesize{}Note. The constrained optimization problem in Step 2(c) can be solved as follows:}{\footnotesize\par}
\begin{enumerate}
\item {\footnotesize{}Compute $\gamma^{(n)}=\arg\min_{\gamma^{(n)}=\left(\gamma_{0}^{(n)},\gamma_{1}^{(n)},\cdots,\gamma_{m_{n}-1}^{(n)}\right)}\left\Vert f_{n}-\mathcal{F}_{n}\gamma^{(n)}\right\Vert _{2}^{2}=\left(\mathcal{F}_{n}^{T}\mathcal{F}_{n}\right)^{-1}\mathcal{F}_{n}f_{n}$, where $\mathcal{F}_{n}=\left(\Delta f_{n-m_{n}},\cdots,\Delta f_{n-1}\right)$ with $\Delta f_{i}=f_{i+1}-f_{i}$.}{\footnotesize\par}
\item {\footnotesize{}Compute $\theta^{(n)}$ such that $\theta_{0}^{(n)}=\gamma_{0}^{(n)}$, $\theta_{i}^{(n)}=\gamma_{i}^{(n)}-\gamma_{i-1}^{(n)}\ \left(i=1,\cdots,m_{n}-1\right)$, $\theta_{m_{n}}^{(n)}=1-\gamma_{m_{n}-1}^{(n)}$.}{\footnotesize\par}
\end{enumerate}
\end{algorithm}

Intuitively, $x^{(n+1)}=\sum_{l=0}^{m_{n}}\theta_{l}^{(n)}x^{(n-l)}$ is chosen to assign a larger weight on $x^{(n-l)}$ with smaller absolute values of $f_{n-l}\equiv\Phi\left(x^{(n-l)}\right)-x^{(n-l)}$. As noted in \citet{fang2009two}, this method closely relates to the quasi-Newton method. $m$ or $m_{n}$ is sometimes called memory size, because it represents how long we retain the information of past iterations $\left(f_{n-m_{n}},\cdots,f_{n}\right)$. In the numerical experiments shown in Section \ref{sec:Numerical-Experiments}, $m=5$ works well.

In Step 3(c) of the algorithm, we must solve a linear constrained optimization problem. The problem can be easily solved by alternatively solving the linear least squares problem\footnote{Strictly speaking, we cannot rule out the possibility that the values of $\left(f_{n-m_{n}},\cdots,f_{n}\right)$ are close to collinear, and the matrix $\left(\mathcal{F}_{n}^{T}\mathcal{F}_{n}\right)$ is close to singular. In that case, we choose one of $\theta$ minimizing $\left\Vert \sum_{l=0}^{m_{n}}\theta_{l}^{(n)}f_{n-l}\right\Vert _{2}^{2}$. In the numerical experiments in Section \ref{sec:Numerical-Experiments}, using the mldivide function suppressing any warnings on singular matrices in MATLAB and the lstsq function in numpy package in Python works well.} outlined in the note of Algorithm \ref{alg:Anderson-algorithm}.

\subsection{Spectral algorithm\label{subsec:Spectral}}

The spectral algorithm is designed to solve nonlinear equations and nonlinear continuous optimization problems. To solve a nonlinear equation $F(x)=0$, $x^{(n)}$ is iteratively updated as follows, given the initial values of $x^{(0)}$:\footnote{Newton's method, which uses the updating equation $x^{(n+1)}=x^{(n)}-\left(\nabla F(x^{(n)})\right)^{-1}F\left(x^{(n)}\right)\ (n=0,1,2,\cdots)$, attains fast convergence around the solution. Although Newton's method can be applied to solve the equation, computing $\nabla F(x^{(n)})$ requires coding analytical first derivatives of $F$, which is not an easy task, especially when the function $F$ is complicated. Moreover, especially when $n_{x}$, the dimension of $x$, is large, computing the inverse of $n_{x}\times n_{x}$ matrix $\nabla F(x^{(n)})$ is computationally costly. Hence, the use of the spectral algorithm is attractive from the perspective of simplicity and computational cost.}

\[
x^{(n+1)}=x^{(n)}+\alpha^{(n)}F\left(x^{(n)}\right)\ (n=0,1,2,\cdots)
\]

Here, $\alpha^{(n)}$ is typically based on the values of $s^{(n)}\equiv x^{(n)}-x^{(n-1)}$ and $y^{(n)}\equiv F(x^{(n)})-F(x^{(n-1)})$, which can vary across $n=0,1,2,\cdots$. 

The idea of the spectral algorithm can be utilized in the context of fixed point iterations. Suppose we want to solve a fixed point constraint $x=\Phi(x)$. Then, by letting $F(x)=\Phi(x)-x$, we can solve $x=\Phi(x)$ by iteratively updating the values of $x$ as follows:\footnote{Spectral algorithm can be thought of as a generalization of extrapolation method (cf. \citealp{judd1998numerical}) with varying values of $\alpha$.}

\begin{eqnarray}
x^{(n+1)} & = & x^{(n)}+\alpha^{(n)}\left(\Phi\left(x^{(n)}\right)-x^{(n)}\right)\ (n=0,1,2,\cdots)\label{eq:spectral_update}\\
 & = & \alpha^{(n)}\Phi(x^{(n)})+(1-\alpha^{(n)})x^{(n)}\ (n=0,1,2,\cdots).\nonumber 
\end{eqnarray}

As discussed in detail in \citet{varadhan2008simple},\footnote{Step size $\alpha_{S3}\equiv\frac{\left\Vert s^{(n)}\right\Vert _{2}}{\left\Vert y^{(n)}\right\Vert _{2}}$ can be derived from a simple optimization problem $\min_{\alpha^{(n)}}\frac{\left\Vert x^{(n+1)}-x^{(n)}\right\Vert _{2}^{2}}{\left|\alpha^{(n)}\right|}=\frac{\left\Vert s^{(n)}+\alpha^{(n)}y^{(n)}\right\Vert _{2}^{2}}{\left|\alpha^{(n)}\right|}$, as discussed in \citet{varadhan2008simple}.} the step size $\alpha^{(n)}=\alpha_{S3}\equiv\frac{\left\Vert s^{(n)}\right\Vert _{2}}{\left\Vert y^{(n)}\right\Vert _{2}}>0$ works well. The iteration choosing negative values of $\alpha^{(n)}$ leads to farther point from the solution when the mapping $\Phi$ is a contraction, as discussed in the Supplemental Appendix. Although \citet{conlon2020best} and \citet{pal2023comparing} mentioned $\alpha_{S1}\equiv-\frac{s^{(n)\prime}y^{(n)}}{y^{(n)\prime}y^{(n)}}$, they may take negative values, which can destabilize the convergence. In contrast, $\alpha_{S3}\equiv\frac{\left\Vert s^{(n)}\right\Vert _{2}}{\left\Vert y^{(n)}\right\Vert _{2}}$ always takes positive values, and it is preferable to apply $\alpha_{S3}$ as the step size. Algorithm \ref{alg:Spectral-algorithm} shows the detailed steps in the spectral algorithm under the step size $\alpha_{S3}$.

\begin{algorithm}[H]
\begin{enumerate}
\item Set initial values of $x^{(0)}$. Choose $\alpha^{(0)}$ and tolerance level $\epsilon$.
\item Iterate the following $(n=0,1,2,\cdots)$:
\begin{enumerate}
\item Compute $\Phi\left(x^{(n)}\right)$ and $F\left(x^{(n)}\right)=\Phi\left(x^{(n)}\right)-x^{(n)}$.
\item Compute $s^{(n)}\equiv x^{(n)}-x^{(n-1)},y^{(n)}\equiv F(x^{(n)})-F(x^{(n-1)}),\text{and }\alpha^{(n)}=\frac{\left\Vert s^{(n)}\right\Vert _{2}}{\left\Vert y^{(n)}\right\Vert _{2}}\text{\ ir\ }n\geq1.$
\item Compute $x^{(n+1)}=x^{(n)}+\alpha^{(n)}F(x^{(n)})$
\item If $\left\Vert \Phi(x^{(n)})-x^{(n)}\right\Vert <\epsilon$, exit the iteration. Otherwise, go back to Step 2(a).
\end{enumerate}
\end{enumerate}
\caption{Spectral algorithm with fixed-point mapping\label{alg:Spectral-algorithm}}
\end{algorithm}

\subsubsection*{Variable-type-specific step size}

Although the standard spectral algorithm uses a scalar $\alpha^{(n)}$, I newly introduce the idea of variable-type-specific step sizes to the spectral algorithm.

To introduce the idea, suppose we would like to solve a nonlinear equation $F(x)=\left(\begin{array}{c}
F_{1}(x)\\
F_{2}(x)
\end{array}\right)=0\in\mathbb{R}^{n_{1}+n_{2}}$, where $x=\left(\begin{array}{c}
x_{1}\\
x_{2}
\end{array}\right)\in\mathbb{R}^{n_{1}+n_{2}}$. When using the standard step size $\alpha_{S3}$, we set $\alpha^{(n)}=\frac{\left\Vert s^{(n)}\right\Vert _{2}}{\left\Vert y^{(n)}\right\Vert _{2}}$, where $s^{(n)}\equiv\left(\begin{array}{c}
x_{1}^{(n)}\\
x_{2}^{(n)}
\end{array}\right)-\left(\begin{array}{c}
x_{1}^{(n-1)}\\
x_{2}^{(n-1)}
\end{array}\right)$, $y^{(n)}\equiv\left(\begin{array}{c}
F_{1}(x^{(n)})\\
F_{2}(x^{(n)})
\end{array}\right)-\left(\begin{array}{c}
F_{1}(x^{(n-1)})\\
F_{2}(x^{(n-1)})
\end{array}\right)$, and update $x$ by $x^{(n+1)}=x^{(n)}+\alpha^{(n)}F\left(x^{(n)}\right)$. 

Here, we can alternatively update the values of $x=\left(\begin{array}{c}
x_{1}\\
x_{2}
\end{array}\right)$ by $x^{(n+1)}=\left(\begin{array}{c}
x_{1}^{(n+1)}\\
x_{2}^{(n+1)}
\end{array}\right)=\left(\begin{array}{c}
x_{1}^{(n)}\\
x_{2}^{(n)}
\end{array}\right)+\left(\begin{array}{c}
\alpha_{1}^{(n)}F_{1}(x^{(n)})\\
\alpha_{2}^{(n)}F_{2}(x^{(n)})
\end{array}\right),$ where $\alpha_{m}^{(n)}=\frac{\left\Vert s_{m}^{(n)}\right\Vert _{2}}{\left\Vert y_{m}^{(n)}\right\Vert _{2}}$, $s_{m}^{(n)}\equiv x_{m}^{(n)}-x_{m}^{(n-1)}$, $y_{m}^{(n)}\equiv F_{m}(x^{(n)})-F_{m}(x^{(n-1)})\ (m=1,2)$. The idea can be generalized to the case where $N\in\mathbb{N}$ types of variables exist. 

As discussed earlier, $\alpha^{(n)}$ is chosen to accelerate the convergence. In principle, any choices of $\alpha^{(n)}$ are allowed in the spectral algorithm provided it is effective, and its numerical performance is more critical.\footnote{We can justify the strategy with a simple thought experiment. Suppose $x_{1}^{(n)}$ is fixed at the true value $x_{1}^{*}$, and $x_{2}^{(n)}$ is not. Then, we should solve the equation $F_{2}(x_{2};x_{1}^{*})=0$ as an equation of $x_{2}$, and setting $\alpha_{2}^{(n)}=\frac{\left\Vert s_{2}^{(n)}\right\Vert _{2}}{\left\Vert y_{2}^{(n)}\right\Vert _{2}}$ is desirable. If we assume $\alpha_{1}^{(n)}=\alpha_{2}^{(n)}=\alpha^{(n)}$, we should set $\alpha^{(n)}=\frac{\left\Vert s^{(n)}\right\Vert _{2}}{\left\Vert y^{(n)}\right\Vert _{2}}=\frac{\sqrt{\left\Vert s_{1}^{(n)}\right\Vert _{2}^{2}+\left\Vert s_{2}^{(n)}\right\Vert _{2}^{2}}}{\sqrt{\left\Vert y_{1}^{(n)}\right\Vert _{2}^{2}+\left\Vert y_{2}^{(n)}\right\Vert _{2}^{2}}}$, which may not be equal to $\frac{\left\Vert s_{2}^{(n)}\right\Vert _{2}}{\left\Vert y_{2}^{(n)}\right\Vert _{2}}$.}

\subsection{SQUAREM algorithm\label{subsec:SQUAREM}}

\citet{varadhan2008simple} proposed the SQUAREM (Squared Polynomial Extrapolation Methods) algorithm, which solves the fixed-point problem $x=\Phi(x)$, and the updating equation is as follows:

\begin{eqnarray}
x^{(n+1)} & = & x^{(n)}+2\alpha^{(n)}s^{(n)}+\left(\alpha^{(n)}\right)^{2}y^{(n)}\label{eq:SQUAREM_update}
\end{eqnarray}
where $s^{(n)}\equiv\Phi\left(x^{(n)}\right)-x^{(n)}$, $y^{(n)}\equiv F\left(\Phi\left(x^{(n)}\right)\right)-F\left(x^{(n)}\right)=\Phi^{2}\left(x^{(n)}\right)-2\Phi\left(x^{(n)}\right)+x^{(n)}$, and $F(x)=\Phi(x)-x$. Step size $\alpha^{(n)}$ is chosen based on the values of $s^{(n)}$ and $y^{(n)}$.

The SQUAREM algorithm is closely related to the spectral algorithm. Updating equation (\ref{eq:SQUAREM_update}) can be reformulated as:

\begin{eqnarray*}
x^{(n+1)} & = & \left(1-\alpha^{(n)}\right)\left[(1-\alpha^{(n)})x^{(n)}+\alpha^{(n)}\Phi\left(x^{(n)}\right)\right]+\alpha^{(n)}\left[(1-\alpha^{(n)})\Phi\left(x^{(n)}\right)+\alpha^{(n)}\Phi^{2}\left(x^{(n)}\right)\right]\\
 & = & \left(1-\alpha^{(n)}\right)\Psi_{spectral}\left(x^{(n)};\Phi,\alpha^{(n)}\right)+\alpha^{(n)}\Psi_{spectral}\left(\Phi\left(x^{(n)}\right);\Phi,\alpha^{(n)}\right).
\end{eqnarray*}

Here, we define a spectral update function, defined by $\Psi_{spectral}\left(x;\Phi,\alpha\right)\equiv(1-\alpha)x+\alpha\Phi(x)$. The right-hand side can be regarded as the output of the spectral update $\Psi_{spectral}\left(x^{(n)};\Phi,\alpha^{(n)}\right)$ and $\Psi_{spectral}\left(\Phi\left(x^{(n)}\right);\Phi,\alpha^{(n)}\right)$.

Algorithm \ref{alg:SQUAREM-algorithm} shows the steps when we the SQUAREM algorithm with step size $\alpha_{S3}\equiv\frac{\left\Vert s^{(n)}\right\Vert _{2}}{\left\Vert y^{(n)}\right\Vert _{2}}$ is applied:

\begin{algorithm}[H]
\begin{enumerate}
\item Set initial values of $x^{(0)}$ and tolerance level $\epsilon$.
\item Iterate the following $(n=0,1,2,\cdots)$:
\begin{enumerate}
\item Compute $\Phi(x^{(n)})$ and $\Phi^{2}(x^{(n)})\equiv\Phi\left(\Phi(x^{(n)})\right)$
\item Compute $s^{(n)}\equiv\Phi\left(x^{(n)}\right)-x^{(n)},y^{(n)}\equiv\Phi^{2}\left(x^{(n)}\right)-2\Phi\left(x^{(n)}\right)+x^{(n)}$, and $\alpha^{(n)}=\frac{\left\Vert s^{(n)}\right\Vert _{2}}{\left\Vert y^{(n)}\right\Vert _{2}}$.
\item Compute $x^{(n+1)}=x^{(n)}+2\alpha^{(n)}s^{(n)}+\left(\alpha^{(n)}\right)^{2}y^{(n)}$.
\item If $\left\Vert \Phi(x^{(n)})-x^{(n)}\right\Vert <\epsilon$, exit the iteration. Otherwise, go back to step 2(a).
\end{enumerate}
\end{enumerate}
\caption{SQUAREM algorithm\label{alg:SQUAREM-algorithm}}
\end{algorithm}

\subsection{Choice of acceleration methods}

As numerically shown in Section \ref{sec:Numerical-Experiments}, the Anderson acceleration outperforms both the spectral and SQUAREM algorithms in static and dynamic BLP applications, and the current study recommends using the Anderson acceleration as the acceleration method. Because even the spectral and SQUAREM algorithms work fairly well, they can be good alternatives in case the Anderson acceleration does not work well.

Regarding convergence, Anderson acceleration is mainly designed to accelerate the convergence of a contraction mapping, and the iteration may not converge when the mapping does not have contraction properties.\footnote{\citet{zhang2020globally} developed a globalization strategy for nonexpansive mappings, which is a slight extension of contraction mappings, in Anderson acceleration. However, the strategy may not work when there is no guarantee that the iteration is nonexpansive.} In contrast, regarding the spectral algorithm, there are many studies on globalization strategies to stabilize convergence (e.g., \citet{la2006spectral}; \citet{huang2017new}), and the spectral algorithm can be a good alternative for practitioners prioritizing convergence. Note that, in static BLP models, the global convergence can be ensured regardless of acceleration methods used, simplify by adding a few lines in the programming code, utilizing the fact that the BLP contraction mapping is a contraction. For details, see Appendix \ref{subsec:Global-convergence}.

\section{Numerical Experiments\label{sec:Numerical-Experiments}}

This section shows the results of numerical experiments for static and dynamic BLP models. All the experiments were run on a laptop computer with the CPU AMD Ryzen 5 6600H 3.30 GHz, 16.0 GB of RAM, Windows 11 64-bit, and MATLAB 2022b. In this study, we assume an algorithm solving a fixed-point constraint $x=\Phi(x)$ does not converge, when the value of $\left\Vert \Phi(x)-x\right\Vert _{\infty}$ takes an infinite or NaN value, or the number of function evaluations reaches a pre-determined maximum.

\subsection{Static BLP model\label{subsec:Numerical-Static-BLP-model}}

Here, we define $\delta$-(0) and $\delta$-(1) as the algorithms using mapping $\Phi^{\delta,\gamma=0}$ and $\Phi^{\delta,\gamma=1}$, respectively. Similarly, we denote $V$-(0) and $V$-(1) as the algorithms using mapping $\Phi^{V,\gamma=0}$ and $\Phi^{V,\gamma=1}$, respectively.

\subsubsection{Monte Carlo simulation}

To compare the performance of the algorithms, I conducted Monte Carlo simulation. The settings are the same as those of \citet{lee2015computationally} and \citet{dube2012improving}, and similar to those of \citet{pal2023comparing}. Consumer $i$'s utility for choosing product $j$ is $U_{ij}=X_{j}\theta_{i}+\xi_{j}+\epsilon_{ij}\ (j=1,\cdots,J)$, and utility for not buying is $U_{i0}=\epsilon_{i0}$. Let $\delta_{j}\equiv X_{j}\cdot E[\theta_{i}]+\xi_{j}$. For details on the data-generating process, see the Supplemental Appendix. Number of simulation draws is set to 1000. The tolerance level of the convergence is set to 1E-13. The initial value $\delta_{j}^{(0)}$ is set to $\log(S_{j}^{(data)})-\log(S_{0}^{(data)})$, and the initial value of $V_{i}^{(0)}$ is set to 0. 

We evaluate the performance of the BLP inner-loop algorithms in solving for $\delta$ given parameter values $\theta_{n}$ (standard deviation of random coefficients $\theta_{i}$). First, market share data are generated based on the true nonlinear parameter values $\theta_{n}^{*}$ and true $\delta^{*}$. As analysts lack knowledge of the true nonlinear parameter values during estimation process, we imitate the setting by drawing parameter values $\theta_{n}\sim U[0,2\theta_{n}^{*}]$, and measuring the performance of each algorithm under the parameters. This process is repeated 50 times. 

Here, we mainly focus on the number of function evaluations, rather than CPU time. Although the CPU time indicates the exact duration required for iteration termination, generally it largely depends on the algorithm's syntax and the characteristics of each programming language. In contrast, the number of function evaluations remains unaffected by these factors as we focus on the values.\footnote{In the case of the standard fixed-point iterations, Anderson acceleration, and the spectral algorithm, the number of function evaluations equals the number of iterations. Regarding the SQUAREM algorithm, the number of function evaluations is roughly half of the number of iterations, and we focus on the number of function evaluations as the indicator. }

Tables \ref{tab:static_BLP_Monte_Carlo_J-25} and \ref{tab:static_BLP_Monte_Carlo_J-250} show the results for $J=25$ and $J=250$.\footnote{The current study also conducted numerical experiments in settings with $J=25$ and larger constant term of the utility than the setting in Table \ref{tab:static_BLP_Monte_Carlo_J-25}. The results are also consistent with the discussions in this section. They are available upon request.} The mappings $\delta$-(1) and $V$-(1) require significantly fewer iterations than $\delta$-(0) and $V$-(0), with their superior performance more evident for $J=250$. As discussed in Section \ref{subsec:Mappings-on-delta-static-BLP} and detailed in Appendix \ref{subsec:Convergence-properties-mappings}, the convergence speed of the BLP contraction mapping ($\delta$-(0)) can be slow when the outside option share is small, whereas $\delta$-(1) remains unaffected. As shown in the tables' notes, the mean outside option share is 0.850 for $J=25$, and 0.307 for $J=250$. The observations are in line with the discussion. Moreover, the numerical results highlight the duality between the mappings of $\delta$ and $V$: the numbers of iterations are mostly the same for $\delta$-($\gamma$) and $V$-($\gamma$) ($\gamma=0,1$). 

Combining the acceleration methods further reduces the number of function evaluations. Anderson acceleration is the fastest, and outperforms both the spectral and SQUAREM algorithms, which show similar performance. Note that the spectral and SQUAREM algorithms perform similarly. Here, we use the step size $\alpha_{S3}$. When we use the other step sizes ($\alpha_{S1}$,$\alpha_{S2}$), the iterations sometimes do not converge. Detailed results are provided in the Supplemental Appendix.

Note that the value of $\left\Vert \log(S_{j}^{(data)})-\log(s_{j})\right\Vert _{\infty}$ might not be close to 0, even when the iteration converges. For instance, the convergence of the iteration based on the mapping $\delta$-(1) is determined by $\left\Vert \Phi^{\delta,\gamma=1}(\delta)-\delta\right\Vert =\left\Vert \left(\log(S_{j}^{(data)})-\log(s_{j})\right)-\left(\log(S_{0}^{(data)})-\log(s_{0})\right)\right\Vert <\epsilon$, not $\left\Vert \log(S_{j}^{(data)})-\log(s_{j})\right\Vert <\epsilon$. Hence, to ensure that the iteration yields a solution with sufficiently small $\left\Vert \log(S_{j}^{(data)})-\log(s_{j})\right\Vert $, the tables display the mean values of $DIST\equiv\left\Vert \log(S_{j}^{(data)})-\log(s_{j})\right\Vert _{\infty}$ computed using the iteration results. They also show the percentage of settings where the value of $DIST$ is smaller than 1E-12. The results indicate that all algorithms except for the slow $\delta$-(0), $V$-(0) algorithms satisfy the precision constraint.
\begin{center}
\begin{table}[H]
\caption{Results of Monte Carlo simulation (Static BLP model; Continuous consumer types; $J=25$)\label{tab:static_BLP_Monte_Carlo_J-25}}

\scalebox{0.8}{
\begin{centering}
{\scriptsize{}}%
\begin{tabular}{cccccccccccc}
\hline 
 & \multirow{2}{*}{{\scriptsize{}$J$}} & \multicolumn{6}{c}{{\scriptsize{}Func. Eval.}} & {\scriptsize{}Mean} & {\scriptsize{}Conv.} & {\scriptsize{}Mean} & {\scriptsize{}$DIST<\epsilon_{tol}$}\tabularnewline
\cline{3-8} \cline{4-8} \cline{5-8} \cline{6-8} \cline{7-8} \cline{8-8} 
 &  & {\scriptsize{}Mean} & {\scriptsize{}Min.} & {\scriptsize{}25th} & {\scriptsize{}Median.} & {\scriptsize{}75th} & {\scriptsize{}Max.} & {\scriptsize{}CPU time (s)} & {\scriptsize{}(\%)} & {\scriptsize{}$\log_{10}\left(DIST\right)$} & {\scriptsize{}(\%)}\tabularnewline
\hline 
{\scriptsize{}$\delta$-(0) (BLP)} & {\scriptsize{}25} & {\scriptsize{}42.64} & {\scriptsize{}5} & {\scriptsize{}9} & {\scriptsize{}14} & {\scriptsize{}31} & {\scriptsize{}630} & {\scriptsize{}0.0123} & {\scriptsize{}100} & {\scriptsize{}-15.5} & {\scriptsize{}100}\tabularnewline
{\scriptsize{}$\delta$-(0) (BLP) + Anderson} & {\scriptsize{}25} & {\scriptsize{}8.34} & {\scriptsize{}5} & {\scriptsize{}7} & {\scriptsize{}8} & {\scriptsize{}9} & {\scriptsize{}19} & {\scriptsize{}0.0038} & {\scriptsize{}100} & {\scriptsize{}-17} & {\scriptsize{}100}\tabularnewline
{\scriptsize{}$\delta$-(0) (BLP) + Spectral} & {\scriptsize{}25} & {\scriptsize{}12.3} & {\scriptsize{}5} & {\scriptsize{}7} & {\scriptsize{}10} & {\scriptsize{}13} & {\scriptsize{}45} & {\scriptsize{}0.00408} & {\scriptsize{}100} & {\scriptsize{}-16.3} & {\scriptsize{}100}\tabularnewline
{\scriptsize{}$\delta$-(0) (BLP) + SQUAREM} & {\scriptsize{}25} & {\scriptsize{}12.92} & {\scriptsize{}5} & {\scriptsize{}8} & {\scriptsize{}10} & {\scriptsize{}13} & {\scriptsize{}46} & {\scriptsize{}0.00374} & {\scriptsize{}100} & {\scriptsize{}-16.4} & {\scriptsize{}100}\tabularnewline
{\scriptsize{}$\delta$-(1)} & {\scriptsize{}25} & {\scriptsize{}14.58} & {\scriptsize{}5} & {\scriptsize{}8} & {\scriptsize{}12} & {\scriptsize{}17} & {\scriptsize{}43} & {\scriptsize{}0.00424} & {\scriptsize{}100} & {\scriptsize{}-15.9} & {\scriptsize{}100}\tabularnewline
{\scriptsize{}$\delta$-(1) (BLP) + Anderson} & {\scriptsize{}25} & {\scriptsize{}7.5} & {\scriptsize{}5} & {\scriptsize{}6} & {\scriptsize{}7} & {\scriptsize{}9} & {\scriptsize{}11} & {\scriptsize{}0.00326} & {\scriptsize{}100} & {\scriptsize{}-17.1} & {\scriptsize{}100}\tabularnewline
{\scriptsize{}$\delta$-(1) + Spectral} & {\scriptsize{}25} & {\scriptsize{}9.38} & {\scriptsize{}5} & {\scriptsize{}7} & {\scriptsize{}9} & {\scriptsize{}11} & {\scriptsize{}22} & {\scriptsize{}0.00308} & {\scriptsize{}100} & {\scriptsize{}-16.4} & {\scriptsize{}100}\tabularnewline
{\scriptsize{}$\delta$-(1) + SQUAREM} & {\scriptsize{}25} & {\scriptsize{}9.54} & {\scriptsize{}5} & {\scriptsize{}7} & {\scriptsize{}9} & {\scriptsize{}11} & {\scriptsize{}17} & {\scriptsize{}0.00282} & {\scriptsize{}100} & {\scriptsize{}-16.7} & {\scriptsize{}100}\tabularnewline
\hline 
{\scriptsize{}$V$-(0)} & {\scriptsize{}25} & {\scriptsize{}43.66} & {\scriptsize{}4} & {\scriptsize{}8} & {\scriptsize{}14} & {\scriptsize{}34} & {\scriptsize{}643} & {\scriptsize{}0.01466} & {\scriptsize{}100} & {\scriptsize{}-15.8} & {\scriptsize{}100}\tabularnewline
{\scriptsize{}$V$-(0) + Anderson} & {\scriptsize{}25} & {\scriptsize{}8.22} & {\scriptsize{}4} & {\scriptsize{}5} & {\scriptsize{}7} & {\scriptsize{}9} & {\scriptsize{}30} & {\scriptsize{}0.0038} & {\scriptsize{}100} & {\scriptsize{}-16.7} & {\scriptsize{}100}\tabularnewline
{\scriptsize{}$V$-(0) + Spectral} & {\scriptsize{}25} & {\scriptsize{}10.98} & {\scriptsize{}4} & {\scriptsize{}6} & {\scriptsize{}9} & {\scriptsize{}12} & {\scriptsize{}42} & {\scriptsize{}0.00436} & {\scriptsize{}100} & {\scriptsize{}-16.6} & {\scriptsize{}100}\tabularnewline
{\scriptsize{}$V$-(0) + SQUAREM} & {\scriptsize{}25} & {\scriptsize{}11.5} & {\scriptsize{}4} & {\scriptsize{}6} & {\scriptsize{}9} & {\scriptsize{}13} & {\scriptsize{}37} & {\scriptsize{}0.00408} & {\scriptsize{}100} & {\scriptsize{}-16.6} & {\scriptsize{}100}\tabularnewline
{\scriptsize{}$V$-(1)} & {\scriptsize{}25} & {\scriptsize{}14.52} & {\scriptsize{}4} & {\scriptsize{}7} & {\scriptsize{}12} & {\scriptsize{}17} & {\scriptsize{}44} & {\scriptsize{}0.0054} & {\scriptsize{}100} & {\scriptsize{}-16.1} & {\scriptsize{}100}\tabularnewline
{\scriptsize{}$V$-(1) + Anderson} & {\scriptsize{}25} & {\scriptsize{}7.24} & {\scriptsize{}4} & {\scriptsize{}5} & {\scriptsize{}7} & {\scriptsize{}8} & {\scriptsize{}16} & {\scriptsize{}0.0034} & {\scriptsize{}100} & {\scriptsize{}-16.6} & {\scriptsize{}100}\tabularnewline
{\scriptsize{}$V$-(1) + Spectral} & {\scriptsize{}25} & {\scriptsize{}9.74} & {\scriptsize{}4} & {\scriptsize{}6} & {\scriptsize{}8} & {\scriptsize{}12} & {\scriptsize{}30} & {\scriptsize{}0.00412} & {\scriptsize{}100} & {\scriptsize{}-16.4} & {\scriptsize{}100}\tabularnewline
{\scriptsize{}$V$-(1) + SQUAREM} & {\scriptsize{}25} & {\scriptsize{}9.8} & {\scriptsize{}4} & {\scriptsize{}6} & {\scriptsize{}9} & {\scriptsize{}12} & {\scriptsize{}19} & {\scriptsize{}0.0038} & {\scriptsize{}100} & {\scriptsize{}-16.5} & {\scriptsize{}100}\tabularnewline
\hline 
\end{tabular}{\scriptsize\par}
\par\end{centering}
}

{\footnotesize{}
{\footnotesize{}Notes. $DIST\equiv\left\Vert \log(S^{(data)})-\log(s)\right\Vert _{\infty}$, $\epsilon_{tol}=$1E-12.}{\footnotesize\par}

{\footnotesize{}The number of simulation draws is set to 1000.}{\footnotesize\par}

{\footnotesize{}The maximum number of function evaluations is set to 1000.}{\footnotesize\par}

}{\footnotesize\par}

{\footnotesize{}The mean outside option share is 0.847.}{\footnotesize\par}
\end{table}
\par\end{center}

\begin{center}
\begin{table}[H]
\caption{Results of Monte Carlo simulation (Static BLP model; Continuous consumer types; $J=250$)\label{tab:static_BLP_Monte_Carlo_J-250}}

\scalebox{0.8}{
\begin{centering}
{\scriptsize{}}%
\begin{tabular}{cccccccccccc}
\hline 
 & \multirow{2}{*}{{\scriptsize{}$J$}} & \multicolumn{6}{c}{{\scriptsize{}Func. Eval.}} & {\scriptsize{}Mean} & {\scriptsize{}Conv.} & {\scriptsize{}Mean} & {\scriptsize{}$DIST<\epsilon_{tol}$}\tabularnewline
\cline{3-8} \cline{4-8} \cline{5-8} \cline{6-8} \cline{7-8} \cline{8-8} 
 &  & {\scriptsize{}Mean} & {\scriptsize{}Min.} & {\scriptsize{}25th} & {\scriptsize{}Median.} & {\scriptsize{}75th} & {\scriptsize{}Max.} & {\scriptsize{}CPU time (s)} & {\scriptsize{}(\%)} & {\scriptsize{}$\log_{10}\left(DIST\right)$} & {\scriptsize{}(\%)}\tabularnewline
\hline 
{\scriptsize{}$\delta$-(0) (BLP)} & {\scriptsize{}250} & {\scriptsize{}201.56} & {\scriptsize{}21} & {\scriptsize{}70} & {\scriptsize{}114.5} & {\scriptsize{}196} & {\scriptsize{}1000} & {\scriptsize{}0.732} & {\scriptsize{}94} & {\scriptsize{}-13.3} & {\scriptsize{}98}\tabularnewline
{\scriptsize{}$\delta$-(0) (BLP) + Anderson} & {\scriptsize{}250} & {\scriptsize{}12.06} & {\scriptsize{}9} & {\scriptsize{}11} & {\scriptsize{}12} & {\scriptsize{}13} & {\scriptsize{}22} & {\scriptsize{}0.04486} & {\scriptsize{}100} & {\scriptsize{}-15.5} & {\scriptsize{}100}\tabularnewline
{\scriptsize{}$\delta$-(0) (BLP) + Spectral} & {\scriptsize{}250} & {\scriptsize{}33.16} & {\scriptsize{}13} & {\scriptsize{}23} & {\scriptsize{}27} & {\scriptsize{}36} & {\scriptsize{}91} & {\scriptsize{}0.12858} & {\scriptsize{}100} & {\scriptsize{}-14.3} & {\scriptsize{}100}\tabularnewline
{\scriptsize{}$\delta$-(0) (BLP) + SQUAREM} & {\scriptsize{}250} & {\scriptsize{}37.12} & {\scriptsize{}14} & {\scriptsize{}22} & {\scriptsize{}30} & {\scriptsize{}43} & {\scriptsize{}142} & {\scriptsize{}0.13904} & {\scriptsize{}100} & {\scriptsize{}-14.5} & {\scriptsize{}100}\tabularnewline
{\scriptsize{}$\delta$-(1)} & {\scriptsize{}250} & {\scriptsize{}24.32} & {\scriptsize{}9} & {\scriptsize{}16} & {\scriptsize{}24} & {\scriptsize{}30} & {\scriptsize{}56} & {\scriptsize{}0.08928} & {\scriptsize{}100} & {\scriptsize{}-14.5} & {\scriptsize{}100}\tabularnewline
{\scriptsize{}$\delta$-(1) (BLP) + Anderson} & {\scriptsize{}250} & {\scriptsize{}9.76} & {\scriptsize{}7} & {\scriptsize{}9} & {\scriptsize{}10} & {\scriptsize{}11} & {\scriptsize{}14} & {\scriptsize{}0.03622} & {\scriptsize{}100} & {\scriptsize{}-15.9} & {\scriptsize{}100}\tabularnewline
{\scriptsize{}$\delta$-(1) + Spectral} & {\scriptsize{}250} & {\scriptsize{}13.88} & {\scriptsize{}8} & {\scriptsize{}12} & {\scriptsize{}14} & {\scriptsize{}16} & {\scriptsize{}20} & {\scriptsize{}0.05212} & {\scriptsize{}100} & {\scriptsize{}-15.1} & {\scriptsize{}100}\tabularnewline
{\scriptsize{}$\delta$-(1) + SQUAREM} & {\scriptsize{}250} & {\scriptsize{}14.02} & {\scriptsize{}8} & {\scriptsize{}12} & {\scriptsize{}14} & {\scriptsize{}16} & {\scriptsize{}22} & {\scriptsize{}0.05056} & {\scriptsize{}100} & {\scriptsize{}-15.4} & {\scriptsize{}100}\tabularnewline
\hline 
{\scriptsize{}$V$-(0)} & {\scriptsize{}250} & {\scriptsize{}211.7} & {\scriptsize{}21} & {\scriptsize{}81} & {\scriptsize{}126} & {\scriptsize{}211} & {\scriptsize{}1000} & {\scriptsize{}0.9056} & {\scriptsize{}94} & {\scriptsize{}-13.4} & {\scriptsize{}98}\tabularnewline
{\scriptsize{}$V$-(0) + Anderson} & {\scriptsize{}250} & {\scriptsize{}14.82} & {\scriptsize{}8} & {\scriptsize{}12} & {\scriptsize{}14} & {\scriptsize{}16} & {\scriptsize{}37} & {\scriptsize{}0.06734} & {\scriptsize{}100} & {\scriptsize{}-15.1} & {\scriptsize{}100}\tabularnewline
{\scriptsize{}$V$-(0) + Spectral} & {\scriptsize{}250} & {\scriptsize{}29.74} & {\scriptsize{}12} & {\scriptsize{}22} & {\scriptsize{}27} & {\scriptsize{}34} & {\scriptsize{}67} & {\scriptsize{}0.12864} & {\scriptsize{}100} & {\scriptsize{}-14.8} & {\scriptsize{}100}\tabularnewline
{\scriptsize{}$V$-(0) + SQUAREM} & {\scriptsize{}250} & {\scriptsize{}29.82} & {\scriptsize{}11} & {\scriptsize{}23} & {\scriptsize{}28} & {\scriptsize{}33} & {\scriptsize{}62} & {\scriptsize{}0.13044} & {\scriptsize{}100} & {\scriptsize{}-14.7} & {\scriptsize{}100}\tabularnewline
{\scriptsize{}$V$-(1)} & {\scriptsize{}250} & {\scriptsize{}26.12} & {\scriptsize{}10} & {\scriptsize{}18} & {\scriptsize{}25} & {\scriptsize{}32} & {\scriptsize{}63} & {\scriptsize{}0.11434} & {\scriptsize{}100} & {\scriptsize{}-14.6} & {\scriptsize{}100}\tabularnewline
{\scriptsize{}$V$-(1) + Anderson} & {\scriptsize{}250} & {\scriptsize{}10.8} & {\scriptsize{}7} & {\scriptsize{}9} & {\scriptsize{}10.5} & {\scriptsize{}12} & {\scriptsize{}18} & {\scriptsize{}0.04738} & {\scriptsize{}100} & {\scriptsize{}-15.2} & {\scriptsize{}100}\tabularnewline
{\scriptsize{}$V$-(1) + Spectral} & {\scriptsize{}250} & {\scriptsize{}17.14} & {\scriptsize{}9} & {\scriptsize{}15} & {\scriptsize{}17} & {\scriptsize{}19} & {\scriptsize{}29} & {\scriptsize{}0.0752} & {\scriptsize{}100} & {\scriptsize{}-15} & {\scriptsize{}100}\tabularnewline
{\scriptsize{}$V$-(1) + SQUAREM} & {\scriptsize{}250} & {\scriptsize{}16.88} & {\scriptsize{}11} & {\scriptsize{}14} & {\scriptsize{}16} & {\scriptsize{}19} & {\scriptsize{}24} & {\scriptsize{}0.07268} & {\scriptsize{}100} & {\scriptsize{}-14.9} & {\scriptsize{}100}\tabularnewline
\hline 
\end{tabular}{\scriptsize\par}
\par\end{centering}
}

{\footnotesize{}}{\footnotesize\par}

{\footnotesize{}The mean outside option share is 0.308.}{\footnotesize\par}
\end{table}
\par\end{center}

\subsubsection{Replications of \citet{nevo2001measuring} and Berry et al. \citeyearpar{berry1995automobile,berry1999voluntary}}

To further examine the performance of inner-loop algorithms, we use the datasets from \citet{nevo2001measuring} and Berry et al. \citeyearpar{berry1995automobile,berry1999voluntary} to estimate demand parameters under different inner-loop algorithms.

In the numerical experiments, the PyBLP package (version 0.13.0) in Python (\citealp{conlon2020best}) was used. Because the package itself does not support the new mapping $\delta$-(1), the original PyBLP source code was modified to introduce the new mapping $\delta$-(1). In addition, the steps of the Anderson acceleration were newly coded. Regarding the spectral algorithm, we cannot choose step sizes other than $\alpha_{S2}$, because the dfsane function in the SciPy package, which the PyBLP relies on, exclusively supports $\alpha_{S2}$. Hence, the dfsane function in the SciPy package was modified to allow different step sizes. Moreover, because the current dfsane function does not support such iterations without globalization strategies, the code was adjusted to implement such iterations. Notably, it rarely affects the results, as shown in the Supplemental Appendix.

\paragraph*{\citet{nevo2001measuring}'s dataset}

Settings largely reflect those specified in Figure 5 of \citet{conlon2020best}, including initial parameter values and tolerance levels, except for the choice of step sizes, the existence of globalization strategies, and the choice of mappings.

Table \ref{tab:replication-Nevo} shows that the algorithm $\delta$-(1) is more than twice as fast as the algorithm $\delta$-(0). Incorporating Anderson acceleration further reduces total function evaluations by approximately 75\%. In particular, the values of total objective evaluations and the objective remain consistent across all inner-loop algorithms, indicating that the choice of the inner-loop algorithm does not affect the outer-loop estimation in this setting. 
\begin{center}
\begin{table}[H]
\caption{Estimation results using the \citet{nevo2001measuring}'s dataset\label{tab:replication-Nevo}}

\begin{centering}
{\footnotesize{}}%
\begin{tabular}{ccccc}
\hline 
 & {\footnotesize{}Mean feval} & {\footnotesize{}Total obj eval} & {\footnotesize{}Total feval} & {\footnotesize{}Objective}\tabularnewline
\hline 
\hline 
{\footnotesize{}$\delta$-(1)} & {\footnotesize{}43.288} & {\footnotesize{}57} & {\footnotesize{}231937} & {\footnotesize{}4.562}\tabularnewline
{\footnotesize{}$\delta$-(1) + Anderson} & {\footnotesize{}11.506} & {\footnotesize{}57} & {\footnotesize{}61649} & {\footnotesize{}4.562}\tabularnewline
{\footnotesize{}$\delta$-(1) + Spectral} & {\footnotesize{}19.209} & {\footnotesize{}57} & {\footnotesize{}102921} & {\footnotesize{}4.562}\tabularnewline
{\footnotesize{}$\delta$-(1) + SQUAREM} & {\footnotesize{}19.911} & {\footnotesize{}57} & {\footnotesize{}106683} & {\footnotesize{}4.562}\tabularnewline
{\footnotesize{}$\delta$-(0)} & {\footnotesize{}94.714} & {\footnotesize{}57} & {\footnotesize{}507475} & {\footnotesize{}4.562}\tabularnewline
{\footnotesize{}$\delta$-(0) + Anderson} & {\footnotesize{}13.629} & {\footnotesize{}57} & {\footnotesize{}73026} & {\footnotesize{}4.562}\tabularnewline
{\footnotesize{}$\delta$-(0) + Spectral} & {\footnotesize{}27.796} & {\footnotesize{}57} & {\footnotesize{}148931} & {\footnotesize{}4.562}\tabularnewline
{\footnotesize{}$\delta$-(0) + SQUAREM} & {\footnotesize{}26.866} & {\footnotesize{}57} & {\footnotesize{}143949} & {\footnotesize{}4.562}\tabularnewline
\hline 
\end{tabular}{\footnotesize\par}
\par\end{centering}

{\footnotesize{}Notes.}{\footnotesize\par}

{\footnotesize{}``Total obj eval'' denotes the total number of objective evaluations in the GMM estimation. }{\footnotesize\par}

{\footnotesize{}``Total feval'' denotes the total number of function evaluations in the GMM estimation. }{\footnotesize\par}

{\footnotesize{}``Objective'' denotes the GMM objective value. }{\footnotesize\par}

{\footnotesize{}``Mean feval'' denotes the mean number of function evaluations, defined by Total feval / (Number of markets $\times$ Total objective evalutions).}{\footnotesize\par}

\end{table}
\par\end{center}

\paragraph*{Berry et al.\citeyearpar{berry1995automobile,berry1999voluntary}'s dataset}

The same settings as the one specified in Figure 6 of \citet{conlon2020best} were applied. \footnote{Besides globalization strategies and the choice of mappings, the key difference lies in the optimization tolerance level, which is set to 1E-4 here, instead of the default setting 1E-8. The results under the optimization tolerance 1E-8 are shown in the Supplemental Appendix. As discussed, the choice of the inner-loop algorithms has a slight impact on the inner-loop numerical errors, and may largely affect the convergence of the optimization, especially with a tight optimization tolerance level. We discuss that the convergence speed of outer-loop iterations can be sensitive to the inner-loop numerical errors, even when estimated parameters are not sensitive to inner-loop numerical errors. For details, see the Supplemental Appendix.} 

Table \ref{tab:replication-Berry-et-al.} shows that the algorithm $\delta$-(1) reduces the number of function evaluations by more than 10\%. Combining the spectral/Anderson algorithm further reduces the number of function evaluations by more than 90\%.
\begin{center}
\begin{table}[H]
\caption{Estimation results using the Berry et al.\citeyearpar{berry1995automobile,berry1999voluntary}'s dataset\label{tab:replication-Berry-et-al.}}

\begin{centering}
{\footnotesize{}}%
\begin{tabular}{ccccc}
\hline 
 & {\footnotesize{}Mean feval} & {\footnotesize{}Total obj eval} & {\footnotesize{}Total feval} & {\footnotesize{}Objective}\tabularnewline
\hline 
\hline 
{\footnotesize{}$\delta$-(1)} & {\footnotesize{}216.938} & {\footnotesize{}70} & {\footnotesize{}303713} & {\footnotesize{}497.336}\tabularnewline
{\footnotesize{}$\delta$-(1) + Anderson} & {\footnotesize{}16.280} & {\footnotesize{}70} & {\footnotesize{}22792} & {\footnotesize{}497.336}\tabularnewline
{\footnotesize{}$\delta$-(1) + Spectral} & {\footnotesize{}44.153} & {\footnotesize{}70} & {\footnotesize{}61814} & {\footnotesize{}497.336}\tabularnewline
{\footnotesize{}$\delta$-(1) + SQUAREM} & {\footnotesize{}48.229} & {\footnotesize{}70} & {\footnotesize{}67520} & {\footnotesize{}497.336}\tabularnewline
{\footnotesize{}$\delta$-(0)} & {\footnotesize{}246.358} & {\footnotesize{}70} & {\footnotesize{}344901} & {\footnotesize{}497.336}\tabularnewline
{\footnotesize{}$\delta$-(0) + Anderson} & {\footnotesize{}16.061} & {\footnotesize{}70} & {\footnotesize{}22486} & {\footnotesize{}497.336}\tabularnewline
{\footnotesize{}$\delta$-(0) + Spectral} & {\footnotesize{}47.518} & {\footnotesize{}70} & {\footnotesize{}66525} & {\footnotesize{}497.336}\tabularnewline
{\footnotesize{}$\delta$-(0) + SQUAREM} & {\footnotesize{}46.700} & {\footnotesize{}70} & {\footnotesize{}65380} & {\footnotesize{}497.336}\tabularnewline
\hline 
\end{tabular}{\footnotesize\par}
\par\end{centering}

{\footnotesize{}The outer loop tolerance is set to 1E-4.}{\footnotesize\par}
\end{table}
\par\end{center}

\subsubsection*{Convergence speed of $\delta$-(1)}

In the estimation using the Berry et al.\citeyearpar{berry1995automobile,berry1999voluntary}'s dataset, applying $\delta$-(1) rather than $\delta$-(0) had minimal impact on convergence speed, differing from the Monte Carlo experiments and the estimation using the \citet{nevo2001measuring}'s dataset. However, the results align with the convergence properties of $\delta$-(1). As briefly discussed in Section \ref{subsec:Mappings-on-delta-static-BLP} and discussed in detail in Appendix \ref{subsec:Convergence-properties-mappings}, the convergence speed of $\delta$-(1) is significantly influenced by the size of the consumer heterogeneity. As shown in Proposition \ref{prop:property_mapping_delta} in Appendix \ref{subsec:Convergence-properties-mappings}, $\left\Vert \Phi^{\delta,\gamma=1}(\delta_{1})-\Phi^{\delta,\gamma=1}(\delta_{2})\right\Vert _{\infty}\leq\widetilde{c_{\gamma=1}}\left(\left\Vert \max\{\delta_{1},\delta_{2}\}-\delta_{1}\right\Vert _{\infty}+\left\Vert \max\{\delta_{1},\delta_{2}\}-\delta_{2}\right\Vert _{\infty}\right)\leq2\widetilde{c_{\gamma=1}}\cdot\left\Vert \delta_{1}-\delta_{2}\right\Vert _{\infty}$ holds, where $\widetilde{c_{\gamma=1}}\equiv\sup_{i\in\mathcal{I},J^{*}\subset\mathcal{J},\delta}s_{iJ^{*}}(\delta)-\inf_{i\in\mathcal{I},J^{*}\subset\mathcal{J},\delta}s_{iJ^{*}}(\delta)\leq1$. Because the value of $\widetilde{c_{\gamma=1}}$ itself is not easy to compute when the number of products is large, we alternatively consider $\max_{i\in\mathcal{I}}s_{i0t}(\delta)-\min_{i\in\mathcal{I}}s_{i0t}(\delta)$ at the converged $\delta$, as a rough approximation of the term. By the proposition, $\delta$-(1) is expected to converge slowly when consumer heterogeneity, measured by $\max_{i\in\mathcal{I}}s_{i0t}(\delta)-\min_{i\in\mathcal{I}}s_{i0t}(\delta)\leq1$, is close to 1.

Table \ref{tab:Values-hetero} shows the values of $\max_{i\in\mathcal{I}}s_{i0t}(\delta)-\min_{i\in\mathcal{I}}s_{i0t}(\delta)$ from the numerical experiments presented earlier. Using the Berry et al.\citeyearpar{berry1995automobile,berry1999voluntary}'s dataset, the value of $\max_{i}s_{i0t}(\delta)-\min_{i}s_{i0t}(\delta)$ is nearly 1. In contrast, in the Monte Carlo experiment with $J=250$ and the estimation using the \citet{nevo2001measuring}'s dataset, the values are far from 1.

\begin{table}[H]
\begin{centering}
{\footnotesize{}}%
\begin{tabular}{ccc}
\hline 
{\footnotesize{}Settings} & {\footnotesize{}Mean} & {\footnotesize{}Std}\tabularnewline
\hline 
\hline 
{\footnotesize{}Monte Carlo $(J=250)$} & {\footnotesize{}0.839} & {\footnotesize{}0.180}\tabularnewline
{\footnotesize{}Nevo} & {\footnotesize{}0.896} & {\footnotesize{}0.058}\tabularnewline
{\footnotesize{}BLP} & {\footnotesize{}0.999} & {\footnotesize{}0.001}\tabularnewline
\hline 
\end{tabular}{\footnotesize\par}
\par\end{centering}
\caption{Values of $\max_{i}s_{i0t}(\delta)-\min_{i}s_{i0t}(\delta)$\label{tab:Values-hetero}}
\end{table}

\medskip{}

From the numerical experiments, we can obtain the following findings:
\begin{itemize}
\item Algorithm $\delta$-(1) significantly outperforms $\delta$-(0), especially when the outside option share is small and consumer heterogeneity is relatively small
\item Anderson acceleration is the best acceleration method
\item The spectral and SQUAREM algorithms are also fairly effective
\end{itemize}
Considering these observations, it is preferable to first consider applying the new algorithm $\delta$-(1) in the standard fixed-point iteration, and then consider combining the Anderson acceleration. 

\subsection{Dynamic BLP model\label{subsec:Numerical-Dynamic-BLP-model}}

Next, the numerical results of inner-loop algorithms for estimating the dynamic BLP models are shown. Here, we consider the model of perfectly durable goods, following the setting considered in Section 5.2 of \citet{Sun2019}. Consumers are forward-looking, and do not buy a product once they purchase any product.

When consumer $i$ purchases product $j$ at time $t$, they receive utility $U_{ijt}=X_{jt}^{\prime}\theta_{i}+\xi_{jt}+\epsilon_{ijt}$. If no product is purchased given that they do not own any durable product, the utility is $U_{i0t}=\beta E_{t}[V_{it+1}(\Omega_{t+1})|\Omega_{t}]+\epsilon_{i0t}$. Here, $X_{jt},\xi_{jt}$ denotes the observed / unobserved product characteristics of product $j$ at time $t$, and let $\delta_{jt}\equiv X_{jt}^{\prime}E[\theta_{i}]+\xi_{jt}$. $V_{it}(\Omega_{t})$ denotes the (integrated) value function of consumer $i$ not owning any product at time $t$. Besides, let $\mu_{ijt}\equiv X_{jt}^{\prime}\left(\theta_{i}-E[\theta_{i}]\right)$.

Assuming that $\epsilon$ follows i.i.d. type-I extreme value distribution, the CCP of product $j$ for consumer $i$ not holding any product at time $t$ is $s_{ijt}^{(ccp)}=\frac{\exp\left(\delta_{jt}+\mu_{ijt}\right)}{\exp(V_{it}(\Omega_{t}))}$. Moreover, the CCP of not buying any product for consumer $i$ not holding any product at time $t$ is $s_{i0t}^{(ccp)}=\frac{\exp\left(\beta E_{t}[V_{it+1}(\Omega_{t+1})|\Omega_{t}]\right)}{\exp(V_{it}(\Omega_{t}))}.$

Because consumers exit the market after a purchase, $Pr0_{it}$, the fraction of type $i$ consumers not owning any durable product, satisfies $Pr0_{it+1}=Pr0_{it}\cdot s_{i0t}^{(ccp)}$. Note that $Pr0_{it}$ corresponds to $Pr_{it}(x_{it})$ in the general dynamic BLP model discussed in Section \ref{sec:Dynamic-BLP-model}.

As in the Monte Carlo experiments of static BLP models, I draw the values of $\theta_{n}$ (standard deviation of random coefficients) by $\theta_{n}\sim U[0,2\theta_{n}^{*}]$, where $\theta_{n}^{*}$ denotes the true parameters. The number of simulation draws to approximate continuous consumer distribution is set to 50, and the process is repeated 50 times. Further details on the data-generating process are provided in the Supplemental Appendix. Regarding future expectations, we consider two specifications: Perfect foresight, and Inclusive value sufficiency. We discuss them below.

\subsubsection{Model under perfect foresight}

First, consider the setting where consumers have perfect foresight regarding the transition of $\Omega_{t}$, and $\Omega_{t}$ remaining constant after the terminal period $T$, as described in Section \ref{sec:Dynamic-BLP-model}. Algorithm \ref{alg:durable-perfect-foresight} shows the proposed algorithm, which we denote $V$-($\gamma$).

The tolerance level of the convergence is set to 1E-12. Observations from the current data-generating process indicate that the simulated market stabilizes after $t=50$. Consequently, the value of $T$ is set to 50. 

\begin{algorithm}[H]
Set the initial values of $V^{(0)}$. Iterate the following $(n=0,1,2,\cdots)$:
\begin{enumerate}
\item For $t=1:T$,
\begin{enumerate}
\item Using $Pr0_{it}$ and $V_{it}^{(n)}$, compute $\delta_{jt}^{(n)}=\iota_{V\rightarrow\delta,jt}^{\gamma}\left(V^{(n)}\right)=\log\left(S_{jt}^{(data)}\right)-\log\left(\sum_{i}w_{i}Pr0_{it}\cdot\frac{\exp\left(\mu_{ijt}\right)}{\exp\left(V_{it}^{(n)}\right)}\right)$
\item Compute $s_{ijt}^{(ccp)}=\frac{\exp\left(\delta_{jt}^{(n)}+\mu_{ijt}\right)}{\exp(V_{it}^{(n)})}$ for $i\in\mathcal{I},j\in\mathcal{J}$
\item Compute $s_{i0t}^{(ccp)}=1-\sum_{j\in\mathcal{J}_{t}}s_{ijt}^{(ccp)}$
\item Update $Pr0_{it+1}=Pr0_{it}\cdot s_{i0t}^{(ccp)}$
\end{enumerate}
\item Update $V$ by $V_{it}^{(n+1)}=\Psi_{V\delta\rightarrow V,it}^{\gamma}\left(V^{(n)},\delta^{(n)}\right)=\log\left(\exp\left(\beta V_{it+1}^{(n)}\right)+\sum_{j\in\mathcal{J}_{t}}\exp\left(\delta_{jt}^{(n)}+\mu_{ijt}\right)\cdot\left(\frac{s_{0t}(V^{(n)})}{S_{0t}^{(data)}}\right)^{\gamma}\right)$ for $t=1,\cdots,T$. Here, let $V_{iT+1}^{(n)}=V_{iT}^{(n)}$. $s_{0t}(V^{(n)})$ is computed by $s_{0t}(V^{(n)})=\sum_{i}w_{i}\frac{\exp\left(\beta V_{it+1}^{(n)}\right)}{\exp\left(V_{it}^{(n)}\right)}$.
\item Exit the iteration if $\left\Vert V^{(n+1)}-V^{(n)}\right\Vert <\epsilon_{V}$
\end{enumerate}
\caption{Inner-loop Algorithm of dynamic BLP (Perfectly durable goods; Perfect foresight)\label{alg:durable-perfect-foresight}}
\end{algorithm}

To compare the performance of the proposed and the traditional algorithms, both algorithms are run. In the traditional algorithm, the values of $\delta$ and $V$ are jointly updated until convergence. Unlike Algorithm \ref{alg:durable-perfect-foresight} , Step 1(a) is omitted and instead add a step $\delta_{jt}^{(n+1)}=\delta_{jt}^{(n)}+\log\left(S_{jt}^{(data)}\right)-\log\left(s_{jt}(\delta^{(n)},V^{(n)})\right)-\gamma\left[\log\left(S_{0t}^{(data)}\right)-\log\left(s_{0t}\left(\delta^{(n)},V^{(n)}\right)\right)\right]\ (\gamma=0,1)$ between Steps 2 and 3. In addition, the termination of the iteration is determined by $\left\Vert V^{(n+1)}-V^{(n)}\right\Vert <\epsilon_{V},\left\Vert \delta^{(n+1)}-\delta^{(n)}\right\Vert <\epsilon_{\delta}$. We denote the algorithm as $V\delta$-($\gamma$) (joint). Note that the nested version of the algorithm (upper part of Algorithm \ref{alg:dynamic_BLP_traditional-algorithm}) exhibits poor performance, as shown in the Supplemental Appendix. Thus, the results are excluded in this section.

As in the static BLP models, the results of the algorithms combining the acceleration methods are also presented. For the spectral and SQUAREM methods, ``time-dependent step sizes'' are introduced, based on the discussion of the variable-type-specific step sizes in Section \ref{subsec:Spectral}.\footnote{Supplemental Appendix of the current paper shows the performance of the time-dependent step sizes compared to the case of time-independent step sizes. The results imply that introducing time-dependent step sizes leads to faster convergence by several times.} This specification is motivated by the nonstationarity of the dynamic model. In the case of Algorithm \ref{alg:durable-perfect-foresight}, $V$ is updated by $V_{it}^{(n+1)}\leftarrow\alpha_{t}^{(n)}V_{it}^{(n+1)}+(1-\alpha_{t}^{(n)})V_{it}^{(n)}$, where 

$\alpha_{t}^{(n)}\equiv\frac{\left\Vert s_{t}^{(n)}\right\Vert _{2}}{\left\Vert y_{t}^{(n)}\right\Vert _{2}}$, $s_{t}^{(n)}\equiv V_{t}^{(n)}-V_{t}^{(n-1)},y_{t}^{(n)}\equiv F_{t}\left(V^{(n)}\right)-F_{t}\left(V^{(n-1)}\right),F_{t}(V)\equiv\text{\ensuremath{\Phi_{t}(V)-V_{t}}}$. As $\alpha_{t}^{(n)}$ occasionally take too large values (e.g., 100), which might lead to unstable convergence, the maximum value of $\alpha_{t}$ is set to 10.\footnote{Coding time-specific step sizes is not difficult in any programming language for any $T$. In the setting above, $V\equiv\left(V_{t=1}\cdots,V_{t=T}\right)$ is a $|\mathcal{I}|\times T$ dimensional variable. Then, $s^{(n)}\equiv V^{(n)}-V^{(n-1)}$ and $y^{(n)}\equiv F\left(V^{(n)}\right)-F\left(V^{(n-1)}\right)$ are also $|\mathcal{I}|\times T$ dimensional variables. Then, when applying the $\alpha_{S3}$-type step size, we can update the values of $V$ by $V^{(n+1)}\leftarrow\alpha^{(n)}V^{(n+1)}+(1-\alpha^{(n)})V^{(n)}$, where $\alpha^{(n)}\equiv\left(\begin{array}{ccc}
\alpha_{t=1}^{(n)} & \cdots & \alpha_{t=T}^{(n)}\end{array}\right)=\left(\begin{array}{ccc}
\left\Vert s_{t=1}^{(n)}\right\Vert _{2} & \cdots & \left\Vert s_{t=T}^{(n)}\right\Vert _{2}\end{array}\right)./\left(\begin{array}{ccc}
\left\Vert y_{t=1}^{(n)}\right\Vert _{2} & \cdots & \left\Vert y_{t=T}^{(n)}\right\Vert _{2}\end{array}\right)$. Here, ``$.$'' denotes the element-wise operation of matrices. $\left(\begin{array}{ccc}
\left\Vert s_{t=1}^{(n)}\right\Vert _{2} & \cdots & \left\Vert s_{t=T}^{(n)}\right\Vert _{2}\end{array}\right)$ can be computed by $\left(\begin{array}{ccc}
\left\Vert s_{t=1}^{(n)}\right\Vert _{2} & \cdots & \left\Vert s_{t=T}^{(n)}\right\Vert _{2}\end{array}\right)=$sqrt.$\left(\text{colsums}\left(s^{(n)}\text{.\textasciicircum2}\right)\right)$. Here, ``colsums'' denotes the sum of the values in each column of the matrix, and ``sqrt'' denotes the squared root.} 

Table \ref{tab:dynamic_BLP_Monte_Carlo_Perfect-foresight} shows that combining the acceleration methods significantly reduces the number of function evaluations and accelerates convergence. Among these acceleration methods, the Anderson acceleration method outperforms the spectral and SQUAREM. Furthermore, the new algorithms $V$-(0) and $V$-(1) require significantly fewer function evaluations than the traditional algorithms $V\delta$-(0) (joint) and $V\delta$-(1) (joint). In addition, the traditional algorithms $V\delta$-(0) (joint) and $V\delta$-(1) (joint) are not so stable, when the spectral/SQUAREM algorithms are combined.

Note that we cannot clearly see the prominently superior performance of $V$-(1) relative to $V$-(0) in the table. In the current setting, the values of the outside option CCPs are relatively large as shown in the table note, and the advantage of introducing the outside option shares in the updating equations is not necessarily very large. In contrast, the Supplemental Appendix additionally shows numerical results in a setting where the values of the outside option CCPs are smaller. It shows results where $V$-(1) with Anderson acceleration is on average 5$\sim$10\% faster than $V$-(0) with Anderson acceleration. Intuitively, the mapping $V$-(0) shares similarity with both the BLP contraction mapping and the value function iteration mapping in dynamic discrete choice models. The convergence speed of the former can be slow as the discount factor gets closer to 1. Hence, the introduction of the outside option shares in the updating equation may not drastically reduce the number of iterations when the discount factor is large, though it mitigates the convergence speed problem associated with the BLP contraction mapping. However, it can reduce the number of iterations, and it is helpful because dynamic BLP estimations are typically computationally burdensome.
\begin{center}
\begin{table}[H]
\caption{Results of the Dynamic BLP Monte Carlo simulation (Perfectly durable goods; Perfect foresight)\label{tab:dynamic_BLP_Monte_Carlo_Perfect-foresight}}

\scalebox{0.8}{
\begin{centering}
\begin{tabular}{cccccccccccc}
\hline 
 & \multirow{2}{*}{{\scriptsize{}$J$}} & \multicolumn{6}{c}{{\scriptsize{}Func. Evals.}} & {\scriptsize{}Mean} & {\scriptsize{}Conv.} & {\footnotesize{}Mean} & {\scriptsize{}$DIST<\epsilon_{tol}$}\tabularnewline
\cline{3-8} \cline{4-8} \cline{5-8} \cline{6-8} \cline{7-8} \cline{8-8} 
 &  & {\scriptsize{}Mean} & {\scriptsize{}Min.} & {\scriptsize{}25th} & {\scriptsize{}Median.} & {\scriptsize{}75th} & {\scriptsize{}Max.} & {\scriptsize{}CPU time (s)} & {\scriptsize{}(\%)} & {\footnotesize{}$\log_{10}\left(DIST\right)$} & {\scriptsize{}(\%)}\tabularnewline
\hline 
{\scriptsize{}$V$-(0)} & {\scriptsize{}25} & {\scriptsize{}2576.15} & {\scriptsize{}2505} & {\scriptsize{}2549} & {\scriptsize{}2577} & {\scriptsize{}2600} & {\scriptsize{}2632} & {\scriptsize{}4.76295} & {\scriptsize{}100} & {\scriptsize{}-13.9} & {\scriptsize{}100}\tabularnewline
{\scriptsize{}$V$-(0) + Anderson} & {\scriptsize{}25} & {\scriptsize{}442.6} & {\scriptsize{}363} & {\scriptsize{}395} & {\scriptsize{}427} & {\scriptsize{}501.5} & {\scriptsize{}531} & {\scriptsize{}0.8638} & {\scriptsize{}100} & {\scriptsize{}-14.5} & {\scriptsize{}100}\tabularnewline
{\scriptsize{}$V$-(0) + Spectral} & {\scriptsize{}25} & {\scriptsize{}719.1} & {\scriptsize{}430} & {\scriptsize{}447.5} & {\scriptsize{}477.5} & {\scriptsize{}514.5} & {\scriptsize{}2177} & {\scriptsize{}1.31145} & {\scriptsize{}100} & {\scriptsize{}-16.3} & {\scriptsize{}100}\tabularnewline
{\scriptsize{}$V$-(0) + SQUAREM} & {\scriptsize{}25} & {\scriptsize{}641.25} & {\scriptsize{}353} & {\scriptsize{}370} & {\scriptsize{}377.5} & {\scriptsize{}392.5} & {\scriptsize{}2204} & {\scriptsize{}1.18765} & {\scriptsize{}100} & {\scriptsize{}-15.8} & {\scriptsize{}100}\tabularnewline
{\scriptsize{}$V$-(1)} & {\scriptsize{}25} & {\scriptsize{}2535.3} & {\scriptsize{}2486} & {\scriptsize{}2530} & {\scriptsize{}2538.5} & {\scriptsize{}2544.5} & {\scriptsize{}2564} & {\scriptsize{}4.70105} & {\scriptsize{}100} & {\scriptsize{}-14} & {\scriptsize{}100}\tabularnewline
{\scriptsize{}$V$-(1) + Anderson} & {\scriptsize{}25} & {\scriptsize{}450.3} & {\scriptsize{}335} & {\scriptsize{}394} & {\scriptsize{}450.5} & {\scriptsize{}492} & {\scriptsize{}629} & {\scriptsize{}0.91515} & {\scriptsize{}100} & {\scriptsize{}-14.6} & {\scriptsize{}100}\tabularnewline
{\scriptsize{}$V$-(1) + Spectral} & {\scriptsize{}25} & {\scriptsize{}709.15} & {\scriptsize{}398} & {\scriptsize{}449.5} & {\scriptsize{}474.5} & {\scriptsize{}499.5} & {\scriptsize{}2131} & {\scriptsize{}1.3495} & {\scriptsize{}100} & {\scriptsize{}-16.3} & {\scriptsize{}100}\tabularnewline
{\scriptsize{}$V$-(1) + SQUAREM} & {\scriptsize{}25} & {\scriptsize{}637.6} & {\scriptsize{}359} & {\scriptsize{}367.5} & {\scriptsize{}375} & {\scriptsize{}400.5} & {\scriptsize{}2168} & {\scriptsize{}1.20105} & {\scriptsize{}100} & {\scriptsize{}-15.8} & {\scriptsize{}100}\tabularnewline
\hline 
{\scriptsize{}$V\delta$-(0) (joint)} & {\scriptsize{}25} & {\scriptsize{}2886.55} & {\scriptsize{}2739} & {\scriptsize{}2830.5} & {\scriptsize{}2861} & {\scriptsize{}3000} & {\scriptsize{}3000} & {\scriptsize{}4.0452} & {\scriptsize{}70} & {\scriptsize{}-13.5} & {\scriptsize{}100}\tabularnewline
{\scriptsize{}$V\delta$-(0) (joint) + Anderson} & {\scriptsize{}25} & {\scriptsize{}672.35} & {\scriptsize{}467} & {\scriptsize{}522} & {\scriptsize{}621} & {\scriptsize{}712.5} & {\scriptsize{}1303} & {\scriptsize{}1.00665} & {\scriptsize{}100} & {\scriptsize{}-14.1} & {\scriptsize{}100}\tabularnewline
{\scriptsize{}$V\delta$-(0) (joint) + Spectral} & {\scriptsize{}25} & {\scriptsize{}2220.6} & {\scriptsize{}42} & {\scriptsize{}2065} & {\scriptsize{}2393} & {\scriptsize{}2783} & {\scriptsize{}3000} & {\scriptsize{}3.30755} & {\scriptsize{}70} & {\scriptsize{}-12.4} & {\scriptsize{}85}\tabularnewline
{\scriptsize{}$V\delta$-(0) (joint) + SQUAREM} & {\scriptsize{}25} & {\scriptsize{}1212.9} & {\scriptsize{}8} & {\scriptsize{}28} & {\scriptsize{}1545} & {\scriptsize{}1705} & {\scriptsize{}2934} & {\scriptsize{}1.452} & {\scriptsize{}65} & {\scriptsize{}-9.8} & {\scriptsize{}65}\tabularnewline
{\scriptsize{}$V\delta$-(1) (joint)} & {\scriptsize{}25} & {\scriptsize{}2876.7} & {\scriptsize{}2729} & {\scriptsize{}2819} & {\scriptsize{}2845} & {\scriptsize{}3000} & {\scriptsize{}3000} & {\scriptsize{}3.7986} & {\scriptsize{}70} & {\scriptsize{}-13.6} & {\scriptsize{}100}\tabularnewline
{\scriptsize{}$V\delta$-(1) (joint) + Anderson} & {\scriptsize{}25} & {\scriptsize{}659.45} & {\scriptsize{}479} & {\scriptsize{}531.5} & {\scriptsize{}576.5} & {\scriptsize{}708} & {\scriptsize{}1311} & {\scriptsize{}1.01485} & {\scriptsize{}100} & {\scriptsize{}-14.1} & {\scriptsize{}100}\tabularnewline
{\scriptsize{}$V\delta$-(1) (joint) + Spectral} & {\scriptsize{}25} & {\scriptsize{}2364.9} & {\scriptsize{}42} & {\scriptsize{}2168} & {\scriptsize{}2462} & {\scriptsize{}2986.5} & {\scriptsize{}3000} & {\scriptsize{}3.3087} & {\scriptsize{}70} & {\scriptsize{}-13.2} & {\scriptsize{}90}\tabularnewline
{\scriptsize{}$V\delta$-(1) (joint) + SQUAREM} & {\scriptsize{}25} & {\scriptsize{}1406.6} & {\scriptsize{}8} & {\scriptsize{}655} & {\scriptsize{}1571} & {\scriptsize{}2035} & {\scriptsize{}2986} & {\scriptsize{}1.70135} & {\scriptsize{}75} & {\scriptsize{}-11.2} & {\scriptsize{}75}\tabularnewline
\hline 
\end{tabular}
\par\end{centering}
}

{\footnotesize{}
{\footnotesize{}Notes. $DIST\equiv\left\Vert \log(S^{(data)})-\log(s)\right\Vert _{\infty}$, $\epsilon_{tol}=$1E-12.}{\footnotesize\par}

{\footnotesize{}The maximum number of function evaluations is set to 3000.}{\footnotesize\par}}{\footnotesize\par}

{\footnotesize{}The minimum and median outside option CCPs are 0.317 and 0.993 respectively.}{\footnotesize\par}
\end{table}
\par\end{center}

\subsubsection{Model under Inclusive value sufficiency}

Next, we examine a setting where consumers form expectations based on their inclusive values. Generally, $\Omega_{t}$,\-\- which includes product characteristics, is a high-dimensional variable, and fully specifying their stochastic state transitions becomes impractical, especially when dealing with tens of products. Hence, many previous studies (e.g., \citealp{hendel2006measuring}; \citealp{gowrisankaran2012dynamics}) have adopted the idea of inclusive value sufficiency. Specifically, we alternatively use inclusive value $\omega_{it}\equiv\log\left(\sum_{k\in\mathcal{J}_{t}}\exp(\delta_{kt}+\mu_{ikt})\right)$ as the state of the dynamic problem. Under this setting, consumers form expectations based on the transitions of the state $\omega_{it}$.

In this study, the value of $T$ is set to 25, as in \citet{Sun2019}. Besides, unlike the perfect foresight scenario, the model under inclusive value sufficiency operates within a mostly stationary framework. A common scalar $\alpha^{(n)}$ is used in the spectral/SQUAREM algorithms. 

Because the algorithm becomes more complex due to the need to take grid points, the complete steps of the algorithm are detailed in the Supplemental Appendix. Nevertheless, the idea of analytically representing $\delta$ as a function of $V$ remains unchanged.

Table \ref{tab:dynamic_BLP_Monte_Carlo_IVS} shows the results. As in the case under perfect foresight, combining the acceleration methods speeds up convergence, and the Anderson acceleration outperforms the others. In addition, the new algorithms $V$-(0) and $V$-(1) are faster than $V\delta$-(0) and $V\delta$-(1), especially when combining the spectral or SQUAREM algorithms.
\begin{center}
\begin{table}[H]
\caption{Results of the Dynamic BLP Monte Carlo simulation (Perfectly durable goods; Inclusive value sufficiency)\label{tab:dynamic_BLP_Monte_Carlo_IVS}}

\scalebox{0.8}{
\begin{centering}
\begin{tabular}{cccccccccccc}
\hline 
 & \multirow{2}{*}{{\scriptsize{}$J$}} & \multicolumn{6}{c}{{\scriptsize{}Func. Evals.}} & {\scriptsize{}Mean} & {\scriptsize{}Conv.} & {\footnotesize{}Mean} & {\scriptsize{}$DIST<\epsilon_{tol}$}\tabularnewline
\cline{3-8} \cline{4-8} \cline{5-8} \cline{6-8} \cline{7-8} \cline{8-8} 
 &  & {\scriptsize{}Mean} & {\scriptsize{}Min.} & {\scriptsize{}25th} & {\scriptsize{}Median.} & {\scriptsize{}75th} & {\scriptsize{}Max.} & {\scriptsize{}CPU time (s)} & {\scriptsize{}(\%)} & {\footnotesize{}$\log_{10}\left(DIST\right)$} & {\scriptsize{}(\%)}\tabularnewline
\hline 
{\scriptsize{}$V$-(0)} & {\scriptsize{}25} & {\scriptsize{}2462.15} & {\scriptsize{}2017} & {\scriptsize{}2242} & {\scriptsize{}2399} & {\scriptsize{}2644.5} & {\scriptsize{}3000} & {\scriptsize{}8.82705} & {\scriptsize{}90} & {\scriptsize{}-14} & {\scriptsize{}100}\tabularnewline
{\scriptsize{}$V$-(0) + Anderson} & {\scriptsize{}25} & {\scriptsize{}247.55} & {\scriptsize{}216} & {\scriptsize{}226.5} & {\scriptsize{}235.5} & {\scriptsize{}269.5} & {\scriptsize{}311} & {\scriptsize{}0.94705} & {\scriptsize{}100} & {\scriptsize{}-15.1} & {\scriptsize{}100}\tabularnewline
{\scriptsize{}$V$-(0) + Spectral} & {\scriptsize{}25} & {\scriptsize{}540.25} & {\scriptsize{}451} & {\scriptsize{}497.5} & {\scriptsize{}525} & {\scriptsize{}579} & {\scriptsize{}687} & {\scriptsize{}2.0182} & {\scriptsize{}100} & {\scriptsize{}-14.4} & {\scriptsize{}100}\tabularnewline
{\scriptsize{}$V$-(0) + SQUAREM} & {\scriptsize{}25} & {\scriptsize{}527.8} & {\scriptsize{}396} & {\scriptsize{}455} & {\scriptsize{}539} & {\scriptsize{}602} & {\scriptsize{}660} & {\scriptsize{}1.9194} & {\scriptsize{}100} & {\scriptsize{}-14.2} & {\scriptsize{}100}\tabularnewline
{\scriptsize{}$V$-(1)} & {\scriptsize{}25} & {\scriptsize{}2458.35} & {\scriptsize{}1996} & {\scriptsize{}2236.5} & {\scriptsize{}2396} & {\scriptsize{}2642} & {\scriptsize{}3000} & {\scriptsize{}9.47235} & {\scriptsize{}90} & {\scriptsize{}-14} & {\scriptsize{}100}\tabularnewline
{\scriptsize{}$V$-(1) + Anderson} & {\scriptsize{}25} & {\scriptsize{}238.1} & {\scriptsize{}192} & {\scriptsize{}214.5} & {\scriptsize{}239.5} & {\scriptsize{}258} & {\scriptsize{}296} & {\scriptsize{}0.93795} & {\scriptsize{}100} & {\scriptsize{}-15.2} & {\scriptsize{}100}\tabularnewline
{\scriptsize{}$V$-(1) + Spectral} & {\scriptsize{}25} & {\scriptsize{}557.35} & {\scriptsize{}450} & {\scriptsize{}524.5} & {\scriptsize{}545} & {\scriptsize{}588} & {\scriptsize{}752} & {\scriptsize{}2.20445} & {\scriptsize{}100} & {\scriptsize{}-14.5} & {\scriptsize{}100}\tabularnewline
{\scriptsize{}$V$-(1) + SQUAREM} & {\scriptsize{}25} & {\scriptsize{}564.75} & {\scriptsize{}430} & {\scriptsize{}505} & {\scriptsize{}572} & {\scriptsize{}599} & {\scriptsize{}766} & {\scriptsize{}2.1251} & {\scriptsize{}100} & {\scriptsize{}-14.3} & {\scriptsize{}100}\tabularnewline
\hline 
{\scriptsize{}$V\delta$-(0) (joint)} & {\scriptsize{}25} & {\scriptsize{}2482.75} & {\scriptsize{}2036} & {\scriptsize{}2274} & {\scriptsize{}2417} & {\scriptsize{}2667.5} & {\scriptsize{}3000} & {\scriptsize{}9.0842} & {\scriptsize{}90} & {\scriptsize{}-13.9} & {\scriptsize{}100}\tabularnewline
{\scriptsize{}$V\delta$-(0) (joint) + Anderson} & {\scriptsize{}25} & {\scriptsize{}242.8} & {\scriptsize{}201} & {\scriptsize{}224} & {\scriptsize{}244.5} & {\scriptsize{}254.5} & {\scriptsize{}284} & {\scriptsize{}0.91525} & {\scriptsize{}100} & {\scriptsize{}-15.1} & {\scriptsize{}100}\tabularnewline
{\scriptsize{}$V\delta$-(0) (joint) + Spectral} & {\scriptsize{}25} & {\scriptsize{}1131.45} & {\scriptsize{}913} & {\scriptsize{}1017.5} & {\scriptsize{}1097} & {\scriptsize{}1182} & {\scriptsize{}1647} & {\scriptsize{}4.0791} & {\scriptsize{}100} & {\scriptsize{}-14.5} & {\scriptsize{}100}\tabularnewline
{\scriptsize{}$V\delta$-(0) (joint) + SQUAREM} & {\scriptsize{}25} & {\scriptsize{}1162.3} & {\scriptsize{}920} & {\scriptsize{}1030} & {\scriptsize{}1112} & {\scriptsize{}1224} & {\scriptsize{}1684} & {\scriptsize{}4.24885} & {\scriptsize{}100} & {\scriptsize{}-14.3} & {\scriptsize{}100}\tabularnewline
{\scriptsize{}$V\delta$-(1) (joint)} & {\scriptsize{}25} & {\scriptsize{}2479.8} & {\scriptsize{}2029} & {\scriptsize{}2270} & {\scriptsize{}2414.5} & {\scriptsize{}2664} & {\scriptsize{}3000} & {\scriptsize{}9.0023} & {\scriptsize{}90} & {\scriptsize{}-13.9} & {\scriptsize{}100}\tabularnewline
{\scriptsize{}$V\delta$-(1) (joint) + Anderson} & {\scriptsize{}25} & {\scriptsize{}233.5} & {\scriptsize{}196} & {\scriptsize{}207} & {\scriptsize{}223.5} & {\scriptsize{}251.5} & {\scriptsize{}306} & {\scriptsize{}0.85245} & {\scriptsize{}100} & {\scriptsize{}-15.1} & {\scriptsize{}100}\tabularnewline
{\scriptsize{}$V\delta$-(1) (joint) + Spectral} & {\scriptsize{}25} & {\scriptsize{}1139.3} & {\scriptsize{}904} & {\scriptsize{}1024.5} & {\scriptsize{}1130.5} & {\scriptsize{}1218.5} & {\scriptsize{}1641} & {\scriptsize{}4.31415} & {\scriptsize{}100} & {\scriptsize{}-14.4} & {\scriptsize{}100}\tabularnewline
{\scriptsize{}$V\delta$-(1) (joint) + SQUAREM} & {\scriptsize{}25} & {\scriptsize{}1168.3} & {\scriptsize{}936} & {\scriptsize{}1037} & {\scriptsize{}1138} & {\scriptsize{}1233} & {\scriptsize{}1706} & {\scriptsize{}4.15965} & {\scriptsize{}100} & {\scriptsize{}-14.4} & {\scriptsize{}100}\tabularnewline
\hline 
\end{tabular}
\par\end{centering}
}

{\footnotesize{}}{\footnotesize\par}

{\footnotesize{}The minimum and median outside option CCPs are 0.693 and 0.998 respectively.}{\footnotesize\par}
\end{table}
\par\end{center}

\section{Conclusion\label{sec:Conclusion}}

This study examined computationally fast inner-loop algorithms for estimating static and dynamic BLP models. To minimize the number of inner-loop iterations, the following ideas and insights are proposed: (1). New mapping $\delta^{(n+1)}=\delta^{(n)}+\left(\log\left(S_{j}^{(data)}\right)-\log\left(s_{j}(\delta^{(n)})\right)\right)-\left(\log\left(S_{0}^{(data)}\right)-\log\left(s_{0}(\delta^{(n)})\right)\right)$; (2). Analytically represent the mean product utilities $\delta$ as a function of value functions $V$ and solve for $V$ (for dynamic BLP); (3). Combine an acceleration method of fixed point iterations, especially Anderson acceleration. They are independent and easy to implement. These proposed methods would ease empirical analyses of markets with large datasets or complex demand models under static/dynamic BLP framework, facing problems of computational burden.

Although BLP demand models are considered, whether the ideas and insights can also be applied to other demand models, such as pure characteristics models in \citet{berry2007pure}, is an interesting topic for further research. 

Finally, although this study focuses on improving the convergence speed of the inner-loop algorithms, the mappings with fast convergence could potentially applied to other estimation procedures, such as the MPEC method (e.g., \citealp{dube2012improving}) and sequential estimation algorithms (e.g., \citealp{lee2015computationally}). Future research could examine these possibilities.

\pagebreak{}

\appendix

\section{Additional results and discussions\label{sec:Additional-results}}

\subsection{Convergence properties of the mappings (Static BLP)\label{subsec:Convergence-properties-mappings}}

\subsubsection{Convergence properties of $\Phi^{\delta,\gamma}$}

The following proposition shows the convergence properties of the mapping $\Phi^{\delta,\gamma}$.
\begin{prop}
\label{prop:property_mapping_delta}

(a). Under $\gamma\in[0,1]$, the following inequalities hold: 
\begin{eqnarray*}
\left\Vert \Phi^{\delta,\gamma}(\delta_{1})-\Phi^{\delta,\gamma}(\delta_{2})\right\Vert _{\infty} & \leq & c_{\gamma}\left\Vert \delta_{1}-\delta_{2}\right\Vert _{\infty}\\
\left\Vert \Phi^{\delta,\gamma}(\delta_{1})-\Phi^{\delta,\gamma}(\delta_{2})\right\Vert _{\infty} & \leq & \widetilde{c_{\gamma}}\left(\left\Vert \max\{\delta_{1},\delta_{2}\}-\delta_{1}\right\Vert _{\infty}+\left\Vert \max\{\delta_{1},\delta_{2}\}-\delta_{2}\right\Vert _{\infty}\right)\\
 & \leq & 2\widetilde{c_{\gamma}}\cdot\left\Vert \delta_{1}-\delta_{2}\right\Vert _{\infty}
\end{eqnarray*}

where $c_{\gamma}\equiv\sup_{j\in\mathcal{J},\delta\in B_{\delta}}\sum_{k\in\mathcal{J}}\left|\sum_{i}s_{ik}\left(\frac{w_{i}s_{ij}(\delta)}{s_{j}(\delta)}-\gamma\frac{w_{i}s_{i0}(\delta)}{s_{0}(\delta)}\right)\right|$ and

$\widetilde{c_{\gamma}}\equiv\begin{cases}
\sup_{i\in\mathcal{I},J^{*}\subset\mathcal{J},\delta}s_{iJ^{*}}(\delta)=\sup_{i\in\mathcal{I},\delta}\left(1-s_{i0}(\delta)\right)\leq1 & \text{if}\ \gamma=0\\
\sup_{i\in\mathcal{I},J^{*}\subset\mathcal{J},\delta}s_{iJ^{*}}(\delta)-\inf_{i\in\mathcal{I},J^{*}\subset\mathcal{J},\delta}s_{iJ^{*}}(\delta)\leq1 & \text{if}\ \gamma=1\\
\max\left\{ \sup_{i\in\mathcal{I},J^{*}\subset\mathcal{J},\delta\in B_{\delta}}s_{iJ^{*}}(\delta)-\gamma\inf_{i\in\mathcal{I},J^{*}\subset\mathcal{J},\delta\in B_{\delta}}s_{iJ^{*}}(\delta),\right.\\
\left.\ \ \ \gamma\sup_{i\in\mathcal{I},J^{*}\subset\mathcal{J},\delta\in B_{\delta}}s_{iJ^{*}}(\delta)-\inf_{i\in\mathcal{I},J^{*}\subset\mathcal{J},\delta\in B_{\delta}}s_{iJ^{*}}(\delta)\right\}  & \text{Otherwise}
\end{cases}$

(b). Under $\gamma=0$,

$\left\Vert \Phi^{\delta,\gamma=0}(\delta_{1})-\Phi^{\delta,\gamma=0}(\delta_{2})\right\Vert _{\infty}\leq\sup_{i\in\mathcal{I},\delta\in B_{\delta}}\left(1-s_{i0}(\delta)\right)\cdot\left\Vert \delta_{1}-\delta_{2}\right\Vert _{\infty}\leq\left\Vert \delta_{1}-\delta_{2}\right\Vert _{\infty}$ holds.

(c). If $|\mathcal{J}|=1,$

\begin{eqnarray*}
\left\Vert \Phi^{\delta,\gamma}(\delta_{1})-\Phi^{\delta,\gamma}(\delta_{2})\right\Vert _{\infty} & \leq & \widetilde{c_{\gamma}}\cdot\left\Vert \delta_{1}-\delta_{2}\right\Vert _{\infty}\leq\left\Vert \delta_{1}-\delta_{2}\right\Vert _{\infty}
\end{eqnarray*}
\end{prop}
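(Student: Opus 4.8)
The plan is to base the whole proposition on the Jacobian of $\Phi^{\delta,\gamma}$. First I would compute it from the logit derivatives $\frac{\partial s_{ij}}{\partial\delta_k}=s_{ij}(\mathbf{1}[j=k]-s_{ik})$ and $\frac{\partial s_{i0}}{\partial\delta_k}=-s_{i0}s_{ik}$, obtaining
\[
\frac{\partial\Phi_j^{\delta,\gamma}}{\partial\delta_k}(\delta)=\sum_i s_{ik}\left(\frac{w_i s_{ij}(\delta)}{s_j(\delta)}-\gamma\frac{w_i s_{i0}(\delta)}{s_0(\delta)}\right),
\]
where the $+\mathbf{1}[j=k]$ from $\delta_j$ cancels the $-\mathbf{1}[j=k]$ coming out of $-\partial_k\log s_j$. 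This identity is the crux: the quantity inside the absolute value in the definition of $c_\gamma$ is exactly this $(j,k)$ entry, so $c_\gamma=\sup_\delta\max_j\sum_k|\partial\Phi_j/\partial\delta_k|$ is precisely the operator norm of the Jacobian induced by $\|\cdot\|_\infty$. The first inequality of (a) then follows from the integral mean value inequality $\Phi(\delta_1)-\Phi(\delta_2)=\int_0^1 J_\Phi(\delta_2+t(\delta_1-\delta_2))(\delta_1-\delta_2)\,dt$ by bounding the integrand uniformly by $c_\gamma\|\delta_1-\delta_2\|_\infty$. For part (b) I would specialize to $\gamma=0$, where every entry $\sum_i w_i s_{ik}s_{ij}/s_j$ is nonnegative, so the row sum equals $\sum_i (w_i s_{ij}/s_j)(1-s_{i0})$; since $\sum_i w_i s_{ij}/s_j=1$ this is a weighted average of the $(1-s_{i0})$, hence at most $\sup_{i,\delta}(1-s_{i0})\le1$.

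The delicate second inequality of (a) I would prove by a level-set (layer-cake) decomposition together with a reduction to monotone increments. Writing $p_i\equiv w_i s_{ij}/s_j$ and $q_i\equiv w_i s_{i0}/s_0$ — both genuine probability weights over $i$, since each sums to $1$ — for any nonnegative vector $a=(a_k)_k$ with $A\equiv\|a\|_\infty$ I have
\[
\sum_k\frac{\partial\Phi_j}{\partial\delta_k}\,a_k=\sum_i(p_i-\gamma q_i)\sum_k a_k s_{ik}=\int_0^A\Big(\textstyle\sum_i(p_i-\gamma q_i)s_{iJ^*(\tau)}\Big)\,d\tau,
\]
where $J^*(\tau)=\{k:a_k>\tau\}$ and $s_{iJ^*}=\sum_{k\in J^*}s_{ik}$. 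Each inner sum equals $E_p[s_{iJ^*}]-\gamma E_q[s_{iJ^*}]$ with both expectations in $[s_{\min},s_{\max}]$, where $s_{\min}\equiv\inf_{i,J^*,\delta}s_{iJ^*}$ and $s_{\max}\equiv\sup_{i,J^*,\delta}s_{iJ^*}$; hence its absolute value is at most $\max\{s_{\max}-\gamma s_{\min},\,\gamma s_{\max}-s_{\min}\}=\widetilde{c_\gamma}$, uniformly in $t$ and $j$, so the displayed integral is bounded by $\widetilde{c_\gamma}A$. To pass from this nonnegative-increment estimate to arbitrary $\delta_1,\delta_2$, I would route through $\bar\delta\equiv\max\{\delta_1,\delta_2\}$ (componentwise): apply the integral mean value inequality separately along the segments $[\delta_1,\bar\delta]$ and $[\delta_2,\bar\delta]$, whose increments $\bar\delta-\delta_1$ and $\bar\delta-\delta_2$ are nonnegative, and add the pieces by the triangle inequality. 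The concluding bound $\le2\widetilde{c_\gamma}\|\delta_1-\delta_2\|_\infty$ then uses $\|\bar\delta-\delta_m\|_\infty\le\|\delta_1-\delta_2\|_\infty$.

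Part (c) is the one-dimensional shortcut: when $|\mathcal{J}|=1$ the map is scalar, so no splitting through $\bar\delta$ is needed, and a plain mean value theorem together with the same bound $|E_p[s_{i1}]-\gamma E_q[s_{i1}]|\le\widetilde{c_\gamma}$ yields the cleaner constant $\widetilde{c_\gamma}$ without the factor of $2$. I expect the main obstacle to be the second inequality of (a): making the subset structure $J^*$ emerge correctly via the layer-cake identity, verifying that $p$ and $q$ are true probability distributions over consumers so that the expectations stay inside $[s_{\min},s_{\max}]$, and checking that the piecewise formula for $\widetilde{c_\gamma}$ collapses to $s_{\max}$ at $\gamma=0$ and to $s_{\max}-s_{\min}$ at $\gamma=1$. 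The restriction $\gamma\in[0,1]$ is exactly what guarantees $\widetilde{c_\gamma}\le1$, since then both $s_{\max}-\gamma s_{\min}\le s_{\max}\le1$ and $\gamma s_{\max}-s_{\min}\le s_{\max}\le1$.
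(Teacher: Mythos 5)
Your proposal is correct and takes essentially the same route as the paper: the same Jacobian formula, the same probability-weight argument bounding subset sums of partials by $\widetilde{c_{\gamma}}$, the same weighted-average specialization for part (b), and the same decomposition through $\max\{\delta_{1},\delta_{2}\}$ to reduce the second inequality of (a) to nonnegative increments. The only, immaterial, difference is technical: you turn directional derivatives into subset sums via the layer-cake identity along straight segments, whereas the paper integrates along a diagonal path with coordinate-wise stopping (indicators $1[\lambda\leq\max\{x_{i},x_{i}^{*}\}-x_{i}]$) --- two equivalent formulations of the same estimate.
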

First, we consider the case of $|\mathcal{J}|=1$ (only one product). The proposition implies $\Phi^{\delta,\gamma=0}$ (BLP contraction mapping) has a contraction property\footnote{See also Remark \ref{rem:contraction_remark}.} with modulus $\sup_{i\in\mathcal{I},J^{*}\subset\mathcal{J}}s_{iJ^{*}}(\delta)=1-\inf_{i\in\mathcal{I}}s_{i0}(\delta)$. In contrast, $\Phi^{\delta,\gamma=1}$ has a contraction property with modulus $\sup_{i\in\mathcal{I},J^{*}\subset\mathcal{J}}s_{iJ^{*}}(\delta)-\inf_{i\in\mathcal{I},J^{*}\subset\mathcal{J}}s_{iJ^{*}}(\delta)<\sup_{i\in\mathcal{I},J^{*}\subset\mathcal{J}}s_{iJ^{*}}(\delta)$. Hence, we can expect that the latter attains faster convergence.

When $|\mathcal{J}|>1$ holds, there is no guarantee that $c_{\gamma}$ is less than 1. Hence, there is no guarantee that $\Phi^{\delta,\gamma=1}$ is a contraction. Nevertheless, $c_{\gamma=1}$ takes a small value when the consumer heterogeneity, measured by $\frac{s_{ij}}{s_{j}}-\frac{s_{i0}}{s_{0}}$, is small. If there is no consumer heterogeneity, $\frac{s_{ij}}{s_{j}}-\frac{s_{i0}}{s_{0}}=0$ holds, and the modulus is equal to 0. This implies that the iteration using the mapping immediately converges to the solution. The proposition also implies that the magnitude of consumer heterogeneity is important for the convergence speed of $\Phi^{\delta,\gamma=1}$. 

Besides, the inequality $\left\Vert \Phi^{\delta,\gamma}(\delta_{1})-\Phi^{\delta,\gamma}(\delta_{2})\right\Vert _{\infty}\leq\widetilde{c_{\gamma}}\left(\left\Vert \max\{\delta_{1},\delta_{2}\}-\delta_{1}\right\Vert _{\infty}+\left\Vert \max\{\delta_{1},\delta_{2}\}-\delta_{2}\right\Vert _{\infty}\right)$ for $|\mathcal{J}|\in\mathbb{N}$ implies 

$\left\Vert \Phi^{\delta,\gamma}(\delta_{1})-\Phi^{\delta,\gamma}(\delta_{2})\right\Vert _{\infty}\leq\widetilde{c_{\gamma}}\left\Vert \delta_{1}-\delta_{2}\right\Vert _{\infty}=\left(\sup_{i\in\mathcal{I},J^{*}\subset\mathcal{J},\delta}s_{iJ^{*}}(\delta)-\inf_{i\in\mathcal{I},J^{*}\subset\mathcal{J},\delta}s_{iJ^{*}}(\delta)\right)\left\Vert \delta_{1}-\delta_{2}\right\Vert _{\infty}\leq\left\Vert \delta_{1}-\delta_{2}\right\Vert _{\infty}$ under $\delta_{1}\geq\delta_{2}$. Hence, $\Phi^{V,\gamma=1}$ has a property similar to contractions.
\begin{rem}
\label{rem:contraction_remark}If we assume $B_{\delta}=\mathbb{R}^{|\delta|}$, the value of $\sup_{i\in\mathcal{I},\delta\in B_{\delta}}\left(1-s_{i0}(\delta)\right)$ can be 1, which implies $\Phi^{\delta,\gamma=0}$ is not a contraction, strictly speaking. Hence, we should restrict the domain or the range of the mapping $\Phi^{\delta,\gamma=0}$ to guarantee that the mapping is a contraction, as discussed in the Appendix of \citet{berry1995automobile}. 
\end{rem}

\subsubsection{Convergence properties of $\Phi^{\delta,\gamma}$}

We define the following terms:

\begin{eqnarray*}
prob_{0|i_{2}}(V) & \equiv & \frac{1}{1+\sum_{j\in\mathcal{J}}S_{j}^{(data)}\frac{\exp(\mu_{i_{2}j})}{\sum_{i\in\mathcal{I}}w_{i}\exp(\mu_{ij})\exp(-V_{i})}},\\
prob_{I^{*}|j}(V) & \equiv & \frac{\sum_{i\in I^{*}\subset\mathcal{I}}w_{i}\exp(\mu_{ij})\exp(-V_{i})}{\sum_{i\in\mathcal{I}}w_{i}\exp(\mu_{ij})\exp(-V_{i})},\\
prob_{I^{*}|0}(V) & \equiv & \frac{\sum_{i\in I^{*}\subset\mathcal{I}}w_{i}\exp(-V_{i})}{\sum_{i\in\mathcal{I}}w_{i}\exp(-V_{i})}.
\end{eqnarray*}

Intuitively, $prob_{0|i}$ denotes the probability that consumer $i$ chooses alternative $j$. $prob_{I^{*}\subset\mathcal{I}|j}$ denotes the fraction of consumers in group $I^{*}\subset\mathcal{I}$ choosing alternative $j$ among those choosing $j$. 

Then, we obtain the following results regarding the convergence properties of $\Phi^{V,\gamma}$. 
\begin{prop}
\label{prop:property_mapping_V}

(a). Under $\gamma\in[0,1]$, the following inequalities hold: 
\begin{eqnarray*}
\left\Vert \Phi^{V,\gamma}(V_{1})-\Phi^{V,\gamma}(V_{2})\right\Vert _{\infty} & \leq & 2\sup_{i_{2}\in\mathcal{I},V\in B_{V}}\left((1-prob_{0|i_{2}}(V))\cdot\left(\frac{\sum_{i\in\mathcal{I}}w_{i}\exp(-V_{i})}{S_{0}^{(data)}}\right)^{\gamma}\right)\cdot\left\Vert V_{1}-V_{2}\right\Vert _{\infty},\\
\left\Vert \Phi^{V,\gamma}(V_{1})-\Phi^{V,\gamma}(V_{2})\right\Vert _{\infty} & \leq & \sup_{i\in\mathcal{I},V\in B_{V}}\left((1-prob_{0|i}(V))\cdot\left(\frac{\sum_{i\in\mathcal{I}}w_{i}\exp(-V_{i})}{S_{0}^{(data)}}\right)^{\gamma}\right)\cdot\\
 &  & \sup_{I^{*}\subset\mathcal{I},j\in\mathcal{J},V\in B_{V}}\left|prob_{I^{*}|j}(V)-\gamma prob_{I^{*}|0}(V)\right|\cdot\\
 &  & \left(\left\Vert \max\{V_{1},V_{2}\}-V_{1}\right\Vert _{\infty}+\left\Vert \max\{V_{1},V_{2}\}-V_{2}\right\Vert _{\infty}\right).
\end{eqnarray*}

(b). Under $\gamma=0$, $\left\Vert \Phi^{V,\gamma}(V_{1})-\Phi^{V,\gamma}(V_{2})\right\Vert _{\infty}\leq\sup_{V\in B_{V},i_{2}\in\mathcal{I}}\left((1-prob_{0|i_{2}}(V))\right)\cdot\left\Vert V_{1}-V_{2}\right\Vert _{\infty}\leq\left\Vert V_{1}-V_{2}\right\Vert _{\infty}$ holds. 
\end{prop}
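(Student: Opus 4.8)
The plan is to exploit the factorization $\Phi^{V,\gamma}=\iota_{\delta\rightarrow V}\circ\iota_{V\rightarrow\delta}^{\gamma}$ from Proposition~\ref{prop:duality_mapping} and to bound the mapping through its Jacobian. First I would compute the two component Jacobians. For $\iota_{\delta\rightarrow V}$ the row entries $\partial\iota_{\delta\rightarrow V,i}/\partial\delta_{j}$ are exactly the conditional choice probabilities $p_{ij}$ of consumer $i$ choosing $j$; they are nonnegative and sum over $j$ to $1-p_{i0}$, the complement of $i$'s outside-option probability. For $\iota_{V\rightarrow\delta}^{\gamma}$ a direct differentiation gives $\partial\iota_{V\rightarrow\delta,j}^{\gamma}/\partial V_{l}=prob_{\{l\}|j}(V)-\gamma\,prob_{\{l\}|0}(V)$, reading off the definitions of $prob_{I^{*}|j}$ and $prob_{I^{*}|0}$ at $I^{*}=\{l\}$. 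The chain rule then yields $\partial\Phi_{i}^{V,\gamma}/\partial V_{l}=\sum_{j}p_{ij}\big(prob_{\{l\}|j}-\gamma\,prob_{\{l\}|0}\big)$, where the $p_{ij}$ are evaluated at the induced $\delta=\iota_{V\rightarrow\delta}^{\gamma}(V)$; substituting this $\delta$ shows $\sum_{j}p_{ij}$ equals the complement of the induced outside-option probability, which I would control by $(1-prob_{0|i}(V))\big(\sum_{i}w_{i}\exp(-V_{i})/S_{0}^{(data)}\big)^{\gamma}$.

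For part (b) ($\gamma=0$) everything collapses cleanly: the factor $prob_{\{l\}|0}$ drops out, the row entries $prob_{\{l\}|j}$ are nonnegative and sum to one over $l$, and the induced outside probability is exactly $prob_{0|i}(V)$. Hence $\sum_{l}|\partial\Phi_{i}^{V,0}/\partial V_{l}|=\sum_{j}p_{ij}\sum_{l}prob_{\{l\}|j}=1-prob_{0|i}(V)\le1$, and a mean-value / fundamental-theorem-of-calculus argument along the segment joining $V_{1}$ and $V_{2}$ gives the stated sup-norm bound and the contraction property.

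For the refined inequality in part (a) I would not bound the derivative crudely but instead split $|\Phi_{i}^{V,\gamma}(V_{1})-\Phi_{i}^{V,\gamma}(V_{2})|\le|\Phi_{i}^{V,\gamma}(V_{1})-\Phi_{i}^{V,\gamma}(M)|+|\Phi_{i}^{V,\gamma}(M)-\Phi_{i}^{V,\gamma}(V_{2})|$ with $M=\max\{V_{1},V_{2}\}$ taken componentwise, mirroring the argument behind Proposition~\ref{prop:property_mapping_delta}. Along each segment the coordinate increments $a_{l}=M_{l}-V_{k,l}\ge0$ have a single sign, so integrating the directional derivative reduces the problem to bounding $\sum_{l}a_{l}(prob_{\{l\}|j}-\gamma\,prob_{\{l\}|0})$. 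Because this is linear in $a$ over the box $[0,\|a\|_{\infty}]^{|\mathcal{I}|}$, its extreme value is attained at a vertex whose support is some $I^{*}\subset\mathcal{I}$, giving $|\sum_{l}a_{l}(prob_{\{l\}|j}-\gamma\,prob_{\{l\}|0})|\le\|a\|_{\infty}\sup_{I^{*}}|prob_{I^{*}|j}-\gamma\,prob_{I^{*}|0}|$. Summing against the weights $p_{ij}$, using the prefactor bound above for $\sum_{j}p_{ij}$, and adding the two segment contributions produces the refined estimate; the crude first inequality then follows from $\sup_{I^{*}}|prob_{I^{*}|j}-\gamma\,prob_{I^{*}|0}|\le1$ for $\gamma\in[0,1]$ together with $\|M-V_{1}\|_{\infty}+\|M-V_{2}\|_{\infty}\le2\|V_{1}-V_{2}\|_{\infty}$.

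The hardest part will be the refined bound: establishing the extreme-point reduction from the single-coordinate weights $prob_{\{l\}|j}$ to arbitrary subset weights $prob_{I^{*}|j}$ rigorously, and correctly tracking the prefactor through the composition, since the outside-option probability induced at $\delta=\iota_{V\rightarrow\delta}^{\gamma}(V)$ is not literally $prob_{0|i}(V)$ but must be dominated by $(1-prob_{0|i}(V))(\sum_{i}w_{i}\exp(-V_{i})/S_{0}^{(data)})^{\gamma}$. I also expect to have to stress that, unlike part (b), the resulting constant need not fall below one, so for $\gamma>0$ the argument delivers only a Lipschitz-type estimate rather than a genuine contraction.
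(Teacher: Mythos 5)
Your overall architecture is the same as the paper's: compute the Jacobian of $\Phi^{V,\gamma}$, bound subset sums of its entries by a prefactor times $\sup_{I^{*},j,V}\left|prob_{I^{*}|j}(V)-\gamma\,prob_{I^{*}|0}(V)\right|$, and convert this into the two displayed inequalities via the $\max\{V_{1},V_{2}\}$ decomposition. Your two departures are harmless or even advantageous. Differentiating through the factorization $\iota_{\delta\rightarrow V}\circ\iota_{V\rightarrow\delta}^{\gamma}$ gives exactly the correct Jacobian $\partial\Phi_{i}^{V,\gamma}/\partial V_{l}=\sum_{j}p_{ij}\left(prob_{\{l\}|j}-\gamma\,prob_{\{l\}|0}\right)$, with $p_{ij}$ the choice probabilities at the induced $\delta$; and your vertex/extreme-point reduction over the box is a valid substitute for the paper's integration argument (Lemma \ref{lem:fx_diff_upper}(a)), because $prob_{I^{*}|j}=\sum_{l\in I^{*}}prob_{\{l\}|j}$. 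Part (b) is also clean: at $\gamma=0$ the induced outside-option probability is literally $prob_{0|i}(V)$, so the row sums collapse to $1-prob_{0|i}(V)\leq1$.

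The genuine gap is precisely the prefactor step you deferred, and it is not a technicality: the domination you need is false in general. Write $x_{i}=\sum_{j}S_{j}^{(data)}\frac{\exp(\mu_{ij})}{\sum_{i'}w_{i'}\exp(\mu_{i'j})\exp(-V_{i'})}$ and $c=\left(\frac{\sum_{i'}w_{i'}\exp(-V_{i'})}{S_{0}^{(data)}}\right)^{\gamma}$. The row weight coming out of your chain rule is $\sum_{j}p_{ij}=\frac{x_{i}c}{1+x_{i}c}$, while the target prefactor is $\left(1-prob_{0|i}(V)\right)c=\frac{x_{i}}{1+x_{i}}c$; cross-multiplying shows $\frac{x_{i}c}{1+x_{i}c}\leq\frac{x_{i}}{1+x_{i}}c$ holds if and only if $c\geq1$, i.e. $\sum_{i'}w_{i'}\exp(-V_{i'})\geq S_{0}^{(data)}$. (For $x_{i}=1$ and $c=\tfrac{1}{2}$ the left side is $\tfrac{1}{3}$, the right side $\tfrac{1}{4}$.) So for $\gamma>0$ your plan only goes through on the region where $\sum_{i'}w_{i'}\exp(-V_{i'})\geq S_{0}^{(data)}$ — which does contain the fixed point, where equality holds — or after replacing the prefactor in the statement by $\sup_{i,V}\frac{x_{i}c}{1+x_{i}c}$. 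You should not feel too bad about this: the paper's own proof of Lemma \ref{lem:mapping_V_property} passes this exact point only because the factor $\left(\frac{\sum_{i}w_{i}\exp(-V_{i})}{S_{0}^{(data)}}\right)^{\gamma}$ silently disappears from the denominator of the Jacobian between the first displayed line of the computation (where it is present) and the stated formula (where it is not). Your version of the Jacobian is the correct one, which is exactly why the difficulty you flagged as the hardest part surfaces explicitly rather than being glossed over.
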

The result in (b) corresponds to the contraction property of the BLP contraction mapping $\Phi^{\delta,\gamma=0}$. Regarding the results in (a), the first inequality implies $\Phi^{V,\gamma}$ is not always a contraction mapping if $2\sup_{i_{2}\in\mathcal{I},V\in B_{V}}\left((1-prob_{0|i_{2}}(V))\cdot\left(\frac{\sum_{i\in\mathcal{I}}w_{i}\exp(-V_{i})}{S_{0}^{(data)}}\right)^{\gamma}\right)>1$. Nevertheless, the second inequality shows it has a property similar to contraction mappings and has good convergence properties. Here, consider the setting where $\frac{\sum_{i\in\mathcal{I}}w_{i}\exp(-V_{i})}{S_{0}^{(data)}}\approx1$, $V^{*}>V$, and $\gamma=1$. Then, $\left\Vert \Phi^{V,\gamma}(V)-\Phi^{V,\gamma}(V^{*})\right\Vert _{\infty}\leq\sup_{i\in\mathcal{I},B\in B_{V}}\left((1-prob_{0|i}(V))\right)\cdot\sup_{I^{*}\subset\mathcal{I},j\in\mathcal{J},V\in B_{V}}\left|prob_{I^{*}|j}(V)-prob_{I^{*}|0}(V)\right|\cdot\left\Vert V^{*}-V\right\Vert _{\infty}$ holds. $\sup_{I^{*}\subset\mathcal{I},j\in\mathcal{J},V\in B_{V}}\left|prob_{I^{*}|j}(V)-prob_{I^{*}|0}(V)\right|$ is the maximum absolute value of the difference between $prob_{I^{*}|0}(V)$ and $prob_{I^{*}|j}(V)$, and it is less than 1. Hence, $\left\Vert \Phi^{V,\gamma}(V)-\Phi^{V,\gamma}(V^{*})\right\Vert _{\infty}\leq\exists K\left\Vert V^{*}-V\right\Vert _{\infty}\leq\left\Vert V^{*}-V\right\Vert _{\infty}\ (K\leq1)$ holds under the setting. 

Besides, $prob_{I^{*}|j}(V)-prob_{I^{*}|0}(V)$ take small values when the consumer heterogeneity is small, implying we can attain high convergence speed in the model with small consumer heterogeneity. If the size of consumer heterogeneity is zero as in logit models without random coefficients, the mapping is equivalent to $\Phi^{V,\gamma=1}(V)=\log\left(1+\left(\sum_{j\in\mathcal{J}}S_{j}^{(data)}\right)\cdot\left(\frac{1}{S_{0}^{(data)}}\right)\right)=-\log\left(S_{0}^{(data)}\right)$, whose right-hand side is the solution of the fixed-point problem, and it immediately converges to the solution after applying the mapping $\Phi^{V,\gamma=1}$ once.

\subsection{Global convergence in the static BLP model\label{subsec:Global-convergence}}

As previously mentioned, there is no guarantee that the mapping $\Phi^{\delta,\gamma=1}$ is a contraction. However, we can easily guarantee the global convergence in the static BLP model by slightly modifying the original algorithm using $\Phi^{\delta,\gamma=1}$, utilizing the fact that $\Phi^{\delta,\gamma=0}$ is a contraction. The stabilized algorithm is worth considering for the static BLP model if practitioners are conservative about the convergence of the algorithm, though the algorithms without any stabilization techniques converge in most cases. Algorithm \ref{alg:static_BLP_delta_stabilized} shows the steps.

\begin{algorithm}[H]
Set initial values of $\delta^{(0)}$ and a tuning parameter $\eta\in(0,1)$. Let $m\left(\delta\right)\equiv\left\Vert \log\left(S^{(data)}\right)-\log\left(s\left(\delta\right)\right)\right\Vert _{\infty}$. Iterate the following $(n=0,1,2,\cdots)$:
\begin{enumerate}
\item Let $\widetilde{\delta}^{(n+1)}\leftarrow\Phi^{\delta,\gamma=1}(\delta^{(n)})$
\item If $m\left(\widetilde{\delta}^{(n+1)}\right)\leq\eta m\left(\delta^{(n)}\right)$, let $\delta^{(n+1)}\leftarrow\widetilde{\delta}^{(n+1)}\equiv\Phi^{\delta,\gamma=1}(\delta^{(n)})$.

Otherwise, let $\delta^{(n+1)}\leftarrow\Phi^{\delta,\gamma=0}(\delta^{(n)})$.
\item Exit the iteration if $m\left(\delta^{(n+1)}\right)<\epsilon_{\delta}$
\end{enumerate}
\caption{Inner loop algorithm of static BLP using $\Phi^{\delta,\gamma=1}$ (stabilized version)\label{alg:static_BLP_delta_stabilized}}
\end{algorithm}

The difference with the original algorithm \ref{alg:static_BLP_delta} is Step 2. $m\left(\delta\right)\equiv\left\Vert \log\left(S^{(data)}\right)-\log\left(s\left(\delta\right)\right)\right\Vert _{\infty}$ can be interpreted as the measure of the deviation from the solution, and we do not accept the value of $\Phi^{\delta,\gamma=1}(\delta^{(n)})$ as $\delta^{(n+1)}$ if the value of $m\left(\delta\right)$ does not decrease. In this case, we alternatively use $\Phi^{\delta,\gamma=0}(\delta^{(n)})$ as $\delta^{(n+1)}$.\footnote{Analogous procedure has been used to guarantee the global convergence of fixed-point iterations (e.g., \citealp{varadhan2008simple}, \citealp{zhang2020globally}).} We can implement the step by adding a few lines in the programming code. 

In the algorithm, we need to specify the value of a tuning parameter $\eta$, which determines the acceptance of $\Phi^{\delta,\gamma=1}(\delta^{(n)})$ as $\delta^{(n+1)}$. To avoid unnecessary rejection, values close to 1, such as 0.99, are recommended.\footnote{If we set $\eta=1$, we cannot rule out the possibility that the iteration converges to a point such that $m(\delta)>0$, strictly speaking. Hence, setting a value less than 1 is essential.} 

We can easily show that $\lim_{n\rightarrow\infty}\delta^{(n)}=\delta_{*}$ holds, where $\delta_{*}$ is the solution, by utilizing the fact that $\Phi^{\delta,\gamma=0}$ is a contraction. Let $\eta_{0}\in(0,1)$ be the modulus of the standard BLP contraction mapping $\Phi^{\delta,\gamma=0}$. Because $m\left(\delta\right)=\left\Vert \Phi^{\delta,\gamma=0}\left(\delta\right)-\delta\right\Vert _{\infty}$, 
\begin{eqnarray*}
m\left(\Phi^{\delta,\gamma=0}(\delta^{(n)})\right) & = & \left\Vert \Phi^{\delta,\gamma=0}\left(\Phi^{\delta,\gamma=0}(\delta^{(n)})\right)-\Phi^{\delta,\gamma=0}(\delta^{(n)})\right\Vert _{\infty}\\
 & \leq & \eta_{0}\left\Vert \Phi^{\delta,\gamma=0}(\delta^{(n)})-\delta^{(n)}\right\Vert _{\infty}\ \left(\because\text{Contraction property of }\Phi^{\delta,\gamma=0}\right)\\
 & = & \eta_{0}\cdot m\left(\delta^{(n)}\right)
\end{eqnarray*}

Hence, the algorithm satisfies $m\left(\delta^{(n+1)}\right)\leq\max\left\{ \eta_{0},\eta\right\} \cdot m\left(\delta^{(n)}\right)$, which implies $\lim_{n\rightarrow\infty}m\left(\delta^{(n)}\right)=\lim_{n\rightarrow\infty}\left(\max\left\{ \eta_{0},\eta\right\} \right)^{n}m\left(\delta^{(0)}\right)=0$ because $\eta_{0},\eta\in(0,1)$. Because $\delta$ satisfying $m\left(\delta\right)=0$ is unique, $\lim_{n\rightarrow\infty}\delta^{(n)}=\delta_{*}$ holds.

Although we have discussed the algorithm to guarantee the global convergence of $\Phi^{\delta,\gamma=1}$, the idea can also be used for the algorithms using acceleration methods, such as Anderson, spectral, and SQUAREM. In addition, the idea can be also used for $\Phi^{V,\gamma=1}$ in the static BLP model, utilizing the fact that $\Phi^{V,\gamma=0}$ is a contraction.

\subsection{Static Random Coefficient Nested Logit (RCNL) model\label{subsec:Extension:-RCNL-model}}

In the main part of this article, we have considered BLP models without nest structure (RCL models). Nevertheless, we can easily extend the discussion to random coefficient nested logit (RCNL) models, where the nest structure is introduced in the RCL models.

\subsubsection{Model}

Let consumer $i$'s utility when buying product $j$ be $v_{ij}=\delta_{j}+\mu_{ij}+\epsilon_{ij}$, and utility when buying nothing be $v_{i0}=\epsilon_{i0}$. $\delta_{j}$ denotes product $j$'s mean utility, and $\mu_{ij}$ denotes consumer $i$-specific utility of product $j$. $\epsilon$ denotes idiosyncratic utility shocks. Let $\mathcal{J}$ be the set of products. Here, we assume the following distributional assumption on $\epsilon_{ij}$:

\begin{eqnarray*}
\epsilon_{ij} & = & \bar{\xi}_{ig}+(1-\rho)\widetilde{\epsilon_{ij}}\ \ \ (j\in\mathcal{J}_{g}),
\end{eqnarray*}
where $\widetilde{\epsilon_{ij}}$ is distributed i.i.d. mean zero type-I extreme value, and $\bar{\xi}_{ig}$ is such that $\epsilon_{ij}$ is distributed extreme value. $\rho$ denotes a nest parameter such that $\rho\in[0,1)$.

Let $\mathcal{G}$ be the set of nests. We assume $\mathcal{G}$ does not include the outside option. Let $\mathcal{J}_{g}$ be the set of products in nest $g\in\mathcal{G}$. By definition, $\mathcal{J}=\cup_{g\in\mathcal{G}}\mathcal{J}_{g}$ holds.

Then, consumer $i$'s choice probability of product $j$ in nest $g$ is:

\begin{eqnarray*}
s_{ij} & = & \frac{\exp\left(\frac{\delta_{j}+\mu_{ij}}{1-\rho}\right)}{\exp\left(\frac{IV_{ig}}{1-\rho}\right)}\frac{\exp\left(IV_{ig}\right)}{1+\sum_{g\in\mathcal{G}}\exp\left(IV_{ig}\right)}
\end{eqnarray*}
Here, $IV_{ig}\equiv(1-\rho)\log\left(\sum_{j\in\mathcal{J}_{g}}\exp\left(\frac{\delta_{j}+\mu_{ij}}{1-\rho}\right)\right)$ denotes consumer $i$'s inclusive value of nest $g\in\mathcal{G}$.

Consumer $i$'s choice probability of the outside option is:

\begin{eqnarray*}
s_{i0} & = & \frac{1}{1+\sum_{g\in\mathcal{G}}\exp\left(IV_{ig}\right)}
\end{eqnarray*}

The market share of product $j$ is represented as $s_{j}=\sum_{i\in\mathcal{I}}w_{i}s_{ij}$. 

\subsubsection{Mappings of $\delta$}

We define the following mapping: 

\begin{eqnarray*}
\Phi_{j}^{\delta,\gamma}\left(\delta\right) & = & \delta_{j}+\left(1-\rho\right)\left[\log\left(S_{j}^{(data)}\right)-\log\left(s_{j}(\delta)\right)\right]+\\
 &  & \gamma\rho\left[\log\left(S_{g}^{(data)}\right)-\log\left(s_{g}(\delta)\right)\right]-\gamma\left[\log\left(S_{0}^{(data)}\right)-\log\left(s_{0}(\delta)\right)\right]
\end{eqnarray*}

Here, we define $s_{j}(\delta)\equiv\sum_{i}w_{i}\frac{\exp\left(\frac{\delta_{j}+\mu_{ij}}{1-\rho}\right)}{\exp\left(\frac{IV_{ig}}{1-\rho}\right)}\frac{\exp\left(IV_{ig}\right)}{1+\sum_{g\in\mathcal{G}}\exp\left(IV_{ig}\right)},$$s_{g}(\delta)\equiv\sum_{j\in\mathcal{J}_{g}}s_{j}(\delta)$ where $IV_{ig}=(1-\rho)\log\left(\sum_{j\in\mathcal{J}_{g}}\exp\left(\frac{\delta_{j}+\mu_{ij}}{1-\rho}\right)\right)$ and $s_{0}(\delta)\equiv1-\sum_{g\in\mathcal{G}}s_{g}(\delta)$.

The mapping $\Phi_{j}^{\delta,\gamma=0}$ corresponds to the traditional one considered in \citet{iizuka2007experts}, \citet{grigolon2014nested}, and \citet{conlon2020best}.\footnote{As pointed out in \citet{conlon2020best}, there is a typesetting error in the mapping in Equation (15) of \citet{grigolon2014nested}.}

The following proposition justifies the use of the mapping $\Phi^{\delta,\gamma\geq0}$:
\begin{prop}
(Static RCNL model)\label{prop:sol_static_RCNL}Solution of $\delta=\Phi^{\delta,\gamma\geq0}\left(\delta\right)$ satisfies $S_{j}^{(data)}=s_{j}(\delta)\ \forall j\in\mathcal{J}$.
\end{prop}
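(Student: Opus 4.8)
The plan is to follow the same strategy as in Proposition~\ref{prop:sol_delta_mapping_static_BLP}, reducing the fixed-point equation to a single scalar relation and closing with a monotonicity argument. Write $a_j\equiv\log(S_j^{(data)})-\log(s_j(\delta))$, and define $a_g$ and $a_0$ analogously for the nest shares and the outside option. A fixed point $\delta=\Phi^{\delta,\gamma}(\delta)$ means that for every product $j$, with $g$ the nest containing $j$,
\begin{equation*}
(1-\rho)\,a_j+\gamma\rho\,a_g-\gamma\,a_0=0.
\end{equation*}
When $\gamma=0$ this collapses to $(1-\rho)a_j=0$, and since $\rho\in[0,1)$ gives $1-\rho>0$ we obtain $a_j=0$, i.e.\ $S_j^{(data)}=s_j(\delta)$, immediately. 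The substantive case is $\gamma>0$.

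First I would exploit that the last two terms of the displayed equation depend only on the nest $g$ and on the outside option, not on the individual product $j\in\mathcal{J}_g$. Hence $(1-\rho)a_j$ takes a common value across all $j\in\mathcal{J}_g$, and because $1-\rho>0$ the quantity $a_j$ is itself constant on each nest; denote this constant $\bar a_g$. Then $S_j^{(data)}=e^{\bar a_g}s_j(\delta)$ for all $j\in\mathcal{J}_g$, and summing over $j\in\mathcal{J}_g$ gives $S_g^{(data)}=e^{\bar a_g}s_g(\delta)$, i.e.\ $a_g=\bar a_g$. Substituting back into the fixed-point equation yields $(1-\rho+\gamma\rho)\bar a_g=\gamma a_0$, so
\begin{equation*}
\bar a_g=\frac{\gamma}{1-\rho+\gamma\rho}\,a_0\equiv\kappa\,a_0,
\end{equation*}
which is independent of $g$; I will write $\bar a$ for this common value, noting $\kappa>0$ since $\gamma>0$ and $1-\rho+\gamma\rho\ge 1-\rho>0$.

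The final and crucial step is to show $\bar a=0$ using the adding-up constraints $\sum_{g}S_g^{(data)}=1-S_0^{(data)}$ and $\sum_{g}s_g(\delta)=1-s_0(\delta)$. Summing $S_g^{(data)}=e^{\bar a}s_g(\delta)$ over nests gives $e^{\bar a}=\frac{1-S_0^{(data)}}{1-s_0(\delta)}$, while $\bar a=\kappa\log(S_0^{(data)}/s_0(\delta))$ gives $e^{\bar a}=\left(S_0^{(data)}/s_0(\delta)\right)^{\kappa}$. Equating these, and treating $S_0^{(data)}$ as fixed and $x\equiv s_0(\delta)\in(0,1)$ as the unknown, I would show that
\begin{equation*}
\frac{1-S_0^{(data)}}{1-x}=\left(\frac{S_0^{(data)}}{x}\right)^{\kappa}
\end{equation*}
has $x=S_0^{(data)}$ as its unique root: the left-hand side is strictly increasing in $x$ on $(0,1)$, whereas the right-hand side is strictly decreasing since $\kappa>0$, so their difference is strictly monotone and vanishes at only one point, namely $x=S_0^{(data)}$. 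This forces $\bar a=0$, hence $a_j=0$ and $S_j^{(data)}=s_j(\delta)$ for all $j$. The opposite-monotonicity (and hence uniqueness) argument is exactly where the hypothesis $\gamma\ge0$, equivalently $\kappa\ge0$, is used, and I expect this uniqueness step to be the main obstacle: without the sign restriction the two sides could move in the same direction and admit spurious roots with $\bar a\neq0$.
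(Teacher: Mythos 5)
Your proof is correct and follows essentially the same route as the paper's: the fixed-point equation forces the log-share discrepancy $a_{j}$ to be constant within each nest, aggregation within nests and then across nests reduces everything to a single scalar equation in $s_{0}(\delta)$, and a monotonicity argument (your increasing/decreasing split is just a rearrangement of the paper's claim that $\frac{1-s_{0}}{s_{0}^{\kappa}}$ is strictly decreasing) shows its unique root is $s_{0}(\delta)=S_{0}^{(data)}$. Your separate treatment of $\gamma=0$ is a minor improvement in rigor, since the paper's assertion that the exponent $\frac{\gamma}{1-\rho+\rho\gamma}$ is positive fails at $\gamma=0$, where the result instead follows immediately from $(1-\rho)a_{j}=0$.
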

When no consumer heterogeneity exists, $\delta_{j}=(1-\rho)\log(S_{j}^{(data)})+\rho\log\left(S_{g}^{(data)}\right)-\log(S_{0}^{(data)})$ holds, as shown in \citet{berry1994estimating}. Regarding the mapping $\Phi^{\delta,\gamma=1}$, $\Phi_{j}^{\delta,\gamma=1}\left(\delta\right)=(1-\rho)\log(S_{j}^{(data)})+\rho\log\left(S_{g}^{(data)}\right)-\log(S_{0}^{(data)})$ holds, and the output of $\Phi^{\delta,\gamma=1}$ is equal to the true $\delta$ for any input. When consumer heterogeneity exists, the mapping does not immediately converge to the solution. Nevertheless, we can expect fast convergence especially in a setting with small consumer heterogeneity, as in the case of static RCL models.

\subsubsection{Mappings of $IV$}

When applying the RCNL model, generally we cannot analytically represent $\delta$ as a function of $V$ alone, unlike the case without nest structure. However, we can alternatively represent $\delta$ as a function of $IV\equiv\left\{ IV_{ig}\right\} _{g\in\mathcal{G}}$, which are nest-specific inclusive values of each consumer type.\footnote{\citet{doi2022simple} also derived the representation of $\delta$ as a function of nest-level choice probabilities by each consumer type.} The following equation shows the analytical formula:

\begin{eqnarray*}
\delta_{j} & =(1-\rho) & \left[\log\left(S_{j}^{(data)}\right)-\log\left(\sum_{i}w_{i}\frac{\exp\left(\frac{\mu_{ij}}{1-\rho}\right)}{\exp\left(\frac{IV_{ig}}{1-\rho}\right)}\frac{\exp\left(IV_{ig}\right)}{1+\sum_{g\in\mathcal{G}}\exp\left(IV_{ig}\right)}\right)\right]
\end{eqnarray*}

Motivated by this analytical formula, define the following mapping $\Phi^{IV,\gamma}:B_{IV}\rightarrow B_{IV}$: 

\begin{eqnarray*}
\Phi^{IV,\gamma}\left(IV\right) & \equiv & \iota_{\delta\rightarrow V}\left(\iota_{IV\rightarrow\delta}^{\gamma}\left(IV\right)\right),
\end{eqnarray*}
where $\iota_{\delta\rightarrow IV}:B_{\delta}\rightarrow B_{IV}$ is a mapping such that:

\begin{eqnarray*}
\iota_{\delta\rightarrow IV,i}\left(\delta\right) & \equiv & \log\left(1+\sum_{j\in\mathcal{J}_{g}}\exp\left(\delta_{j}+\mu_{ij}\right)\right),
\end{eqnarray*}
and $\iota_{IV\rightarrow\delta}:B_{IV}\rightarrow B_{\delta}$ is a mapping such that:

\begin{eqnarray*}
\iota_{IV\rightarrow\delta,j}^{\gamma}\left(IV\right) & = & (1-\rho)\left[\log\left(S_{j}^{(data)}\right)-\log\left(\sum_{i}w_{i}\frac{\exp\left(\frac{\mu_{ij}}{1-\rho}\right)}{\exp\left(\frac{IV_{ig}}{1-\rho}\right)}\frac{\exp\left(IV_{ig}\right)}{1+\sum_{g\in\mathcal{G}}\exp\left(IV_{ig}\right)}\right)\right]\\
 &  & +\gamma\rho\left[\log\left(S_{g}^{(data)}\right)-\log\left(s_{g}(IV)\right)\right]-\gamma\left[\log\left(S_{0}^{(data)}\right)-\log\left(s_{0}(IV)\right)\right].
\end{eqnarray*}

Here, we define $s_{g}(IV)\equiv\sum_{i}w_{i}\frac{\exp\left(IV_{ig}\right)}{1+\sum_{g\in\mathcal{G}}\exp\left(IV_{ig}\right)}$ and $s_{0}(IV)\equiv1-\sum_{g\in\mathcal{G}}s_{g}(IV)$. $B_{IV}$ denotes the space of $IV$.

The following proposition justifies the use of $\Phi^{IV,\gamma\geq0}$ .
\begin{prop}
\label{prop:sol_IV_mapping_static_RCNL}$\delta$ such that $IV=\Phi_{i}^{IV,\gamma\geq0}(IV),\ \delta=\iota_{IV\rightarrow\delta}^{\gamma\geq0}(IV)$ satisfies $S_{j}^{(data)}=s_{j}(\delta)\ \forall j\in\mathcal{J}$.
\end{prop}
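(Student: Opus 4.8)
The plan is to reduce this statement to Proposition \ref{prop:sol_static_RCNL}, the already-established result for the $\delta$-mapping in the RCNL model, by exhibiting the same duality structure between $\Phi^{IV,\gamma}$ and $\Phi^{\delta,\gamma}$ that Proposition \ref{prop:duality_mapping} records for $\Phi^{V,\gamma}$ and $\Phi^{\delta,\gamma}$ in the non-nested model. By construction, $\Phi^{IV,\gamma}=\iota_{\delta\rightarrow IV}\circ\iota_{IV\rightarrow\delta}^{\gamma}$. The first thing I would verify is the dual identity $\Phi^{\delta,\gamma}=\iota_{IV\rightarrow\delta}^{\gamma}\circ\iota_{\delta\rightarrow IV}$, i.e. that composing the two building-block maps in the opposite order recovers exactly the RCNL $\delta$-mapping of Proposition \ref{prop:sol_static_RCNL}.

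Granting these two compositions, the argument is short. Suppose $IV$ solves $IV=\Phi^{IV,\gamma}(IV)$ and set $\delta=\iota_{IV\rightarrow\delta}^{\gamma}(IV)$. Then $IV=\Phi^{IV,\gamma}(IV)=\iota_{\delta\rightarrow IV}(\iota_{IV\rightarrow\delta}^{\gamma}(IV))=\iota_{\delta\rightarrow IV}(\delta)$, so $IV$ is precisely the vector of inclusive values generated by $\delta$. Substituting this back gives $\delta=\iota_{IV\rightarrow\delta}^{\gamma}(IV)=\iota_{IV\rightarrow\delta}^{\gamma}(\iota_{\delta\rightarrow IV}(\delta))=\Phi^{\delta,\gamma}(\delta)$, so $\delta$ is a fixed point of $\Phi^{\delta,\gamma}$. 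Proposition \ref{prop:sol_static_RCNL} then yields $S_j^{(data)}=s_j(\delta)$ for all $j\in\mathcal{J}$, which is the claim.

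The substantive work sits in verifying $\iota_{IV\rightarrow\delta}^{\gamma}\circ\iota_{\delta\rightarrow IV}=\Phi^{\delta,\gamma}$. The key computation is that when $IV_{ig}$ equals the inclusive value $(1-\rho)\log(\sum_{k\in\mathcal{J}_g}\exp(\frac{\delta_k+\mu_{ik}}{1-\rho}))$ of $\delta$, the conditional-share denominator inside $\iota_{IV\rightarrow\delta,j}^{\gamma}$ collapses: the term $\sum_i w_i \frac{\exp(\mu_{ij}/(1-\rho))}{\exp(IV_{ig}/(1-\rho))}\frac{\exp(IV_{ig})}{1+\sum_{g'}\exp(IV_{ig'})}$ equals $\exp(-\delta_j/(1-\rho))\,s_j(\delta)$, so the leading term reproduces $\delta_j+(1-\rho)[\log S_j^{(data)}-\log s_j(\delta)]$. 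One must also confirm that the nest- and outside-option shares computed from $IV$ agree with those from $\delta$, namely $s_g(IV)=\sum_{j\in\mathcal{J}_g}s_j(\delta)=s_g(\delta)$ and $s_0(IV)=s_0(\delta)$; this holds because summing the RCNL product-level choice probabilities over a nest reduces, via $\sum_{k\in\mathcal{J}_g}\exp(\frac{\delta_k+\mu_{ik}}{1-\rho})=\exp(\frac{IV_{ig}}{1-\rho})$, to the nest-level probability written through $IV_{ig}$. With these reductions the two correction terms $\gamma\rho[\log S_g^{(data)}-\log s_g(\delta)]$ and $-\gamma[\log S_0^{(data)}-\log s_0(\delta)]$ appear verbatim, completing the identity.

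The main obstacle I anticipate is bookkeeping in this composition rather than any new idea: one must track the $(1-\rho)$ rescaling through the nested logit-sum and be careful that the map used for $\iota_{\delta\rightarrow IV}$ is the inclusive-value map (not the non-nested $\log(1+\cdots)$ expression that appears in the stated definition), so that the denominators telescope correctly nest by nest. As an alternative that sidesteps Proposition \ref{prop:sol_static_RCNL}, one could argue directly: from $\delta=\iota_{IV\rightarrow\delta}^{\gamma}(IV)$ and $IV=\iota_{\delta\rightarrow IV}(\delta)$ derive $s_j(\delta)=S_j^{(data)}(S_g^{(data)}/s_g(\delta))^{\gamma\rho/(1-\rho)}(s_0(\delta)/S_0^{(data)})^{\gamma/(1-\rho)}$, sum over $j\in\mathcal{J}_g$ to obtain a relation forcing every nest-share ratio $s_g(\delta)/S_g^{(data)}$ to a common value, and then use strict monotonicity of $r_0^{\beta}(1-S_0^{(data)})+r_0 S_0^{(data)}$ in the outside-share ratio $r_0$ (with $\beta=\gamma/(1-\rho+\gamma\rho)>0$) to conclude all ratios equal one; but this merely re-derives the content of Proposition \ref{prop:sol_static_RCNL}, so the reduction is preferable.
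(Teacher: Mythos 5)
Your proposal is correct and follows essentially the same route as the paper: the paper's own proof also substitutes the fixed-point condition $IV=\iota_{\delta\rightarrow IV}(\delta)$ into $\delta=\iota_{IV\rightarrow\delta}^{\gamma}(IV)$, identifies the shares computed from $IV$ with $s_{j}(\delta)$, $s_{g}(\delta)$, $s_{0}(\delta)$, concludes that $\delta$ is a fixed point of the RCNL mapping $\Phi^{\delta,\gamma}$, and then invokes Proposition \ref{prop:sol_static_RCNL}; your framing via the duality identity $\Phi^{\delta,\gamma}=\iota_{IV\rightarrow\delta}^{\gamma}\circ\iota_{\delta\rightarrow IV}$ is the same computation packaged as Proposition \ref{prop:duality_mapping-RCNL}. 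Your remark that $\iota_{\delta\rightarrow IV}$ must be the nest-specific inclusive-value map $(1-\rho)\log\bigl(\sum_{j\in\mathcal{J}_{g}}\exp\bigl(\tfrac{\delta_{j}+\mu_{ij}}{1-\rho}\bigr)\bigr)$ rather than the non-nested expression appearing in the paper's stated definition is a correct catch of a typographical slip and is exactly what makes the telescoping work.
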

Algorithm \ref{alg:static_RCNL} shows the steps to solve for $\delta$.

\begin{algorithm}[H]
\begin{itemize}
\item Algorithm using $\Phi^{\delta,\gamma}$

Set the initial values of $\delta^{(0)}$. Iterate the following $(n=0,1,2,\cdots)$:
\begin{enumerate}
\item Compute $\delta_{j}^{(n+1)}=\Phi_{j}^{\delta,\gamma}(\delta^{(n)})$
\item Exit the iteration if $\left\Vert \delta^{(n+1)}-\delta^{(n)}\right\Vert <\epsilon_{\delta}$
\end{enumerate}
\item Algorithm using $\Phi^{V,\gamma}$

Set the initial values of $IV^{(0)}$. Iterate the following $(n=0,1,2,\cdots)$:
\begin{enumerate}
\item Compute $\delta^{(n)}=\iota_{IV\rightarrow\delta}^{\gamma}\left(IV^{(n)}\right)$
\item Update $IV$ by $IV^{(n+1)}=\iota_{\delta\rightarrow IV}\left(\delta^{(n)}\right)$
\item Exit the iteration if $\left\Vert IV^{(n+1)}-IV^{(n)}\right\Vert <\epsilon_{IV}$
\end{enumerate}
\end{itemize}
\caption{Inner loop algorithms of static RCNL model\label{alg:static_RCNL}}
\end{algorithm}

As in the case of static BLP models without nest structure, $\Phi^{\delta,\gamma}$ and $\Phi^{IV,\gamma}$ have a dualistic relationship. The following proposition, which we can prove easily, shows a formal statement:
\begin{prop}
\label{prop:duality_mapping-RCNL}(Duality of mappings of $\delta$ and $IV$) The following holds for all $\gamma\in\mathbb{R}$:

\begin{eqnarray*}
\Phi^{IV,\gamma} & = & \iota_{\delta\rightarrow IV}\circ\iota_{IV\rightarrow\delta}^{\gamma},\\
\Phi^{\delta,\gamma} & = & \iota_{IV\rightarrow\delta}^{\gamma}\circ\iota_{\delta\rightarrow IV}.
\end{eqnarray*}
\end{prop}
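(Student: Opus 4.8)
The plan is to verify both identities by direct substitution into the definitions, exploiting the fact that the first equation is essentially the \emph{definition} of $\Phi^{IV,\gamma}$ while the second requires a short algebraic reduction. The first identity $\Phi^{IV,\gamma}=\iota_{\delta\rightarrow IV}\circ\iota_{IV\rightarrow\delta}^{\gamma}$ holds immediately: the mapping $\Phi^{IV,\gamma}$ was introduced precisely as $\iota_{\delta\rightarrow IV}\left(\iota_{IV\rightarrow\delta}^{\gamma}\left(IV\right)\right)$, so nothing needs to be established beyond recognizing the composition.

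The substance lies in the second identity $\Phi^{\delta,\gamma}=\iota_{IV\rightarrow\delta}^{\gamma}\circ\iota_{\delta\rightarrow IV}$. First I would fix an arbitrary $\delta$ and set $IV=\iota_{\delta\rightarrow IV}(\delta)$, so that $\exp\left(\frac{IV_{ig}}{1-\rho}\right)=\sum_{k\in\mathcal{J}_{g}}\exp\left(\frac{\delta_{k}+\mu_{ik}}{1-\rho}\right)$ by the definition of the inclusive value. The key computation is to simplify the argument of the logarithm in the first term of $\iota_{IV\rightarrow\delta,j}^{\gamma}$. Using the explicit RCNL choice probability $s_{ij}$, I would factor out $\exp\left(\frac{\delta_{j}}{1-\rho}\right)$ to obtain
\begin{eqnarray*}
\frac{\exp\left(\frac{\mu_{ij}}{1-\rho}\right)}{\exp\left(\frac{IV_{ig}}{1-\rho}\right)}\frac{\exp(IV_{ig})}{1+\sum_{g'\in\mathcal{G}}\exp(IV_{ig'})} & = & \exp\left(-\frac{\delta_{j}}{1-\rho}\right)s_{ij},
\end{eqnarray*}
and then sum against the weights $w_{i}$ to get $\exp\left(-\frac{\delta_{j}}{1-\rho}\right)s_{j}(\delta)$. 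Taking logarithms and multiplying by $(1-\rho)$ turns the first bracket of $\iota_{IV\rightarrow\delta,j}^{\gamma}$ into exactly $\delta_{j}+(1-\rho)\left[\log S_{j}^{(data)}-\log s_{j}(\delta)\right]$, matching the leading terms of $\Phi_{j}^{\delta,\gamma}$.

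For the two remaining brackets, I would show that $s_{g}(IV)=s_{g}(\delta)$ and $s_{0}(IV)=s_{0}(\delta)$ whenever $IV=\iota_{\delta\rightarrow IV}(\delta)$. The crucial observation is that summing $s_{ij}$ over $j\in\mathcal{J}_{g}$ collapses the numerator against $\exp\left(\frac{IV_{ig}}{1-\rho}\right)$, giving the nest-level probability $\sum_{j\in\mathcal{J}_{g}}s_{ij}=\frac{\exp(IV_{ig})}{1+\sum_{g'}\exp(IV_{ig'})}$; averaging over consumer types then yields $s_{g}(IV)=\sum_{j\in\mathcal{J}_{g}}s_{j}(\delta)=s_{g}(\delta)$, and $s_{0}(IV)=1-\sum_{g}s_{g}(IV)=s_{0}(\delta)$ follows. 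Substituting these equalities makes the $\gamma\rho$ and $-\gamma$ brackets of $\iota_{IV\rightarrow\delta,j}^{\gamma}(IV)$ coincide with those of $\Phi_{j}^{\delta,\gamma}(\delta)$, completing the argument.

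There is no genuine analytic obstacle; the proof is a bookkeeping exercise. The only point requiring care is keeping the nest index $g$ consistent with the product index $j$ (each $j$ lies in a unique nest $\mathcal{J}_{g}$) and correctly using the defining relation $\exp\left(\frac{IV_{ig}}{1-\rho}\right)=\sum_{k\in\mathcal{J}_{g}}\exp\left(\frac{\delta_{k}+\mu_{ik}}{1-\rho}\right)$ to cancel the $IV$ exponentials. Once these substitutions are performed cleanly, both identities drop out term by term.
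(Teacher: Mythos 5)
Your proof is correct, and it is essentially the argument the paper intends: the paper states this proposition without proof (describing it as one ``we can prove easily''), so your direct verification supplies exactly what is omitted. The first identity is indeed definitional, and your reduction of the second identity --- factoring $\exp\left(-\frac{\delta_{j}}{1-\rho}\right)$ out of the within-nest term so that the first bracket of $\iota_{IV\rightarrow\delta,j}^{\gamma}$ becomes $\delta_{j}+(1-\rho)\left[\log S_{j}^{(data)}-\log s_{j}(\delta)\right]$, then checking $s_{g}(IV)=s_{g}(\delta)$ and $s_{0}(IV)=s_{0}(\delta)$ via the collapse $\sum_{j\in\mathcal{J}_{g}}s_{ij}=\frac{\exp(IV_{ig})}{1+\sum_{g'\in\mathcal{G}}\exp(IV_{ig'})}$ --- is exactly the bookkeeping required. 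One point you handle correctly but should state explicitly: you work with the intended defining relation $\exp\left(\frac{IV_{ig}}{1-\rho}\right)=\sum_{k\in\mathcal{J}_{g}}\exp\left(\frac{\delta_{k}+\mu_{ik}}{1-\rho}\right)$, whereas the formula for $\iota_{\delta\rightarrow IV,i}$ as printed in the paper, $\log\left(1+\sum_{j\in\mathcal{J}_{g}}\exp\left(\delta_{j}+\mu_{ij}\right)\right)$, carries an extraneous ``$1+$'' and omits the $\frac{1}{1-\rho}$ scaling (a slip copied from the RCL case); under that literal formula the duality would fail, so flagging the correction strengthens rather than weakens your write-up.
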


\subsubsection{Numerical experiments}

The parameter settings are the same as the static RCL models, except for the existence of nests with $\rho=0.5$. We assume there are $G=3$ nests, and 25 products in each nest. Here, we denote $\delta$-(0) as the algorithm using the mapping $\Phi^{\delta,\gamma=0}$, and $\delta$-(1) as the algorithm using the mapping $\Phi^{\delta,\gamma=1}$. Similarly, we denote $IV$-(0) as the algorithm using the mapping $\Phi^{IV,\gamma=0}$, and $IV$-(1) as the algorithm using the mapping $\Phi^{IV,\gamma=1}$.

Table \ref{tab:static_BLP_Monte_Carlo_RCNL} shows the results. The results show that the new mappings $\Phi^{\delta,\gamma=1}$ and $\Phi^{IV,\gamma=1}$ perform better than $\Phi^{\delta,\gamma=0}$ and $\Phi^{IV,\gamma=0}$. In addition, combining the acceleration methods leads to faster convergence. As in the static RCL models, the Anderson acceleration is the best. We can also see mostly similar performance of $\Phi^{\delta,\gamma}$ and $\Phi^{IV,\gamma}$, which verifies the dualistic relations between the corresponding mappings.
\begin{center}
\begin{table}[H]
\caption{Results of the Monte Carlo simulation (Static RCNL model; Continuous consumer types)\label{tab:static_BLP_Monte_Carlo_RCNL}}

\scalebox{0.8}{
\begin{centering}
\begin{tabular}{cccccccccccc}
\hline 
 & \multirow{2}{*}{{\scriptsize{}$\rho$}} & \multicolumn{6}{c}{{\scriptsize{}Func. Evals.}} & {\scriptsize{}Mean} & {\scriptsize{}Conv.} & {\scriptsize{}Mean} & {\scriptsize{}$DIST<\epsilon_{tol}$}\tabularnewline
\cline{3-8} \cline{4-8} \cline{5-8} \cline{6-8} \cline{7-8} \cline{8-8} 
 &  & {\scriptsize{}Mean} & {\scriptsize{}Min.} & {\scriptsize{}25th} & {\scriptsize{}Median.} & {\scriptsize{}75th} & {\scriptsize{}Max.} & {\scriptsize{}CPU time (s)} & {\scriptsize{}(\%)} & {\scriptsize{}$\log_{10}(DIST)$} & {\scriptsize{}(\%)}\tabularnewline
\hline 
{\scriptsize{}$\delta$-(0) (BLP)} & {\scriptsize{}0.5} & {\scriptsize{}154.46} & {\scriptsize{}44} & {\scriptsize{}55} & {\scriptsize{}84} & {\scriptsize{}143} & {\scriptsize{}1000} & {\scriptsize{}0.12544} & {\scriptsize{}98} & {\scriptsize{}-13.6} & {\scriptsize{}98}\tabularnewline
{\scriptsize{}$\delta$-(0) (BLP) + Anderson} & {\scriptsize{}0.5} & {\scriptsize{}18.02} & {\scriptsize{}12} & {\scriptsize{}15} & {\scriptsize{}18} & {\scriptsize{}20} & {\scriptsize{}27} & {\scriptsize{}0.01636} & {\scriptsize{}100} & {\scriptsize{}-15.5} & {\scriptsize{}100}\tabularnewline
{\scriptsize{}$\delta$-(0) (BLP) + Spectral} & {\scriptsize{}0.5} & {\scriptsize{}42.7} & {\scriptsize{}19} & {\scriptsize{}23} & {\scriptsize{}28} & {\scriptsize{}36} & {\scriptsize{}471} & {\scriptsize{}0.03768} & {\scriptsize{}100} & {\scriptsize{}-14.6} & {\scriptsize{}100}\tabularnewline
{\scriptsize{}$\delta$-(0) (BLP) + SQUAREM} & {\scriptsize{}0.5} & {\scriptsize{}43.3} & {\scriptsize{}20} & {\scriptsize{}24} & {\scriptsize{}30} & {\scriptsize{}40} & {\scriptsize{}438} & {\scriptsize{}0.03848} & {\scriptsize{}100} & {\scriptsize{}-14.5} & {\scriptsize{}100}\tabularnewline
{\scriptsize{}$\delta$-(1)} & {\scriptsize{}0.5} & {\scriptsize{}29.98} & {\scriptsize{}10} & {\scriptsize{}20} & {\scriptsize{}27} & {\scriptsize{}35} & {\scriptsize{}69} & {\scriptsize{}0.0273} & {\scriptsize{}100} & {\scriptsize{}-15.3} & {\scriptsize{}100}\tabularnewline
{\scriptsize{}$\delta$-(1) (BLP) + Anderson} & {\scriptsize{}0.5} & {\scriptsize{}12.84} & {\scriptsize{}7} & {\scriptsize{}11} & {\scriptsize{}13} & {\scriptsize{}15} & {\scriptsize{}20} & {\scriptsize{}0.01196} & {\scriptsize{}100} & {\scriptsize{}-16.4} & {\scriptsize{}100}\tabularnewline
{\scriptsize{}$\delta$-(1) + Spectral} & {\scriptsize{}0.5} & {\scriptsize{}16.12} & {\scriptsize{}8} & {\scriptsize{}13} & {\scriptsize{}16} & {\scriptsize{}18} & {\scriptsize{}28} & {\scriptsize{}0.01548} & {\scriptsize{}100} & {\scriptsize{}-15.5} & {\scriptsize{}100}\tabularnewline
{\scriptsize{}$\delta$-(1) + SQUAREM} & {\scriptsize{}0.5} & {\scriptsize{}16.36} & {\scriptsize{}9} & {\scriptsize{}13} & {\scriptsize{}15.5} & {\scriptsize{}19} & {\scriptsize{}26} & {\scriptsize{}0.01442} & {\scriptsize{}100} & {\scriptsize{}-15.8} & {\scriptsize{}100}\tabularnewline
\hline 
{\scriptsize{}$IV$-(0)} & {\scriptsize{}0.5} & {\scriptsize{}161.26} & {\scriptsize{}44} & {\scriptsize{}56} & {\scriptsize{}88.5} & {\scriptsize{}172} & {\scriptsize{}1000} & {\scriptsize{}0.20378} & {\scriptsize{}98} & {\scriptsize{}-13.7} & {\scriptsize{}98}\tabularnewline
{\scriptsize{}$IV$-(0) + Anderson} & {\scriptsize{}0.5} & {\scriptsize{}18.12} & {\scriptsize{}11} & {\scriptsize{}15} & {\scriptsize{}18} & {\scriptsize{}21} & {\scriptsize{}29} & {\scriptsize{}0.02672} & {\scriptsize{}100} & {\scriptsize{}-15.7} & {\scriptsize{}100}\tabularnewline
{\scriptsize{}$IV$-(0) + Spectral} & {\scriptsize{}0.5} & {\scriptsize{}41.9} & {\scriptsize{}20} & {\scriptsize{}23} & {\scriptsize{}30} & {\scriptsize{}41} & {\scriptsize{}405} & {\scriptsize{}0.05702} & {\scriptsize{}100} & {\scriptsize{}-15.2} & {\scriptsize{}100}\tabularnewline
{\scriptsize{}$IV$-(0) + SQUAREM} & {\scriptsize{}0.5} & {\scriptsize{}39.94} & {\scriptsize{}21} & {\scriptsize{}25} & {\scriptsize{}29.5} & {\scriptsize{}40} & {\scriptsize{}276} & {\scriptsize{}0.05184} & {\scriptsize{}100} & {\scriptsize{}-15.1} & {\scriptsize{}100}\tabularnewline
{\scriptsize{}$IV$-(1)} & {\scriptsize{}0.5} & {\scriptsize{}31.2} & {\scriptsize{}11} & {\scriptsize{}20} & {\scriptsize{}29} & {\scriptsize{}37} & {\scriptsize{}74} & {\scriptsize{}0.04304} & {\scriptsize{}100} & {\scriptsize{}-15.3} & {\scriptsize{}100}\tabularnewline
{\scriptsize{}$IV$-(1) + Anderson} & {\scriptsize{}0.5} & {\scriptsize{}13.92} & {\scriptsize{}8} & {\scriptsize{}11} & {\scriptsize{}14} & {\scriptsize{}16} & {\scriptsize{}22} & {\scriptsize{}0.02112} & {\scriptsize{}100} & {\scriptsize{}-16} & {\scriptsize{}100}\tabularnewline
{\scriptsize{}$IV$-(1) + Spectral} & {\scriptsize{}0.5} & {\scriptsize{}17.4} & {\scriptsize{}9} & {\scriptsize{}14} & {\scriptsize{}17} & {\scriptsize{}21} & {\scriptsize{}28} & {\scriptsize{}0.02496} & {\scriptsize{}100} & {\scriptsize{}-15.6} & {\scriptsize{}100}\tabularnewline
{\scriptsize{}$IV$-(1) + SQUAREM} & {\scriptsize{}0.5} & {\scriptsize{}18.86} & {\scriptsize{}12} & {\scriptsize{}15} & {\scriptsize{}18} & {\scriptsize{}23} & {\scriptsize{}28} & {\scriptsize{}0.02508} & {\scriptsize{}100} & {\scriptsize{}-15.7} & {\scriptsize{}100}\tabularnewline
\hline 
\end{tabular}
\par\end{centering}
}

{\footnotesize{}}{\footnotesize\par}

{\footnotesize{}The mean Outside option share is 0.660.}{\footnotesize\par}
\end{table}
\par\end{center}

\section{Proof\label{sec:Proof}}

\subsection{Justification for the algorithms}

\subsubsection{Proof of Propositions \ref{prop:sol_delta_mapping_static_BLP} and \ref{prop:sol_static_RCNL} ($\Phi^{\delta,\gamma}$ in the RCL / RCNL models)}

We show Proposition \ref{prop:sol_static_RCNL} , because RCNL models include BLP models without nest structure.
\begin{proof}
Solution of $\delta=\Phi_{j}^{\delta,\gamma}\left(\delta\right)$ satisfies:

\begin{eqnarray*}
0 & = & (1-\rho_{g})\left[\log(S_{j}^{(data)})-\log(s_{j}(\delta))\right]+\gamma\rho_{g}\left[\log(S_{g}^{(data)})-\log(s_{g}(\delta))\right]-\gamma\left[\log(S_{0}^{(data)})-\log(s_{0}(\delta))\right]
\end{eqnarray*}

Then, we have:

\begin{eqnarray}
\frac{S_{j}^{(data)}\cdot\left(S_{g}^{(data)}\right)^{\frac{\rho_{g}}{1-\rho_{g}}\gamma}}{\left(S_{0}^{(data)}\right)^{\frac{1}{1-\rho_{g}}\gamma}} & = & \frac{s_{j}(\delta)\cdot\left(s_{g}(\delta)\right)^{\frac{\rho_{g}}{1-\rho_{g}}\gamma}}{\left(s_{0}(\delta)\right)^{\frac{1}{1-\rho_{g}}\gamma}}\ j\in\mathcal{J}_{g},g\in\mathcal{G}.\label{eq:S_j_eq}
\end{eqnarray}

By summing up both sides for all $j\in\mathcal{J}_{g}$, we have:

\begin{eqnarray*}
\frac{S_{g}^{(data)}\cdot\left(S_{g}^{(data)}\right)^{\frac{\rho_{g}}{1-\rho_{g}}\gamma}}{\left(S_{0}^{(data)}\right)^{\frac{1}{1-\rho_{g}}\gamma}} & = & \frac{s_{g}(\delta)\cdot\left(s_{g}(\delta)\right)^{\frac{\rho_{g}}{1-\rho_{g}}\gamma}}{\left(s_{0}(\delta)\right)^{\frac{1}{1-\rho_{g}}\gamma}}\ g\in\mathcal{G},
\end{eqnarray*}

namely, 

\begin{eqnarray}
\frac{S_{g}^{(data)}}{\left(S_{0}^{(data)}\right)^{\frac{\gamma}{1-\rho_{g}+\rho_{g}\gamma}}} & = & \frac{s_{g}(\delta)}{\left(s_{0}(\delta)\right)^{\frac{\gamma}{1-\rho_{g}+\rho_{g}\gamma}}},g\in\mathcal{G}.\label{eq:S_g_eq}
\end{eqnarray}

By summing up both sides for all $g\in\mathcal{G}$, we have:

\begin{eqnarray}
\frac{1-S_{0}^{(data)}}{\left(S_{0}^{(data)}\right)^{\frac{\gamma}{1-\rho_{g}+\rho_{g}\gamma}}} & = & \frac{1-s_{0}(\delta)}{\left(s_{0}(\delta)\right)^{\frac{\gamma}{1-\rho_{g}+\rho_{g}\gamma}}}.\label{eq:S_0_eq}
\end{eqnarray}

We can interpret the equation above as an equation with unknown variable $s_{0}(\delta)$. The equation has an obvious solution $s_{0}(\delta)=S_{0}^{(data)}$. When $\gamma\geq0$ and $\rho_{g}\in[0,1)$ holds, $\frac{\gamma}{1-\rho_{g}+\rho_{g}\gamma}>0$ holds, and $\frac{1-s_{0}(\delta)}{\left(s_{0}(\delta)\right)^{\frac{\gamma}{1-\rho_{g}+\rho_{g}\gamma}}}$ is a decreasing function with respect to $s_{0}(\delta)$. Then, equation (\ref{eq:S_0_eq}) has an unique solution $s_{0}(\delta)=S_{0}^{(data)}$, and equation (\ref{eq:S_0_eq}) implies $s_{0}(\delta)=S_{0}^{(data)}$. By equation (\ref{eq:S_g_eq}), $s_{g}(\delta)=S_{g}^{(data)}$ holds, and by (\ref{eq:S_j_eq}), $s_{j}(\delta)=S_{j}^{(data)}$ holds.
\end{proof}

\subsubsection{Proof of Proposition \ref{prop:sol_V_mapping_static_BLP} ($\Phi^{V,\gamma}$ in the RCL model)}
\begin{proof}
$\delta_{j}=\iota_{V\rightarrow\delta,j}^{\gamma}(V)=\log\left(S_{j}^{(data)}\right)-\log\left(\sum_{i}w_{i}\exp\left(\mu_{ij}-V_{i}\right)\right)-\gamma\left(\log(S_{0}^{(data)})-\log\left(\sum_{i}w_{i}\exp(-V_{i})\right)\right)$ implies:

\begin{eqnarray*}
S_{j}^{(data)} & = & \sum_{i}w_{i}\frac{\exp\left(\delta_{j}+\mu_{ij}\right)}{\exp(V_{i})}\left(\frac{S_{0}^{(data)}}{\sum_{i}w_{i}\exp(-V_{i})}\right)^{\gamma}
\end{eqnarray*}

Because $V=\Phi^{V,\gamma}(V)$ implies $V_{i}=\log\left(1+\sum_{k\in\mathcal{J}}\exp\left(\delta_{k}+\mu_{ik}\right)\right)$, $\delta$ satisfies:

\begin{eqnarray*}
S_{j}^{(data)} & = & \sum_{i}w_{i}\frac{\exp\left(\delta_{j}+\mu_{ij}\right)}{1+\sum_{k\in\mathcal{J}}\exp\left(\delta_{k}+\mu_{ik}\right)}\left(\frac{S_{0}^{(data)}}{\sum_{i}w_{i}\frac{1}{1+\sum_{k\in\mathcal{J}}\exp\left(\delta_{k}+\mu_{ik}\right)}}\right)^{\gamma}\\
 & = & s_{j}(\delta)\cdot\left(\frac{S_{0}^{(data)}}{s_{0}(\delta)}\right)^{\gamma}.
\end{eqnarray*}

Then, $\delta$ satisfies $\delta_{j}=\Phi_{j}^{\delta,\gamma}\left(\delta\right)=\delta_{j}+\log\left(\frac{S_{j}^{(data)}}{s_{j}(\delta)}\slash\left(\frac{S_{0}^{(data)}}{s_{0}(\delta)}\right)^{\gamma}\right)\ \forall j\in\mathcal{J}$, and $S_{j}^{(data)}=s_{j}(\delta)$ holds $\forall\gamma\geq0$ by Proposition \ref{prop:sol_delta_mapping_static_BLP}.
\end{proof}

\subsubsection{Proof of Proposition \ref{prop:sol_IV_mapping_static_RCNL} ($\Phi^{IV,\gamma}$ in the RCNL model)}
\begin{proof}
First, $\delta_{j}=\iota_{IV\rightarrow\delta,j}^{\gamma}(IV)=(1-\rho)\log\left(\frac{S_{j}^{(data)}}{\sum_{i}w_{i}\frac{\exp\left(\frac{\mu_{ij}}{1-\rho}\right)}{\exp\left(\frac{IV_{ig}}{1-\rho}\right)}\frac{\exp\left(IV_{ig}\right)}{1+\sum_{g\in\mathcal{G}}\exp\left(IV_{ig}\right)}}\right)+\gamma\log\left(\left(\frac{S_{g}^{(data)}}{s_{g}(IV)}\right)^{\rho}\slash\frac{S_{0}^{(data)}}{s_{0}(IV)}\right)$ holds. Then, 

\begin{eqnarray*}
0 & = & (1-\rho)\log\left(\frac{S_{j}^{(data)}}{\sum_{i}w_{i}\frac{\exp\left(\frac{\delta_{j}+\mu_{ij}}{1-\rho}\right)}{\exp\left(\frac{IV_{ig}}{1-\rho}\right)}\frac{\exp\left(IV_{ig}\right)}{1+\sum_{g\in\mathcal{G}}\exp\left(IV_{ig}\right)}}\right)+\gamma\log\left(\left(\frac{S_{g}^{(data)}}{s_{g}(IV)}\right)^{\rho}\slash\frac{S_{0}^{(data)}}{s_{0}(IV)}\right)\\
 & = & (1-\rho)\log\left(\frac{S_{j}^{(data)}}{\sum_{i}w_{i}\frac{\exp\left(\frac{\delta_{j}+\mu_{ij}}{1-\rho}\right)}{\exp\left(\frac{IV_{ig}}{1-\rho}\right)}\frac{\exp\left(IV_{ig}\right)}{1+\sum_{g\in\mathcal{G}}\exp\left(IV_{ig}\right)}}\right)+\gamma\log\left(\left(\frac{S_{g}^{(data)}}{s_{g}\left(\delta\right)}\right)^{\rho}\slash\frac{S_{0}^{(data)}}{s_{0}\left(\delta\right)}\right)\\
 & = & (1-\rho)\log\left(\frac{S_{j}^{(data)}}{s_{j}(\delta)}\right)+\gamma\log\left(\left(\frac{S_{g}^{(data)}}{s_{g}\left(\delta\right)}\right)^{\rho}\slash\frac{S_{0}^{(data)}}{s_{0}\left(\delta\right)}\right).
\end{eqnarray*}

Hence, $\delta=\delta+(1-\rho)\log\left(\frac{S_{j}^{(data)}}{s_{j}(\delta)}\right)+\gamma\log\left(\left(\frac{S_{g}^{(data)}}{s_{g}\left(\delta\right)}\right)^{\rho}\slash\frac{S_{0}^{(data)}}{s_{0}\left(\delta\right)}\right)$ holds, and $S_{j}^{(data)}=s_{j}(\delta)$ holds $\forall\gamma\geq0$ by Proposition \ref{prop:sol_static_RCNL}.
\end{proof}

\subsection{Convergence properties of the mappings}

We first show the following lemma.

\begin{lem}
\label{lem:fx_diff_upper}Let $f:B_{x}\subset\mathbb{R}^{K}\rightarrow f(B_{x})\subset\mathbb{R}^{K}$ be a continuously differentiable function. Then, $f$ satisfies the following inequalities for any $x,x^{*}\in\mathbb{R}^{K}$:

(a). 
\begin{eqnarray*}
\left\Vert f(x)-f(x^{*})\right\Vert _{\infty} & \leq & \left[\sup_{I\subset\{1,\cdots,K\},x\in B_{x}}\left\Vert \sum_{i\in I}\frac{\partial f(x)}{\partial x_{i}}\right\Vert _{\infty}\right]\cdot\left[\left\Vert \max\{x,x^{*}\}-x\right\Vert _{\infty}+\left\Vert \max\{x,x^{*}\}-x^{*}\right\Vert _{\infty}\right]\\
 & \leq & \left[\sup_{I\subset\{1,\cdots,K\},x\in B_{x}}\left\Vert \sum_{i\in I}\frac{\partial f(x)}{\partial x_{i}}\right\Vert _{\infty}\right]\cdot2\left\Vert x-x^{*}\right\Vert _{\infty}
\end{eqnarray*}

(b). 

\begin{eqnarray*}
\left\Vert f(x)-f(x^{*})\right\Vert _{\infty} & \leq & \left[\sup_{x\in B_{x}}\left\Vert \nabla_{x}f(x)\right\Vert \right]\cdot\left\Vert x-x^{*}\right\Vert _{\infty}\\
 & = & \left[\sup_{x\in B_{x}}\left\Vert \sum_{i\in\{1,\cdots,K\}}\left|\frac{\partial f(x)}{\partial x_{i}}\right|\right\Vert _{\infty}\right]\cdot\left\Vert x-x^{*}\right\Vert _{\infty}
\end{eqnarray*}
\end{lem}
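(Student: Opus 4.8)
The plan is to prove both inequalities by integrating the derivative of $f$ along straight-line segments and then estimating the resulting integrand, with part (b) being the textbook bound and part (a) requiring the componentwise-maximum decomposition that the statement signals through the term $\max\{x,x^*\}$.

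First I would record the integral representation. Since $f$ is continuously differentiable, for each coordinate $k$ and any two points $y,z$ whose connecting segment lies in $B_x$, the fundamental theorem of calculus gives
\begin{equation*}
f_k(z) - f_k(y) = \int_0^1 \sum_{i=1}^K \frac{\partial f_k}{\partial x_i}\bigl(y + t(z-y)\bigr)\,(z_i - y_i)\,dt .
\end{equation*}
I would note the mild standing assumption that these segments stay inside $B_x$; this is automatic when $B_x$ is a box such as $\mathbb{R}^K$, which covers every application in the paper. For part (b) I then apply this with $y=x^*$, $z=x$, bound $|z_i-y_i|\le\|x-x^*\|_\infty$, pull that factor out of the sum, take the supremum of $\sum_i|\partial f_k/\partial x_i|$ over the segment, and finally maximize over $k$; this turns the coordinatewise estimate into the $\infty$-norm bound and identifies the constant with the $\infty$-induced operator norm, i.e. the maximal absolute row sum. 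That part is routine.

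The substance is in part (a). The idea is to insert the componentwise maximum $\bar x \equiv \max\{x,x^*\}$ and split $f(x)-f(x^*) = [f(x)-f(\bar x)] + [f(\bar x)-f(x^*)]$ by the triangle inequality. The reason for choosing $\bar x$ is that $\bar x - x \ge 0$ and $\bar x - x^* \ge 0$ hold componentwise, so in the integral representation for $f_k(\bar x)-f_k(x)$ the weights $w_i \equiv \bar x_i - x_i$ are nonnegative and lie in $[0,\|\bar x - x\|_\infty]$. The key internal estimate is then purely linear-algebraic: for scalars $a_i$ and weights $0\le w_i\le M$,
\begin{equation*}
\Bigl|\sum_i a_i w_i\Bigr| \le M\,\max\Bigl(\sum_{i:\,a_i>0} a_i,\ -\sum_{i:\,a_i<0} a_i\Bigr) \le M\,\max_{I\subset\{1,\dots,K\}}\Bigl|\sum_{i\in I} a_i\Bigr| ,
\end{equation*}
because the extremes of the linear functional $w\mapsto\sum_i a_i w_i$ over the box $[0,M]^K$ are attained by selecting exactly the positive (resp. negative) coordinates. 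Applying this with $a_i=\partial f_k/\partial x_i$ and $M=\|\bar x - x\|_\infty$, bounding the $k$-th coordinate by the $\infty$-norm, and taking the supremum over the segment and over $I$, yields $\|f(\bar x)-f(x)\|_\infty \le \|\bar x - x\|_\infty\cdot\sup_{I,\,\xi\in B_x}\|\sum_{i\in I}\partial f(\xi)/\partial x_i\|_\infty$, and symmetrically for $f(\bar x)-f(x^*)$. Adding the two gives the first inequality of (a).

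Finally I would close with the crude comparison that produces the second inequality of (a): each coordinate of $\bar x - x$ equals either $0$ or $x_i^*-x_i$, so $\|\bar x - x\|_\infty \le \|x-x^*\|_\infty$ and likewise $\|\bar x - x^*\|_\infty \le \|x-x^*\|_\infty$, whence the bracketed sum is at most $2\|x-x^*\|_\infty$. I expect the main obstacle to be stating the extremal bound above cleanly enough that the subset-indexed supremum $\sup_I\|\sum_{i\in I}\partial f/\partial x_i\|_\infty$ emerges exactly as written, rather than the coarser $\sum_i|\partial f/\partial x_i|$ that part (b) produces; everything else is bookkeeping with the fundamental theorem of calculus and the triangle inequality.
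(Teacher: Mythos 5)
Your proof is correct, and it reaches the paper's key constant by a genuinely different mechanism. Both arguments hinge on the same decomposition through the componentwise maximum $\bar{x}=\max\{x,x^{*}\}$, splitting $f(x)-f(x^{*})=-\left(f(\bar{x})-f(x)\right)+\left(f(\bar{x})-f(x^{*})\right)$ and integrating the derivative along a monotone path; the difference is where the subset sums $\sum_{i\in I}\partial f/\partial x_{i}$ come from. The paper moves from $x$ to $\bar{x}$ along a clamped unit-speed path: every coordinate increases at rate one until it hits its target and then stops, so at each instant $\lambda$ the velocity is the indicator vector of the set $I(\lambda)$ of still-moving coordinates, the integrand is \emph{literally} a subset sum $\sum_{i\in I(\lambda)}\partial f/\partial x_{i}$, and the bound follows immediately since the integration length is $\left\Vert \bar{x}-x\right\Vert _{\infty}$. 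You instead integrate along the straight segment, where the integrand is the weighted sum $\sum_{i}\frac{\partial f_{k}}{\partial x_{i}}w_{i}$ with $0\le w_{i}\le\left\Vert \bar{x}-x\right\Vert _{\infty}$, and then isolate the combinatorics in a separate extremal inequality: a linear functional on the box $[0,M]^{K}$ is maximized at a vertex, so $\left|\sum_{i}a_{i}w_{i}\right|\le M\max_{I}\left|\sum_{i\in I}a_{i}\right|$. That vertex argument is exactly right, and it buys you a smooth path and a clean, self-contained lemma, whereas the paper's path is only piecewise smooth and its write-up is somewhat informal (its integrand is even displayed at the wrong evaluation point --- it should be the clamped path, not $x^{*}+\lambda\cdot1$ --- a slip your version avoids). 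Two further points in your favor: you flag the domain issue (segments must lie in $B_{x}$, automatic when $B_{x}$ is a box or convex), which the paper silently assumes, and you actually prove part (b), which the paper only cites as the standard Lipschitz bound.
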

(b) is a well-known inequality, and $\sup_{x}\left\Vert \nabla_{x}f(x)\right\Vert $ is known as the Lipschitz constant.
\begin{proof}
We can derive (a) by: 

\begin{eqnarray*}
\left\Vert f(x)-f(x^{*})\right\Vert _{\infty} & = & \left\Vert -\left(f(\max\{x,x^{*}\})-f(x)\right)+\left(f(\max\{x,x^{*}\})-f(x^{*})\right)\right\Vert _{\infty}\\
 & = & \left\Vert -\int_{0}^{\left\Vert \max\{x,x^{*}\}-x\right\Vert _{\infty}}\left[\sum_{i=1}^{K}\frac{\partial f(x^{*}+\lambda\cdot1)}{\partial x_{i}}1[\lambda\leq\max\{x_{i},x_{i}^{*}\}-x_{i}]\right]d\lambda\right.\\
 &  & \left.+\int_{0}^{\left\Vert \max\{x,x^{*}\}-x^{*}\right\Vert _{\infty}}\left[\sum_{i=1}^{K}\frac{\partial f(x^{*}+\lambda\cdot1)}{\partial x_{i}}1[\lambda\leq\max\{x_{i},x_{i}^{*}\}-x_{i}^{*}]\right]d\lambda\right\Vert _{\infty}\\
 & \leq & \left(\sup_{I\subset\{1,\cdots,K\},x\in B_{x}}\left\Vert \sum_{i\in I}\frac{\partial f(x)}{\partial x_{i}}\right\Vert _{\infty}\right)\cdot\left(\left\Vert \max\{x,x^{*}\}-x\right\Vert _{\infty}+\left\Vert \max\{x,x^{*}\}-x^{*}\right\Vert _{\infty}\right)
\end{eqnarray*}
\end{proof}

\subsubsection*{Proof of Proposition \ref{prop:property_mapping_delta} }

We can prove Proposition \ref{prop:property_mapping_delta} using the following lemma.
\begin{lem}
\label{lem:mapping_delta_property}

(a). $\frac{\partial\Phi_{j}^{\gamma}(\delta)}{\partial\delta_{k}}=\sum_{i\in\mathcal{I}}w_{i}s_{ik}\left(\frac{s_{ij}}{s_{j}}-\gamma\frac{s_{i0}}{s_{0}}\right)\ \forall k\in\mathcal{J}$, where $s_{j}=\sum_{i}w_{i}\frac{\exp\left(\delta_{j}+\mu_{ij}\right)}{1+\sum_{k\in\mathcal{J}}\exp\left(\delta_{k}+\mu_{ik}\right)}$ and $s_{0}=\sum_{i\in\mathcal{I}}w_{i}\frac{1}{1+\sum_{k\in\mathcal{J}}\exp\left(\delta_{k}+\mu_{ik}\right)}$

(b). $\sup_{J^{*}\subset\mathcal{J},\delta\in B_{\delta}}\left|\sum_{k\in J^{*}\subset\mathcal{J}}\frac{\partial\Phi_{j}(\delta)}{\partial\delta_{k}}\right|\leq\widetilde{c_{\gamma}}$
\end{lem}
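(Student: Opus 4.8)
The plan is to prove part (a) by direct differentiation of the log-share terms, and then to obtain part (b) by recognizing the resulting weighted sum as a difference of two convex combinations of the quantities $s_{iJ^*}$.

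For part (a), I would start from the explicit form $\Phi_j^{\delta,\gamma}(\delta)=\delta_j+\log S_j^{(data)}-\log s_j(\delta)-\gamma\log S_0^{(data)}+\gamma\log s_0(\delta)$, where $s_0=1-\sum_k s_k$. Differentiating with respect to $\delta_k$ reduces everything to computing $\partial s_j/\partial\delta_k$ and $\partial s_0/\partial\delta_k$. Using the standard logit derivatives $\partial s_{ij}/\partial\delta_k=s_{ij}(1[j=k]-s_{ik})$ and $\partial s_{i0}/\partial\delta_k=-s_{i0}s_{ik}$ and summing over $i$ with weights $w_i$ gives $\partial s_j/\partial\delta_k=\sum_i w_i s_{ij}(1[j=k]-s_{ik})$ and $\partial s_0/\partial\delta_k=-\sum_i w_i s_{i0}s_{ik}$. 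Substituting into $\partial\Phi_j^\gamma/\partial\delta_k=1[j=k]-s_j^{-1}\partial s_j/\partial\delta_k+\gamma s_0^{-1}\partial s_0/\partial\delta_k$, the Kronecker-type term $-1[j=k]\,s_j^{-1}\sum_i w_i s_{ij}=-1[j=k]$ cancels against the leading $1[j=k]$ (since $\sum_i w_i s_{ij}=s_j$), leaving exactly $\sum_i w_i s_{ik}\bigl(s_{ij}/s_j-\gamma s_{i0}/s_0\bigr)$. This step is purely mechanical.

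For part (b), I would sum the formula from (a) over $k\in J^*$ and set $s_{iJ^*}\equiv\sum_{k\in J^*}s_{ik}$, obtaining $\sum_{k\in J^*}\partial\Phi_j^\gamma/\partial\delta_k=\sum_i s_{iJ^*}(a_i-\gamma b_i)$ with $a_i\equiv w_i s_{ij}/s_j$ and $b_i\equiv w_i s_{i0}/s_0$. The conceptual key is to note that $a_i,b_i\geq 0$ and $\sum_i a_i=\sum_i b_i=1$ (because $\sum_i w_i s_{ij}=s_j$ and $\sum_i w_i s_{i0}=s_0$), so $A\equiv\sum_i s_{iJ^*}a_i$ and $B\equiv\sum_i s_{iJ^*}b_i$ are each convex combinations of $\{s_{iJ^*}\}_i$ and hence lie in $[m,M]$ with $m\equiv\min_i s_{iJ^*}$ and $M\equiv\max_i s_{iJ^*}$. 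Observe that the target quantity in (b) is precisely the object $\sup_{J^*,\delta}\bigl|\sum_{k\in J^*}\partial\Phi_j/\partial\delta_k\bigr|$ that enters Lemma \ref{lem:fx_diff_upper}(a), so establishing this bound is what ultimately feeds Proposition \ref{prop:property_mapping_delta}.

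The remaining work is elementary sign bookkeeping to bound $|A-\gamma B|$, which I expect to be the only place requiring care. For $\gamma\geq 0$, since $A,B\in[m,M]$ the quantity $A-\gamma B$ ranges in $[m-\gamma M,\ M-\gamma m]$, so $|A-\gamma B|\leq\max\{\gamma M-m,\ M-\gamma m\}$. Passing to the supremum over $\delta$ and $J^*$ replaces $M$ by $\sup_{i,J^*,\delta}s_{iJ^*}$ and $m$ by $\inf_{i,J^*,\delta}s_{iJ^*}$, yielding the ``Otherwise'' expression for $\widetilde{c_\gamma}$; specializing $\gamma=0$ collapses the maximum to $\sup s_{iJ^*}$ (using $\inf s_{iJ^*}\geq 0$) and $\gamma=1$ collapses it to $\sup s_{iJ^*}-\inf s_{iJ^*}$, recovering the two named cases. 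The one subtlety worth spelling out is the $\gamma=1$ difference-of-averages bound $|A-B|\leq M-m$, which I would justify by writing $A-B=\sum_i(s_{iJ^*}-m)(a_i-b_i)$ and noting that each of $\sum_i(s_{iJ^*}-m)a_i$ and $\sum_i(s_{iJ^*}-m)b_i$ lies in $[0,M-m]$, so their difference lies in $[-(M-m),\,M-m]$.
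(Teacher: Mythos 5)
Your proposal is correct and follows essentially the same route as the paper: part (a) is the same direct differentiation (the paper just splits the $k=j$ and $k\neq j$ cases instead of using the indicator form), and part (b) rests on the identical key observation that the weights $w_{i}s_{ij}/s_{j}$ and $w_{i}s_{i0}/s_{0}$ each sum to one over $i$, so the weighted sums of $s_{iJ^{*}}$ are bounded between $\min_{i}s_{iJ^{*}}$ and $\max_{i}s_{iJ^{*}}$, which is exactly the paper's two-sided bound rephrased as interval arithmetic on convex combinations. The extra argument you flag for the $\gamma=1$ case is already implied by $A,B\in[m,M]$, so no additional care is needed there.
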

\begin{proof}
(a). 

\begin{eqnarray*}
\frac{\partial\Phi_{j}^{\gamma}(\delta)}{\partial\delta_{j}} & = & 1-\frac{1}{s_{j}}\frac{\partial s_{j}}{\partial\delta_{j}}+\gamma\frac{1}{s_{0}}\frac{\partial s_{0}}{\partial\delta_{j}}\\
 & = & 1-\frac{s_{j}-\sum_{i}w_{i}s_{ij}^{2}}{s_{j}}-\gamma\frac{\sum_{i}w_{i}s_{ij}s_{i0}}{s_{0}}\\
 & = & \sum_{i}w_{i}s_{ij}\left(\frac{s_{ij}}{s_{j}}-\gamma\frac{s_{i0}}{s_{0}}\right)\\
\frac{\partial\Phi_{j}^{\gamma}(\delta)}{\partial\delta_{k}} & = & -\frac{1}{s_{j}}\frac{\partial s_{j}}{\partial\delta_{k}}+\gamma\frac{1}{s_{0}}\frac{\partial s_{0}}{\partial\delta_{k}}\ \ \ (k\neq j)\\
 & = & \frac{\sum_{i}w_{i}s_{ij}s_{ik}}{s_{j}}-\gamma\frac{\sum_{i}w_{i}s_{ik}s_{i0}}{s_{0}}\\
 & = & \sum_{i}w_{i}s_{ik}\left(\frac{s_{ij}}{s_{j}}-\gamma\frac{s_{i0}}{s_{0}}\right)
\end{eqnarray*}

(b).

\begin{eqnarray*}
\sup_{J^{*}\subset\mathcal{J},\delta}\left|\sum_{k\in J^{*}\subset\mathcal{J}}\frac{\partial\Phi_{j}(\delta)}{\partial\delta_{k}}\right| & = & \sup_{J^{*}\subset\mathcal{J},\delta}\left|\sum_{k\in J^{*}\subset\mathcal{J}}\sum_{i}w_{i}s_{ik}\left(\frac{s_{ij}}{s_{j}}-\gamma\frac{s_{i0}}{s_{0}}\right)\right|\\
 & = & \sup_{J^{*}\subset\mathcal{J},\delta}\left|\sum_{i}w_{i}s_{iJ^{*}}\left(\frac{s_{ij}}{s_{j}}-\gamma\frac{s_{i0}}{s_{0}}\right)\right|
\end{eqnarray*}

Here, 

\begin{eqnarray*}
\sum_{i}w_{i}s_{iJ^{*}}\left(\frac{s_{ij}}{s_{j}}-\gamma\frac{s_{i0}}{s_{0}}\right) & \leq & \sum_{i}w_{i}\left(\max_{i}s_{iJ^{*}}\right)\frac{s_{ij}}{s_{j}}-\gamma\sum_{i}w_{i}\left(\min_{i}s_{iJ^{*}}\right)\frac{s_{i0}}{s_{0}}\\
 & = & \max_{i}s_{iJ^{*}}-\gamma\min_{i}s_{iJ^{*}}\\
 & \leq & \widetilde{c_{\gamma}}
\end{eqnarray*}

\begin{eqnarray*}
\sum_{i}w_{i}s_{iJ^{*}}\left(\frac{s_{ij}}{s_{j}}-\gamma\frac{s_{i0}}{s_{0}}\right) & \geq & \sum_{i}w_{i}\left(\min_{i}s_{iJ^{*}}\right)\frac{s_{ij}}{s_{j}}-\gamma\sum_{i}w_{i}\left(\max_{i}s_{iJ^{*}}\right)\frac{s_{i0}}{s_{0}}\\
 & = & \min_{i}s_{iJ^{*}}-\gamma\max_{i}s_{iJ^{*}}\\
 & \geq & -\widetilde{c_{\gamma}}
\end{eqnarray*}

Hence,

\begin{eqnarray*}
\sup_{J^{*}\subset\mathcal{J},\delta}\left|\sum_{k\in J^{*}\subset\mathcal{J}}\frac{\partial\Phi_{j}(\delta)}{\partial\delta_{k}}\right| & \leq & \widetilde{c_{\gamma}}.
\end{eqnarray*}
\end{proof}

\subsubsection*{Proof of Proposition \ref{prop:property_mapping_V}}

We can show Proposition \ref{prop:property_mapping_V} based on the following lemmas.
\begin{lem}
\label{lem:mapping_V_property}

(a). 
\begin{eqnarray*}
\frac{\partial\Phi_{i_{2}}^{V,\gamma}(V)}{\partial V_{i_{1}}} & = & \frac{\sum_{j\in\mathcal{J}}S_{j}^{(data)}\frac{\exp(\mu_{i_{2}j})}{\sum_{i}w_{i}\exp(\mu_{ij})\exp(-V_{i})}\left(\frac{\sum_{i\in\mathcal{I}}w_{i}\exp(-V_{i})}{S_{0}^{(data)}}\right)^{\gamma}\left[\frac{w_{i_{1}}\exp(\mu_{i_{1}j})\exp(-V_{i_{1}})}{\sum_{i}w_{i}\exp(\mu_{ij})\exp(-V_{i})}-\gamma\frac{w_{i_{1}}\exp(-V_{i_{1}})}{\sum_{i}w_{i}\exp(-V_{i})}\right]}{1+\sum_{j\in\mathcal{J}}S_{j}^{(data)}\frac{\exp(\mu_{i_{2}j})}{\sum_{i}w_{i}\exp(\mu_{ij})\exp(-V_{i})}}
\end{eqnarray*}

(b). $\sup_{I^{*}\subset\mathcal{I},V\in B_{V}}\left|\sum_{i_{1}\in I^{*}\subset\mathcal{I}}\frac{\partial\Phi_{i_{2}}^{V,\gamma}(V)}{\partial V_{i_{1}}}\right|\leq\sup_{V\in B_{V}}\left((1-prob_{0|i_{2}}(V))\cdot\left(\frac{\sum_{i\in\mathcal{I}}w_{i}\exp(-V_{i})}{S_{0}^{(data)}}\right)^{\gamma}\right)\cdot\sup_{I^{*}\subset\mathcal{I},j\in\mathcal{J},V\in B_{V}}\left|prob_{I^{*}|j}(V)-\gamma prob_{I^{*}|0}(V)\right|$, 

(c). $\sup_{V\in B_{V}}\sum_{i_{1}\in\mathcal{I}}\left|\frac{\partial\Phi_{i_{2}}^{V,\gamma}(V)}{\partial V_{i_{1}}}\right|\leq\begin{cases}
2\sup_{V\in B_{V}}\left((1-prob_{0|i_{2}}(V))\cdot\left(\frac{\sum_{i\in\mathcal{I}}w_{i}\exp(-V_{i})}{S_{0}^{(data)}}\right)^{\gamma}\right) & \forall\gamma\in[0,1],\\
\sup_{V\in B_{V}}\left((1-prob_{0|i_{2}}(V))\right) & \text{if}\ \gamma=0.
\end{cases}$
\end{lem}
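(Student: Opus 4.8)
The plan is to prove all three parts by direct differentiation, starting from the explicit form $\Phi_{i_{2}}^{V,\gamma}(V)=\log\left(1+D_{i_{2}}(V)\left(B(V)/S_{0}^{(data)}\right)^{\gamma}\right)$ recorded after the definition of $\Phi^{V,\gamma}$, where I abbreviate $A_{j}(V)\equiv\sum_{i}w_{i}\exp(\mu_{ij})\exp(-V_{i})$, $B(V)\equiv\sum_{i}w_{i}\exp(-V_{i})$, and $D_{i_{2}}(V)\equiv\sum_{j\in\mathcal{J}}S_{j}^{(data)}\exp(\mu_{i_{2}j})/A_{j}(V)$. For part (a) I would apply the chain rule to the outer logarithm and then differentiate the inner expression. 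The only $V_{i_{1}}$-dependence sits in $A_{j}$ and $B$, with the elementary derivatives $\partial A_{j}/\partial V_{i_{1}}=-w_{i_{1}}\exp(\mu_{i_{1}j})\exp(-V_{i_{1}})$ and $\partial B/\partial V_{i_{1}}=-w_{i_{1}}\exp(-V_{i_{1}})$. Substituting these into the quotient and product rules, factoring out the common $\left(B/S_{0}^{(data)}\right)^{\gamma}$, and grouping the two contributions under a common index $j$ produces exactly the per-$(i_{1},j)$ bracket $\left[\frac{w_{i_{1}}\exp(\mu_{i_{1}j})\exp(-V_{i_{1}})}{A_{j}}-\gamma\frac{w_{i_{1}}\exp(-V_{i_{1}})}{B}\right]$ claimed in the statement.

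For part (b) I would sum the expression from (a) over $i_{1}\in I^{*}$. Since the factor $S_{j}^{(data)}\exp(\mu_{i_{2}j})/A_{j}$ and the prefactor carry no $i_{1}$, the summation passes through to the bracket, and the two partial sums are recognized as probabilities: $\sum_{i_{1}\in I^{*}}\frac{w_{i_{1}}\exp(\mu_{i_{1}j})\exp(-V_{i_{1}})}{A_{j}}=prob_{I^{*}|j}(V)$ and $\sum_{i_{1}\in I^{*}}\frac{w_{i_{1}}\exp(-V_{i_{1}})}{B}=prob_{I^{*}|0}(V)$. Taking absolute values and bounding the $j$-weighted combination by its supremum over $j$ and over the remaining arguments factors $\sup\left|prob_{I^{*}|j}(V)-\gamma\,prob_{I^{*}|0}(V)\right|$ out of the sum, while the leftover weight $\left(B/S_{0}^{(data)}\right)^{\gamma}D_{i_{2}}/(1+D_{i_{2}})$ collapses to $(1-prob_{0|i_{2}}(V))\left(B/S_{0}^{(data)}\right)^{\gamma}$ by the definition of $prob_{0|i_{2}}$. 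This yields the stated inequality.

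For part (c) I would instead carry the absolute value inside the double sum and apply the triangle inequality term by term, reducing the task to controlling $\sum_{i_{1}}\left|\frac{w_{i_{1}}\exp(\mu_{i_{1}j})\exp(-V_{i_{1}})}{A_{j}}-\gamma\frac{w_{i_{1}}\exp(-V_{i_{1}})}{B}\right|$ for each fixed $j$. The decisive inputs are the normalizations $\sum_{i_{1}}\frac{w_{i_{1}}\exp(\mu_{i_{1}j})\exp(-V_{i_{1}})}{A_{j}}=1$ and $\sum_{i_{1}}\frac{w_{i_{1}}\exp(-V_{i_{1}})}{B}=1$. For $\gamma\in[0,1]$ both pieces are nonnegative, so $\left|a_{i_{1}}-\gamma b_{i_{1}}\right|\le a_{i_{1}}+\gamma b_{i_{1}}$ and the sum is at most $1+\gamma\le 2$; for $\gamma=0$ the second piece vanishes and the sum is exactly $1$. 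Multiplying by the same prefactor $(1-prob_{0|i_{2}}(V))\left(B/S_{0}^{(data)}\right)^{\gamma}$ as in (b) delivers the two cases of the stated bound.

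The step I expect to be most delicate is the differentiation and regrouping in (a): the result must be arranged so that the $-\gamma$ outside-option contribution sits inside the per-$(i_{1},j)$ bracket rather than as a separate additive term, because that precise form is what makes the probability identities in (b) and (c) applicable. The bounding arguments in (b) and (c) are then routine applications of the triangle inequality together with the fact that, for each fixed $j$ (and for the outside option), the $i_{1}$-weights sum to one; the only genuine loss is the factor of two incurred in (c) when $\gamma>0$, which is unavoidable once the absolute value is taken inside the sum and which disappears in the tight case $\gamma=0$.
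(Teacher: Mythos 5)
Your three-part strategy coincides with the paper's own proof: (a) by direct chain-rule differentiation of $\log\bigl(1+D_{i_{2}}(V)\left(B(V)/S_{0}^{(data)}\right)^{\gamma}\bigr)$, (b) by summing over $i_{1}\in I^{*}$, recognizing the partial sums as $prob_{I^{*}|j}(V)$ and $prob_{I^{*}|0}(V)$, and pulling the supremum over $j$ out of the $j$-weighted average, and (c) by the triangle inequality combined with the normalizations $\sum_{i_{1}}w_{i_{1}}\exp(\mu_{i_{1}j})\exp(-V_{i_{1}})/A_{j}(V)=1$ and $\sum_{i_{1}}w_{i_{1}}\exp(-V_{i_{1}})/B(V)=1$, which give the factor $1+\gamma\leq2$ in general and exactly $1$ when $\gamma=0$.

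There is, however, one point you gloss over, and it is precisely the point at which the paper's own derivation is loose, so it deserves to be named. With your abbreviations $A_{j}(V)\equiv\sum_{i}w_{i}\exp(\mu_{ij})\exp(-V_{i})$, $B(V)\equiv\sum_{i}w_{i}\exp(-V_{i})$, $D_{i_{2}}(V)\equiv\sum_{j\in\mathcal{J}}S_{j}^{(data)}\exp(\mu_{i_{2}j})/A_{j}(V)$, the chain rule gives
\begin{eqnarray*}
\frac{\partial\Phi_{i_{2}}^{V,\gamma}(V)}{\partial V_{i_{1}}} & = & \frac{\frac{\partial}{\partial V_{i_{1}}}\left[D_{i_{2}}(V)\left(B(V)/S_{0}^{(data)}\right)^{\gamma}\right]}{1+D_{i_{2}}(V)\left(B(V)/S_{0}^{(data)}\right)^{\gamma}},
\end{eqnarray*}
so the denominator is $\exp\left(\Phi_{i_{2}}^{V,\gamma}(V)\right)=1+D_{i_{2}}\left(B/S_{0}^{(data)}\right)^{\gamma}$, carrying the factor $\left(B/S_{0}^{(data)}\right)^{\gamma}$, whereas the statement of (a) has $1+D_{i_{2}}$; the paper's proof exhibits the correct denominator in its first displayed line and then silently drops the factor in the second. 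The numerator bracket you derive is correct; only the denominators disagree, and they coincide only when $\gamma=0$ or $B(V)=S_{0}^{(data)}$. This is not innocuous downstream: with the corrected denominator, the prefactor arising in (b) is $D_{i_{2}}\left(B/S_{0}^{(data)}\right)^{\gamma}\big/\left(1+D_{i_{2}}\left(B/S_{0}^{(data)}\right)^{\gamma}\right)$, which is dominated by the stated prefactor $\left(1-prob_{0|i_{2}}(V)\right)\left(B/S_{0}^{(data)}\right)^{\gamma}=D_{i_{2}}\left(B/S_{0}^{(data)}\right)^{\gamma}/\left(1+D_{i_{2}}\right)$ only when $B(V)\geq S_{0}^{(data)}$. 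So your assertion that the differentiation ``produces exactly'' the stated formula, and your subsequent use of the leftover weight $\left(B/S_{0}^{(data)}\right)^{\gamma}D_{i_{2}}/(1+D_{i_{2}})$ in (b) and (c), reproduce the paper's discrepancy rather than resolve it: an honest computation forces either a correction of the denominator in (a) (with (b) and (c) adjusted accordingly) or an extra restriction such as $\gamma=0$ or $B(V)\geq S_{0}^{(data)}$ under which the stated bounds survive.
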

\begin{proof}
(a)

\begin{eqnarray*}
\frac{\partial\Phi_{i_{2}}^{V,\gamma}(V)}{\partial V_{i_{1}}} & = & \frac{1}{1+\sum_{j\in\mathcal{J}}S_{j}^{(data)}\frac{\exp(\mu_{i_{2}j})}{\sum_{i}w_{i}\exp(\mu_{ij})\exp(-V_{i})}\cdot\left(\frac{\sum_{i}w_{i}\exp(-V_{i})}{S_{0}^{(data)}}\right)^{\gamma}}\cdot\\
 &  & \left(\sum_{j\in\mathcal{J}}S_{j}^{(data)}\exp(\mu_{i_{2}j})\frac{w_{i_{1}}\exp(\mu_{i_{1}j})\exp(-V_{i_{1}})}{\left(\sum_{i}w_{i}\exp(\mu_{ij})\exp(-V_{i})\right)^{2}}\cdot\left(\frac{\sum_{i}w_{i}\exp(-V_{i})}{S_{0}^{(data)}}\right)^{\gamma}+\right.\\
 &  & \left.\sum_{j\in\mathcal{J}}S_{j}^{(data)}\frac{\exp(\mu_{i_{2}j})}{\sum_{i}w_{i}\exp(\mu_{ij})\exp(-V_{i})}\cdot(-1)\cdot\left(\frac{\sum_{i}w_{i}\exp(-V_{i})}{S_{0}^{(data)}}\right)^{\gamma}\right)\\
 & = & \frac{\sum_{j\in\mathcal{J}}S_{j}^{(data)}\frac{\exp(\mu_{i_{2}j})}{\sum_{i}w_{i}\exp(\mu_{ij})\exp(-V_{i})}\left(\frac{\sum_{i}w_{i}\exp(-V_{i})}{S_{0}^{(data)}}\right)^{\gamma}\left[\frac{w_{i_{1}}\exp(\mu_{i_{1}j})\exp(-V_{i_{1}})}{\sum_{i}w_{i}\exp(\mu_{ij})\exp(-V_{i})}-\gamma\frac{w_{i_{1}}\exp(-V_{i_{1}})}{\sum_{i}w_{i}\exp(-V_{i})}\right]}{1+\sum_{j\in\mathcal{J}}S_{j}^{(data)}\frac{\exp(\mu_{i_{2}j})}{\sum_{i}w_{i}\exp(\mu_{ij})\exp(-V_{i})}}.
\end{eqnarray*}

(b).

\begin{eqnarray*}
 &  & \left|\sum_{i_{1}\in I^{*}\subset\mathcal{I}}\frac{\partial\Phi_{i_{2}}^{V,\gamma}(V)}{\partial V_{i_{1}}}\right|\\
 & = & \left|\frac{\sum_{j\in\mathcal{J}}S_{j}^{(data)}\frac{\exp(\mu_{i_{2}j})}{\sum_{i}w_{i}\exp(\mu_{ij})\exp(-V_{i})}\left(\frac{\sum_{i}w_{i}\exp(-V_{i})}{S_{0}^{(data)}}\right)^{\gamma}\cdot\sum_{i_{1}\in I^{*}\subset\mathcal{I}}\left[prob_{i_{1}|j}-prob_{i_{1}|0}\right]}{1+\sum_{j\in\mathcal{J}}S_{j}^{(data)}\frac{\exp(\mu_{i_{2}j})}{\sum_{i}w_{i}\exp(\mu_{ij})\exp(-V_{i})}}\right|\\
 & = & \left|\frac{\sum_{j\in\mathcal{J}}S_{j}^{(data)}\frac{\exp(\mu_{i_{2}j})}{\sum_{i}w_{i}\exp(\mu_{ij})\exp(-V_{i})}\cdot\left(\frac{\sum_{i}w_{i}\exp(-V_{i})}{S_{0}^{(data)}}\right)^{\gamma}\left[prob_{I^{*}|j}-\gamma prob_{I^{*}|0}\right]}{1+\sum_{j\in\mathcal{J}}S_{j}^{(data)}\frac{\exp(\mu_{i_{2}j})}{\sum_{i}w_{i}\exp(\mu_{ij})\exp(-V_{i})}}\right|\\
 & \leq & \frac{\sum_{j\in\mathcal{J}}S_{j}^{(data)}\frac{\exp(\mu_{i_{2}j})}{\sum_{i}w_{i}\exp(\mu_{ij})\exp(-V_{i})}}{1+\sum_{j\in\mathcal{J}}S_{j}^{(data)}\frac{\exp(\mu_{i_{2}j})}{\sum_{i}w_{i}\exp(\mu_{ij})\exp(-V_{i})}}\left(\frac{\sum_{i}w_{i}\exp(-V_{i})}{S_{0}^{(data)}}\right)^{\gamma}\cdot\sup_{I^{*}\subset\mathcal{I},j\in\mathcal{J}}\left|prob_{I^{*}|j}-\gamma prob_{I^{*}|0}\right|\\
 & = & \left(1-prob_{0|i_{2}}\right)\cdot\left(\frac{\sum_{i}w_{i}\exp(-V_{i})}{S_{0}^{(data)}}\right)^{\gamma}\cdot\sup_{I^{*}\subset\mathcal{I},j\in\mathcal{J},V}\left|prob_{I^{*}|j}-\gamma prob_{I^{*}|0}\right|.
\end{eqnarray*}

(c). 

Because $\sup_{I^{*}\subset\mathcal{I},j\in\mathcal{J},V}\sum_{i_{1}\in\mathcal{I}}\left|prob_{i_{1}|j}-prob_{i_{1}|0}\right|\leq2$ for $\gamma\in[0,1]$, $\sup_{V}\sum_{i_{1}\in\mathcal{I}}\left|\frac{\partial\Phi_{i_{2}}^{V,\gamma}(V)}{\partial V_{i_{1}}}\right|\leq2\sup_{V}\left((1-prob_{0|i_{2}}(V))\cdot\left(\frac{\sum_{i}w_{i}\exp(-V_{i})}{S_{0}^{(data)}}\right)^{\gamma}\right)$ holds for all $\gamma\in[0,1]$.

Under $\gamma=0$, 

\begin{eqnarray*}
\sum_{i_{1}\in\mathcal{I}}\left|\frac{\partial\Phi_{i_{2}}^{V,\gamma=0}(V)}{\partial V_{i_{1}}}\right| & = & \frac{\sum_{j\in\mathcal{J}}S_{j}^{(data)}\frac{\exp(\mu_{i_{2}j})}{\sum_{i}w_{i}\exp(\mu_{ij})\exp(-V_{i})}\left[\sum_{i_{1}\in\mathcal{I}}\frac{w_{i_{1}}\exp(\mu_{i_{1}j})\exp(-V_{i_{1}})}{\sum_{i}w_{i}\exp(\mu_{ij})\exp(-V_{i})}\right]}{1+\sum_{j\in\mathcal{J}}S_{j}^{(data)}\frac{\exp(\mu_{i_{2}j})}{\sum_{i}w_{i}\exp(\mu_{ij})\exp(-V_{i})}}\\
 & = & \frac{\sum_{j\in\mathcal{J}}S_{j}^{(data)}\frac{\exp(\mu_{i_{2}j})}{\sum_{i}w_{i}\exp(\mu_{ij})\exp(-V_{i})}}{1+\sum_{j\in\mathcal{J}}S_{j}^{(data)}\frac{\exp(\mu_{i_{2}j})}{\sum_{i}w_{i}\exp(\mu_{ij})\exp(-V_{i})}}
\end{eqnarray*}
holds.
\end{proof}
\pagebreak{}
\begin{center}
\textbf{\LARGE{}Supplemental Appendix to ``Fast and simple inner-loop algorithms of static / dynamic BLP estimations''}{\LARGE\par}
\par\end{center}

\begin{center}
{\Large{}Takeshi Fukasawa}\footnote{Waseda Institute for Advanced Study, Waseda University; fukasawa3431@gmail.com\\
Replication code of the numerical experiments in this article is available at \url{https://github.com/takeshi-fukasawa/BLP_algorithm}.}
\par\end{center}

\appendix

\section{Additional results and discussions\label{sec:Additional-results-1}}

\subsection{Case with large consumer heterogeneity\label{subsec:large_hetero}}

As discussed in Appendix A.1 of the main article, the size of the consumer heterogeneity is critical for the performance of the new mappings $\Phi^{\delta,\gamma=1},\Phi^{V,\gamma=1}$. This section shows results under large consumer heterogeneity. Unlike the standard setting with relatively small consumer heterogeneity experimented in \citet{dube2012improving}, \citet{lee2015computationally} and others, applying the mappings $\Phi^{\delta,\gamma=1},\Phi^{V,\gamma=1}$ does not necessarily lead to convergence. Nevertheless, the spectral algorithm works well even under this setting.

\subsubsection*{Settings}

Consumer $i$'s utility when choosing product $j$ is $U_{ij}=\delta_{j}+\mu_{ij}+\epsilon_{ij}$, and utility when not buying anything is $U_{i0}=\epsilon_{i0}$. Suppose $\epsilon$ follows Gumbel distribution. Let $|\mathcal{J}|=2$ (2 products), $|\mathcal{I}|=2$ (2 consumer types), $\delta_{j=1}=0$, $\delta_{j=2}=-1,$ and $\mu_{i=1,j=1}=\mu_{i=2,j=2}=10,$ $\mu_{i=1,j=2}=\mu_{i=2,j=1}=0$, $w_{i=1}=0.1,w_{i=2}=0.9$.

Table \ref{tab:Choice-probabilities-large-hetero} shows choice probabilities for each consumer type under the parameter setting. As the table shows, there is large consumer heterogeneity: Type 1 consumers strongly prefer product 1, and type 2 consumers strongly prefer product 2.

\begin{table}[H]
\caption{Choice probabilities by consumer type\label{tab:Choice-probabilities-large-hetero}}

\centering{}{\footnotesize{}}%
\begin{tabular}{ccc}
\hline 
 & {\footnotesize{}$i=1$} & {\footnotesize{}$i=2$}\tabularnewline
\hline 
\hline 
{\footnotesize{}$j=1$} & {\footnotesize{}0.9999} & {\footnotesize{}0.0001}\tabularnewline
{\footnotesize{}$j=2$} & {\footnotesize{}0.0000} & {\footnotesize{}0.9998}\tabularnewline
{\footnotesize{}$j=0$} & {\footnotesize{}0.0000} & {\footnotesize{}0.0001}\tabularnewline
\hline 
\end{tabular}{\footnotesize\par}
\end{table}

Unlike the setting in Section 6, I use the true $\mu_{ij}(i=1,2;j=1,2)$ to solve for $\delta$, to make the setting replicable.

\subsubsection*{Results}

Table \ref{tab:Results-static-BLP-large-hetero} shows the results. When we apply the algorithm $\delta$-(1) (mapping $\Phi^{\delta,\gamma=1}$), the algorithm does not converge. Regarding $\delta$-(1), Figure \ref{fig:Example-of-non-contraction} shows the trend of the norm $\left\Vert \delta^{(n+1)}-\delta^{(n)}\right\Vert _{\infty}$. If the mapping $\Phi^{\delta,\gamma=1}$ is a contraction, the value of $\left\Vert \delta^{(n+1)}-\delta^{(n)}\right\Vert _{\infty}$ should decrease as the number of iterations increases, and $\left\Vert \delta^{(n+2)}-\delta^{(n+1)}\right\Vert _{\infty}-\left\Vert \delta^{(n+1)}-\delta^{(n)}\right\Vert _{\infty}<0$ should hold. Nevertheless, they do not hold in the figure. This implies that $\delta$-(1) is not a contraction in the current setting.

Nevertheless, even when using $\delta$-(1), combining the spectral algorithm works well, as shown in the number of function evaluations. 

\begin{table}[H]
\caption{Results of the Monte Carlo simulation (Static BLP model; Too large consumer heterogeneity)\label{tab:Results-static-BLP-large-hetero}}

\scalebox{0.8}{
\begin{centering}
\begin{tabular}{cccccccccccc}
\hline 
 & \multirow{2}{*}{{\scriptsize{}$J$}} & \multicolumn{6}{c}{{\scriptsize{}Func. Eval.}} & {\scriptsize{}Mean} & {\scriptsize{}Conv.} & {\footnotesize{}Mean} & {\scriptsize{}$DIST<\epsilon_{tol}$}\tabularnewline
\cline{3-8} \cline{4-8} \cline{5-8} \cline{6-8} \cline{7-8} \cline{8-8} 
 &  & {\scriptsize{}Mean} & {\scriptsize{}Min.} & {\scriptsize{}25th} & {\scriptsize{}Median.} & {\scriptsize{}75th} & {\scriptsize{}Max.} & {\scriptsize{}CPU time (s)} & {\scriptsize{}(\%)} & {\footnotesize{}$\log_{10}\left(DIST\right)$} & {\scriptsize{}(\%)}\tabularnewline
\hline 
{\scriptsize{}$\delta$-(0) (BLP)} & {\scriptsize{}25} & {\scriptsize{}2000} & {\scriptsize{}2000} & {\scriptsize{}2000} & {\scriptsize{}2000} & {\scriptsize{}2000} & {\scriptsize{}2000} & {\scriptsize{}0.103} & {\scriptsize{}0} & {\scriptsize{}-3.8} & {\scriptsize{}0}\tabularnewline
{\scriptsize{}$\delta$-(0) (BLP) + Anderson} & {\scriptsize{}25} & {\scriptsize{}6} & {\scriptsize{}6} & {\scriptsize{}6} & {\scriptsize{}6} & {\scriptsize{}6} & {\scriptsize{}6} & {\scriptsize{}0.002} & {\scriptsize{}0} & {\scriptsize{}NaN} & {\scriptsize{}0}\tabularnewline
{\scriptsize{}$\delta$-(0) (BLP) + Spectral} & {\scriptsize{}25} & {\scriptsize{}41} & {\scriptsize{}41} & {\scriptsize{}41} & {\scriptsize{}41} & {\scriptsize{}41} & {\scriptsize{}41} & {\scriptsize{}0.004} & {\scriptsize{}100} & {\scriptsize{}-13.7} & {\scriptsize{}100}\tabularnewline
{\scriptsize{}$\delta$-(0) (BLP) + SQUAREM} & {\scriptsize{}25} & {\scriptsize{}119} & {\scriptsize{}119} & {\scriptsize{}119} & {\scriptsize{}119} & {\scriptsize{}119} & {\scriptsize{}119} & {\scriptsize{}0.007} & {\scriptsize{}100} & {\scriptsize{}-13.1} & {\scriptsize{}100}\tabularnewline
{\scriptsize{}$\delta$-(1)} & {\scriptsize{}25} & {\scriptsize{}2000} & {\scriptsize{}2000} & {\scriptsize{}2000} & {\scriptsize{}2000} & {\scriptsize{}2000} & {\scriptsize{}2000} & {\scriptsize{}0.111} & {\scriptsize{}0} & {\scriptsize{}-4.2} & {\scriptsize{}0}\tabularnewline
{\scriptsize{}$\delta$-(1) (BLP) + Anderson} & {\scriptsize{}25} & {\scriptsize{}17} & {\scriptsize{}17} & {\scriptsize{}17} & {\scriptsize{}17} & {\scriptsize{}17} & {\scriptsize{}17} & {\scriptsize{}0.004} & {\scriptsize{}100} & {\scriptsize{}NaN} & {\scriptsize{}100}\tabularnewline
{\scriptsize{}$\delta$-(1) + Spectral} & {\scriptsize{}25} & {\scriptsize{}98} & {\scriptsize{}98} & {\scriptsize{}98} & {\scriptsize{}98} & {\scriptsize{}98} & {\scriptsize{}98} & {\scriptsize{}0.008} & {\scriptsize{}100} & {\scriptsize{}-14.9} & {\scriptsize{}100}\tabularnewline
{\scriptsize{}$\delta$-(1) + SQUAREM} & {\scriptsize{}25} & {\scriptsize{}35} & {\scriptsize{}35} & {\scriptsize{}35} & {\scriptsize{}35} & {\scriptsize{}35} & {\scriptsize{}35} & {\scriptsize{}0.002} & {\scriptsize{}100} & {\scriptsize{}-14.2} & {\scriptsize{}100}\tabularnewline
\hline 
{\scriptsize{}$V$-(0)} & {\scriptsize{}25} & {\scriptsize{}2000} & {\scriptsize{}2000} & {\scriptsize{}2000} & {\scriptsize{}2000} & {\scriptsize{}2000} & {\scriptsize{}2000} & {\scriptsize{}0.146} & {\scriptsize{}0} & {\scriptsize{}-3.4} & {\scriptsize{}0}\tabularnewline
{\scriptsize{}$V$-(0) + Anderson} & {\scriptsize{}25} & {\scriptsize{}25} & {\scriptsize{}25} & {\scriptsize{}25} & {\scriptsize{}25} & {\scriptsize{}25} & {\scriptsize{}25} & {\scriptsize{}0.006} & {\scriptsize{}0} & {\scriptsize{}NaN} & {\scriptsize{}0}\tabularnewline
{\scriptsize{}$V$-(0) + Spectral} & {\scriptsize{}25} & {\scriptsize{}36} & {\scriptsize{}36} & {\scriptsize{}36} & {\scriptsize{}36} & {\scriptsize{}36} & {\scriptsize{}36} & {\scriptsize{}0.004} & {\scriptsize{}100} & {\scriptsize{}-14.9} & {\scriptsize{}100}\tabularnewline
{\scriptsize{}$V$-(0) + SQUAREM} & {\scriptsize{}25} & {\scriptsize{}1013} & {\scriptsize{}1013} & {\scriptsize{}1013} & {\scriptsize{}1013} & {\scriptsize{}1013} & {\scriptsize{}1013} & {\scriptsize{}0.072} & {\scriptsize{}100} & {\scriptsize{}-13.1} & {\scriptsize{}100}\tabularnewline
{\scriptsize{}$V$-(1)} & {\scriptsize{}25} & {\scriptsize{}2000} & {\scriptsize{}2000} & {\scriptsize{}2000} & {\scriptsize{}2000} & {\scriptsize{}2000} & {\scriptsize{}2000} & {\scriptsize{}0.117} & {\scriptsize{}0} & {\scriptsize{}-4.9} & {\scriptsize{}0}\tabularnewline
{\scriptsize{}$V$-(1) + Anderson} & {\scriptsize{}25} & {\scriptsize{}7} & {\scriptsize{}7} & {\scriptsize{}7} & {\scriptsize{}7} & {\scriptsize{}7} & {\scriptsize{}7} & {\scriptsize{}0.001} & {\scriptsize{}0} & {\scriptsize{}NaN} & {\scriptsize{}0}\tabularnewline
{\scriptsize{}$V$-(1) + Spectral} & {\scriptsize{}25} & {\scriptsize{}36} & {\scriptsize{}36} & {\scriptsize{}36} & {\scriptsize{}36} & {\scriptsize{}36} & {\scriptsize{}36} & {\scriptsize{}0.003} & {\scriptsize{}100} & {\scriptsize{}-15.5} & {\scriptsize{}100}\tabularnewline
{\scriptsize{}$V$-(1) + SQUAREM} & {\scriptsize{}25} & {\scriptsize{}58} & {\scriptsize{}58} & {\scriptsize{}58} & {\scriptsize{}58} & {\scriptsize{}58} & {\scriptsize{}58} & {\scriptsize{}0.003} & {\scriptsize{}100} & {\scriptsize{}-14} & {\scriptsize{}100}\tabularnewline
\hline 
\end{tabular}
\par\end{centering}
}

{\footnotesize{}Notes.}{\footnotesize\par}

{\footnotesize{}$\ensuremath{DIST\equiv\left\Vert \log(S^{(data)})-\log(s)\right\Vert _{\infty}},$$\epsilon_{tol}=$1E-12.}{\footnotesize\par}

{\footnotesize{}The maximum number of function evaluations is set to 2000.}{\footnotesize\par}
\end{table}

\begin{figure}[H]
\caption{Example of non-contraction mapping $\delta$-(1)\label{fig:Example-of-non-contraction}}

\begin{centering}
\includegraphics[scale=0.5]{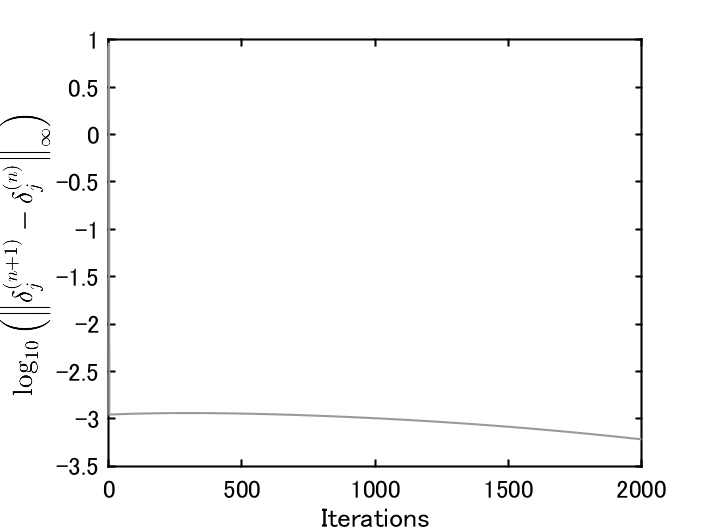}\includegraphics[scale=0.5]{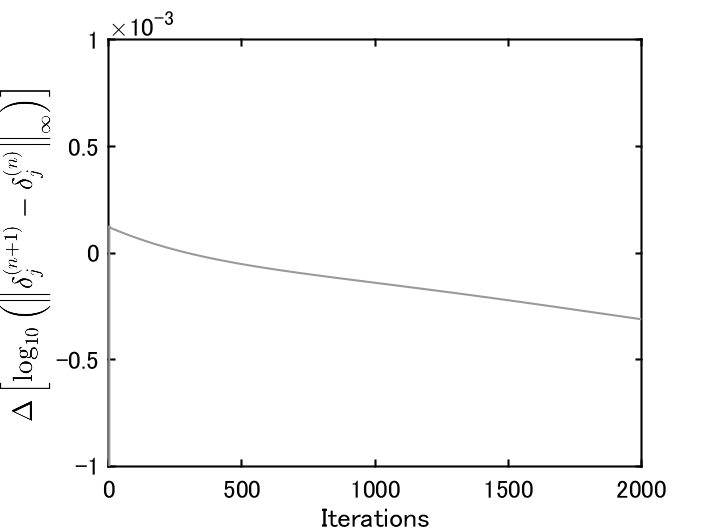}
\par\end{centering}
{\footnotesize{}Note.}{\footnotesize\par}

{\footnotesize{}$\Delta\log_{10}\left(\left\Vert \delta^{(n+1)}-\delta^{(n)}\right\Vert _{\infty}\right)$ is defined by $\log_{10}\left(\left\Vert \delta^{(n+2)}-\delta^{(n+1)}\right\Vert _{\infty}\right)-\log_{10}\left(\left\Vert \delta^{(n+1)}-\delta^{(n)}\right\Vert _{\infty}\right)$.}{\footnotesize\par}
\end{figure}

\subsection{Choice of step sizes in the spectral/SQUAREM algorithm\label{subsec:Choice-of-step-sizes}}

\subsubsection*{Desirable sign of the step size $\alpha^{(n)}$}

As discussed in Section 5 of the main article, $\alpha^{(n)}$ should be chosen to accelerate convergence unless it gets unstable. Generally, choosing $\alpha^{(n)}<0$ is not a good choice, because it might lead to the opposite direction of the original fixed-point iteration $x^{(n+1)}=\Phi(x^{(n)})$ when $\Phi$ is a contraction. Figure \ref{fig:Idea-of-Extrapolation} illustrates it, and the following proposition shows a formal statement: 
\begin{prop}
\label{prop:spectral_step_size_discussion}Suppose that $\Phi:\mathbb{R}^{n}\rightarrow\mathbb{R}^{n}$ is a contraction mapping of modulus $K\in[0,1)$ on $L^{2}$ metric space. Let $x^{*}$ be the unique solution of $x=\Phi(x)$. Then, the following holds for $\alpha<0$:

\begin{eqnarray*}
\left\Vert \left(\alpha\Phi(x)+(1-\alpha)x\right)-x^{*}\right\Vert _{2} & > & \left\Vert x-x^{*}\right\Vert _{2}.
\end{eqnarray*}
\end{prop}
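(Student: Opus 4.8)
The plan is to reduce the claim to a statement about a single inner product and then exploit the contraction property through the Cauchy--Schwarz inequality. First I would introduce the shorthand $u\equiv x-x^{*}$ and $v\equiv\Phi(x)-x$, and rewrite the point produced by a negative step as
\begin{eqnarray*}
\left(\alpha\Phi(x)+(1-\alpha)x\right)-x^{*} & = & (x-x^{*})+\alpha\left(\Phi(x)-x\right)=u+\alpha v.
\end{eqnarray*}
Squaring the $L^{2}$ norm gives $\left\Vert u+\alpha v\right\Vert _{2}^{2}=\left\Vert u\right\Vert _{2}^{2}+2\alpha\langle u,v\rangle+\alpha^{2}\left\Vert v\right\Vert _{2}^{2}$, so the desired strict inequality $\left\Vert u+\alpha v\right\Vert _{2}>\left\Vert u\right\Vert _{2}$ is equivalent to $2\alpha\langle u,v\rangle+\alpha^{2}\left\Vert v\right\Vert _{2}^{2}>0$.

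The heart of the argument is to show $\langle u,v\rangle<0$ whenever $x\neq x^{*}$. Using $x^{*}=\Phi(x^{*})$, I would decompose $v=\left(\Phi(x)-\Phi(x^{*})\right)-u$, whence $\langle u,v\rangle=\langle u,\Phi(x)-\Phi(x^{*})\rangle-\left\Vert u\right\Vert _{2}^{2}$. Cauchy--Schwarz together with the contraction bound $\left\Vert \Phi(x)-\Phi(x^{*})\right\Vert _{2}\leq K\left\Vert x-x^{*}\right\Vert _{2}=K\left\Vert u\right\Vert _{2}$ yields $\langle u,\Phi(x)-\Phi(x^{*})\rangle\leq K\left\Vert u\right\Vert _{2}^{2}$, so that $\langle u,v\rangle\leq(K-1)\left\Vert u\right\Vert _{2}^{2}$. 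Since $K\in[0,1)$ and $\left\Vert u\right\Vert _{2}>0$ for $x\neq x^{*}$, this is strictly negative.

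Finally I would read off the sign. For $\alpha<0$ and $\langle u,v\rangle<0$ the product $2\alpha\langle u,v\rangle$ is strictly positive, while $\alpha^{2}\left\Vert v\right\Vert _{2}^{2}\geq0$; hence $2\alpha\langle u,v\rangle+\alpha^{2}\left\Vert v\right\Vert _{2}^{2}>0$ and the claim follows. For completeness I would note that the degenerate case $x=x^{*}$ gives equality with both sides zero, so the strict statement is understood at any $x\neq x^{*}$. The only genuine obstacle is pinning down the sign of $\langle u,v\rangle$: everything hinges on the contraction modulus being strictly below one, which is precisely what turns $\langle u,\Phi(x)-\Phi(x^{*})\rangle-\left\Vert u\right\Vert _{2}^{2}$ from merely nonpositive into strictly negative, and hence makes the negative-step displacement strictly expansive.
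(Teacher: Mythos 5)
Your proof is correct, and it takes a genuinely different — and more streamlined — route than the paper's. The paper factors the argument through two separate facts: Lemma \ref{lem:contraction_distance} (a contraction maps any point strictly closer to its fixed point, $\left\Vert \Phi(x)-x^{*}\right\Vert _{2}<\left\Vert x-x^{*}\right\Vert _{2}$) and the purely metric Lemmas \ref{lem:step_size_lem}--\ref{lem:step_size_lem2} (if $\widetilde{x}$ is strictly closer to $x^{*}$ than $x$ is, then for $\alpha<0$ the point $\alpha\widetilde{x}+(1-\alpha)x$ is strictly farther from $x^{*}$ than $x$). You bypass the intermediate ``closer point'' lemma entirely: working with the residual $v=\Phi(x)-x$, you prove the obtuse-angle inequality $\langle u,v\rangle\leq(K-1)\left\Vert u\right\Vert _{2}^{2}<0$ in one step from the contraction modulus and Cauchy--Schwarz, after which the sign of $2\alpha\langle u,v\rangle+\alpha^{2}\left\Vert v\right\Vert _{2}^{2}$ is immediate. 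What each buys: the paper's geometric lemma is marginally more general (it applies to any $\widetilde{x}$ closer to $x^{*}$, not only $\widetilde{x}=\Phi(x)$), while your argument is quantitative in $K$ and self-contained. Worth flagging: the paper's own proof of Lemma \ref{lem:step_size_lem} has a direction error in its middle step — since $\alpha^{2}\left\Vert y\right\Vert _{2}^{2}<\alpha^{2}\left\Vert z\right\Vert _{2}^{2}$ when $\left\Vert y\right\Vert _{2}<\left\Vert z\right\Vert _{2}$, the first displayed expression is strictly \emph{smaller}, not larger, than $2\alpha(\alpha-1)\left[\left\Vert z\right\Vert _{2}^{2}-y\cdot z\right]$, so positivity of the latter does not deliver positivity of the former; the correct repair of that lemma is exactly your decomposition, $\left\Vert z+\alpha(y-z)\right\Vert _{2}^{2}-\left\Vert z\right\Vert _{2}^{2}=2\alpha\langle z,y-z\rangle+\alpha^{2}\left\Vert y-z\right\Vert _{2}^{2}$ with $\langle z,y-z\rangle\leq\left\Vert z\right\Vert _{2}\left(\left\Vert y\right\Vert _{2}-\left\Vert z\right\Vert _{2}\right)<0$ by Cauchy--Schwarz. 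Finally, your explicit caveat that the strict inequality must be read at $x\neq x^{*}$ is correct, and it applies equally to the paper's statement and proof, both of which degenerate to an equality at $x=x^{*}$.
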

Proposition \ref{prop:spectral_step_size_discussion} shows that $\alpha^{(n)}\Phi(x)+(1-\alpha^{(n)})x$ is further from the solution $x^{*}$ than $x$, and it might lead to divergence of the iterations, when $\Phi$ is a contraction mapping and $\alpha^{(n)}$ takes a negative value.\footnote{Note that we cannot deny the possibility that the iteration converges faster when setting $\alpha^{(n)}<0$ for some $n$. Nevertheless, in general, it would be natural to expect that setting $\alpha^{(n)}<0$ for some $n$ would make the convergence slower and more unstable.} Even when there is no guarantee that $\Phi$ is a contraction, choosing a negative $\alpha$ might not be adequate when $\Phi$ is ``close to'' a contraction.

\begin{figure}[H]
\caption{Idea of the extrapolation method and the spectral algorithm\label{fig:Idea-of-Extrapolation}}

\begin{centering}
\includegraphics[scale=0.6]{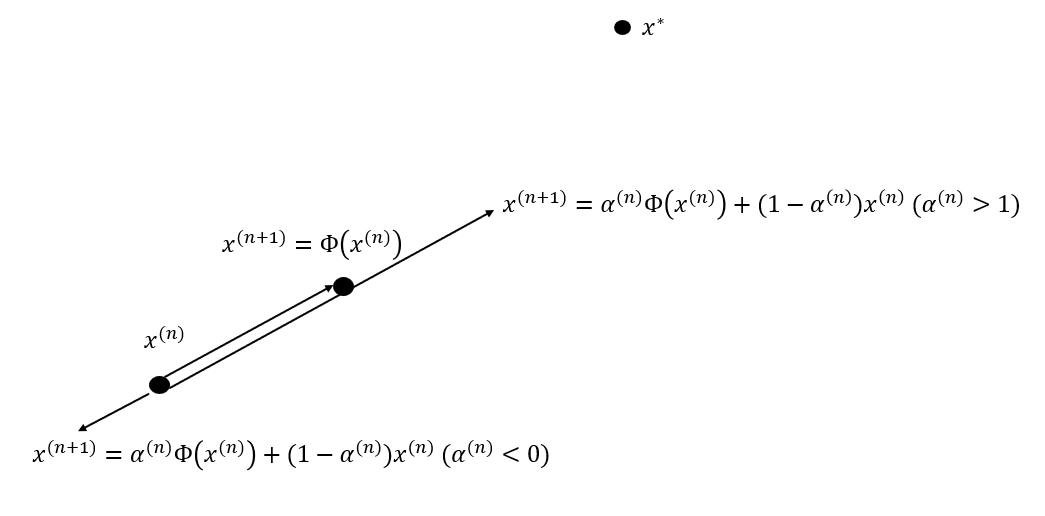}
\par\end{centering}
{\footnotesize{}Notes.}{\footnotesize\par}

{\footnotesize{}$x^{*}$ denotes the solution of the fixed-point problem $x=\Phi(x)$.}{\footnotesize\par}

{\footnotesize{}Proposition \ref{prop:spectral_step_size_discussion} implies $x^{(n+1)}=\alpha^{(n)}\Phi(x^{(n)})+(1-\alpha^{(n)})x^{(n)}$ is further than $x^{(n)}$ from $x^{*}$, when $\alpha^{(n)}<0$ holds. In contrast, if a positive value $\alpha^{(n)}$ is appropriately chosen, $x^{(n+1)}$ can be closer to $x^{*}$ than the case of $x^{(n)}$.}{\footnotesize\par}
\end{figure}

\subsubsection*{Numerical results}

To clarify the importance of the choice of step sizes $\alpha$, I show the results of static BLP Monte Carlo simulations under alternative choices of step sizes ($\alpha_{S1}\equiv-\frac{s^{(n)\prime}y^{(n)}}{y^{(n)\prime}y^{(n)}},\alpha_{S2}\equiv-\frac{s^{(n)\prime}s^{(n)}}{s^{(n)\prime}y^{(n)}}$), which are mentioned in the previous literature (\citealp{reynaerts2012enhencing}; \citealp{conlon2020best}; \citealp{pal2023comparing}). In addition to these step sizes, I also experiment the step size $\alpha_{S3}^{\prime}\equiv sgn\left(s^{(n)\prime}y^{(n)}\right)\frac{\left\Vert s^{(n)}\right\Vert _{2}}{\left\Vert y^{(n)}\right\Vert _{2}}$, which is introduced in the BB package of R language. Here, $sgn\left(s^{(n)\prime}y^{(n)}\right)$ denotes the sign of $s^{(n)\prime}y^{(n)}$. Results are shown in Tables \ref{tab:static_BLP_Monte_Carlo_J_250_alpha_S1}, \ref{tab:static_BLP_Monte_Carlo_J_250_alpha_S2}, and \ref{tab:static_BLP_Monte_Carlo_J_250_alpha_S4}. Unlike the case of $\alpha_{S3}$, the algorithms sometimes do not converge, and lead to slower convergence under these specifications. The results imply choosing the step size $\alpha_{S3}$ is better from the viewpoint of stable and fast convergence.

Note that one remedial measure for using step sizes other than $\alpha_{S3}$ is to set $\alpha\leftarrow\min\left(\alpha,\alpha_{min}\right)$, where $\alpha_{min}$ is a small positive number like 1E-10.\footnote{DF-SANE function (spectral) in scipy package used in PyBLP introduces $\alpha_{min}$ whose default size is 1E-10.} However, it might stagnate the convergence process, and it seems that choosing $\alpha_{S3}$ is desirable.
\begin{center}
\begin{table}[H]
\caption{Results of the Monte Carlo simulation (Static BLP model; Continuous consumer types; $\alpha_{S1}$)\label{tab:static_BLP_Monte_Carlo_J_250_alpha_S1}}

\scalebox{0.8}{
\begin{centering}
\begin{tabular}{cccccccccccc}
\hline 
 & \multirow{2}{*}{{\scriptsize{}$J$}} & \multicolumn{6}{c}{{\scriptsize{}Func. Evals.}} & {\scriptsize{}Mean} & {\scriptsize{}Conv.} & {\footnotesize{}Mean} & {\scriptsize{}$DIST<\epsilon_{tol}$}\tabularnewline
\cline{3-8} \cline{4-8} \cline{5-8} \cline{6-8} \cline{7-8} \cline{8-8} 
 &  & {\scriptsize{}Mean} & {\scriptsize{}Min.} & {\scriptsize{}25th} & {\scriptsize{}Median.} & {\scriptsize{}75th} & {\scriptsize{}Max.} & {\scriptsize{}CPU time (s)} & {\scriptsize{}(\%)} & {\footnotesize{}$\log_{10}\left(DIST\right)$} & {\scriptsize{}(\%)}\tabularnewline
\hline 
{\scriptsize{}$\delta$-(0) (BLP) + Spectral} & {\scriptsize{}250} & {\scriptsize{}12.06} & {\scriptsize{}9} & {\scriptsize{}11} & {\scriptsize{}12} & {\scriptsize{}13} & {\scriptsize{}22} & {\scriptsize{}0.0468} & {\scriptsize{}100} & {\scriptsize{}-15.5} & {\scriptsize{}100}\tabularnewline
{\scriptsize{}$\delta$-(0) (BLP) + SQUAREM} & {\scriptsize{}250} & {\scriptsize{}92.76} & {\scriptsize{}13} & {\scriptsize{}22} & {\scriptsize{}27.5} & {\scriptsize{}44} & {\scriptsize{}1000} & {\scriptsize{}0.3513} & {\scriptsize{}94} & {\scriptsize{}-13.9} & {\scriptsize{}94}\tabularnewline
{\scriptsize{}$\delta$-(1) + Spectral} & {\scriptsize{}250} & {\scriptsize{}282.2} & {\scriptsize{}14} & {\scriptsize{}23} & {\scriptsize{}30.5} & {\scriptsize{}1000} & {\scriptsize{}1000} & {\scriptsize{}1.08664} & {\scriptsize{}74} & {\scriptsize{}-12.2} & {\scriptsize{}76}\tabularnewline
{\scriptsize{}$\delta$-(1) + SQUAREM} & {\scriptsize{}250} & {\scriptsize{}9.76} & {\scriptsize{}7} & {\scriptsize{}9} & {\scriptsize{}10} & {\scriptsize{}11} & {\scriptsize{}14} & {\scriptsize{}0.0381} & {\scriptsize{}100} & {\scriptsize{}-15.9} & {\scriptsize{}100}\tabularnewline
\hline 
{\scriptsize{}$V$-(0) + Spectral} & {\scriptsize{}250} & {\scriptsize{}13.8} & {\scriptsize{}8} & {\scriptsize{}11} & {\scriptsize{}14} & {\scriptsize{}16} & {\scriptsize{}21} & {\scriptsize{}0.05284} & {\scriptsize{}100} & {\scriptsize{}-15.1} & {\scriptsize{}100}\tabularnewline
{\scriptsize{}$V$-(0) + SQUAREM} & {\scriptsize{}250} & {\scriptsize{}14.18} & {\scriptsize{}8} & {\scriptsize{}12} & {\scriptsize{}14} & {\scriptsize{}17} & {\scriptsize{}23} & {\scriptsize{}0.05314} & {\scriptsize{}100} & {\scriptsize{}-15.4} & {\scriptsize{}100}\tabularnewline
{\scriptsize{}$V$-(1) + Spectral} & {\scriptsize{}250} & {\scriptsize{}27.4} & {\scriptsize{}12} & {\scriptsize{}21} & {\scriptsize{}26.5} & {\scriptsize{}33} & {\scriptsize{}47} & {\scriptsize{}0.1242} & {\scriptsize{}100} & {\scriptsize{}-14.9} & {\scriptsize{}100}\tabularnewline
{\scriptsize{}$V$-(1) + SQUAREM} & {\scriptsize{}250} & {\scriptsize{}27.82} & {\scriptsize{}11} & {\scriptsize{}22} & {\scriptsize{}26} & {\scriptsize{}32} & {\scriptsize{}51} & {\scriptsize{}0.12224} & {\scriptsize{}100} & {\scriptsize{}-14.7} & {\scriptsize{}100}\tabularnewline
\hline 
\end{tabular}
\par\end{centering}
}

{\footnotesize{}}{\footnotesize\par}

{\footnotesize{}The mean outside option share is 0.307.}{\footnotesize\par}
\end{table}
\par\end{center}

\begin{center}
\begin{table}[H]
\caption{Results of the Monte Carlo simulation (Static BLP model; Continuous consumer types; $\alpha_{S2}$)\label{tab:static_BLP_Monte_Carlo_J_250_alpha_S2}}

\scalebox{0.8}{
\begin{centering}
\begin{tabular}{cccccccccccc}
\hline 
 & \multirow{2}{*}{{\scriptsize{}$J$}} & \multicolumn{6}{c}{{\scriptsize{}Number of iterations}} & {\scriptsize{}Mean} & {\scriptsize{}Conv.} & {\footnotesize{}Mean} & {\scriptsize{}$DIST<\epsilon_{tol}$}\tabularnewline
\cline{3-8} \cline{4-8} \cline{5-8} \cline{6-8} \cline{7-8} \cline{8-8} 
 &  & {\scriptsize{}Mean} & {\scriptsize{}Min.} & {\scriptsize{}25th} & {\scriptsize{}Median.} & {\scriptsize{}75th} & {\scriptsize{}Max.} & {\scriptsize{}CPU time (s)} & {\scriptsize{}(\%)} & {\footnotesize{}$\log_{10}\left(DIST\right)$} & {\scriptsize{}(\%)}\tabularnewline
\hline 
{\scriptsize{}$\delta$-(0) (BLP) + Spectral} & {\scriptsize{}250} & {\scriptsize{}12.06} & {\scriptsize{}9} & {\scriptsize{}11} & {\scriptsize{}12} & {\scriptsize{}13} & {\scriptsize{}22} & {\scriptsize{}0.0434} & {\scriptsize{}100} & {\scriptsize{}-15.5} & {\scriptsize{}100}\tabularnewline
{\scriptsize{}$\delta$-(0) (BLP) + SQUAREM} & {\scriptsize{}250} & {\scriptsize{}77.62} & {\scriptsize{}13} & {\scriptsize{}23} & {\scriptsize{}29} & {\scriptsize{}45} & {\scriptsize{}1000} & {\scriptsize{}0.24764} & {\scriptsize{}96} & {\scriptsize{}-14} & {\scriptsize{}96}\tabularnewline
{\scriptsize{}$\delta$-(1) + Spectral} & {\scriptsize{}250} & {\scriptsize{}64.72} & {\scriptsize{}14} & {\scriptsize{}23} & {\scriptsize{}30.5} & {\scriptsize{}63} & {\scriptsize{}1000} & {\scriptsize{}0.20604} & {\scriptsize{}94} & {\scriptsize{}-13.6} & {\scriptsize{}94}\tabularnewline
{\scriptsize{}$\delta$-(1) + SQUAREM} & {\scriptsize{}250} & {\scriptsize{}9.76} & {\scriptsize{}7} & {\scriptsize{}9} & {\scriptsize{}10} & {\scriptsize{}11} & {\scriptsize{}14} & {\scriptsize{}0.03558} & {\scriptsize{}100} & {\scriptsize{}-15.9} & {\scriptsize{}100}\tabularnewline
\hline 
{\scriptsize{}$V$-(0) + Spectral} & {\scriptsize{}250} & {\scriptsize{}13.94} & {\scriptsize{}8} & {\scriptsize{}12} & {\scriptsize{}14} & {\scriptsize{}16} & {\scriptsize{}22} & {\scriptsize{}0.04926} & {\scriptsize{}100} & {\scriptsize{}-15} & {\scriptsize{}100}\tabularnewline
{\scriptsize{}$V$-(0) + SQUAREM} & {\scriptsize{}250} & {\scriptsize{}14.02} & {\scriptsize{}8} & {\scriptsize{}11} & {\scriptsize{}14} & {\scriptsize{}16} & {\scriptsize{}21} & {\scriptsize{}0.0491} & {\scriptsize{}100} & {\scriptsize{}-15.5} & {\scriptsize{}100}\tabularnewline
{\scriptsize{}$V$-(1) + Spectral} & {\scriptsize{}250} & {\scriptsize{}30.46} & {\scriptsize{}12} & {\scriptsize{}24} & {\scriptsize{}26} & {\scriptsize{}32} & {\scriptsize{}114} & {\scriptsize{}0.12136} & {\scriptsize{}100} & {\scriptsize{}-15.1} & {\scriptsize{}100}\tabularnewline
{\scriptsize{}$V$-(1) + SQUAREM} & {\scriptsize{}250} & {\scriptsize{}30.3} & {\scriptsize{}11} & {\scriptsize{}23} & {\scriptsize{}28.5} & {\scriptsize{}34} & {\scriptsize{}63} & {\scriptsize{}0.1199} & {\scriptsize{}100} & {\scriptsize{}-14.9} & {\scriptsize{}100}\tabularnewline
\hline 
\end{tabular}
\par\end{centering}
}

{\footnotesize{}}{\footnotesize\par}

{\footnotesize{}The mean outside option share is 0.307.}{\footnotesize\par}
\end{table}
\par\end{center}

\begin{center}
\begin{table}[H]
\caption{Results of the Monte Carlo simulation (Static BLP model; Continuous consumer types; $\alpha_{S3}^{\prime}$)\label{tab:static_BLP_Monte_Carlo_J_250_alpha_S4}}

\scalebox{0.8}{
\begin{centering}
\begin{tabular}{cccccccccccc}
\hline 
 & \multirow{2}{*}{{\scriptsize{}$J$}} & \multicolumn{6}{c}{{\scriptsize{}Func. Evals.}} & {\scriptsize{}Mean} & {\scriptsize{}Conv.} & {\footnotesize{}Mean} & {\scriptsize{}$DIST<\epsilon_{tol}$}\tabularnewline
\cline{3-8} \cline{4-8} \cline{5-8} \cline{6-8} \cline{7-8} \cline{8-8} 
 &  & {\scriptsize{}Mean} & {\scriptsize{}Min.} & {\scriptsize{}25th} & {\scriptsize{}Median.} & {\scriptsize{}75th} & {\scriptsize{}Max.} & {\scriptsize{}CPU time (s)} & {\scriptsize{}(\%)} & {\footnotesize{}$\log_{10}\left(DIST\right)$} & {\scriptsize{}(\%)}\tabularnewline
\hline 
{\scriptsize{}$\delta$-(0) (BLP) + Spectral} & {\scriptsize{}250} & {\scriptsize{}12.06} & {\scriptsize{}9} & {\scriptsize{}11} & {\scriptsize{}12} & {\scriptsize{}13} & {\scriptsize{}22} & {\scriptsize{}0.05182} & {\scriptsize{}100} & {\scriptsize{}-15.5} & {\scriptsize{}100}\tabularnewline
{\scriptsize{}$\delta$-(0) (BLP) + SQUAREM} & {\scriptsize{}250} & {\scriptsize{}58.84} & {\scriptsize{}13} & {\scriptsize{}23} & {\scriptsize{}27.5} & {\scriptsize{}36} & {\scriptsize{}1000} & {\scriptsize{}0.22302} & {\scriptsize{}98} & {\scriptsize{}-14} & {\scriptsize{}98}\tabularnewline
{\scriptsize{}$\delta$-(1) + Spectral} & {\scriptsize{}250} & {\scriptsize{}47.38} & {\scriptsize{}14} & {\scriptsize{}22} & {\scriptsize{}30} & {\scriptsize{}43} & {\scriptsize{}435} & {\scriptsize{}0.19196} & {\scriptsize{}100} & {\scriptsize{}-14.5} & {\scriptsize{}100}\tabularnewline
{\scriptsize{}$\delta$-(1) + SQUAREM} & {\scriptsize{}250} & {\scriptsize{}9.76} & {\scriptsize{}7} & {\scriptsize{}9} & {\scriptsize{}10} & {\scriptsize{}11} & {\scriptsize{}14} & {\scriptsize{}0.04252} & {\scriptsize{}100} & {\scriptsize{}-15.9} & {\scriptsize{}100}\tabularnewline
\hline 
{\scriptsize{}$V$-(0) + Spectral} & {\scriptsize{}250} & {\scriptsize{}13.88} & {\scriptsize{}8} & {\scriptsize{}12} & {\scriptsize{}14} & {\scriptsize{}16} & {\scriptsize{}20} & {\scriptsize{}0.06062} & {\scriptsize{}100} & {\scriptsize{}-15.1} & {\scriptsize{}100}\tabularnewline
{\scriptsize{}$V$-(0) + SQUAREM} & {\scriptsize{}250} & {\scriptsize{}14.02} & {\scriptsize{}8} & {\scriptsize{}12} & {\scriptsize{}14} & {\scriptsize{}16} & {\scriptsize{}22} & {\scriptsize{}0.05864} & {\scriptsize{}100} & {\scriptsize{}-15.4} & {\scriptsize{}100}\tabularnewline
{\scriptsize{}$V$-(1) + Spectral} & {\scriptsize{}250} & {\scriptsize{}29.86} & {\scriptsize{}12} & {\scriptsize{}22} & {\scriptsize{}27} & {\scriptsize{}35} & {\scriptsize{}67} & {\scriptsize{}0.14622} & {\scriptsize{}100} & {\scriptsize{}-14.9} & {\scriptsize{}100}\tabularnewline
{\scriptsize{}$V$-(1) + SQUAREM} & {\scriptsize{}250} & {\scriptsize{}30} & {\scriptsize{}11} & {\scriptsize{}23} & {\scriptsize{}28} & {\scriptsize{}33} & {\scriptsize{}62} & {\scriptsize{}0.14686} & {\scriptsize{}100} & {\scriptsize{}-14.7} & {\scriptsize{}100}\tabularnewline
\hline 
\end{tabular}
\par\end{centering}
}

{\footnotesize{}}{\footnotesize\par}

{\footnotesize{}The mean outside option share is 0.307.}{\footnotesize\par}
\end{table}
\par\end{center}

\subsubsection*{Proof of Proposition \ref{prop:spectral_step_size_discussion}}

Below, we show Proposition \ref{prop:spectral_step_size_discussion}.
\begin{lem}
\label{lem:step_size_lem}$\left\Vert \alpha y+(1-\alpha)z\right\Vert _{2}>\left\Vert z\right\Vert _{2}$ holds for $\alpha<0$ and $y,z$ such that $\left\Vert y\right\Vert _{2}<\left\Vert z\right\Vert _{2}$
\end{lem}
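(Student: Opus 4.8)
The plan is to reduce the claim to a single one-line computation of a squared $L^2$ norm. First I would absorb the negative sign of $\alpha$ by writing $\alpha=-s$ with $s>0$, and note the algebraic identity $\alpha y+(1-\alpha)z=z+s(z-y)$. Since both sides of the desired inequality are nonnegative, it suffices to compare squared norms, i.e.\ to show $\left\Vert z+s(z-y)\right\Vert _{2}^{2}>\left\Vert z\right\Vert _{2}^{2}$. I would also record at the outset that the hypothesis $\left\Vert y\right\Vert _{2}<\left\Vert z\right\Vert _{2}$ forces $\left\Vert z\right\Vert _{2}>0$, which I will need below.

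Next I would expand the squared norm using bilinearity of the inner product:
\[
\left\Vert z+s(z-y)\right\Vert _{2}^{2}=\left\Vert z\right\Vert _{2}^{2}+2s\left(\left\Vert z\right\Vert _{2}^{2}-\langle z,y\rangle\right)+s^{2}\left\Vert z-y\right\Vert _{2}^{2}.
\]
Subtracting $\left\Vert z\right\Vert _{2}^{2}$ leaves $2s\left(\left\Vert z\right\Vert _{2}^{2}-\langle z,y\rangle\right)+s^{2}\left\Vert z-y\right\Vert _{2}^{2}$, and the whole task is to show this is strictly positive. The quadratic-in-$s$ term is manifestly $\geq0$, so the sign is governed by the linear term. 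Here I would invoke Cauchy--Schwarz together with the hypothesis: $\langle z,y\rangle\leq\left\Vert z\right\Vert _{2}\left\Vert y\right\Vert _{2}<\left\Vert z\right\Vert _{2}^{2}$, the last step using $\left\Vert y\right\Vert _{2}<\left\Vert z\right\Vert _{2}$ and $\left\Vert z\right\Vert _{2}>0$. Hence $\left\Vert z\right\Vert _{2}^{2}-\langle z,y\rangle>0$, and since $s>0$ the linear term is strictly positive, giving $\left\Vert z+s(z-y)\right\Vert _{2}^{2}>\left\Vert z\right\Vert _{2}^{2}$ and therefore $\left\Vert \alpha y+(1-\alpha)z\right\Vert _{2}>\left\Vert z\right\Vert _{2}$.

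I do not expect any genuine obstacle here: the statement is essentially a convexity/extrapolation fact, and the only point requiring a little care is ensuring the strictness of the inequality. Strictness comes entirely from the linear term, so the argument must route through the strict Cauchy--Schwarz bound $\left\Vert z\right\Vert _{2}\left\Vert y\right\Vert _{2}<\left\Vert z\right\Vert _{2}^{2}$ rather than from the nonnegative quadratic term (which could vanish, e.g.\ if $y=z$, though that case is excluded by the hypothesis anyway). I would therefore make the $\left\Vert z\right\Vert _{2}>0$ observation explicit so that the final strict step is unambiguous.
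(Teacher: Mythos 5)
Your proof is correct, and it is \emph{not} the paper's argument --- the comparison is instructive, because the paper's own proof as written contains a sign error that your decomposition avoids. The paper expands $\left\Vert \alpha y+(1-\alpha)z\right\Vert _{2}^{2}-\left\Vert z\right\Vert _{2}^{2}=\alpha\sum_{i}\left[\alpha y_{i}^{2}+2(1-\alpha)y_{i}z_{i}+(\alpha-2)z_{i}^{2}\right]$ and then asserts that this strictly exceeds the same expression with $y_{i}^{2}$ replaced by $z_{i}^{2}$, which it simplifies to $2\alpha(\alpha-1)\left[\left\Vert z\right\Vert _{2}^{2}-y\cdot z\right]\geq0$. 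But that replacement changes the value by exactly $\alpha^{2}\left(\left\Vert y\right\Vert _{2}^{2}-\left\Vert z\right\Vert _{2}^{2}\right)<0$, so the original expression is strictly \emph{smaller} than the modified one, not larger (for instance with $y=0$, $z=1$, $\alpha=-1$ the two sides are $3$ and $4$); hence the paper's chain ``actual $>$ bound $\geq0$'' does not go through as written, even though the lemma itself is true. Your route repairs this: writing $\alpha=-s$ with $s>0$ and $\alpha y+(1-\alpha)z=z+s(z-y)$, the difference of squared norms splits as $2s\left(\left\Vert z\right\Vert _{2}^{2}-y\cdot z\right)+s^{2}\left\Vert z-y\right\Vert _{2}^{2}$, in which the quadratic term is manifestly nonnegative and the linear term is strictly positive by Cauchy--Schwarz combined with $\left\Vert y\right\Vert _{2}<\left\Vert z\right\Vert _{2}$ (using, as you rightly make explicit, that $\left\Vert z\right\Vert _{2}>0$). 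Both arguments invoke Cauchy--Schwarz at the same point; what your decomposition buys is that every term's sign is controlled separately, so the strict inequality is airtight --- in effect, yours is the corrected version of what the paper's proof was aiming for.
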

\begin{proof}
Under $\alpha<0$ and $\left\Vert y\right\Vert _{2}<\left\Vert z\right\Vert _{2}$,

\begin{eqnarray*}
\left\Vert \alpha y+(1-\alpha)z\right\Vert _{2}^{2}-\left\Vert z\right\Vert _{2}^{2} & = & \text{\ensuremath{\alpha\sum_{i}\left[\alpha y_{i}^{2}+2(1-\alpha)y_{i}z_{i}+(\alpha-2)z_{i}^{2}\right]}}\\
 & > & \alpha\sum_{i}\left[\alpha z_{i}^{2}+2(1-\alpha)y_{i}z_{i}+(\alpha-2)z_{i}^{2}\right]\ \left(\because\alpha<0,\left\Vert y\right\Vert _{2}<\left\Vert z\right\Vert _{2}\right)\\
 & = & 2\alpha(\alpha-1)\left[\left\Vert z\right\Vert ^{2}-y\cdot z\right].
\end{eqnarray*}

By Cauchy-Schwartz inequality, $y\cdot z\leq\left\Vert y\right\Vert _{2}\left\Vert z\right\Vert _{2}$ holds. Then, by $\left\Vert y\right\Vert _{2}<\left\Vert z\right\Vert _{2}$ and $\alpha<0$, we have $\left\Vert \alpha y+(1-\alpha)z\right\Vert _{2}^{2}-\left\Vert z\right\Vert _{2}^{2}>2\alpha(\alpha-1)\left[\left\Vert z\right\Vert ^{2}-y\cdot z\right]\geq0$.
\end{proof}
\begin{lem}
\label{lem:step_size_lem2}$\left\Vert \alpha\widetilde{x}+(1-\alpha)x-x^{*}\right\Vert _{2}>\left\Vert x-x^{*}\right\Vert _{2}$ holds for $\alpha<0$ and $\widetilde{x}$ such that $\left\Vert \widetilde{x}-x^{*}\right\Vert _{2}<\left\Vert x-x^{*}\right\Vert _{2}$
\end{lem}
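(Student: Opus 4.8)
The plan is to prove Lemma \ref{lem:step_size_lem2} as an immediate corollary of Lemma \ref{lem:step_size_lem}, exploiting the fact that the convex-type combination $\alpha(\cdot)+(1-\alpha)(\cdot)$ is invariant under a common translation of both arguments because its coefficients sum to one. That is, rather than re-running a Cauchy--Schwarz computation, I would recast the quantity $\alpha\widetilde{x}+(1-\alpha)x-x^{*}$ as a combination of the two error vectors measured from $x^{*}$, and then quote Lemma \ref{lem:step_size_lem} verbatim.

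Concretely, the first step is to set $y\equiv\widetilde{x}-x^{*}$ and $z\equiv x-x^{*}$. The key algebraic identity to verify is
\[
\alpha y+(1-\alpha)z=\alpha(\widetilde{x}-x^{*})+(1-\alpha)(x-x^{*})=\alpha\widetilde{x}+(1-\alpha)x-x^{*},
\]
where the last equality uses $\alpha+(1-\alpha)=1$ so that the $x^{*}$ terms collapse to a single $-x^{*}$. The next step is to observe that the hypothesis $\left\Vert\widetilde{x}-x^{*}\right\Vert_{2}<\left\Vert x-x^{*}\right\Vert_{2}$ is, after the substitution, exactly $\left\Vert y\right\Vert_{2}<\left\Vert z\right\Vert_{2}$. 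With $\alpha<0$ and this norm inequality in hand, Lemma \ref{lem:step_size_lem} applies directly and yields $\left\Vert\alpha y+(1-\alpha)z\right\Vert_{2}>\left\Vert z\right\Vert_{2}$. Rewriting both sides via the identity above and via $z=x-x^{*}$ gives $\left\Vert\alpha\widetilde{x}+(1-\alpha)x-x^{*}\right\Vert_{2}>\left\Vert x-x^{*}\right\Vert_{2}$, which is the claim.

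There is essentially no hard step here: the only thing to be careful about is the translation-invariance identity, and that is immediate from the coefficient sum being one. (If one preferred a self-contained argument, the same Cauchy--Schwarz expansion used in the proof of Lemma \ref{lem:step_size_lem} could be carried out directly on $\left\Vert\alpha\widetilde{x}+(1-\alpha)x-x^{*}\right\Vert_{2}^{2}-\left\Vert x-x^{*}\right\Vert_{2}^{2}$, but the reduction is cleaner.) I would also note in passing how this lemma feeds the intended application: taking $\widetilde{x}=\Phi(x)$ and using that $\Phi$ is a contraction of modulus $K\in[0,1)$ with fixed point $x^{*}$ gives $\left\Vert\Phi(x)-x^{*}\right\Vert_{2}=\left\Vert\Phi(x)-\Phi(x^{*})\right\Vert_{2}\le K\left\Vert x-x^{*}\right\Vert_{2}<\left\Vert x-x^{*}\right\Vert_{2}$ whenever $x\neq x^{*}$, which verifies the hypothesis of Lemma \ref{lem:step_size_lem2} and thereby delivers Proposition \ref{prop:spectral_step_size_discussion}.
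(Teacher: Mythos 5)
Your proof is correct and is essentially identical to the paper's own argument: the paper likewise proves Lemma \ref{lem:step_size_lem2} by substituting $y=\widetilde{x}-x^{*}$ and $z=x-x^{*}$ into Lemma \ref{lem:step_size_lem} and using the translation identity $\alpha y+(1-\alpha)z=\alpha\widetilde{x}+(1-\alpha)x-x^{*}$. Your closing remark about taking $\widetilde{x}=\Phi(x)$ also matches how the paper combines this lemma with Lemma \ref{lem:contraction_distance} to obtain Proposition \ref{prop:spectral_step_size_discussion}.
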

\begin{proof}
By letting $y=\widetilde{x}-x^{*}$ and $z=x-x^{*}$ in Lemma \ref{lem:step_size_lem}, we obtain $\left\Vert \alpha\widetilde{x}+(1-\alpha)x-x^{*}\right\Vert _{2}=\left\Vert \alpha y+(1-\alpha)z\right\Vert _{2}>\left\Vert z\right\Vert _{2}=\left\Vert x-x^{*}\right\Vert _{2}$.
\end{proof}
\begin{lem}
\label{lem:contraction_distance}For a contraction mapping $\Phi$ with modulus $K\in[0,1)$, let $x^{*}$ be the solution of $\Phi(x)=x$. Then, $\left\Vert \Phi(x)-x^{*}\right\Vert _{2}<\left\Vert x-x^{*}\right\Vert _{2}$ holds.
\end{lem}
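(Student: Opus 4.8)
The plan is to derive the inequality directly from the definition of a contraction together with the fixed-point property of $x^{*}$, so no substantial machinery is needed. First I would recall that $\Phi$ being a contraction of modulus $K\in[0,1)$ on the $L^{2}$ metric space means $\left\Vert \Phi(a)-\Phi(b)\right\Vert _{2}\leq K\left\Vert a-b\right\Vert _{2}$ for all $a,b$, and that $x^{*}$ satisfying $\Phi(x^{*})=x^{*}$ is its (unique) fixed point, whose existence I may take as given.

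Next I would specialize the contraction inequality by taking $a=x$ and $b=x^{*}$. Since $\Phi(x^{*})=x^{*}$, the left-hand side becomes $\left\Vert \Phi(x)-x^{*}\right\Vert _{2}$, which yields $\left\Vert \Phi(x)-x^{*}\right\Vert _{2}=\left\Vert \Phi(x)-\Phi(x^{*})\right\Vert _{2}\leq K\left\Vert x-x^{*}\right\Vert _{2}$. Finally, because $K<1$ and $\left\Vert x-x^{*}\right\Vert _{2}>0$ whenever $x\neq x^{*}$, I would conclude $K\left\Vert x-x^{*}\right\Vert _{2}<\left\Vert x-x^{*}\right\Vert _{2}$, which chains to the claimed strict inequality $\left\Vert \Phi(x)-x^{*}\right\Vert _{2}<\left\Vert x-x^{*}\right\Vert _{2}$.

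The only point requiring care --- and the nearest thing to an obstacle in an otherwise one-line argument --- is the degenerate case $x=x^{*}$, where both sides vanish and the strict inequality fails. The statement is therefore to be read for $x\neq x^{*}$, which is exactly the situation in which it is invoked. I would note this qualification explicitly, since the subsequent application feeds $\widetilde{x}=\Phi(x)$ into Lemma \ref{lem:step_size_lem2} (and thence into Proposition \ref{prop:spectral_step_size_discussion}), whose hypothesis $\left\Vert \widetilde{x}-x^{*}\right\Vert _{2}<\left\Vert x-x^{*}\right\Vert _{2}$ is precisely what this lemma supplies.
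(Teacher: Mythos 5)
Your proof is correct, and it is in fact more direct than the paper's. You apply the contraction inequality to the pair $(x,x^{*})$ and invoke the fixed-point property $\Phi(x^{*})=x^{*}$, giving $\left\Vert \Phi(x)-x^{*}\right\Vert _{2}=\left\Vert \Phi(x)-\Phi(x^{*})\right\Vert _{2}\leq K\left\Vert x-x^{*}\right\Vert _{2}<\left\Vert x-x^{*}\right\Vert _{2}$. The paper instead represents $x^{*}$ as the limit of the iterates, writing $\left\Vert \Phi(x)-x^{*}\right\Vert _{2}=\lim_{n\rightarrow\infty}\left\Vert \Phi(x)-\Phi^{n+1}(x)\right\Vert _{2}\leq K\lim_{n\rightarrow\infty}\left\Vert x-\Phi^{n}(x)\right\Vert _{2}=K\left\Vert x-x^{*}\right\Vert _{2}$, i.e., it applies the same contraction inequality to the pairs $(x,\Phi^{n}(x))$ and then passes to the limit. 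That route additionally relies on the convergence $\Phi^{n}(x)\rightarrow x^{*}$ of the fixed-point iteration (the Banach fixed-point theorem, hence completeness of the space), whereas your argument needs only the definition of $x^{*}$ as a fixed point; the limiting detour buys nothing extra here, so yours is the leaner argument. Your caveat about the degenerate case is also well taken: at $x=x^{*}$ both proofs would assert $0<0$, so the lemma must be read for $x\neq x^{*}$ --- a qualification the paper's proof silently shares and which propagates to Lemma \ref{lem:step_size_lem2} and Proposition \ref{prop:spectral_step_size_discussion}, where equality rather than strict inequality holds at $x=x^{*}$.
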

\begin{proof}
$\left\Vert \Phi(x)-x^{*}\right\Vert _{2}=\lim_{n\rightarrow\infty}\left\Vert \Phi(x)-\Phi^{n+1}(x)\right\Vert _{2}\leq K\lim_{n\rightarrow\infty}\left\Vert x-\Phi^{n}(x)\right\Vert _{2}=K\left\Vert x-x^{*}\right\Vert _{2}<\left\Vert x-x^{*}\right\Vert _{2}$.
\end{proof}

\paragraph*{Proof of Proposition \ref{prop:spectral_step_size_discussion}:}
\begin{proof}
By Lemmas \ref{lem:step_size_lem2} and \ref{lem:contraction_distance}, $\left\Vert \alpha\Phi(x)+(1-\alpha)x-x^{*}\right\Vert _{2}>\left\Vert x-x^{*}\right\Vert _{2}$ holds.
\end{proof}

\subsection{\citet{kalouptsidi2012market}'s method\label{subsec:Kalouptsidi-method}}

Here, we discuss the method proposed in \citet{kalouptsidi2012market}. The method was originally developed for static BLP models with a few consumer types.

\citet{kalouptsidi2012market} defined a term $r_{i}\equiv\log(w_{i}s_{i0})$ where $s_{i0}\equiv\frac{1}{\exp(V_{i})}$, and proposed to solve for $r$ by a mapping $F:\mathbb{R}^{I}\rightarrow\mathbb{R}^{I}$ defined by:

\begin{eqnarray*}
F_{i}(r) & = & \begin{cases}
r_{i}+\log\left(w_{i}\right)-\log\left(\sum_{j=1}^{J}S_{j}^{(data)}\frac{\exp\left(\mu_{ij}+r_{i}\right)}{\sum_{i=1}^{I}\exp\left(\mu_{ij}+r_{i}\right)}+S_{0}^{(data)}\frac{\exp\left(r_{i}\right)}{\sum_{i=1}^{I}\exp\left(r_{i}\right)}\right) & i=1,\cdots,I-1,\\
\log\left(S_{0}^{(data)}-\sum_{i=1}^{I-1}\exp\left(F_{i}(r)\right)\right) & i=I.
\end{cases}
\end{eqnarray*}

Note that $\delta_{j}=\log\left(S_{j}^{(data)}\right)-\log\left(\sum_{i=1}^{I}w_{i}\exp\left(\mu_{ij}+r_{i}\right)\right)$ holds. 

Nevertheless, the problem of the algorithm is that the outputs of the mapping $F$ do not necessarily take real numbers when $S_{0}^{(data)}-\sum_{i=1}^{I-1}\exp\left(F_{i}(r)\right)\leq0$. Hence, \citet{kalouptsidi2012market} defined $\widetilde{r}$ such that $\widetilde{r_{i}}\equiv r_{i}-r_{i=I}\ (i=1,\cdots,I-1),\ r_{i=I}=0$, and proposed the alternative mapping $\widetilde{F}:\mathbb{R}^{I-1}\rightarrow\mathbb{R}^{I-1}$ to solve for $\widetilde{r}$:

\begin{eqnarray*}
\widetilde{F_{i}}(\widetilde{r}) & = & \widetilde{r_{i}}+\log\left(w_{i}\right)-\log\left(\sum_{j=1}^{J}S_{j}^{(data)}\frac{\exp\left(\mu_{ij}+\widetilde{r_{i}}\right)}{\sum_{i=1}^{I}\exp\left(\mu_{ij}+\widetilde{r_{i}}\right)}+S_{0}^{(data)}\frac{\exp\left(\widetilde{r_{i}}\right)}{\sum_{i=1}^{I}\exp\left(\widetilde{r_{i}}\right)}\right)\ (i=1,\cdots,I-1).
\end{eqnarray*}
Note that $r_{i}=\widetilde{r_{i}}+\log(S_{0}^{(data)})-\log\left(\sum_{i=1}^{I}\exp(\widetilde{r_{i}})\right)$ holds. 

Based on the observations that the algorithm using $F$ usually converged, and the algorithms using $\widetilde{F}$ were slower, \citet{kalouptsidi2012market} proposed a mixed algorithm, which uses $F$ by default, but switches to $\widetilde{F}$ when $S_{0}^{(data)}-\sum_{i=1}^{I-1}\exp\left(F_{i}(r)\right)<0$ occurs. 

Because $r_{i}=\log(w_{i}s_{i0}^{(ccp)})=\log(w_{i})-V_{i}$, the mapping $F(r)$ is essentially equivalent to the following mapping of $V$:

\begin{eqnarray*}
G_{i}(V) & = & \begin{cases}
\log\left(\sum_{j=1}^{J}S_{j}^{(data)}\frac{\exp\left(\mu_{ij}\right)}{\sum_{i=1}^{I}w_{i}\exp\left(\mu_{ij}-V_{i}\right)}+\frac{S_{0}^{(data)}}{\sum_{i=1}^{I}w_{i}\exp\left(-V_{i}\right)}\right) & i=1,\cdots,I-1,\\
-\log\left(\frac{S_{0}^{(data)}-\sum_{i=1}^{I-1}w_{i}\exp\left(-G_{i}(V)\right)}{w_{i=I}}\right) & i=I.
\end{cases}
\end{eqnarray*}

When trying \citet{kalouptsidi2012market}'s two algorithms in the settings of the main article (continuous consumer types) by introducing 1000 simulation draws, I found they always failed to converge. To validate the performance of the algorithm in the setting with a few consumer types, which \citet{kalouptsidi2012market} originally considered, I also evaluate the performance of the algorithms by setting the number of simulation draws to 2.

Table \ref{tab:static_BLP_Monte_Carlo_2_consumer_types} shows the results. The results show that algorithms $\delta$-(1) and $V$-(1) perform much better than \citet{kalouptsidi2012market}'s algorithms, though the latter performs on average better than the BLP contraction mapping. The results suggest that the performance of \citet{kalouptsidi2012market}'s algorithms is sensitive to the number of consumer types. Still, algorithms $\delta$-(1) and $V$-(1) perform well regardless of the number of consumer types.
\begin{center}
\begin{table}[H]
\caption{Results of the Monte Carlo simulation (Static BLP model; 2 consumer types)\label{tab:static_BLP_Monte_Carlo_2_consumer_types}}

\scalebox{0.8}{
\begin{centering}
\begin{tabular}{cccccccccccc}
\hline 
 & \multirow{2}{*}{{\scriptsize{}$J$}} & \multicolumn{6}{c}{{\scriptsize{}Func. Evals.}} & {\scriptsize{}Mean} & {\scriptsize{}Conv.} & {\footnotesize{}Mean} & {\scriptsize{}$DIST<\epsilon_{tol}$}\tabularnewline
\cline{3-8} \cline{4-8} \cline{5-8} \cline{6-8} \cline{7-8} \cline{8-8} 
 &  & {\scriptsize{}Mean} & {\scriptsize{}Min.} & {\scriptsize{}25th} & {\scriptsize{}Median.} & {\scriptsize{}75th} & {\scriptsize{}Max.} & {\scriptsize{}CPU time (s)} & {\scriptsize{}(\%)} & {\footnotesize{}$\log_{10}\left(DIST\right)$} & {\scriptsize{}(\%)}\tabularnewline
\hline 
{\scriptsize{}$\delta$-(0) (BLP)} & {\scriptsize{}250} & {\scriptsize{}245.7} & {\scriptsize{}17} & {\scriptsize{}50} & {\scriptsize{}97.5} & {\scriptsize{}233} & {\scriptsize{}1000} & {\scriptsize{}0.02254} & {\scriptsize{}88} & {\scriptsize{}-12.8} & {\scriptsize{}90}\tabularnewline
{\scriptsize{}$\delta$-(0) (BLP) + Anderson} & {\scriptsize{}250} & {\scriptsize{}13.04} & {\scriptsize{}7} & {\scriptsize{}10} & {\scriptsize{}12} & {\scriptsize{}14} & {\scriptsize{}27} & {\scriptsize{}0.00266} & {\scriptsize{}96} & {\scriptsize{}NaN} & {\scriptsize{}96}\tabularnewline
{\scriptsize{}$\delta$-(0) (BLP) + Spectral} & {\scriptsize{}250} & {\scriptsize{}41.64} & {\scriptsize{}10} & {\scriptsize{}19} & {\scriptsize{}26.5} & {\scriptsize{}38} & {\scriptsize{}498} & {\scriptsize{}0.00552} & {\scriptsize{}100} & {\scriptsize{}-14.5} & {\scriptsize{}100}\tabularnewline
{\scriptsize{}$\delta$-(0) (BLP) + SQUAREM} & {\scriptsize{}250} & {\scriptsize{}40.54} & {\scriptsize{}12} & {\scriptsize{}21} & {\scriptsize{}27.5} & {\scriptsize{}42} & {\scriptsize{}185} & {\scriptsize{}0.00362} & {\scriptsize{}100} & {\scriptsize{}-14.5} & {\scriptsize{}100}\tabularnewline
{\scriptsize{}$\delta$-(1)} & {\scriptsize{}250} & {\scriptsize{}14.86} & {\scriptsize{}6} & {\scriptsize{}10} & {\scriptsize{}12} & {\scriptsize{}17} & {\scriptsize{}46} & {\scriptsize{}0.00168} & {\scriptsize{}100} & {\scriptsize{}-14.8} & {\scriptsize{}100}\tabularnewline
{\scriptsize{}$\delta$-(1) + Anderson} & {\scriptsize{}250} & {\scriptsize{}8.1} & {\scriptsize{}5} & {\scriptsize{}6} & {\scriptsize{}9} & {\scriptsize{}9} & {\scriptsize{}12} & {\scriptsize{}0.00166} & {\scriptsize{}100} & {\scriptsize{}-16.2} & {\scriptsize{}100}\tabularnewline
{\scriptsize{}$\delta$-(1) + Spectral} & {\scriptsize{}250} & {\scriptsize{}9.06} & {\scriptsize{}5} & {\scriptsize{}7} & {\scriptsize{}8.5} & {\scriptsize{}10} & {\scriptsize{}20} & {\scriptsize{}0.0013} & {\scriptsize{}100} & {\scriptsize{}-15.6} & {\scriptsize{}100}\tabularnewline
{\scriptsize{}$\delta$-(1) + SQUAREM} & {\scriptsize{}250} & {\scriptsize{}9.8} & {\scriptsize{}5} & {\scriptsize{}8} & {\scriptsize{}10} & {\scriptsize{}11} & {\scriptsize{}17} & {\scriptsize{}0.001} & {\scriptsize{}100} & {\scriptsize{}-16} & {\scriptsize{}100}\tabularnewline
\hline 
{\scriptsize{}$V$-(0)} & {\scriptsize{}250} & {\scriptsize{}250.04} & {\scriptsize{}16} & {\scriptsize{}51} & {\scriptsize{}110.5} & {\scriptsize{}242} & {\scriptsize{}1000} & {\scriptsize{}0.0295} & {\scriptsize{}88} & {\scriptsize{}-12.8} & {\scriptsize{}90}\tabularnewline
{\scriptsize{}$V$-(0) + Anderson} & {\scriptsize{}250} & {\scriptsize{}12.12} & {\scriptsize{}5} & {\scriptsize{}10} & {\scriptsize{}12} & {\scriptsize{}13} & {\scriptsize{}24} & {\scriptsize{}0.00282} & {\scriptsize{}64} & {\scriptsize{}NaN} & {\scriptsize{}64}\tabularnewline
{\scriptsize{}$V$-(0) + Spectral} & {\scriptsize{}250} & {\scriptsize{}16.88} & {\scriptsize{}8} & {\scriptsize{}12} & {\scriptsize{}15} & {\scriptsize{}18} & {\scriptsize{}45} & {\scriptsize{}0.0024} & {\scriptsize{}100} & {\scriptsize{}-15.4} & {\scriptsize{}100}\tabularnewline
{\scriptsize{}$V$-(0) + SQUAREM} & {\scriptsize{}250} & {\scriptsize{}20.5} & {\scriptsize{}9} & {\scriptsize{}14} & {\scriptsize{}17.5} & {\scriptsize{}25} & {\scriptsize{}52} & {\scriptsize{}0.00234} & {\scriptsize{}100} & {\scriptsize{}-15.2} & {\scriptsize{}100}\tabularnewline
{\scriptsize{}$V$-(1)} & {\scriptsize{}250} & {\scriptsize{}15.18} & {\scriptsize{}7} & {\scriptsize{}10} & {\scriptsize{}12} & {\scriptsize{}16} & {\scriptsize{}49} & {\scriptsize{}0.00206} & {\scriptsize{}100} & {\scriptsize{}-14.7} & {\scriptsize{}100}\tabularnewline
{\scriptsize{}$V$-(1) + Anderson} & {\scriptsize{}250} & {\scriptsize{}7.32} & {\scriptsize{}5} & {\scriptsize{}6} & {\scriptsize{}7} & {\scriptsize{}9} & {\scriptsize{}13} & {\scriptsize{}0.00158} & {\scriptsize{}90} & {\scriptsize{}NaN} & {\scriptsize{}90}\tabularnewline
{\scriptsize{}$V$-(1) + Spectral} & {\scriptsize{}250} & {\scriptsize{}10.88} & {\scriptsize{}6} & {\scriptsize{}7} & {\scriptsize{}9.5} & {\scriptsize{}13} & {\scriptsize{}25} & {\scriptsize{}0.00166} & {\scriptsize{}100} & {\scriptsize{}-15.2} & {\scriptsize{}100}\tabularnewline
{\scriptsize{}$V$-(1) + SQUAREM} & {\scriptsize{}250} & {\scriptsize{}12.54} & {\scriptsize{}6} & {\scriptsize{}9} & {\scriptsize{}10.5} & {\scriptsize{}17} & {\scriptsize{}26} & {\scriptsize{}0.00158} & {\scriptsize{}100} & {\scriptsize{}-15} & {\scriptsize{}100}\tabularnewline
\hline 
{\scriptsize{}Kalouptsidi (2012) (1)} & {\scriptsize{}250} & {\scriptsize{}56.46} & {\scriptsize{}45} & {\scriptsize{}46} & {\scriptsize{}48} & {\scriptsize{}54} & {\scriptsize{}250} & {\scriptsize{}0.00246} & {\scriptsize{}100} & {\scriptsize{}-14.4} & {\scriptsize{}100}\tabularnewline
{\scriptsize{}Kalouptsidi (2012) (2)} & {\scriptsize{}250} & {\scriptsize{}53.72} & {\scriptsize{}45} & {\scriptsize{}46} & {\scriptsize{}48} & {\scriptsize{}54} & {\scriptsize{}113} & {\scriptsize{}0.00236} & {\scriptsize{}100} & {\scriptsize{}-14.4} & {\scriptsize{}100}\tabularnewline
\hline 
\end{tabular}
\par\end{centering}
}

{\footnotesize{}Notes.}{\footnotesize\par}

{\footnotesize{}The number of simulation draws is set to 2 to represent the model with two consumer types.}{\footnotesize\par}

{\footnotesize{}Kalouptsidi (2012) (1) denotes the mixed algorithm updating $r$. Kalouptsidi (2012) (2) denotes the algorithm updating $\widetilde{r}$.}{\footnotesize\par}
\end{table}
\par\end{center}

\citet{kalouptsidi2012market} argued that her algorithm works well under $I\ll J$ because the algorithm solves for $I$ (lower)-dimensional vector rather than $J$ (higher)-dimensional vector. Nevertheless, it seems the discussion is not correct, because the algorithms $V$-(0), $V$-(1) worked mostly the same as the algorithms $\delta$-(0), $\delta$-(1), even when $I=1000>25=J$, as demonstrated in Section 6.1. As discussed in Section 3, mappings of $V$ and $\delta$ have dualistic relations. Based on the discussion in Section 3, in the context of static BLP estimation, it seems that what determines the convergence speed is not the dimensions of variables we solve for, but the form of fixed-point mappings. 

\subsection{Performance of time-dependent step sizes in dynamic BLP models\label{subsec:Performance-of-time-dependent}}

In this section, I compare the performance of the algorithm $V$-(1) combined with the spectral algorithm with time-dependent step size $\alpha_{t}^{(n)}$ ($V$-(1) + Spectral ($t$-dep)), and the algorithm $V$-(1) combined with the spectral algorithm with time-independent step size $\alpha^{(n)}$ ($V$-(1) + Spectral ($t$-indep)). I picked up one parameter setting among the 20 settings experimented in the dynamic BLP model under perfect foresight, and assessed these performances. Table \ref{tab:Comparison-time-dep-step-size} shows the results. The results imply that introducing time-dependent step sizes in the spectral algorithm leads to faster convergence.

\begin{table}[H]
\caption{Performance of time-dependent step sizes\label{tab:Comparison-time-dep-step-size}}

\scalebox{0.8}{
\begin{centering}
\begin{tabular}{cccccc}
\hline 
 & \multirow{2}{*}{{\footnotesize{}Func evals ($\Psi_{V\delta\rightarrow V}^{V,\gamma}$)}} & {\footnotesize{}Mean} & \multirow{2}{*}{{\footnotesize{}Conv.}} & \multirow{2}{*}{{\footnotesize{}$\log_{10}\left(DIST\right)$}} & \multirow{2}{*}{{\footnotesize{}$DIST<\epsilon_{tol}$}}\tabularnewline
 &  & {\footnotesize{}CPU time (s)} &  &  & \tabularnewline
\hline 
{\footnotesize{}$V$-(1) + Spectral ($t$-dep)} & {\footnotesize{}2177} & {\footnotesize{}4.339} & {\footnotesize{}1} & {\footnotesize{}-14.8574} & {\footnotesize{}1}\tabularnewline
{\footnotesize{}$V$-(1) + Spectral ($t$-indep)} & {\footnotesize{}388} & {\footnotesize{}0.6} & {\footnotesize{}1} & {\footnotesize{}-14.7972} & {\footnotesize{}1}\tabularnewline
\hline 
\end{tabular}
\par\end{centering}
}
\end{table}

\subsection{Comparison of two traditional dynamic BLP algorithms (nested vs joint update)\label{subsec:Comparison-between-nested-joint}}

In this section, I compare the performance of the nested version ($\delta V$-(1) (nested)) and joint-update version ($\delta V$-(1) (joint)) of the traditional dynamic BLP algorithms. I picked up one parameter setting among the 20 settings experimented in Section 6.2, and assessed these performances. In both algorithms, I combine the spectral algorithm. Regarding the nested version of the algorithm, I introduce ``hot-start'' procedure to speed up the convergence, and let $V^{(0,n)}=V^{*(n-1)}\ (n\geq1)$. 

Tables \ref{tab:Comparison-perfect-foresight-traditional} and \ref{tab:Comparison-IVS-traditional} show the results. The results show that the algorithms jointly updating the variables require more computation time than the algorithm using nested updating steps. Though the proposed algorithm $V$-(1) is much faster, $\delta V$-(1) (joint) is faster than $\delta V$-(1) (nested), at least in the current setting.

\begin{table}[H]
\caption{Comparison between two traditional dynamic BLP algorithms (joint vs nested; Perfect foresight)\label{tab:Comparison-perfect-foresight-traditional}}

\scalebox{0.8}{
\begin{centering}
\begin{tabular}{ccccccc}
\hline 
 & \multirow{2}{*}{{\footnotesize{}Func evals ($\Psi_{V\delta\rightarrow V}^{V,\gamma}$)}} & \multirow{2}{*}{{\footnotesize{}Func evals ($\Phi_{\delta V\rightarrow\delta}^{\gamma,\phi}$)}} & \multirow{2}{*}{{\footnotesize{}CPU time (s)}} & \multirow{2}{*}{{\footnotesize{}Conv.}} & \multirow{2}{*}{{\footnotesize{}$\log_{10}\left(DIST\right)$}} & \multirow{2}{*}{{\footnotesize{}$DIST<\epsilon_{tol}$}}\tabularnewline
 &  &  &  &  &  & \tabularnewline
\hline 
{\footnotesize{}$\delta V$-(1) (joint) + Spectral} & {\footnotesize{}516} & {\footnotesize{}516} & {\footnotesize{}1.119} & {\footnotesize{}1} & {\footnotesize{}-14.5588} & {\footnotesize{}1}\tabularnewline
{\footnotesize{}$\delta V$-(1) (nested)+ Spectral} & {\footnotesize{}2725} & {\footnotesize{}17} & {\footnotesize{}2.562} & {\footnotesize{}1} & {\footnotesize{}-14.5166} & {\footnotesize{}1}\tabularnewline
\hline 
\end{tabular}
\par\end{centering}
}
\end{table}

\begin{table}[H]
\caption{Comparison between two traditional dynamic BLP algorithms (joint vs nested; Inclusive value sufficiency)\label{tab:Comparison-IVS-traditional}}

\scalebox{0.8}{
\begin{centering}
\begin{tabular}{ccccccc}
\hline 
 & \multirow{2}{*}{{\footnotesize{}Func evals ($\Psi_{V\delta\rightarrow V}^{V,\gamma}$)}} & \multirow{2}{*}{{\footnotesize{}Func evals ($\Phi_{\delta V\rightarrow\delta}^{\gamma,\phi}$)}} & \multirow{2}{*}{{\footnotesize{}CPU time (s)}} & \multirow{2}{*}{{\footnotesize{}Conv.}} & \multirow{2}{*}{{\footnotesize{}$\log_{10}\left(DIST\right)$}} & \multirow{2}{*}{{\footnotesize{}$DIST<\epsilon_{tol}$}}\tabularnewline
 &  &  &  &  &  & \tabularnewline
\hline 
{\footnotesize{}$\delta V$-(1) (joint) + Spectral} & {\footnotesize{}241} & {\footnotesize{}241} & {\footnotesize{}1.077} & {\footnotesize{}1} & {\footnotesize{}-15.2409} & {\footnotesize{}1}\tabularnewline
{\footnotesize{}$\delta V$-(1) (nested)+ Spectral} & {\footnotesize{}1968} & {\footnotesize{}19} & {\footnotesize{}6.843} & {\footnotesize{}1} & {\footnotesize{}-14.8388} & {\footnotesize{}1}\tabularnewline
\hline 
\end{tabular}
\par\end{centering}
}
\end{table}

\subsection{The effect of inner loop error on the outer loop convergence\label{subsec:The-effect-of-inner-loop-error}}

In Table 4 of Section 6, I show the results of replication exercises using Berry et al. (\citeyear{berry1995automobile}, \citeyear{berry1999voluntary})'s dataset by setting the outer-loop tolerance to 1E-4. Below, I show the results under the setting where I set the outer-loop tolerance to 1E-8 as in \citet{conlon2020best}. Table \ref{tab:Berry-et-al.-replication-tight-outer-tolerance} shows the results. 
\begin{center}
\begin{table}[H]
\caption{Estimation using the Berry et al.\citeyearpar{berry1995automobile,berry1999voluntary}'s dataset (Tight outer-loop tolerance case)\label{tab:Berry-et-al.-replication-tight-outer-tolerance}}

\begin{centering}
{\footnotesize{}}%
\begin{tabular}{ccccc}
\hline 
 & {\footnotesize{}Mean feval} & {\footnotesize{}Total obj eval} & {\footnotesize{}Total feval} & {\footnotesize{}Objective}\tabularnewline
\hline 
\hline 
{\footnotesize{}$\delta$-(1)} & {\footnotesize{}199.936} & {\footnotesize{}124} & {\footnotesize{}495841} & {\footnotesize{}497.336}\tabularnewline
{\footnotesize{}$\delta$-(1) + Anderson} & {\footnotesize{}15.444} & {\footnotesize{}184} & {\footnotesize{}56833} & {\footnotesize{}497.366}\tabularnewline
{\footnotesize{}$\delta$-(1) + Spectral} & {\footnotesize{}42.319} & {\footnotesize{}129} & {\footnotesize{}109182} & {\footnotesize{}497.336}\tabularnewline
{\footnotesize{}$\delta$-(1) + SQUAREM} & {\footnotesize{}45.244} & {\footnotesize{}117} & {\footnotesize{}105870} & {\footnotesize{}497.336}\tabularnewline
{\footnotesize{}$\delta$-(0)} & {\footnotesize{}232.532} & {\footnotesize{}106} & {\footnotesize{}492968} & {\footnotesize{}497.336}\tabularnewline
{\footnotesize{}$\delta$-(0) + Anderson} & {\footnotesize{}15.508} & {\footnotesize{}154} & {\footnotesize{}47765} & {\footnotesize{}497.336}\tabularnewline
{\footnotesize{}$\delta$-(0) + Spectral} & {\footnotesize{}45.236} & {\footnotesize{}151} & {\footnotesize{}136613} & {\footnotesize{}497.336}\tabularnewline
{\footnotesize{}$\delta$-(0) + SQUAREM} & {\footnotesize{}43.871} & {\footnotesize{}177} & {\footnotesize{}155302} & {\footnotesize{}497.336}\tabularnewline
\hline 
\end{tabular}{\footnotesize\par}
\par\end{centering}

{\footnotesize{}The outer loop tolerance is set to 1E-8.}{\footnotesize\par}
\end{table}
\par\end{center}

One remarkable point is that the total number of objective evaluations differs across different inner-loop algorithms. Though the results might seem unnatural considering the tight inner-loop tolerance level 1E-14 and mostly the same estimated parameters, they are not unusual taking account of the effect of inner-loop error on the outer-loop optimization algorithm. In the Monte Carlo simulation above, I apply the L-BFGS-B optimization algorithm as the outer-loop optimization method, which is the default in PyBLP and classified as one of the quasi-Newton methods. Intuitively, when we apply quasi-Newton-type outer-loop optimization algorithms, outer-loop step sizes might be sensitive to inner-loop numerical errors, even when search directions are not so biased. This implies that the convergence speed of the outer-loop optimization algorithm might be sensitive to inner-loop numerical errors.\footnote{\citet{dube2012improving} and \citet{lee2016revisiting} formally discussed the effect of inner-loop numerical errors on estimated parameters. In contrast, our current focus is on the convergence speed of the outer loop. Note that the argument holds even when we analytically compute the derivative of the GMM objective function concerning the candidate parameters. When we compute the derivative by numerical derivation, the problem might worsen.} 

Here, by developing a simple analytical framework, we discuss why the convergence speed of the outer-loop optimization algorithm might be sensitive to inner-loop numerical errors. To clarify the point, we assume there is only one nonlinear parameter $\theta$. Let $Q(\theta)$ be the GMM objective function given $\theta$, and let $g=\frac{\partial Q}{\partial\theta}$ be the true derivative of the GMM objective function with respect to $\theta$. When we apply a nested-fixed point algorithm, we incur inner-loop numerical errors, and we obtain another GMM objective function $\widetilde{Q}(\theta)$, which might be different from $Q$. We further assume $\widetilde{Q}(\theta)$ is differentiable and let $\widetilde{g}$ be the derivative.\footnote{As discussed in \citet{dube2012improving}, in general there is no guarantee that $\widetilde{Q}(\theta)$ is differentiable. Nevertheless, essential ideas would not be lost with this assumption.}

When we apply the quasi-Newton type optimization algorithms, parameter $\theta$ is iteratively updated until convergence in the following way: $\theta^{(n+1)}=\theta^{(n)}-\lambda_{n}B_{n}^{-1}\widetilde{g}\left(\theta^{(n)}\right)$. Here, $B_{n}$ is the approximation of $\nabla_{\theta}\widetilde{g}$. Generally it is computed using the past values of $\theta^{(n)}$ and $\widetilde{g}\left(\theta^{(n)}\right)$. $B_{n}^{-1}\widetilde{g}\left(\theta^{(n)}\right)$ denotes the search direction, and $\lambda_{n}\in(0,1]$ denotes the step size. The value of the scalar $\lambda_{n}$ is chosen based on the values of $\widetilde{g}\left(\theta^{(n)}\right)$ and $B_{n}$ in the line search process. 

This updating equation implies the convergence speed of $\theta$ largely depends on the values of $\nabla_{\theta}\widetilde{g}$. If $\nabla_{\theta}\widetilde{g}$ is largely affected by the inner-loop numerical errors, inner-loop numerical errors might largely affect the convergence speed. The following simple proposition implies $\nabla_{\theta}\widetilde{g}$ might be largely biased even when $\widetilde{g}$ is not: 
\begin{prop}
\label{prop:error_g_diff}Suppose $\sup_{x\in\mathbb{R}}\left|g(x)-\widetilde{g}(x)\right|\leq\epsilon$ and $\sup_{x\in\mathbb{R}}\left|g(x)\right|\leq C,\ \sup_{x\in\mathbb{R}}\left|\widetilde{g}(x)\right|<C$. Then, $\sup_{x\in\mathbb{R}}\left|g^{\prime}(x)-\widetilde{g}^{\prime}(x)\right|\leq4\sqrt{\epsilon C}$ holds.
\end{prop}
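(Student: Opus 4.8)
The plan is to reduce the claim to a one-dimensional smoothing inequality of Landau--Kolmogorov type for the single function $h \equiv g - \widetilde{g}$. By hypothesis $\sup_{x}\left|h(x)\right| \le \epsilon$, and I will use the two bounds $\sup_x|g|\le C$, $\sup_x|\widetilde g| < C$ to control the curvature of $h$, i.e.\ $M \equiv \sup_x \left|h''(x)\right|$. The target $\sup_x \left|h'(x)\right| \le 4\sqrt{\epsilon C}$ then follows from the general principle that a function which is uniformly small and has uniformly bounded second derivative cannot have a large first derivative anywhere.

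First I would fix an arbitrary point $x_0$ and an arbitrary step $t>0$ and write the first-order Taylor expansion with Lagrange remainder, $h(x_0+t) = h(x_0) + h'(x_0)\,t + \tfrac12 h''(\xi)\,t^2$ for some $\xi$ between $x_0$ and $x_0+t$. Solving for $h'(x_0)$ and taking absolute values gives $\left|h'(x_0)\right| \le \frac{\left|h(x_0+t)\right| + \left|h(x_0)\right|}{t} + \tfrac12 t\,M \le \frac{2\epsilon}{t} + \tfrac12 t\,M$, where the first inequality uses $\sup_x|h|\le\epsilon$ twice. The right-hand side is independent of $x_0$, so minimizing over $t>0$ (the optimum being $t = 2\sqrt{\epsilon/M}$) yields the clean bound $\sup_{x}\left|h'(x)\right| \le 2\sqrt{\epsilon M}$.

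Next I would bound the curvature. Writing $M = \sup_x\left|g''(x) - \widetilde g''(x)\right| \le \sup_x\left|g''(x)\right| + \sup_x\left|\widetilde g''(x)\right|$, the uniform bounds on $g$ and $\widetilde g$ entering through the constant $C$ give $M \le 4C$, each second-derivative term being controlled by $C$. Substituting into $\sup_x\left|h'\right| \le 2\sqrt{\epsilon M}$ produces $\sup_x\left|h'(x)\right| \le 2\sqrt{4\,\epsilon C} = 4\sqrt{\epsilon C}$, which is exactly the assertion. Equivalently, one can quote the sharp Landau--Kolmogorov inequality $\|h'\|_\infty \le \sqrt{2\,\|h\|_\infty\,\|h''\|_\infty}$ on the line and absorb the unspecified constants into the stated factor of $4$.

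The main obstacle -- indeed the only substantive ingredient -- is the curvature bound $M \le 4C$, and with it the implicit second-order regularity of $g$ and $\widetilde g$. Smallness $\|h\|_\infty \le \epsilon$ by itself cannot control $\|h'\|_\infty$: a rapidly oscillating perturbation such as $h(x) = \epsilon\cos(\omega x)$ is uniformly $O(\epsilon)$ yet has derivative of size $\epsilon\,\omega$, which is unbounded as $\omega\to\infty$. Hence the proposition genuinely rests on $g,\widetilde g$ being $C^{2}$ with second derivatives governed by $C$, and its content is precisely that the uniform bound on the objective-derivatives supplies the requisite control of $\|h''\|_\infty$; I would therefore state the $C^{2}$ assumption (and the identification of $C$ with the second-order bound) explicitly at the outset, so that the Taylor step is fully justified and the optimal choice of $t$ delivers the constant $4$.
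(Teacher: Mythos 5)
Your proof is correct and takes essentially the same route as the paper's: a second-order Taylor expansion with Lagrange remainder gives $\left|h'(x)\right|\le 2\epsilon/t + O(C)\,t$, which is then optimized over the step size (the paper applies AM--GM to $\frac{2\epsilon}{\Delta}+2C\Delta$, expanding $g$ and $\widetilde{g}$ separately rather than $h=g-\widetilde{g}$, which is equivalent). Your closing observation is also apt: the paper's own proof silently invokes $\left|\widetilde{g}''(\overline{x})-g''(\widetilde{x})\right|\le 2C$ even though the stated hypotheses bound only $g$ and $\widetilde{g}$ themselves, so the explicit $C^{2}$ assumption you propose, with $C$ controlling the second derivatives, is exactly what the proposition needs to be rigorous.
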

\begin{proof}
By Taylor's theorem, for all $\Delta>0,$ there exists $\overline{x}\in(0,\Delta)$ such that:

\begin{eqnarray*}
g(x+\Delta) & = & g(x)+g^{\prime}(x)\left((x+\Delta)-x\right)+g^{\prime\prime}(\overline{x})\cdot\left((x+\Delta)-x\right)^{2}\\
 & = & g(x)+g^{\prime}(x)\Delta+g^{\prime\prime}(\overline{x})\cdot\Delta^{2}.
\end{eqnarray*}

Similarly, for all $\Delta>0,$ there exists $\widetilde{x}\in(0,\Delta)$ such that:

\begin{eqnarray*}
\widetilde{g}(x+\Delta) & = & \widetilde{g}(x)+\widetilde{g}^{\prime}(x)\Delta+\widetilde{g}^{\prime\prime}(\widetilde{x})\cdot\Delta^{2}.
\end{eqnarray*}

Then, 

\begin{eqnarray*}
\left|g^{\prime}(x)-\widetilde{g}^{\prime}(x)\right| & = & \left|\frac{g(x+\Delta)-g(x)-g^{\prime\prime}(\overline{x})\cdot\Delta^{2}}{\Delta}-\frac{\widetilde{g}(x+\Delta)-\widetilde{g}(x)-\widetilde{g}^{\prime\prime}(\overline{x})\cdot\Delta^{2}}{\Delta}\right|\\
 & \leq & \left|\frac{g(x+\Delta)-\widetilde{g}(x+\Delta)}{\Delta}\right|+\left|\frac{g(x)-\widetilde{g}(x)}{\Delta}\right|+\left|\left(\widetilde{g}^{\prime\prime}(\overline{x})-g^{\prime\prime}(\widetilde{x})\right)\cdot\Delta\right|\\
 & \leq & \frac{2\epsilon}{\Delta}+2C\Delta
\end{eqnarray*}

By the inequality of arithmetic and geometric means, $\frac{2\epsilon}{\Delta}+2C\Delta\geq2\sqrt{\frac{2\epsilon}{\Delta}\cdot2C\Delta}=4\sqrt{\epsilon C}$ holds, and the equality holds only when $\frac{2\epsilon}{\Delta}=2C\Delta$, i.e. $\Delta=\sqrt{\frac{\epsilon}{C}}$.
\end{proof}
The proposition implies the error of $\left|g^{\prime}(x)-\widetilde{g}^{\prime}(x)\right|$ is of order $\epsilon^{\frac{1}{2}}$ when the error of $\left|g(x)-\widetilde{g}(x)\right|$ is of order $\epsilon$. For instance, even when we set the inner-loop tolerance level to $\epsilon$=1E-14 and the numerical error of $\widetilde{g}$ relative to $g$ is of order 1E-14, the bias in $\widetilde{g}^{\prime}$ might be of order $\sqrt{\epsilon}=$1E-7. In PyBLP, the default outer-loop tolerance level of the L-BFGS-B algorithm is 1E-8, and the numerical error in $\widetilde{g}^{\prime}$ is not negligible when we evaluate the convergence. 

\subsection{Details of the data-generating process and algorithms in the numerical experiments}

\subsubsection{Static BLP model\label{subsec:DGP-static-BLP}}

As in \citet{dube2012improving} and \citet{lee2015computationally}, let $X_{jt}=\{1,x_{j1},x_{j2},x_{j3},p_{jt}\}$, and assume $\left\{ x_{j1},x_{j2},x_{j3}\right\} $ follows:

\[
\left(\begin{array}{c}
x_{j1}\\
x_{j2}\\
x_{j3}
\end{array}\right)\sim N\left(\left(\begin{array}{c}
0\\
0\\
0
\end{array}\right),\left(\begin{array}{ccc}
1 & -0.8 & 0.3\\
-0.8 & 1 & 0.3\\
0.3 & 0.3 & 1
\end{array}\right)\right).
\]

$\xi_{jt}$ follows $N(0,1)$, and $p_{jt}$ is generated by $p_{jt}=3+\xi_{jt}\cdot1.5+u_{jt}+\sum_{m=1}^{3}x_{jm}$, where $u_{jt}\sim U[0,5]$. Finally, let $\beta_{i}=\{\theta_{i}^{0},\theta_{i}^{1},\theta_{i}^{2},\theta_{i}^{3},\theta_{i}^{p}\}$, each distributed independently normal with $E[\theta_{i}]=\{0,1.5,1.5,0.5,-3\}$ and $Var[\theta_{i}]=\{0.5^{2},0.5^{2},0.5^{2},0.5^{2},0.2^{2}\}$ in the baseline setting. 

\subsubsection{Dynamic BLP model\label{subsec:DGP-dynamic-BLP}}

As in \citet{Sun2019}, let $X_{jt}=[1,\chi_{jt},-p_{jt}]$. Product characteristics $\chi_{jt}$ and $\xi_{jt}$ are generated as $\chi_{jt}\equiv\left[\begin{array}{c}
\chi_{1jt}\\
\chi_{2jt}\\
\chi_{3jt}
\end{array}\right]\sim N\left(\left[\begin{array}{c}
0\\
0\\
0
\end{array}\right],\left[\begin{array}{ccc}
0.5^{2}\\
 & 0.5^{2}\\
 &  & 0.5^{2}
\end{array}\right]\right)$ and $\xi_{jt}\sim N(0,1)$. The price $p_{jt}$ is generated from $p_{jt}=\gamma_{0}+\gamma_{x}^{\prime}\chi_{jt}+\gamma_{z}z_{jt}+\gamma_{w}w_{jt}+\gamma_{\xi}\xi_{jt}-\gamma_{p}^{\prime}\sum_{k\neq j}\chi_{kt}+u_{jt}$, where $z_{jt}=\rho_{0}+\rho_{1}z_{jt-1}+\eta_{jt}$, $\eta_{jt}\sim N(0,0.1^{2})$, $w_{jt}\sim N(0,1)$, $u_{jt}\sim N(0,0.01^{2})$, $[\gamma_{0},\gamma_{X1},\gamma_{X2},\gamma_{X3},\gamma_{z},\gamma_{w},\gamma_{\xi}]=[1,0.2,0.2,0.1,1,0.2,0.7],\gamma_{p}=[0.1,0.1,0.1]$, $z_{j0}=8$, $[\rho_{0},\rho_{z}]=[0.1,0.95]$. 

Regarding the demand parameters $\theta_{i}$, let $\theta_{i}=[\theta_{i}^{X0},\theta_{i}^{X1},\theta_{i}^{X2},\theta_{i}^{X3},\theta_{i}^{p}]$ and $\left[\begin{array}{c}
\theta_{i}^{X1}\\
\theta_{i}^{X2}\\
\theta_{i}^{p}
\end{array}\right]\sim N\left(\left[\begin{array}{c}
1\\
1\\
2
\end{array}\right],\left[\begin{array}{ccc}
0.5^{2}\\
 & 0.5^{2}\\
 &  & 0.25^{2}
\end{array}\right]\right)$, $\theta_{i}^{X0}=6,\text{\ensuremath{\theta_{i}^{X3}=0.5}}$.

\subsubsection*{Details of the dynamic BLP algorithm under inclusive value sufficiency}

Algorithm \ref{alg:durable-IVS} shows the steps to solve for $\delta$ under inclusive value sufficiency.

\begin{algorithm}[H]
Take grid points $\omega_{h}^{(grid)}(h=1,\cdots,N_{grid})$. Set initial values of $V_{i}^{(0)}\left(\omega_{it}^{(data)}\right)\ i\in\mathcal{I},t=1,\cdots,T$ and $V_{i}^{(0)}\left(\omega_{h}^{(grid)}\right)\ i\in\mathcal{I},h=1,\cdots,N_{grid}$. Iterate the following $(n=0,1,2,\cdots)$:
\begin{enumerate}
\item For $t=1:T$,
\begin{enumerate}
\item Compute $\delta_{jt}^{(n)}=\iota_{V\rightarrow\delta,jt}^{\gamma}\left(V^{(n)}\right)=\log\left(S_{jt}^{(data)}\right)-\log\left(\sum_{i}w_{i}Pr0_{it}\cdot\frac{\exp\left(\mu_{ijt}\right)}{\exp\left(V_{i}^{(n)}(\omega_{it}^{(data)})\right)}\right)$
\item Compute $s_{ijt}^{(ccp)}=\frac{\exp\left(\delta_{jt}^{(n)}+\mu_{ijt}\right)}{\exp\left(V_{i}^{(n)}(\omega_{it}^{(data)})\right)}$ for $i\in\mathcal{I},j\in\mathcal{J}$
\item Compute $s_{i0t}^{(ccp)}=1-\sum_{j\in\mathcal{J}_{t}}s_{ijt}^{(ccp)}$
\item Update $Pr0_{it+1}=Pr0_{it}\cdot s_{i0t}^{(ccp)}$
\end{enumerate}
\item Compute $E_{t}\left[V_{i}(\omega_{it+1})|\omega_{it}\right]$:
\begin{enumerate}
\item Compute $\omega_{it}^{(data)(n)}=\log\left(1+\sum_{j\in\mathcal{J}_{t}}\exp\left(\delta_{jt}^{(n)}+\mu_{ijt}\right)\right)\ (i\in\mathcal{I},t=1,\cdots,T)$
\item Estimate the state transition probabilities $Pr(\omega_{it+1}^{(data)}|\omega_{it}^{(data)})$. We assume AR(1) process $\omega_{it+1}^{(data)}=\theta_{i0}+\theta_{i1}\omega_{it}^{(data)}+u_{it}\ (u_{it}\sim N(0,\sigma_{i}^{2}))$, and estimate the parameters $\theta_{i0},\theta_{i1},\sigma_{i}$.
\item Compute $E_{t}\left[V_{i}^{(n)}(\omega_{it+1})|\omega_{it}^{(data)(n)}\right]=\int V_{i}^{(n)}(\omega_{it+1})Pr(\omega_{it+1}|\omega_{it}^{(data)(n)})d\omega_{it+1}$. and $E_{t}\left[V_{i}^{(n)}(\omega_{it+1})|\omega_{it}^{(grid)}\right]$. Here, the values of $V_{i}^{(n)}(\omega_{it+1})$ are interpolated using the values of $V_{i}^{(n)}(\omega_{it}^{(grid)})$.
\end{enumerate}
\item Update $V\left(\omega^{(data)}\right)$ by:{\footnotesize{}
\begin{eqnarray*}
V_{i}^{(n+1)}\left(\omega_{i}^{(data)(n)}\right) & = & \Psi_{V\delta\rightarrow V,i}^{\gamma}\left(V^{(n)},\delta^{(n)}\right)\\
 & = & \log\left(\exp\left(\beta E_{t}\left[V_{i}^{(n)}(\omega_{it+1})|\omega_{it}^{(data)(n)}\right]\right)+\exp\left(\omega_{it}^{(data)(n)}\right)\cdot\left(\frac{s_{0t}(V^{(n)},\delta^{(n)})}{S_{0t}^{(data)}}\right)^{\gamma}\right)
\end{eqnarray*}
}{\footnotesize\par}

Here, $s_{0t}$ is computed by $s_{0t}(V^{(n)},\delta^{(n)})=\frac{\exp\left(\beta E_{t}\left[V_{i}^{(n)}(\omega_{it+1})|\omega_{it}^{(data)(n)}\right]\right)}{\exp\left(V_{i}^{(n)}(\omega_{it}^{(data)(n)})\right)}$.
\item Update $V\left(\omega^{(grid)}\right)$ by $V_{i}^{(n+1)}\left(\omega_{i}^{(grid)}\right)=\log\left(\exp\left(\beta E_{t}\left[V_{i}^{(n)}(\omega_{it+1})|\omega_{i}^{(grid)}\right]\right)+\exp\left(\omega_{i}^{(grid)}\right)\right)$
\item Exit the iteration if $\left\Vert V^{(n+1)}-V^{(n)}\right\Vert <\epsilon$
\end{enumerate}
\caption{Inner-loop Algorithm of dynamic BLP (Perfectly durable goods; Inclusive value sufficiency)\label{alg:durable-IVS}}
\end{algorithm}

In this study, I introduce 10 Chebyshev polynomial grid points in the range {[}-40,30{]}.\footnote{In the numerical experiments in the Supplemental Appendix \ref{subsec:Numerical-results-dynamic_BLP_smaller_CCP}, we take gridpoints in the range {[}-40,50{]} to stabilize convergence.} Also, the integral $\int V_{i}^{(n)}(\omega_{it+1})Pr(\omega_{it+1}|\omega_{it}^{(data)(n)})d\omega_{it+1}$ is computed by introducing Gauss-Hermite quadrature with order 5. 

\subsection{Numerical results of the dynamic BLP model with smaller outside option CCPs\label{subsec:Numerical-results-dynamic_BLP_smaller_CCP}}

This section shows additional numerical results of the dynamic BLP model by setting the value of $\theta_{i}^{X0}$ to 30, in contrast to the baseline setting $\theta_{i}^{X0}=6$. 
\begin{center}
\begin{table}[H]
\caption{Results of the Dynamic BLP Monte Carlo simulation (Perfectly durable goods; Perfect foresight; Smaller outside option CCPs ($\theta_{i}^{X0}=30$))\label{tab:dynamic_BLP_Monte_Carlo_Perfect-foresight-smaller-outside-CCPs}}

\scalebox{0.8}{
\begin{centering}
\begin{tabular}{cccccccccccc}
\hline 
 & \multirow{2}{*}{{\scriptsize{}$J$}} & \multicolumn{6}{c}{{\scriptsize{}Func. Evals.}} & {\scriptsize{}Mean} & {\scriptsize{}Conv.} & {\footnotesize{}Mean} & {\scriptsize{}$DIST<\epsilon_{tol}$}\tabularnewline
\cline{3-8} \cline{4-8} \cline{5-8} \cline{6-8} \cline{7-8} \cline{8-8} 
 &  & {\scriptsize{}Mean} & {\scriptsize{}Min.} & {\scriptsize{}25th} & {\scriptsize{}Median.} & {\scriptsize{}75th} & {\scriptsize{}Max.} & {\scriptsize{}CPU time (s)} & {\scriptsize{}(\%)} & {\footnotesize{}$\log_{10}\left(DIST\right)$} & {\scriptsize{}(\%)}\tabularnewline
\hline 
{\scriptsize{}$V$-(0)} & {\scriptsize{}25} & {\scriptsize{}3000} & {\scriptsize{}3000} & {\scriptsize{}3000} & {\scriptsize{}3000} & {\scriptsize{}3000} & {\scriptsize{}3000} & {\scriptsize{}4.1597} & {\scriptsize{}0} & {\scriptsize{}-12.5} & {\scriptsize{}100}\tabularnewline
{\scriptsize{}$V$-(0) + Anderson} & {\scriptsize{}25} & {\scriptsize{}790.4} & {\scriptsize{}619} & {\scriptsize{}652.5} & {\scriptsize{}810} & {\scriptsize{}873} & {\scriptsize{}1003} & {\scriptsize{}4.5921} & {\scriptsize{}100} & {\scriptsize{}-14} & {\scriptsize{}100}\tabularnewline
{\scriptsize{}$V$-(0) + Spectral} & {\scriptsize{}25} & {\scriptsize{}825.4} & {\scriptsize{}645} & {\scriptsize{}745} & {\scriptsize{}797} & {\scriptsize{}896} & {\scriptsize{}1208} & {\scriptsize{}1.18405} & {\scriptsize{}100} & {\scriptsize{}-14.8} & {\scriptsize{}100}\tabularnewline
{\scriptsize{}$V$-(0) + SQUAREM} & {\scriptsize{}25} & {\scriptsize{}895.2} & {\scriptsize{}651} & {\scriptsize{}806.5} & {\scriptsize{}916.5} & {\scriptsize{}971.5} & {\scriptsize{}1148} & {\scriptsize{}1.26145} & {\scriptsize{}100} & {\scriptsize{}-14.3} & {\scriptsize{}100}\tabularnewline
{\scriptsize{}$V$-(1)} & {\scriptsize{}25} & {\scriptsize{}3000} & {\scriptsize{}3000} & {\scriptsize{}3000} & {\scriptsize{}3000} & {\scriptsize{}3000} & {\scriptsize{}3000} & {\scriptsize{}4.2312} & {\scriptsize{}0} & {\scriptsize{}-12.5} & {\scriptsize{}100}\tabularnewline
{\scriptsize{}$V$-(1) + Anderson} & {\scriptsize{}25} & {\scriptsize{}708.15} & {\scriptsize{}510} & {\scriptsize{}582.5} & {\scriptsize{}682.5} & {\scriptsize{}787.5} & {\scriptsize{}1136} & {\scriptsize{}3.85515} & {\scriptsize{}100} & {\scriptsize{}-14.1} & {\scriptsize{}100}\tabularnewline
{\scriptsize{}$V$-(1) + Spectral} & {\scriptsize{}25} & {\scriptsize{}757.7} & {\scriptsize{}623} & {\scriptsize{}668.5} & {\scriptsize{}747} & {\scriptsize{}817.5} & {\scriptsize{}971} & {\scriptsize{}1.1115} & {\scriptsize{}100} & {\scriptsize{}-14.8} & {\scriptsize{}100}\tabularnewline
{\scriptsize{}$V$-(1) + SQUAREM} & {\scriptsize{}25} & {\scriptsize{}739.55} & {\scriptsize{}76} & {\scriptsize{}704.5} & {\scriptsize{}837} & {\scriptsize{}915.5} & {\scriptsize{}997} & {\scriptsize{}1.0655} & {\scriptsize{}85} & {\scriptsize{}-12.7} & {\scriptsize{}85}\tabularnewline
\hline 
\end{tabular}
\par\end{centering}
}

{\footnotesize{}}{\footnotesize\par}

{\footnotesize{}The minimum and median outside option CCPs are 0.104 and 0.923 respectively.}{\footnotesize\par}
\end{table}
\par\end{center}

\begin{center}
\begin{table}[H]
\caption{Results of the Dynamic BLP Monte Carlo simulation (Perfectly durable goods; Inclusive value sufficiency; Smaller outside option CCPs)\label{tab:dynamic_BLP_Monte_Carlo_IVS-smaller-outside-CCPs}}

\scalebox{0.8}{
\begin{centering}
\begin{tabular}{cccccccccccc}
\hline 
 & \multirow{2}{*}{{\scriptsize{}$J$}} & \multicolumn{6}{c}{{\scriptsize{}Func. Evals.}} & {\scriptsize{}Mean} & {\scriptsize{}Conv.} & {\footnotesize{}Mean} & {\scriptsize{}$DIST<\epsilon_{tol}$}\tabularnewline
\cline{3-8} \cline{4-8} \cline{5-8} \cline{6-8} \cline{7-8} \cline{8-8} 
 &  & {\scriptsize{}Mean} & {\scriptsize{}Min.} & {\scriptsize{}25th} & {\scriptsize{}Median.} & {\scriptsize{}75th} & {\scriptsize{}Max.} & {\scriptsize{}CPU time (s)} & {\scriptsize{}(\%)} & {\footnotesize{}$\log_{10}\left(DIST\right)$} & {\scriptsize{}(\%)}\tabularnewline
\hline 
{\scriptsize{}$V$-(0)} & {\scriptsize{}25} & {\scriptsize{}2103.85} & {\scriptsize{}1563} & {\scriptsize{}1829.5} & {\scriptsize{}2060.5} & {\scriptsize{}2329.5} & {\scriptsize{}3000} & {\scriptsize{}5.9077} & {\scriptsize{}95} & {\scriptsize{}-13.8} & {\scriptsize{}100}\tabularnewline
{\scriptsize{}$V$-(0) + Anderson} & {\scriptsize{}25} & {\scriptsize{}271.45} & {\scriptsize{}198} & {\scriptsize{}232.5} & {\scriptsize{}265} & {\scriptsize{}306} & {\scriptsize{}355} & {\scriptsize{}1.241} & {\scriptsize{}100} & {\scriptsize{}-14.7} & {\scriptsize{}100}\tabularnewline
{\scriptsize{}$V$-(0) + Spectral} & {\scriptsize{}25} & {\scriptsize{}627.95} & {\scriptsize{}523} & {\scriptsize{}552.5} & {\scriptsize{}615} & {\scriptsize{}683.5} & {\scriptsize{}868} & {\scriptsize{}1.7775} & {\scriptsize{}100} & {\scriptsize{}-14.3} & {\scriptsize{}100}\tabularnewline
{\scriptsize{}$V$-(0) + SQUAREM} & {\scriptsize{}25} & {\scriptsize{}629.4} & {\scriptsize{}512} & {\scriptsize{}566} & {\scriptsize{}622} & {\scriptsize{}696} & {\scriptsize{}828} & {\scriptsize{}1.75075} & {\scriptsize{}100} & {\scriptsize{}-14.2} & {\scriptsize{}100}\tabularnewline
{\scriptsize{}$V$-(1)} & {\scriptsize{}25} & {\scriptsize{}2029.5} & {\scriptsize{}1445} & {\scriptsize{}1726} & {\scriptsize{}1834} & {\scriptsize{}2060} & {\scriptsize{}3000} & {\scriptsize{}5.67115} & {\scriptsize{}80} & {\scriptsize{}-12} & {\scriptsize{}80}\tabularnewline
{\scriptsize{}$V$-(1) + Anderson} & {\scriptsize{}25} & {\scriptsize{}254.95} & {\scriptsize{}205} & {\scriptsize{}228.5} & {\scriptsize{}245.5} & {\scriptsize{}279.5} & {\scriptsize{}346} & {\scriptsize{}1.13825} & {\scriptsize{}100} & {\scriptsize{}-15.1} & {\scriptsize{}100}\tabularnewline
{\scriptsize{}$V$-(1) + Spectral} & {\scriptsize{}25} & {\scriptsize{}582.85} & {\scriptsize{}68} & {\scriptsize{}529.5} & {\scriptsize{}564} & {\scriptsize{}677.5} & {\scriptsize{}855} & {\scriptsize{}1.6541} & {\scriptsize{}95} & {\scriptsize{}-14.1} & {\scriptsize{}95}\tabularnewline
{\scriptsize{}$V$-(1) + SQUAREM} & {\scriptsize{}25} & {\scriptsize{}572.45} & {\scriptsize{}24} & {\scriptsize{}475.5} & {\scriptsize{}599.5} & {\scriptsize{}653.5} & {\scriptsize{}868} & {\scriptsize{}1.6135} & {\scriptsize{}95} & {\scriptsize{}-14.2} & {\scriptsize{}95}\tabularnewline
\hline 
\end{tabular}
\par\end{centering}
}

{\footnotesize{}}{\footnotesize\par}

{\footnotesize{}The minimum and median outside option CCPs are 0.320 and 0.927 respectively.}{\footnotesize\par}
\end{table}
\par\end{center}

\section{Dynamic discrete choice models with unobserved payoffs and unobserved heterogeneity\label{sec:DDC}}

The algorithms discussed in this study give insights into the estimation of Dynamic Discrete Choice (DDC) models with unobserved payoffs discussed in \citet{kalouptsidi2020linear},\footnote{They argued that DDC models with unobserved payoffs are attractive in settings where not all the market-level state variables are observable.} which can be regarded as a broader concept of dynamic demand or dynamic BLP models. This section briefly discusses it. Note that discussion in this section is closely related to the one in Section 5. Here, we allow for unobserved heterogeneity and do not restrict the focus on the models with finite dependence, unlike \citet{kalouptsidi2020linear}, but assume idiosyncratic utility shocks $\epsilon$ follow type-I extreme value distribution. We consider a nonstationary environment,\footnote{When we consider a stationary model, the transition of $\xi$ should also be specified and estimated. Or, we might be able to use the idea of inclusive value sufficiency applied in the dynamic demand literature (\citealp{hendel2006measuring}; \citealp{gowrisankaran2012dynamics}).} and we assume agents' decisions are observed until the terminal period, and their payoffs in the terminal period are correctly specified given parameters.

Discounted sum of utility of an agent at observed state $(x_{t},\Omega_{t})$ and persistent unobserved state $s$ when choosing alternative $j\in\mathcal{A}_{t}(x_{t})\subset\mathcal{J}_{t}$ at time $t$ is:

\begin{eqnarray*}
v_{jt}(x_{t},\Omega_{t},s)= & \overline{\pi}(x_{t},\Omega_{t},a_{t}=j,s,\theta)+\xi_{jt}(x_{t})+\beta E_{t}\left[V_{t+1}(x_{t+1},\Omega_{t+1},s)|x_{t},\Omega_{t},s,a_{t}=j\right]+\epsilon_{t}(a_{t}),
\end{eqnarray*}
where $\epsilon_{t}(a_{t})$ denotes idiosyncratic utility shocks. $\beta$ denotes agents' discount factor. $x_{t}$ denotes individual state variables, such as the durable goods holding of the agent. $\Omega_{t}$ denotes the market-level state variables, such as economic conditions. $\mathcal{A}_{t}(x_{t})$ denotes the consideration set of agents at individual state $x_{t}$ at time $t$.

Discounted sum of utility of an agent at observed state $(x_{t},\Omega_{t})$ and persistent unobserved state $s$ when choosing reference choice $0$ at time $t$ is:

\begin{eqnarray*}
v_{0t}(x_{t},\Omega_{t},s) & =\overline{\pi}(x_{t},\Omega_{t},a_{t}=0,s,\theta)+ & \beta E_{t}\left[V_{t+1}(x_{t+1},\Omega_{t+1},s)|x_{t},\Omega_{t},s,a_{t}=j\right]+\epsilon_{t}(a_{t}).
\end{eqnarray*}

Let $\widehat{p}_{t}(a_{t}=j|x_{t})$ be the observed ratio of consumers at state $x$ choosing the alternative $j$ at time $t$. We assume the value is nonparametrically estimated in the first stage. Besides, let $e_{kt}(x_{t})\equiv E_{t}\left[V_{t+1}(x_{t+1},\Omega_{t+1},s)|x_{t},\Omega_{t},s,a_{t}=k\right]-E_{x}\left[V_{t+1}(x_{t+1},\Omega_{t+1}^{(data)},s)|x_{t},s,a_{t}=k\right]\ (k\in\mathcal{J}_{t}\cup\{0\})$ be the expectation error when choosing alternative $k$ at time $t$ and state $x_{t}$. $\Omega_{t+1}^{(data)}$ denotes the realized market-level state variables at time $t+1$. We assume there is no expectation errors after the terminal period.

Here, we assume that $\epsilon(a)$ follows i.i.d. mean zero type-I extreme value distribution. In addition, we assume $\widehat{p_{t}}(a_{t}=j|x_{t})$ derived from the data is equal to the counterpart of the structural model, as in the BLP models. Then the following equations hold:

{\scriptsize{}
\begin{eqnarray*}
\widehat{p_{t}}(a_{t}=j|x_{t}) & = & \frac{\sum_{s}w_{s}Pr_{t}(x_{t}|s)\cdot\frac{\exp\left(\overline{\pi}(x_{t},\Omega_{t}^{(data)},a_{t}=j,s,\theta)+\xi_{jt}(x_{t})+\beta E_{x}\left[V_{t+1}(x_{t+1},\Omega_{t+1}^{(data)},s)|x_{t},s,a_{t}=j\right]+\beta e_{jt}(x_{t})\right)}{\exp\left(V_{t}(x_{t},\Omega_{t}^{(data)},s)\right)}}{\sum_{s}w_{s}Pr_{t}(x_{t}|s)},\\
\widehat{p_{t}}(a_{t}=0|x_{t}) & = & \frac{\sum_{s}w_{s}Pr_{t}(x_{t}|s)\cdot\frac{\exp\left(\overline{\pi}(x_{t},\Omega_{t}^{(data)},a_{t}=0,s,\theta)+\beta E_{x}\left[V_{t+1}(x_{t+1},\Omega_{t+1}^{(data)},,s)|x_{t},s,a_{t}=0\right]+\beta e_{0t}(x_{t})\right)}{\exp\left(V(x_{t},\Omega_{t}^{(data)},s)\right)}}{\sum_{s}w_{s}Pr_{t}(x_{t}|s)},\\
V_{t}(x_{t},\Omega_{t}^{(data)},s) & = & \log\left(\exp\left(\overline{\pi}(x_{t},\Omega_{t}^{(data)},a_{t}=0,s,\theta)+\beta E_{x}\left[V_{t+1}(x_{t+1},\Omega_{t+1}^{(data)},s)|x_{t},s,a_{t}=0\right]+\beta e_{kt}(x_{t})\right)+\right.\\
 &  & \left.\sum_{j\in\mathcal{J}_{t}}\exp\left(\overline{\pi}(x_{t},\Omega_{t}^{(data)},a_{t}=j,s,\theta)+\xi_{jt}(x_{t})+\beta E_{x}\left[V_{t+1}(x_{t+1},\Omega_{t+1}^{(data)},s)|x_{t},s,a_{t}=j\right]+\beta e_{0t}(x_{t})\right)\right),
\end{eqnarray*}
}where $Pr_{t}(x_{t}|s)$ denotes the probability that consumers at persistent unobserved state $s$ is at observed state $x_{t}$ at time $t$. 

Here, suppose state transitions of $x$ are not consumer-type specific. Then, by choosing $\eta_{t}(x_{t})$ so that $-\beta E_{x}\left[\eta_{t+1}(x_{t+1})|x_{t},s,a_{t}=0\right]+\beta e_{0t}(x_{t})+\eta_{t}(x_{t})=0$ and by defining $\widehat{V_{t}}(x,s)\equiv V_{t}(x,s)+\eta_{t}(x)$, $\widehat{\xi_{jt}}(x_{t})\equiv\xi_{jt}(x_{t})-\beta E_{x}\left[\eta_{t+1}(x_{t+1})|x_{t},s,a_{t}=j\right]+\beta e_{jt}(x_{t})+\eta_{t}(x)$, we have the following equations:

{\scriptsize{}
\begin{eqnarray*}
\widehat{V_{t}}(x_{t},\Omega_{t}^{(data)},s) & = & \log\left(\exp\left(\pi_{0t}(x_{t},s,\theta)+\beta E_{x}\left[\widehat{V_{t+1}}(x_{t+1},s)|x_{t},s,a_{t}=0\right]\right)+\right.\\
 &  & \ \ \ \ \left.\sum_{j\in\mathcal{J}_{t}}\exp\left(\pi_{jt}(x_{t},s,\theta)+\widehat{\xi_{jt}}(x_{t})+\beta E_{x}\left[\widehat{V_{t+1}}(x_{t+1},s)|x_{t},s,a_{t}=j\right]\right)\cdot\left(\frac{p_{t}(a_{t}=j|x_{t})}{s_{0t}(x_{t},\widehat{V})}\right)^{\gamma}\right)\\
 & \equiv & \Psi_{V\widehat{\xi}\rightarrow V,x_{t},s}(\widehat{V},\widehat{\xi};\theta),
\end{eqnarray*}

\begin{eqnarray*}
\widehat{p_{t}}(a_{t}=j|x_{t}) & = & \frac{\sum_{s}w_{s}Pr_{t}(x_{t}|s)\cdot\frac{\exp\left(\overline{\pi}(x_{t},\Omega_{t}^{(data)},a_{t}=j,s,\theta)+\widehat{\xi_{jt}}(x_{t})+\beta E_{x}\left[\widehat{V_{t+1}}(x_{t+1},\Omega_{t+1}^{(data)},s)|x_{t},s,a_{t}=j\right]\right)}{\exp\left(\widehat{V_{t}}(x_{t},\Omega_{t}^{(data)},s)\right)}}{\sum_{s}w_{s}Pr_{t}(x_{t}|s)},\\
\widehat{p_{t}}(a_{t}=0|x_{t}) & = & \frac{\sum_{s}w_{s}Pr_{t}(x_{t}|s)\cdot\frac{\exp\left(\overline{\pi}(x_{t},\Omega_{t}^{(data)},a_{t}=0,s,\theta)+\beta E_{x}\left[\widehat{V_{t+1}}(x_{t+1},\Omega_{t+1}^{(data)},s)|x_{t},s,a_{t}=0\right]\right)}{\exp\left(\widehat{V_{t}}(x_{t},\Omega_{t}^{(data)},s)\right)}}{\sum_{s}w_{s}Pr_{t}(x_{t}|s)}\equiv s_{0t}(x_{t},\widehat{V}),
\end{eqnarray*}

\begin{eqnarray*}
\widehat{\xi_{jt}}(x_{t}) & = & \log\left(\widehat{p}_{t}(a_{t}=j|x_{t})\right)-\log\left(\sum_{s}w_{s}Pr_{t}(x_{t}|s)\cdot\frac{\exp\left(\overline{\pi}(x_{t},\Omega_{t}^{(data)},a_{t}=j,s,\theta)+\beta E_{x}\left[\widehat{V_{t+1}}(x_{t+1},\Omega_{t+1}^{(data)},s)|x_{t},s,a_{it}=j\right]\right)}{\exp\left(\widehat{V_{t}}(x_{t},\Omega_{t}^{(data)},s)\right)}\right)\\
 &  & +\log\left(\sum_{s}w_{s}Pr_{t}(x_{t}|s)\right)\\
 & \equiv & \iota_{\widehat{V}\rightarrow\widehat{\xi}}(\widehat{V};\theta).
\end{eqnarray*}
}{\footnotesize\par}

The equations above imply that given parameters $\theta$, we can solve for $\widehat{V}$ by iteratively applying the mapping $\Phi^{\widehat{V}\gamma}(\widehat{V};\theta)\equiv\Psi_{V\widehat{\xi}\rightarrow V,x_{t},s}(\widehat{V},\iota_{\widehat{V}\rightarrow\widehat{\xi}}(\widehat{V};\theta);\theta)$, and we can recover $\widehat{\xi}$ in the process.

Here, we assume $E[\xi|Z]=E[e|Z]=0$. Then, $E[\widehat{\xi}|Z]=0$ also holds, and if the appropriate identification conditions are satisfied and $\widehat{p_{t}}(a_{t}|x_{t})$ are observed in the data, we can estimate $\theta$ by Algorithm \ref{alg:DDC_algorithm}.

\begin{algorithm}[H]
\begin{enumerate}
\item Inner loop: Given parameter values $\theta$,
\begin{enumerate}
\item Set initial values of $\widehat{V}^{(0)}$. Iterate the following $(n=0,1,2,\cdots)$: 
\begin{enumerate}
\item Compute $\widehat{\xi}^{(n)}=\iota_{\widehat{V}\rightarrow\widehat{\xi}}^{\gamma}\left(\widehat{V}^{(n)};\theta\right)$
\item Update $\widehat{V}$ by $\widehat{V}^{(n+1)}=\Psi_{\widehat{V}\widehat{\xi}\rightarrow\widehat{V}}(\widehat{V}^{(n)},\widehat{\xi}^{(n)};\theta)$
\item Exit the iteration if $\left\Vert \widehat{V}^{(n+1)}-\widehat{V}^{(n)}\right\Vert <\epsilon_{\widehat{V}}$
\end{enumerate}
\item Compute GMM objective using $\widehat{\xi}$ based on $E[\widehat{\xi}|Z]=0$
\end{enumerate}
\item Outer loop: Search for $\theta$ minimizing the GMM objective
\end{enumerate}
\caption{Algorithm applicable to the DDC model with unobserved payoffs and unobserved heterogeneity\label{alg:DDC_algorithm}}
\end{algorithm}

Typically, the values of $Pr_{t}(x_{t}|s)$ are unknown. Hence, we need to impose assumptions on the form of $Pr_{t}(x_{t}|s)$ at the initial period and also solve for the variables (See the discussion in \citet{kasahara2009nonparametric} and \citet{arcidiacono2011conditional}). Because $Pr_{t}(x_{t}|s)$ can be represented as a function of $\widehat{\xi}$ and $\widehat{V}$, in principle, we can alternatively represent $Pr_{t}(x_{t}|s)$ as a function of $\widehat{V}$, and incorporate it in the algorithm.

\bibliographystyle{apalike}
\bibliography{literature}

\end{document}